  \providecommand\BibTeX{{%
    \normalfont B\kern-0.5em{\scshape i\kern-0.25em b}\kern-0.8em\TeX}}}
\newcommand{\mvivm}{\textsf{MVIVM}\xspace}
\newcommand{\deltac}{\textsf{Delta}\xspace}
\newcommand{\fivm}{\textsf{F-IVM}\xspace}
\newcommand{\ivmeps}{\textsf{IVM}$^\epsilon$\xspace}
\newcommand{\crown}{\textsf{CROWN}\xspace}
\newcommand{\naive}{\textsf{Na\"ive}\xspace}
\newcommand{\defeq}{\stackrel{\text{def}}{=}}
\newcommand{\bigO}[1]{\mathcal{O}(#1)}
\newcommand{\tildeO}[1]{\widetilde{\mathcal{O}}(#1)}
\newcommand{\fw}{\mathsf{w}}
\newcommand{\lb}{\mathsf{\omega}}
\newcommand{\vars}{\text{\sf vars}}
\newcommand{\atoms}{\text{\sf at}}
\newcommand{\Dom}{\text{\sf Dom}}
\newcommand{\agm}{\text{\sf AGM}}
\newcommand{\canpart}{\text{\sf CP}}
\newcommand{\parent}{\text{\sf parent}}
\newcommand{\children}{\text{\sf children}}
\newcommand{\ancestors}{\text{\sf ancestors}}
\newcommand{\decendants}{\text{\sf decendants}}
\newcommand{\segment}{\text{\sf seg}}
\newcommand{\iv}{\text{\sf IV}}
\newcommand{\polylog}{\text{\sf polylog}}
\newcommand{\td}{\text{\sf TD}}
\newcounter{magicrownumbers}
\newcommand{\applydelta}{\biguplus}
\newcommand{\true}{\text{\sf true}}
\newcommand{\mycount}{\text{\sf count}}
\newcommand{\subw}{\mathrm{subw}}
\newcommand{\univariate}{time\xspace}
\newcommand{\multivariate}{multivariate\xspace}
\newcommand{\Multivariate}{Multivariate\xspace}
\newcommand{\insertdelete}{insert-delete\xspace}
\newcommand{\InsertDelete}{Insert-Delete\xspace}
\newtheorem{thm}{Theorem}[section]
\newtheorem{claim}[thm]{Claim}
\theoremstyle{definition}
\newtheorem{remark}[thm]{Remark}
\newtheoremstyle{cited}%
{.5\baselineskip\@plus.2\baselineskip
    \@minus.2\baselineskip}% (space above)
{.5\baselineskip\@plus.2\baselineskip
    \@minus.2\baselineskip}% (space below)
{\itshape}% (body font)
{\parindent}% (indent amount)
{}% {theorem head font}
{.}% {punctuation after theorem head}
{.5em}% {space after theorem head}
{\textsc{\thmname{#1}} \thmnote{\normalfont#3}}% {theorem head spec}
\theoremstyle{cited}
\newtheorem{citedthm}{Theorem}
\newtheorem{citedprop}{Proposition}
\newtheorem{citedlem}{Lemma}
\newcommand{\ivmpm}{\text{IVM}^{\pm}}
\newcommand{\ivmpmd}{\text{IVM}^{\pm}_\delta}
\newcommand{\ivmp}{\text{IVM}^+}
\newcommand{\ivmpd}{\text{IVM}^+_\delta}
\newcommand{\ivmov}{\overline{\text{IVM}}}
\newcommand{\ivmwh}{\widehat{\text{IVM}}}
\newcommand{\eval}{\text{Eval}}
\definecolor{light-gray}{gray}{0.7.2}
\definecolor{goodgreen}{rgb}{0.1, 0.5, 0.1}
\definecolor{burntorange}{rgb}{0.8, 0.33, 0.0}
\definecolor{lightblue}{RGB}{173, 216, 230} % 
\newcommand{\calT}{\mathcal{T}}
\newcommand{\calA}{\mathcal{A}}
\newcommand{\calD}{\mathcal{D}}
\newcommand{\calI}{\mathcal{I}}
\newcommand{\calO}{\mathcal{O}}
\newcommand{\TAB}{\makebox[2.5ex][r]{}}%
\newcommand{\OUTPUT}{\textbf{output}\xspace}%
\newcommand{\mycomment}[1]{}
\newcommand{\nop}[1]{}
\newcommand{\change}[1]{#1}
\newcommand{\shortorfull}[2]{#2}
\newcommand{\ov}{\overline}
\newcommand{\wh}{\widehat}
\begin{document}

\title{Insert-Only versus Insert-Delete in Dynamic Query Evaluation}

\author{Mahmoud Abo Khamis}
\affiliation{%
    \institution{RelationalAI}
    \city{Berkeley}
    \state{CA}
    \country{United States}
}
\email{mahmoudabo@gmail.com}

\author{Ahmet Kara}
\affiliation{%
    \institution{OTH Regensburg}
    \city{Regensburg}
    \country{Germany}
}
\email{ahmet.kara@oth-regensburg.de}

\author{Dan Olteanu}
\affiliation{%
    \institution{University of Zurich}
    \city{Zurich}
    \country{Switzerland}
}
\email{olteanu@ifi.uzh.ch}

\author{Dan Suciu}
\affiliation{%
    \institution{University of Washington}
    \city{Seattle}
    \state{WA}
    \country{United States}
}
\email{suciu@cs.washington.edu}

\renewcommand{\shortauthors}{Abo Khamis, Kara, Olteanu, and Suciu}

\begin{abstract}
We study the dynamic query evaluation problem: Given a full conjunctive query $Q$ and a sequence of updates to the input database, we construct a data structure that supports constant-delay enumeration of the tuples in the query output after each update.

We show that a sequence of $N$ insert-only updates to an initially empty database can be executed in total time $\bigO{N^{\fw(Q)}}$, where $\fw(Q)$ is the fractional hypertree width of $Q$. This matches the complexity of the static query evaluation problem for $Q$ and a database of size $N$. One corollary is that the amortized time per single-tuple insert is constant for $\alpha$-acyclic full conjunctive queries.

In contrast, we show that a sequence of $N$ inserts and deletes can be executed in total time $\tildeO{N^{\fw(\wh Q)}}$, where $\wh Q$ is obtained from $Q$ by extending every relational atom with extra variables that represent the ``lifespans'' of tuples in the database. We show that this reduction is optimal in the sense that the static evaluation runtime of $\wh Q$ provides a lower bound on the total update time for the output of $Q$. \change{Our approach achieves amortized optimal update times  for the hierarchical and Loomis-Whitney join queries}.
\end{abstract}

\keywords{incremental view maintenance; optimality; intersection joins}

\maketitle

\section{Introduction}
\label{sec:intro}

Answering queries under updates to the database, also called dynamic query evaluation, is a fundamental problem in data management, with recent work addressing it from both systems and theoretical perspectives.  There have been recent efforts on building systems for dynamic query evaluation, such as \textsf{DBToaster}~\cite{DBLP:journals/vldb/KochAKNNLS14},
\textsf{DynYannakakis}~\cite{DBLP:conf/sigmod/IdrisUV17,DBLP:journals/vldb/IdrisUVVL20},
\textsf{F-IVM}~\cite{DBLP:conf/sigmod/NikolicO18,DBLP:conf/sigmod/NikolicZ0O20,FIVM:VLDBJ:2023}, and \textsf{CROWN}~\cite{DBLP:journals/pvldb/WangHDY23}.
By allowing tuples to carry payloads with elements from rings~\cite{DBLP:conf/pods/Koch10}, such systems can maintain complex analytics over database queries, such as linear algebra computation~\cite{DBLP:conf/sigmod/NikolicEK14}, collection programming~\cite{DBLP:conf/pods/0001LT16}, and machine learning models~\cite{FIVM:VLDBJ:2023}.
There has also been work on dynamic computation for intersection joins~\cite{DBLP:journals/jcss/TaoY22} and for expressive languages such as Datalog~\cite{DBLP:journals/ai/MotikNPH19}, Differential Datalog~\cite{DBLP:conf/fossacs/AbadiMP15}, and DBSP~\cite{DBLP:journals/pvldb/BudiuCMRT23}. The
descriptive complexity of various recursive Datalog queries under updates, such as reachability, has also been investigated~\cite{DBLP:journals/jacm/DattaKMSZ18,DBLP:journals/sigmod/SchwentickVZ20}.
\change{Dynamic query evaluation has also been utilized for complex event recognition~\cite{10.14778/3137765.3137829, grez_et_al:LIPIcs.ICDT.2020.14}.}

While this problem attracted continuous interest in academia and industry over the past decades, it is only relatively recently that the first results on the fine-grained complexity and in particular on the optimality of this problem have emerged. These efforts aim to mirror the breakthrough made by the introduction of worst-case optimal join algorithms~\cite{AtseriasGM13,LeapFrogTrieJoin2014,Ngo:JACM:18}.  Beyond notable yet limited explorations, understanding the optimality of maintenance for the entire language of conjunctive queries remains open. Prime examples of progress towards the optimality of query maintenance are the characterizations of queries that admit (worst-case or amortized) constant time per single-tuple update (insert or delete): the $q$-hierarchical queries~\cite{BerkholzKS17,DBLP:conf/sigmod/IdrisUV17,FIVM:VLDBJ:2023,DBLP:journals/pvldb/WangHDY23}, 
queries that become $q$-hierarchical in the presence of free access patterns~\cite{DBLP:conf/icdt/00020OZ23} or by rewriting under functional dependencies~\cite{FIVM:VLDBJ:2023}, or queries on update sequences whose enclosureness is bounded by a constant~\cite{DBLP:conf/sigmod/WangY20}. The $\delta_1$-hierarchical queries~\cite{DBLP:conf/pods/0002NOZ20,DBLP:journals/lmcs/KaraNOZ23,DBLP:conf/icdt/00020OZ23} 
and the triangle queries~\cite{DBLP:conf/icdt/KaraNNOZ19,DBLP:journals/tods/KaraNNOZ20} admit optimal, albeit non-constant, update time conditioned on the Online Matrix-Vector Multiplication (OMv) conjecture~\cite{HenzingerKNS15}.

\change{In this paper, we introduce a new approach to incremental view maintenance for full conjunctive queries (or queries for short). Our approach complements prior work in four ways.}

\change{First, we give reductions from the dynamic query evaluation problem for a query $Q$ to the static query evaluation problem of a derived query $\wh Q$ (called the multivariate extension of $Q$), and vice versa. Our reductions link the complexities of the two problems and allow to transfer {\em both} algorithms {\em and} lower bounds from one problem to the other.
Specifically, we give a reduction from dynamic query evaluation
to a wide class of algorithms for static query evaluation, namely algorithms that meet the {\em fractional hypertree width}, thus allowing us to use those algorithms for dynamic query evaluation. Moreover, we give a reduction from
static query evaluation to {\em any} algorithm for dynamic query evaluation, thus allowing us to use lower bounds on static query evaluation to infer lower bounds on dynamic query evaluation.}
Both reductions use a translation of the time dimension in the dynamic problem into a spatial dimension in the static problem, so that the maintenance under a stream of updates corresponds to taking the intersection of intervals representing the ``lifespans'' of tuples in the update stream.

Second, we devise new dynamic query evaluation algorithms that use our reductions. 
We call them \mvivm (short for {{\textsf{M}}ulti{\textsf{V}}ariate \textsf{IVM}}).
\change{Their single-tuple update times are {\em amortized}. For  any $\alpha$-acyclic query in the insert-only setting, \mvivm take 
amortized constant single-tuple update time. This recovers a prior result~\cite{DBLP:journals/pvldb/WangHDY23}, when restricted from free-connex to full conjunctive queries.
For Loomis-Whitney queries in the \insertdelete setting, \mvivm take  amortized $\tildeO{|\calD|^{1/2}}$ update  time, where $|\calD|$ is the database size at update time; this is optimal up to a polylogarithmic factor in the database size. This recovers prior work on the triangle query, which is the Loomis-Whitney query with three variables~\cite{DBLP:journals/tods/KaraNNOZ20}.
For hierarchical queries in the \insertdelete setting, \mvivm take amortized constant single-tuple update time; this is weaker than prior work~\cite{BerkholzKS17}, which showed that worst-case constant single-tuple update time can be achieved for $q$-hierarchical queries.}

\change{Third, \mvivm support (worst-case, not amortized) constant delay enumeration} of the tuples in the query output (full enumeration) or in the change to the query output (delta enumeration) {\em after each update}.
In particular,  \mvivm map an input stream of updates to an output stream of updates, so they are closed under updates. In practice, the change to the output after an update is often much smaller than the full output, making delta enumeration more efficient than full enumeration.

Fourth, we pinpoint the complexity gap between the insert-only and the \insertdelete settings. 
\change{In the insert-only setting, we show that one can readily use a worst-case optimal join algorithm like GenericJoin~\cite{Ngo:JACM:18} and  LeapFrogTrieJoin~\cite{LeapFrogTrieJoin2014} to execute a stream of inserts one insert at a time in the same overall time as if the entire stream was executed as one bulk update (which corresponds to static evaluation).}
\change{This implies an upper bound on the amortized update time for the dynamic evaluation of $Q$ in the insert-only setting.}
In contrast, in the \insertdelete setting, a stream of both inserts and deletes  can be executed one update at a time in the same overall time as the static evaluation of a query $\wh Q$, which is obtained by extending every relational atom in $Q$ with extra variables representing the ``lifespans'' of tuples.
This query $\wh Q$ may have a higher complexity than $Q$,
\change{thus leading to an upper bound on the amortized update time for $Q$ in the insert-delete setting that may be higher than in the insert-only setting. Nevertheless, our reduction
from the maintenance of $Q$ in the insert-delete setting to the static evaluation problem for $\wh Q$ is optimal:}
Lower bounds on the latter problem imply lower bounds on the former.
This shows that maintenance in the \insertdelete setting can be more expensive than in the insert-only setting.

\subsection*{Motivating Examples}

We exemplify our results for the following three join queries:
\begin{align}
    Q_{3p}(A,B,C,D) &= R(A,B)\wedge S(B,C)\wedge T(C,D) \nonumber\\
    Q_\triangle(A,B,C) &= R(A,B)\wedge  S(B,C)\wedge T(A,C) \label{eq:triangle}\\
    Q_{4}(A,B,C,D) &= R(B,C,D)\wedge S(A,C,D)\wedge T(A,B,D)\wedge U(A,B,C)\nonumber
\end{align}

We would like to perform single-tuple updates to each query, while ensuring constant-delay enumeration of the tuples in the query output after each update.
These queries are not $q$-hierar\-chi\-cal, hence they cannot admit worst-case constant time per single-tuple update~\cite{BerkholzKS17}:  The OMv conjecture implies that there is no algorithm that takes $\bigO{|\calD|^{1/2-\gamma}}$ time for single-tuple updates, for any $\gamma>0$, to any of these queries. In fact, for a database $\calD$, systems like \textsf{DBToaster}~\cite{DBLP:journals/vldb/KochAKNNLS14}, \textsf{DynYannakakis}~\cite{DBLP:conf/sigmod/IdrisUV17}, \textsf{F-IVM}~\cite{DBLP:conf/sigmod/NikolicO18}, and \textsf{CROWN}~\cite{DBLP:journals/pvldb/WangHDY23} need at least
\change{worst-case} $\bigO{|\calD|}$ time per single-tuple update to each of these queries.
For the triangle query $Q_\triangle$, IVM$^\epsilon$ takes $\bigO{|\calD|^{1/2}}$ amortized time for single-tuple updates~\cite{DBLP:journals/tods/KaraNNOZ20}.
In the insert-only setting, \textsf{CROWN} maintains $Q_{3p}$ in amortized constant time per insert. However, it cannot handle non-acyclic queries like $Q_\triangle$ and $Q_4$
(with update time better than recomputation from scratch).
Appendix~\ref{app:comparison} explains how each prior approach maintains $Q_{3p}$ and $Q_\triangle$ in the insert-delete setting, and compares their update times to our approach, \mvivm.

We show that, for $Q_{3p}$, \mvivm performs single-tuple updates: in amortized constant time in the insert-only setting; and in amortized $\tildeO{|\calD|^{1/2}}$ time in the \insertdelete setting ($\tilde{O}$ notation hides a $\polylog(|\calD|)$ factor).
Both times are optimal, the latter is conditioned on the OMv conjecture. The amortized constant time per insert matches that of \textsf{CROWN}~\cite{DBLP:journals/pvldb/WangHDY23} and does not contradict the non-constant lower bound for non-hierarchical queries without self-joins~\cite{BerkholzKS17}: that lower bound proof relies on a reduction from OMv that requires {\em both} inserts {\em and} deletes.
Achieving this amortized constant time, while also allowing constant-delay enumeration after each insert, is non-trivial.
The lazy maintenance approach, which only updates the input data, is {\em not} sufficient to achieve this result. This is because, while it does indeed take constant insert time, it still requires the computation of the query output  {\em before} the enumeration of the first output tuple. This  computation takes linear time using a factorized approach~\cite{FIVM:VLDBJ:2023} and needs to be repeated for {\em every} enumeration request.

\mvivm takes $\bigO{|\calD|^{1/2}}$ and $\bigO{|\calD|^{1/3}}$ amortized single-tuple insert time for the triangle query $Q_\triangle$ and, respectively, the Loomis-Whitney-4 query $Q_{4}$.  
Given a sequence of $|\calD|$ inserts, the total times of $\bigO{|\calD|^{3/2}}$ and $\bigO{|\calD|^{4/3}}$, respectively, match the worst-case optimal times for the static evaluation of these two queries~\cite{Ngo:JACM:18}. 
In the insert-delete setting, single-tuple updates take $\tildeO{|\calD|^{1/2}}$ amortized time for both queries, and this is optimal based on the OMv conjecture.
Figure~\ref{fig:comparison_prior_our} in Appendix~\ref{app:comparison} shows the update times of \mvivm and prior work for further queries.

\section{Preliminaries}
\label{sec:prelims}
\change{In this section, we review some preliminaries, most of which are standard in database theory.}

We use $\mathbb{N}$ to denote the set of natural numbers including $0$.
For $n\in\mathbb{N}$, we define $[n] \defeq \{1,2,\ldots,n\}$. In case $n=0$, we have $[n]=\emptyset$.
\paragraph{Data Model}
\nop{
%A \change{{\em relation schema} (or {\em schema}, for short)} is a tuple of variables. 
%\dans{A relational schema is NOT a tuple of variables.  We need to fix that}
%We treat schemas and sets of variables
%interchangeably, assuming a fixed ordering of variables.
% \change{We mean {\em relation schema} whenever we use the term {\em schema}.}

% \dano{Perhaps: A relation schema (or schema for short) is ..}
%To emphasize that a variable $X$ is an interval variable, we sometimes use the notation $[X]$. 
%The domain of a variable $X$ is denoted by  $\Dom(X)$. 
%A {value tuple} $\bm t$ over schema $\bm X = (X_1, %\ldots , X_n)$
%is an element from $\Dom(\bm X) \defeq \Dom(X_1) \times \cdots \times \Dom(X_n)$.
%We denote by $\bm t(X)$ the $X$-value of 
%$\bm t$. 
%A {\em relation} over schema $\bm X$ is a finite set of value tuples over the same schema.  
%The size $|R|$ of a relation $R$  is the number of its tuples. 
%A {\em database} $\calD$ is a finite set of relations. Its  size $|\calD|$ is given by the sum of the sizes of its relations.  
}
\change{Following standard terminology, 
a {\em relation} is a finite set of tuples and a 
{\em database} is a finite 
set of relations~\cite{AbiteboulHV95}.
The size $|R|$ of a relation $R$  is the number of its tuples. 
The size $|\calD|$ of a database $\calD$ is given by the sum of the sizes of its relations.  
}

\paragraph{Queries}
We consider natural join queries or full conjunctive queries.
We refer to them as {\em queries} for short. A {\em query} has the form 
\begin{align}
    Q(\bm X) = R_1(\bm X_1) \wedge \cdots \wedge R_k(\bm X_k),
    \label{eq:join-query}
\end{align}
where 
each $R_i$ is a relation symbol, each 
$\bm X_i$ is a tuple of variables, and 
$\bm X = \bigcup_{i \in [k]} \bm X_i$.
\change{We refer to $\bm X_i$ as the {\em schema}
of $R_i$ and treat schemas and sets of variables
interchangeably, assuming a fixed ordering of variables.}
We call each $R_i(\bm X_i)$ an {\em atom} of $Q$.
We denote by 
$\atoms(Q) = \{R_i(\bm X_i) \mid i \in [k]\}$ 
the set of all atoms of $Q$ and by $\atoms(X)$
the set of atoms $R_i(\bm X_i)$ with $X \in \bm X_i$.
The set of all variables of $Q$ is denoted by $\vars(Q) = \bigcup_{i \in [k]} \bm X_i$. Since the set 
$\bm X$ is always equal to $\vars(Q)$ for all queries we consider in the paper, \underline{we drop $\bm X$ from the query notation} and abbreviate $Q(\bm X)$ with $Q$ in Eq.~\eqref{eq:join-query}.
%in~\eqref{eq:join-query}.
We say that $Q$ is without {\em self-joins} if every relation symbol appears in at most one atom. 

\change{
The domain of a variable $X$ is denoted by  $\Dom(X)$. 
A {value tuple} $\bm t$ over a schema $\bm X = (X_1, \ldots , X_n)$ is an element from $\Dom(\bm X) \defeq \Dom(X_1) \times \cdots \times \Dom(X_n)$.
We denote by $\bm t(X)$ the $X$-value of $\bm t$. 
}

A tuple $\bm t$ over schema $\bm X$
is in the result of $Q$ if there are tuples $\bm t_i \in R_i$ for $i \in [k]$ such that for all $X \in \vars(Q)$, the set $\{\bm t_i(X)\mid i \in [k] \wedge X \in \bm X_i\}$ consists of the single value $\bm t(X)$.
Given a  query $Q$ and  a database $\calD$, we denote by $Q(\calD)$ the result of $Q$ over $\calD$.

A {\em union of queries} is of the form 
$Q = \bigvee_{i \in [n]} Q_i$ where each $Q_i$, called a component of $Q$, is a query over the same set of variables as $Q$. 
The result of $Q$ is the union of the results of its components. 

\paragraph{Query Classes}
\change{
A query is {\em ($\alpha$-)acyclic} if we can construct a tree, called {\em join tree}, such that the nodes of the tree are the atoms of the query ({\em coverage}) and for each variable, it holds: if the variable appears in two atoms, then it appears in all atoms on the path connecting the two atoms ({\em connectivity})~\cite{Brault-Baron16}.
}

{\em Hierarchical queries} are a sub-class of acyclic queries.
A query is called hierarchical if for any two variables $X$ and $Y$, it holds 
$\atoms(X) \subseteq \atoms(Y)$, 
$\atoms(Y) \subseteq \atoms(X)$, or
$\atoms(X) \cap \atoms(Y) = \emptyset$~\cite{2011Suciu}.

{\em Loomis-Whitney queries} generalize
the triangle query from a clique of degree $3$ to higher degrees~\cite{LW:1949}. 
A Loomis-Whitney query of degree $k\geq 3$ has $k$ variables
$X_1, \ldots, X_k$ and has, for each subset $\bm Y \subset 
\{X_1, \ldots, X_k\}$ of size $k-1$, one distinct atom with schema $\bm Y$. 

%A Loomis-Whitney query of degree $k\geq 3$ is of the form 
%$Q = R_1(\bm X_1) \wedge \cdots \wedge R_k(\bm X_k)$,
%where $\bm X_i = (X_{((i+j) \text{ mod } k) + 1})_{-1 \leq j \leq n-3}$.
%This means that  the schema of $R_1$ is $(X_1, \ldots , X_k)$
%whereas the schema of $R_k$ is $(X_k, X_1,\ldots , X_{k-1})$. 

A union of queries is acyclic (hierarchical, Loomis-Whitney)
if each of its components is acyclic (hierarchical, Loomis-Whitney).
\shortorfull{}{Example~\ref{ex:queries} showcases the query classes mentioned above.}

\paragraph{Single-tuple Updates and Deltas}
We denote an  {\em insert} of a tuple $\bm t$ into a relation $R$ by $+R(\bm t)$ and a {\em delete} of $\bm t$ from $R$ by 
$-R(\bm t)$. 
The insert $+R(\bm t)$ inserts $\bm t$ into $R$ if $\bm t$ {\em is not} contained in $R$, and 
the delete $-R(\bm t)$ deletes $\bm t$ from $R$ if it {\em is} 
contained in $R$.
Each insert or delete has a {\em timestamp} $\tau$, which is a natural number that indicates the specific time at which
the insert or delete occurs.
An {\em update} $\delta_\tau R$ to $R$ at time $\tau$ is the set of all inserts and deletes to $R$ at time $\tau$.
An {\em update} $\delta_\tau R$ is applied to $R$ by first applying all the deletes in 
$\delta_\tau R$ and then applying all the inserts.
We use $R \applydelta \delta_\tau R$ to denote the result of applying an update $\delta_\tau R$ to a relation $R$.
We use $R^{(\tau)}$ to refer to the specific version of a relation $R$ at timestamp $\tau$.
In particular, $R^{(\tau)}$ is the result of applying $\delta_\tau R$ to the previous version $R^{(\tau-1)}$, i.e.~ $R^{(\tau)} = R^{(\tau-1)}\applydelta \delta_\tau R$.
An {\em update} $\delta_\tau\calD$ to a database $\calD$
is the set of all inserts and deletes to the relations in $\calD$ at time $\tau$.
Similarly, we use $\calD^{(\tau)}$ to refer to the specific version of $\calD$ at time $\tau$, i.e.,
$\calD^{(\tau)} = \calD^{(\tau-1)} \applydelta \delta_\tau \calD$.
%and $\calD^{(\tau)} = \calD^{(\tau-1)} \applydelta \delta_\tau \calD$
%to denote applying an update $\delta_\tau \calD$ to $\calD^{(\tau-1)}$.
If $\delta_\tau\calD$ consists of a single insert or delete, i.e.,  $|\delta_\tau\calD| = 1$, it is called a 
{\em single-tuple update}. 
In particular, if it consists of  a single insert (or delete), 
we call it a {\em single-tuple insert} (or {\em single-tuple delete}). 
Given a query $Q$ over a database $\calD$, we use 
$\delta_\tau Q(\calD)$ to denote
the change in $Q(\calD)$ at time $\tau$. Specifically, $\delta_\tau Q(\calD)$ is the difference between $Q(\calD^{(\tau)})$ and $Q(\calD^{(\tau-1)})$:

\begin{align}
    \delta_\tau Q(\calD) \defeq
    &\{+Q(\bm t) \mid \bm t \in Q(\calD^{(\tau)}) \text{ and } \bm t \notin Q(\calD^{(\tau-1)})\}\nonumber\\ \cup
    &\{-Q(\bm t) \mid \bm t \in Q(\calD^{(\tau-1)}) \text{ and } \bm t \notin Q(\calD^{(\tau)})\}
    \label{eq:def:deltaQ}
\end{align}

When the time $\tau$ is clear from the context, we drop the 
annotation $(\tau)$ and write $R, \delta R, \calD, \delta \calD, Q(\calD)$, and $\delta Q(\calD)$.
In the insert-only setting, since all deltas are positive, we ignore the signs
and see a single-tuple update as just a tuple, $\delta R$ as just a relation,
and so on.

\paragraph{Constant-Delay Enumeration}
Given a query $Q$ and a database $\calD$, an enumeration procedure for 
$Q(\calD)$ (or $\delta_{\tau}Q(\calD)$) outputs the elements 
 in $Q(\calD)$ (or $\delta_{\tau}Q(\calD)$) one by one 
 in some order and without repetition. 
The {\em enumeration delay} of the procedure is the maximum of three times:
the time between the start of the enumeration process and the output of the first element, the time between outputting any two consecutive elements, and the time between outputting the last element and the end of enumeration~\cite{DurandFO07}. 
The enumeration delay is {\em constant} if it does not depend
on $|\calD|$. 

\change{
\paragraph{Width Measures}
Consider the following linear program for
a query $Q$ and a database $\calD$:
\nop{Given a query $Q$ and a database $\calD$,
the {\em fractional edge cover number} $\rho^{\ast}(Q,\calD)$ of $Q$ with respect to $\calD$ is the optimal 
objective value of the following linear program:} 
\begin{align*}
\text{min} & \TAB\sum_{R(\bm X) \in\, \atoms(Q)} \log(|R|) \cdot \lambda_{R(\bm X)} && \\[3pt]
\text{s.t.} &\TAB 
\sum_{{R(\bm X)\in\, \atoms(Q) \text{ s.t. } Y \in \bm X}}\hspace{-0.5cm} \lambda_{R(\bm X)} \geq 1 &&\hspace{-0.5em} \text{ for all } Y \in \vars(Q) \text{ and } \\[3pt]
& \TAB\lambda_{R(\bm X)} \in [0,1] &&\hspace{-0.5em} \text{ for all } R(\bm X) \in \atoms(Q)
\end{align*}

% \dano{I'd use {\bf Y} as the variables of in the atom $R$ to distinguish visually easier between X and the vars of $R$.}
% \mahmoud{Good point. I changed the var name to $Y$ instead.}

Every feasible solution $(\lambda_{R(\bm X)})_{R(\bm X)\in \atoms(Q)}$
to the above program is called a 
{\em fractional edge cover} of $Q$ with respect to $\calD$.
The optimal objective value  of the program is called the  
{\em fractional edge cover number} $\rho^*(Q, \calD)$ of $Q$ with respect to $\calD$. 
The fractional edge cover number gives an upper bound 
on the size of $Q(\calD)$: 
$|Q(\calD)| \leq 2^{\rho^*(Q, \calD)}$~\cite{AtseriasGM13}.
This bound is known as the {\em AGM-bound} and denoted as 
$\agm(Q,D)$.
% If $\calD$ is clear from the context, we just write $\agm(Q)$.
We denote by $\rho^{\ast}(Q)$
the optimal objective value of the linear program that results from the above program by replacing the objective function with
$\sum_{R(\bm X) \in\, \atoms(Q)} \lambda_{R(\bm X)}$.

\begin{definition}[Tree Decomposition]
    \label{defn:TD}    
    A {\em tree decomposition} of a query $Q$ is a pair 
    $(\calT, \chi)$, where $\calT$ is a tree with vertices $V(\calT)$  
    and $\chi: V(\calT) \rightarrow 2^{\vars(Q)}$ maps each node $t$ of the tree $\calT$ to a subset $\chi(t)$ of variables of $Q$ such that the following properties hold:
    \begin{enumerate}
        \item for every atom $R(\bm X) \in \atoms(Q)$, the schema 
        $\bm X$ is a subset of $\chi(t)$ for some $t \in V(\calT)$, and
        \item for every variable $X \in \vars(Q)$, the set 
        $\{t \mid X \in \chi(t)\}$ is a non-empty connected subtree of $\calT$.
    The sets $\chi(t)$ are called the {\em bags} of the tree decomposition.
    \end{enumerate}
    We use $\td(Q)$ to denote the set of tree decompositions of 
    a query $Q$.
\end{definition}

\begin{definition}[The restriction $Q_{\bm Y}$ of a query $Q$]
    \label{defn:restriction}
    For a query $Q$ and a subset $\bm Y \subseteq \vars(Q)$,
    we define the {\em restriction of $Q$ to $\bm Y$}, denoted by $Q_{\bm Y}$,
to be the query that results
from $Q$ by restricting the schema of each atom to those variables 
that appear in $\bm Y$. Formally:
\begin{align}
    R_{\bm Y}(\bm X \cap \bm Y) &\defeq \pi_{\bm X \cap \bm Y} R(\bm X),\quad\quad
    \text{for all $R(\bm X) \in \atoms(Q)$}
    \nonumber\\
    Q_{\bm Y}(\bm Y) &\defeq \bigwedge_{R(\bm X) \in \atoms(Q)} R_{\bm Y}(\bm X \cap \bm Y)
    \label{eq:bag_query}
\end{align}
\end{definition}

\begin{definition}[Fractional Hypertree Width]
    \label{defn:fhtw}
Given a query $Q$ and a tree decomposition $(\calT,\chi)$ of $Q$, the {\em fractional hypertree width} of $(\calT,\chi)$ and of $Q$ are defined respectively as follows:
\begin{align}
\fw(\calT,\chi) \defeq \max_{t \in V(\calT)} \rho^*(Q_{\chi(t)}),\quad\quad
\fw(Q) \defeq \min_{(\calT,\chi) \in \td(Q)} \fw(\calT,\chi) \label{eq:fhtw}
\end{align}
We call a tree decomposition $(\calT, \chi)$ of $Q$ {\em optimal} if
$\fw(\calT, \chi) = \fw(Q)$.
 The {\em fractional hypertree width} of a union  $Q = \bigcup_{i \in [n]} Q_i$ of join queries 
 is
$\fw(Q) \defeq \max_{i \in [n]} \fw(Q_i)$.
\end{definition}

If a query is a union of join queries,
its AGM-bound (or fractional hypertree width) is the maximum 
AGM-bound (or fractional hypertree width) over its components. 
}

\paragraph{Computational Model}
\change{
We consider the RAM model of computation where 
schemas and 
data values are stored in registers 
of logarithmic size and operations on them can be done in constant time.
We assume that each materialized relation 
with schema $\bm X$ is implemented by a data structure 
of size $\bigO{|R|}$ that can:
(1) look up, insert, and delete tuples in constant time, and
(2) enumerate the tuples in $R$ with constant delay.
Given $\bm Y \subseteq X$ and $\bm t \in \Dom(\bm Y)$, the data structure can:
(1) check $\bm t \in \pi_{\bm Y} R$ in constant time; and 
(2) enumerate the tuples in $\sigma_{\bm Y =\bm t}R$ with constant delay.
%We use data complexity. This means that complexity factors that only depend on the query size are considered constant. 
}

\section{Problem Setting}
\label{sec:problem}
We consider natural join queries, or equivalently, full conjunctive queries and refer to them as {\em queries}.
We investigate the incremental view maintenance (IVM) of a query $Q$ under
{\em single-tuple updates} to an initially empty database $\calD$.
A {single-tuple update} is an insert or a delete of a tuple into a relation in $\calD$.
We consider four variants of the IVM problem depending on the 
update and enumeration mode.
With regard to updates, we distinguish between the {\em insert-only} setting, where we only allow single-tuple inserts, and the {\em \insertdelete} setting, where we allow both single-tuple inserts and single-tuple deletes.
With regard to enumeration, we distinguish between the {\em full} enumeration and the {\em delta} enumeration, where, after each update, the full query result and respectively the change to the query result at the current time can be enumerated with constant delay.
The four IVM variants are parameterized by a query $Q$ and take as input a database $\calD$, 
whose relations are initially empty, and a stream 
$\delta_1\calD, \delta_2\calD, \ldots, \delta_N\calD$ of $N$ single-tuple updates to $\calD$.
Neither $N$ nor the updates are known in advance.
The updates arrive one by one.
Let $\calD^{(0)} \defeq \emptyset, \calD^{(1)}, \ldots, \calD^{(N)}$
be the sequence of database versions, where
for each $\tau \in [N]$, the database version $\calD^{(\tau)}$ results from applying the update
$\delta_\tau\calD$ to the previous version $\calD^{(\tau-1)}$.

The task of the first IVM variant is to support constant-delay enumeration of the full query result $Q(\calD^{(\tau)})$ after each update:

\noindent
\vspace*{.5em}
\fbox{%
    \parbox{0.96\linewidth}{%
    \begin{tabular}{ll}
   Problem: & $\ivmpm[Q]$ \\
   Parameter: & Query $Q$ \\ 
        Given: & An initially empty database $\calD$ and a stream
        $\delta_1\calD, \ldots,\delta_N\calD$ of $N$ single-tuple updates to $\calD$ \\
        &where $N$ is {\em not} known in advance.\\
        Task: & Support constant-delay enumeration of
        $Q(\calD^{(\tau)})$ after each update $\delta_\tau\calD$ \\
    \end{tabular}
    }}
\vspace*{.5em}

The task of the second IVM variant is to support
constant-delay enumeration of the {\em change} to the query result
after each update, \change{given by Eq.~\eqref{eq:def:deltaQ}:}

\noindent
%\vspace*{.5em}
\fbox{%
    \parbox{0.96\linewidth}{%
    \begin{tabular}{ll}
   Problem: & $\ivmpmd[Q]$ \\
        Parameter: & Query $Q$ \\ 
        Given: & An initially empty database $\calD$ and a stream
        $\delta_1\calD, \ldots,\delta_N\calD$ of $N$ single-tuple updates to $\calD$\\
        &where $N$ is {\em not} known in advance.\\
        Task: & Support constant-delay enumeration of $\delta_\tau Q(\calD)$ after each update $\delta_\tau \calD$ \\
    \end{tabular}
    }}
\vspace*{.5em}

% The change $\delta_\tau Q(\calD)$ is the difference between the query result right after update $\tau$ and the query result right after update $\tau-1$.

The other two variants, denoted as $\ivmp[Q]$ and $\ivmpd[Q]$, are identical to $\ivmpm[Q]$ and $\ivmpmd[Q]$, respectively, except that the updates $\delta_\tau \calD$ are restricted to be single-tuple {\em inserts} (no deletes are allowed).
In all four variants, we are interested in the {\em update time}, i.e., the time to process a single-tuple update. 

\begin{example}
\label{ex:plain_update_enumeretaion}
Consider the triangle query $Q_\triangle$ from Eq.~\eqref{eq:triangle}.
Suppose we have the update stream of length $8$ given in the second column of  Table~\ref{tab:triangle:full_delta_enumeration}.
The third and fourth column show the full result and the delta result, respectively,  
after each update. \change{We will use this example as a running example throughout the paper.}

\begin{table}[ht!]
    \begin{tabular}{|c|l|l|l|}
        \hline
        $\tau$ & $\delta_\tau\calD$ & $Q(\calD^{(\tau)})$ & $\delta_\tau Q(\calD)$ \\\hline\hline 
        1& $+R(a_1, b_1)$ & $\{\}$ & $\{\}$ \\\hline
        2& $+S(b_1, c_1)$ & $\{\}$ & $\{\}$ \\\hline
        3& $+T(a_1, c_1)$ & $\{(a_1,b_1,c_1)\}$ & $\{+Q(a_1, b_1, c_1)\}$ \\\hline
        4& $+S(b_2, c_1)$ & $\{(a_1,b_1,c_1)\}$ & $\{\}$ \\\hline
        5& $-S(b_1, c_1)$ & $\{\}$ & $\{-Q(a_1, b_1, c_1)\}$\\\hline
        6& $-S(b_2, c_1)$ & $\{\}$ & $\{\}$\\\hline
        7& $-T(a_1, c_1)$ & $\{\}$ & $\{\}$ \\\hline
        8& $-R(a_1, b_1)$ & $\{\}$ & $\{\}$ \\\hline
    \end{tabular}
    \caption{An update sequence for the query $Q_\triangle$ in Eq.~\eqref{eq:triangle}.
The last two columns show the full and delta result 
after each update.}
    \label{tab:triangle:full_delta_enumeration}
    \vskip -0.6cm
\end{table}
\end{example}

We obtain lower bounds on the update time for our IVM problems by reductions from the following static query evaluation problem:
 
\noindent
\vspace*{.5em}
\fbox{%
    \parbox{0.96\linewidth}{%
    \begin{tabular}{ll}
   Problem: & $\eval[Q]$ \\
   Parameter: & Query $Q$ \\
        Given: & Database  $\calD$\\
        Task: & Compute $Q(\calD)$
    \end{tabular}
    }}
\vspace*{.5em}

For this problem, we are interested in the time to compute $Q(\calD)$.
\begin{definition}[$\fw(Q)$ and $\lb(Q)$]
    \label{defn:fw_lb}
    Given a query $Q$, let $\fw(Q)$ denote the {\em fractional hypertree width} of $Q$ (see Section~\ref{sec:prelims}).
    Let $\lb(Q)$ denote the {\em smallest exponent} $\kappa$ such that
    $\eval[Q]$ has an algorithm with runtime $\bigO{|\calD|^{\kappa + o(1)} + |Q(\calD)|}$
    on any input database $\calD$.
    Given a union of queries $Q = \bigvee_i Q_i$, we define $\fw(Q)$ and $\lb(Q)$ to be the maximum of $\fw(Q_i)$ and $\lb(Q_i)$, respectively, over all $i$.
    \footnote{The function $\omega(Q)$ is defined analogously to the constant $\omega$
    which is the best exponent for matrix multiplication.}
\end{definition}
The function $\lb(Q)$ is not known in general. However, we know that $\lb(Q) \leq \fw(Q)$;
see e.g.~\cite{faq}.
More strongly, $\lb(Q)$ is upper bounded by the {\em submodular width} of $Q$~\cite{marx:subw,panda:pods17,2024arXiv240202001A}\footnote{
    \cite{panda:pods17,2024arXiv240202001A} provide a query evaluation algorithm with runtime
    $\bigO{|\calD|^{\subw(Q)}\cdot \polylog{|\calD|} + |Q(\calD)|}=
    \bigO{|\calD|^{\subw(Q)+o(1)} + |Q(\calD)|}$, thanks to the extra $+o(1)$
    in the exponent. $\subw(Q)$ denotes the submodular width of $Q$.
}.
For acyclic queries, we have $\lb(Q) = 1$~\cite{Yannakakis81}.
For the triangle query, $\lb(Q_\triangle) \geq \frac{4}{3}$ modulo
the 3SUM conjecture~\cite{10.1145/1806689.1806772}.
In this paper, we prove lower bounds on the IVM problems in terms of 
the function $\lb(Q)$.
These lower bounds are not conditioned on a fine-grained complexity conjecture, like the OMv conjecture~\cite{HenzingerKNS15}\shortorfull{}{ stated in Appendix~\ref{app:prelims}}.
Instead, they apply no matter what the value
of $\lb(Q)$ turns out to be. In that sense, they can be thought of as unconditional.
\change{However, they are relatively straightforward to prove
and are mostly meant to justify the matching upper bounds.}

\paragraph{Complexity Measures}
For all problems introduced above, we 
consider the query to be fixed.
Moreover, we say that the {\em amortized} update time for the $\tau$-th single-tuple update is $f(N, |\calD^{(\tau)}|)$ for some function $f$, if the total time to process all $N$ updates in the stream is upper bounded by $\sum_{\tau \in [N]}f(N, |\calD^{(\tau)}|)$.
In the insert-only setting,
the database size $|\calD^{(\tau)}|$ at time $\tau$ is always equal to $\tau$.
Hence, for $\ivmp[Q]$ and $\ivmpd[Q]$,
we ignore the database size and only measure the update time as a function of the length $N$ of the update stream.  
In the insert-delete setting, the database size at time $\tau$ is upper bounded by $\tau$
and could be much smaller than $\tau$.
As a result, for $\ivmpm[Q]$ and $\ivmpmd[Q]$, it is more natural to ignore $N$ and only measure the update time as a function of $|\calD^{(\tau)}|$.
For the four IVM problems, the constant enumeration delay is {\em not} allowed to be amortized,
regardless of whether the update time is amortized or not.
For the problem $\eval$, we measure the computation time as a function of the size of the input database
$\calD$.
We use the $\calO$-notation
to state worst-case bounds and $\widetilde{\calO}$-notation to hide a polylogarithmic factor
in $N$, $|\calD^{(\tau)}|$, or $|\calD|$.

\paragraph{\Multivariate Extensions}
For the purpose of stating our results, we introduce the following central concept,
which gives our approach its name, {{\textsf{M}}ulti{\textsf{V}}ariate \textsf{IVM}} (\mvivm for short).
\begin{definition}[\Multivariate extension $\wh Q$ of a query $Q$]
  \label{defn:multivariate}
% \dans{these are not part of preliminaries, but a concept introduced in
%   this work; even if it appeared in your pods paper, the usage here is
%   novel and we should introduce it as such.  Either you say here ``In
%   this paper we introduce a transformation on the query into a query
%   called {\em multivariate extension}'', or move this to one of the
%   main sections.}
    Consider a query $Q(\bm X) = R_1(\bm X_1)\wedge \ldots\wedge R_k(\bm X_k)$
    where $\bm X = \bigcup_{i \in [k]} \bm X_i$,
    and fresh variables $Z_1, \ldots , Z_k$ that do {\em not} occur in $Q$.
    (In this paper, \underline{we drop head variables $\bm X$} and write $Q$ instead of $Q(\bm X)$ for brevity.)
    Let $\Sigma_k$ be the set of permutations of the set $[k]$.
    For any permutation $\bm\sigma = (\sigma_1, \ldots , \sigma_k) \in \Sigma_k$, we denote by $\wh Q_{\bm\sigma}$ the query that results from $Q$ by extending the schema of each atom $R_{\sigma_i}(\bm X_{\sigma_i})$ with the variables 
    $Z_1, \ldots , Z_i$:
    \begin{align}
        \wh Q_{\bm\sigma} = \wh R_{\sigma_1}(Z_1, \bm X_{\sigma_1}) 
        \wedge \wh R_{\sigma_2}(Z_1, Z_2, \bm X_{\sigma_2})
        \wedge \cdots\wedge \wh R_{\sigma_k}(Z_1, \ldots, Z_k,\bm X_{\sigma_k}) \label{eq:multivariate_component}
    \end{align}
    The {\em \multivariate extension} $\wh Q$ of $Q$ is the union of $\wh Q_{\bm\sigma}$
    over all permutations $\bm\sigma \in \Sigma_k$:
    \begin{align}
    \label{eq:multivariate}
    \wh{Q} = \bigvee_{\bm\sigma \in \Sigma_k}\wh Q_{\bm\sigma}
    \end{align}
   We call each $\wh Q_{\bm\sigma}$ a {\em component} of $\wh{Q}$. 
\end{definition}

\begin{example}
    \label{ex:transformations}
    The \multivariate extension $\wh Q_{\triangle}$ of $Q_\triangle$ from Eq.~\eqref{eq:triangle}
    consists of 6 components $\wh Q_{123}, \wh Q_{132}, \wh Q_{213}, \wh Q_{231}, \wh Q_{312}$
    and $\wh Q_{321}$. The first component, $\wh Q_{123}$, orders the input relations as $(R, S, T)$
    and applies Eq.~\eqref{eq:multivariate_component} by adding $\{Z_1\}$ to $R$, $\{Z_1, Z_2\}$ to $S$, and $\{Z_1, Z_2, Z_3\}$ to $T$, thus resulting in
    $\wh R(Z_1, A, B)$, $\wh S(Z_1, Z_2, B, C)$, and $\wh T(Z_1, Z_2, Z_3, A, C)$ respectively. Below, we add the subscript $123$ to each relation to indicate the component for later convenience:
    \begin{align}
        \wh Q_{123} =
            \wh R_{123}(Z_1, A, B) \wedge 
            \wh S_{123}(Z_1, Z_2, B, C) \wedge
            \wh T_{123}(Z_1, Z_2, Z_3, A, C)
        \label{eq:triangle-query:multi-123:body}
    \end{align}
    \shortorfull{}{The other components are given in Eq.~\eqref{eq:triangle-query:multi-132-321}.}
\end{example}

\section{Overview of Main Results}
\label{sec:results}
\change{
In this section, we overview our main results and discuss their implications.
}
% Recall that we use ``query'' to refer to a ``natural join query'' (or equivalently ``full conjunctive query'').
% Also recall that $\fw(Q)$ and $\lb(Q)$ denote the fractional hypertree width of $Q$
% and the smallest exponent for $\eval[Q]$, respectively, as per Definition~\ref{defn:fw_lb}.
% The notions of acyclicity, hierarchical, and Loomis-Whitney queries are reviewed in Sec.~\ref{sec:prelims}.
% Complexity measures are explained in Sec.~\ref{sec:problem}. In our results,
% the constant enumeration delay is {\em not} amortized regardless of whether
% the update time is amortized.
% For all our IVM problems, $N$  is the number of single-tuple updates,
% while $|\calD^{(\tau)}|$ is the database size at update time $\tau$.
% Recall that
% our constant enumeration delay is {\em not} amortized regardless  whether the update time is amortized.
% }

% \dano{I would move the claim on the constant delay to the formal statements, like Th 4.1, Corr. 4.2, Prop. 4.3, Th 4.4 and Th 4.5.}
% \mahmoud{How about if we move it to the problem statement instead (the boxes defining the $\ivmpm$ and $\ivmpmd$ problems)?}

\subsection{Insert-Only Setting}
\label{sec:results_insert_only}

In the insert-only setting, we show that off-the-shelf worst-case optimal join algorithms, e.g., LeapFrogTrieJoin~\cite{LeapFrogTrieJoin2014}, can be used to achieve the best known amortized update time for the dynamic query evaluation of arbitrary join queries, so including the {\em cyclic} queries, while supporting constant delay for full/delta enumeration.
Specifically, we give an upper bound on the single-tuple update time in terms of the fractional hypertree width $\fw(Q)$ of $Q$:

\begin{theorem}
\label{thm:main_inserts}
For any query $Q$,
both $\ivmp[Q]$ and $\ivmpd[Q]$ can be solved with $\bigO{N^{\fw(Q)-1}}$ amortized update time {\change{and non-amortized constant enumeration delay}}, where $N$ is the number of single-tuple inserts.
\end{theorem}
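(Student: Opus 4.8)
The plan is to reduce the insert-only maintenance problem to a sequence of static query evaluations, exploiting the fact that in the insert-only setting the database only grows. Let $\calD^{(N)}$ be the final database, of size at most $N$. The key observation is that for the query result we only need to enumerate $Q(\calD^{(\tau)})$ after update $\tau$, and since the relations are monotonically increasing, $Q(\calD^{(0)}) \subseteq Q(\calD^{(1)}) \subseteq \cdots \subseteq Q(\calD^{(N)})$. So conceptually we compute the final result once, using a worst-case optimal / fractional-hypertree-width algorithm, obtaining $Q(\calD^{(N)})$ in time $\bigO{N^{\fw(Q)} + |Q(\calD^{(N)})|} = \bigO{N^{\fw(Q)}}$ (using $|Q(\calD^{(N)})| \le N^{\fw(Q)}$ by the AGM/fractional-hypertree-width bound), and then we need to explain how to serve enumeration requests at intermediate times with constant delay and how to amortize the cost.

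First I would set up the data structure. Run the tree-decomposition-based static algorithm of width $\fw(Q)$ on $\calD^{(N)}$: materialize the bag relations (each of size $\bigO{N^{\fw(Q)}}$) and build, along the tree decomposition, the index structures that support constant-delay enumeration of $Q(\calD^{(N)})$ after a linear-time Yannakakis-style pass. To handle intermediate times, I would augment every tuple (in every input relation and in every bag relation) with the timestamp $\tau$ at which it was inserted; then a tuple in the materialized join is present in $Q(\calD^{(\tau)})$ iff all its constituent input tuples have timestamp $\le \tau$, equivalently iff the maximum timestamp among its constituents is $\le \tau$. So enumerating $Q(\calD^{(\tau)})$ amounts to enumerating those output tuples whose ``birth time'' is $\le \tau$. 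This can be arranged for constant delay by bucketing the enumeration structure by birth time, or by maintaining, at enumeration time $\tau$, pointers into each list that skip over not-yet-born tuples — this is exactly the kind of bookkeeping the paper's reductions (time $\to$ space) are designed to formalize, and I would invoke that machinery rather than rebuild it.

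Next comes the amortization. Recomputing the whole structure from scratch after every insert would cost $\bigO{N \cdot N^{\fw(Q)}}$ total, which is too much. The standard fix is the rebuilding/doubling trick: maintain the static structure for the current prefix, and whenever the number of inserts since the last rebuild reaches the current size (i.e., at times that are powers of $2$, or more carefully, in epochs of geometrically growing length), rebuild from scratch on the current database. Between rebuilds, newly inserted tuples are held in a small side buffer and the enumeration procedure unions the precomputed result with the contribution of the buffer. Since the buffer stays small, and since $\fw(Q) \ge 1$ so $N^{\fw(Q)}$ is superlinear-or-linear, the geometric sum of rebuild costs is $\bigO{N^{\fw(Q)}}$, i.e., $\bigO{N^{\fw(Q)-1}}$ amortized per insert. (If $\fw(Q) = 1$, the query is acyclic and each rebuild is linear; one still gets amortized constant time, consistent with the corollary in the introduction.) The $\ivmpd[Q]$ (delta-enumeration) variant is handled the same way: the change $\delta_\tau Q(\calD)$ after an insert is precisely the set of output tuples with birth time exactly $\tau$, which the timestamp-bucketed structure enumerates with constant delay.

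The main obstacle — the part requiring genuine care rather than routine assembly — is making the buffer-plus-precomputed-structure enumeration have \emph{constant}, non-amortized delay simultaneously for \emph{both} the full result and the delta, while the buffer is nonempty. One must ensure that enumerating $Q(\text{precomputed} \cup \text{buffer})$ does not produce duplicates and does not stall on buffer tuples that join with nothing, all within $\bigO{1}$ delay; this is where the tree-decomposition indices on the precomputed part must be queryable by the buffer tuples in constant time per emitted result tuple. I expect this to follow from the general enumeration lemmas established via the paper's static$\leftrightarrow$dynamic reduction, so in the write-up I would state the data structure, cite that reduction for the constant-delay guarantee, and spend the detailed argument on the epoch/rebuild accounting that yields the $\bigO{N^{\fw(Q)-1}}$ amortized bound.
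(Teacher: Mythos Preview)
Your approach is genuinely different from the paper's, and it has a real gap precisely at the point you flag as ``the main obstacle.'' The paper does \emph{not} use a rebuild/doubling strategy with a buffer; instead it processes every insert eagerly. The key tool is the \emph{query decomposition lemma}: for any fractional edge cover $(\lambda_R)_R$ and any single-tuple insert $\delta_\tau S(\bm y)$, the residual query $\delta_\tau Q = \bigwedge_R (R^{(\tau)} \ltimes \bm y)$ can be computed by a worst-case optimal join in time $\prod_R |R^{(\tau)} \ltimes \bm y|^{\lambda_R}$, and summing these bounds over all inserts into $S$ telescopes (via H\"older) to at most $\prod_R |R^{(N)}|^{\lambda_R}$. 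Applying this per bag of an optimal tree decomposition, the paper maintains each bag relation $Q'_t$ (and its projection $P'_t$) fully materialized and calibrated at all times, in total time $\bigO{N^{\fw(Q)}}$. Because the bags are always calibrated, constant-delay enumeration is available after every insert with no buffer and no deferred work; for $\ivmpd$ the paper adds a second, top-down calibration pass so that enumeration can start at whichever bag contains the updated relation.

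Your proposal, by contrast, defers the very incremental work that needs bounding. Between rebuilds the buffer has size $\Theta$ of the current database, so ``enumerating the contribution of the buffer'' is the original IVM problem at half scale---you cannot cite the paper's static$\leftrightarrow$dynamic reduction for this, since that reduction (the $\ov Q/\wh Q$ machinery) is for the insert-delete setting and yields $\fw(\wh Q)$, not $\fw(Q)$. The birth-time filtering idea also fails as stated: calibration of the bags is done with respect to the full epoch-end database, and a bag tuple with birth time $\le \tau$ may join only with neighboring-bag tuples of birth time $> \tau$ (e.g.\ in $R(A,B)\wedge S(B,C)$ with $R$ inserted at time~$1$ and $S$ at time~$2$, enumerating at $\tau=1$ stalls on the dangling $R$-tuple). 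So the constant-delay guarantee is lost exactly where you hoped to invoke it. The missing idea is the query decomposition lemma, which lets you charge each insert against the final AGM bound and thereby avoid any rebuild or buffer altogether.
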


The upper bound in Theorem~\ref{thm:main_inserts} is met by our algorithm outlined in Sec.~\ref{sec:insert_only}.
It uses worst-case optimal join algorithms and tree decompositions.
Amortization is necessary as some inserts may be costly, while many others are necessarily relatively cheaper, so the average insert time matches the cost of computing a factorized representation of the query result divided by the number of inserts.

Since every acyclic query has a fractional hypertree width of one, Theorem~\ref{thm:main_inserts} implies that every acyclic query can be maintained with amortized constant time per insert:

\begin{corollary}[Theorem~\ref{thm:main_inserts}]
\label{cor:inserts_acyclic}
For any acyclic query $Q$, $\ivmp[Q]$ and
$\ivmpd[Q]$ can be solved with
$\bigO{1}$ amortized update time {\change{and non-amortized constant enumeration delay}}.
\end{corollary}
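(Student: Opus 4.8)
The plan is to derive this directly from Theorem~\ref{thm:main_inserts} by showing that every acyclic query $Q$ has fractional hypertree width $\fw(Q) = 1$, so that the amortized update bound $\bigO{N^{\fw(Q)-1}}$ collapses to $\bigO{N^{0}} = \bigO{1}$. The only substantive content is therefore the standard fact $\fw(Q)=1$ for acyclic $Q$, which I would recall from Section~\ref{sec:prelims}: a query is acyclic precisely when it admits a join tree, equivalently a tree decomposition in which every bag is (contained in) the variable set of some atom $R_i(\bm X_i)$. For such a bag $\bm X_i$, the fractional edge cover number is at most $1$, witnessed by assigning weight $1$ to the single atom $R_i$ and weight $0$ to all other atoms: this covers every variable in the bag. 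Taking the maximum over all bags of a width-minimal decomposition gives $\fw(Q) \le 1$, and since any query with at least one variable has $\fw(Q) \ge 1$, we conclude $\fw(Q) = 1$.

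Having established $\fw(Q) = 1$, I would simply instantiate Theorem~\ref{thm:main_inserts}: both $\ivmp[Q]$ and $\ivmpd[Q]$ can be solved with $\bigO{N^{\fw(Q) - 1}} = \bigO{N^{0}} = \bigO{1}$ amortized update time. The constant-delay enumeration guarantee (non-amortized) is inherited verbatim from Theorem~\ref{thm:main_inserts}, since the corollary makes no stronger claim on enumeration. This completes the argument.

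There is essentially no obstacle here — the corollary is a one-line specialization. If anything, the only point requiring a touch of care is making sure the edge-case conventions line up: if $Q$ has no variables at all (a degenerate Boolean query over nullary atoms), then $\fw(Q)$ should be taken to be $1$ (or the statement holds trivially since the output is a single bit maintained in constant time), so the bound $\bigO{1}$ holds in every case. I would mention this only in passing, or omit it, since the intended reading of "acyclic query" in this paper involves non-trivial relational atoms and the general bound of Theorem~\ref{thm:main_inserts} already covers it.
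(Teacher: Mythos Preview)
Your proposal is correct and matches the paper's approach exactly: the paper justifies the corollary in a single sentence preceding it, namely that every acyclic query has fractional hypertree width one, so Theorem~\ref{thm:main_inserts} specializes to $\bigO{N^{0}}=\bigO{1}$ amortized update time. Your additional explanation of why $\fw(Q)=1$ for acyclic $Q$ (via a join-tree decomposition whose bags are covered by single atoms) is a reasonable elaboration of what the paper leaves implicit.
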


Corollary~\ref{cor:inserts_acyclic} recovers the amortized constant update time for acyclic joins from prior work~\cite{DBLP:journals/pvldb/WangHDY23}. It also shows that the insert-only setting can be computationally cheaper than the \insertdelete setting investigated in prior work~\cite{BerkholzKS17}: In the \insertdelete setting, the non-hierarchical acyclic queries cannot admit $\bigO{|\calD|^{1/2-\gamma}}$ update time (while keeping the enumeration delay constant) for any database $\calD$ and  
$\gamma > 0$~\cite{BerkholzKS17}, conditioned on the OMv conjecture.
\nop{
Furthermore, the \textsf{F-IVM} system~\cite{FIVM:VLDBJ:2023} reports time linear in the database size per single-tuple insert for non-hierarchical acyclic queries {\em in the worst-case}. Our finer complexity analysis for acyclic queries suggests that \textsf{F-IVM} takes in fact {\em amortized} constant time per single-tuple insert.}

\change{We accompany the upper bound from Theorem~\ref{thm:main_inserts} with a lower bound on the insertion time in terms of the lower bound $\lb(Q)$ on the static evaluation of $Q$ (Definition~\ref{defn:fw_lb}).
Unlike prior lower bounds~\cite{BerkholzKS17,DBLP:conf/icdt/KaraNNOZ19}, this lower bound is not
conditioned on a fine-grained complexity conjecture, such as the OMv conjecture~\cite{HenzingerKNS15}.
However, its proof is straightforward by viewing static evaluation as a stream of inserts.
It is meant to show that Theorem~\ref{thm:main_inserts} is optimal up to the gap between $\fw(Q)$ and $\lb(Q)$.
\footnote{\change{
More broadly, our lower bounds are meant to introduce a new framework to assess the optimality of algorithms in database theory, in terms of the function $\omega(Q)$.
}}
It also applies to {\em both} amortized and non-amortized update time and enumeration delay.
\footnote{\change{Note that a lower bound on the amortized time also implies the same lower bound on the worst-case time. This is because if an amortized algorithm does not
    exist, then a worst-case algorithm does not exist either.}}
}
% \mahmoud{I added the last sentence above. Is this what you had in mind?}
% \dano{yes, thanks!}
\change{
\begin{proposition}
    \label{prop:insert_only_lower_bound}
    For any query $Q$ and any constant $\gamma > 0$,
    neither $\ivmp[Q]$ nor $\ivmpd[Q]$ can be solved with $\tildeO{N^{\lb(Q)-1-\gamma}}$ (amortized) update time \change{and (non-amortized) constant enumeration delay}.
\end{proposition}
}

% \dano{For Mahmoud: You had a reasoning for introducing this measure, so as to justify the upper bound if I recall well. It would be good to state that reasoning here as justification.}

\subsection{\InsertDelete Setting}
\label{sec:results_fully_dynamic}

In the \insertdelete setting, our approach can maintain arbitrary join queries, and in particular any cyclic join query, with update times that can be asymptotically lower than recomputation.
In particular, we give an upper bound on the update time for a query $Q$ in terms of  the fractional hypertree width of the \multivariate extension $\wh Q$ of $Q$ (Definition~\ref{defn:multivariate}):

\begin{theorem}
    \label{thm:main_fully_dynamic}
    For any query $Q$,
    both $\ivmpm[Q]$ and $\ivmpmd[Q]$ can be solved with $\tildeO{|\calD^{(\tau)}|^{\fw(\wh Q)-1}}$ amortized update time \change{and non-amortized constant enumeration delay}, where $\wh Q$ is the \multivariate extension of $Q$, and $|\calD^{(\tau)}|$ is the current database size at update time $\tau$.
\end{theorem}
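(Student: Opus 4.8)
The plan is to prove Theorem~\ref{thm:main_fully_dynamic} by the time-as-space reduction announced in the introduction: executing an $N$-step stream of inserts and deletes to $Q$, with constant-delay enumeration of the current output after each step, is reduced to a single static evaluation of the \multivariate extension $\wh Q$ over a database of size $\tildeO{N}$, after which the $\fw$-based static algorithm behind Theorem~\ref{thm:main_inserts} (worst-case optimal joins plus tree decompositions) supplies the runtime. First I would associate to every stream tuple $\bm t$ a lifespan interval whose left endpoint is its insertion time and whose right endpoint is the time just before its deletion, or the current end of the timeline for tuples still alive; since $N$ is unknown, the timeline is a balanced segment tree grown by doubling, rebuilt on each doubling, which only introduces an $O(\log N)$ amortized factor absorbed into $\tildeO{\cdot}$. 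Each lifespan $I$ then splits into $O(\log N)$ canonical dyadic segments $\canpart(I)$, and each time point lies on a root-to-leaf path of $O(\log N)$ canonical segments that are pairwise nested. The encoded database $\calD_\star$ is obtained by populating, for each permutation $\bm\sigma\in\Sigma_k$ and each atom $\wh R_{\sigma_i}(Z_1,\ldots,Z_i,\bm X_{\sigma_i})$ of $\wh Q_{\bm\sigma}$, a relation with all tuples $(z_1,\ldots,z_i,\bm t)$ such that $\bm t$ is a stream tuple of $R_{\sigma_i}$ with lifespan $I$, $z_i\in\canpart(I)$, and $z_1\supseteq\cdots\supseteq z_i$ is a chain of canonical segments (with ties among equal segments broken by atom index); since $k$ is fixed, each relation has size $\tildeO{N}$.

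The correctness core is the lemma: $\bm x\in Q(\calD^{(\tau)})$ if and only if some component $\wh Q_{\bm\sigma}$ outputs a tuple $(z_1,\ldots,z_k,\bm x)$ on $\calD_\star$ with $\tau\in z_k$. For the ``only if'' direction, let $w_i$ be the unique member of $\canpart(I_{\bm x[\bm X_i]})$ containing $\tau$; all $w_i$ lie on the root-to-leaf path of $\tau$, hence are pairwise nested, so sorting them and breaking ties by atom index produces a permutation $\bm\sigma$ and a chain $z_i\defeq w_{\sigma_i}$ which, by construction of $\calD_\star$, witnesses $(z_1,\ldots,z_k,\bm x)\in\wh Q_{\bm\sigma}(\calD_\star)$ with $\tau\in z_k$. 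For the ``if'' direction, in any such output tuple one has $z_k\subseteq\cdots\subseteq z_1$ and $z_i\in\canpart(I_{\bm x[\bm X_{\sigma_i}]})\subseteq I_{\bm x[\bm X_{\sigma_i}]}$, so $\tau\in z_k$ lies in every lifespan and every atom of $Q$ holds at time $\tau$. The tie-breaking, reflected in the encoding, makes the witness $(\bm\sigma,z_1,\ldots,z_k)$ of each $\bm x\in Q(\calD^{(\tau)})$ unique, which is what will make constant-delay enumeration possible.

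Running the $\fw$-based static algorithm on $\calD_\star$ evaluates $\wh Q$ in $\sum_{\bm\sigma}\tildeO{N^{\fw(\wh Q_{\bm\sigma})}}=\tildeO{N^{\fw(\wh Q)}}$; I would put $\{Z_1,\ldots,Z_k\}$ in a root bag of each component's tree decomposition -- which is covered by the single atom $\wh R_{\sigma_k}$ and so leaves the width unchanged -- so that the resulting factorized data structure supports constant-delay enumeration of every slice obtained by fixing $Z_1,\ldots,Z_k$. To enumerate $Q(\calD^{(\tau)})$ at the current time, note that a witnessing $z_k$ is a canonical segment on the path of $\tau$ and the remaining $z_1,\ldots,z_{k-1}$ lie among its $O(\log N)$ ancestors, so there are $\tildeO{1}$ candidate slices; computed when the update is processed, one discards the empty ones (an $O(1)$ test in the structure) and concatenates the enumerations of the nonempty ones. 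By uniqueness of witnesses these slices are pairwise disjoint and all of them are nonempty, so no deduplication is needed and every transition between slices is $O(1)$, giving constant delay at every step. Finally, to obtain the amortized update time, I would process the stream in the $O(\log N)$ doubling epochs: at the start of an epoch of current length $L$ the encoded database and the factorized structure are rebuilt from scratch in $\tildeO{L^{\fw(\wh Q)}}$, and within an epoch each single-tuple insert adds $\tildeO{1}$ encoded tuples while each single-tuple delete finalizes one lifespan and hence replaces $\tildeO{1}$ encoded tuples, both folded into the structure; summing the rebuild costs over the epochs gives total time $\tildeO{N^{\fw(\wh Q)}}$, i.e.\ $\tildeO{N^{\fw(\wh Q)-1}}$ amortized per update. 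The delta variant $\ivmpmd[Q]$ is handled in the same way, enumerating instead the change $\delta_\tau Q(\calD)$, which has an analogous time-extended encoding as a union of queries of fractional hypertree width at most $\fw(\wh Q)$.

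I expect the main obstacle to be the incremental maintenance of the factorized structure \emph{between} rebuilds without ever violating the (non-amortizable) constant enumeration delay: inserts are append-only and can be absorbed in the spirit of Theorem~\ref{thm:main_inserts}, but a delete retroactively shortens a lifespan and therefore removes encoded tuples from the relations $\wh R_{\sigma_i}$, and removals from a factorized representation are not cheap in general. The resolution exploits that a deleted stream tuple never again contributes to a \emph{current} answer, so its removal can be applied eagerly to the input and bag relations while the now-stale parts of the factorized structure are retired lazily, their cost charged to the imminent rebuild; one then has to verify that the $\tildeO{1}$ encoded updates per step together with the per-epoch rebuild stay within the $\tildeO{L^{\fw(\wh Q)-1}}$ budget, and that the nonempty-slice filtering keeps the delay constant at every intermediate step. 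This bookkeeping, rather than the combinatorial lemma, is where the real work lies.
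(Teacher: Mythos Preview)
Your high-level architecture---lifespans, segment-tree canonical partitions, reduction to the multivariate extension $\wh Q$, and then reusing the insert-only machinery of Theorem~\ref{thm:main_inserts} on each component---matches the paper's (Lemmas~\ref{lem:reduction_ivmpm_ivmtilde}, \ref{lem:reduction_ivmtilde_ivmstar}, \ref{lem:upper_bound_maintenance_multivariate}). Your encoding via chains $z_1\supseteq\cdots\supseteq z_i$ is a mild reformulation of the paper's bitstring-concatenation encoding (Definition~\ref{defn:canpart:tuple}); both work and have the same $\tildeO{N}$ size. Your correctness lemma and the enumeration-by-slices at the $\tildeO{1}$ ancestors of $\tau$, rooted at a bag containing $\{Z_1,\ldots,Z_k\}$, are also essentially the paper's (Eq.~\eqref{eq:restricted_canonical_result} and the end of the proof of Lemma~\ref{lem:upper_bound_maintenance_multivariate}). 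The doubling trick is the paper's Remark~\ref{rmk:doubling:N}.

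The gap is exactly where you flag it, and your proposed resolution does not close it. You say a delete ``can be applied eagerly to the input and bag relations while the now-stale parts of the factorized structure are retired lazily, their cost charged to the imminent rebuild.'' But the bag relations \emph{are} the factorized structure used for enumeration: if you update them eagerly you must bound the cascade, and if you retire them lazily you break the (non-amortizable) constant delay, since stale tuples inside a nonempty slice would be emitted. Charging to the rebuild does not help either: within an epoch of length $L$ there can be up to $L$ deletes, each potentially triggering an unbounded cascade through the bag relations, and the rebuild budget $\tildeO{L^{\fw(\wh Q)}}$ gives no a priori bound on the sum of these cascades.

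The paper's resolution is a structural property of truncations, not a generic charging scheme. When at time $\tau$ you truncate a lifespan from $[\tau',\infty]$ to $[\tau',\tau]$, every tuple currently in $\ov\calD$ that joins with it has an interval of the form $[\alpha,\infty]$ with $\alpha<\tau$ (intervals starting at or after $\tau$ are in the future and do not yet exist), and such an interval still overlaps $[\tau',\tau]$; hence the truncated tuple joins with exactly the same set as before (Proposition~\ref{prop:trunc}). Combined with the fact that an optimal tree decomposition of each $\wh Q_{\bm\sigma}$ can be taken with $Z$-prefix-closed bags (Proposition~\ref{prop:prefix:td}), this makes the canonical-partition encodings of $[\tau,\infty]$ and of the singleton $[\tau,\tau]$ indistinguishable inside every bag at the moment of insertion (Claims~\ref{clm:insert_indistinguishable} and~\ref{clm:provisional_finalized}). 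The paper then, for the analysis, inserts alongside every provisional $[\tau,\infty]$ a \emph{finalized} twin $[\tau,\tau]$ that is never deleted, argues that their insertion costs differ only by a $\tildeO{1}$ factor, and deposits credits with each provisional insert to pay for its eventual removal. The finalized tuples form an insert-only stream, so Theorem~\ref{thm:main_inserts} bounds their total cost by $\tildeO{N^{\fw(\wh Q)}}$, and everything else is charged against that. What your sketch is missing is precisely this temporal-monotonicity observation about overlaps, which is what makes a truncation cost no more than the original insert rather than something to be deferred to a future rebuild.
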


Sec.~\ref{sec:forward_reduction:body} overviews our algorithm that meets the upper bound in \change{Theorem~\ref{thm:main_fully_dynamic}}. It involves {\em intersection joins} (Sec.~\ref{sec:intersection-joins}).
The following statements shed light on the relationship between a query $Q$ and its 
\multivariate extension $\wh Q$ (their proofs can be found in \shortorfull{\cite{full-version}}{Appendix~\ref{app:results}}):
\begin{itemize} 
\item $Q$ is hierarchical if and only if its \multivariate extension $\wh Q$ is acyclic, or equivalently $\fw(\wh Q) =1$.
\item If $Q$ is non-hierarchical, then $\fw(\wh Q) \geq \frac{3}{2}$. 
\item If $Q$ is Loomis-Whitney of {\em any} degree, then \mbox{$\fw(\wh Q) = \frac{3}{2}$}.
\item For {\em any} query $Q$, we have
$\fw(Q) \leq  \fw(\wh Q) \leq \fw(Q)+ 1$.
\end{itemize}

Immediate corollaries of these statements are that, in the \insertdelete setting, our approach needs: (1) amortized $\tildeO{1}$ update time for hierarchical queries; and (2) amortized $\tildeO{|\calD|^{1/2}}$ update time for the triangle query, which is the Loomis-Whitney query of degree 3. These update times mirror those given in prior work on $q$-hierarchical queries~\cite{BerkholzKS17} and the full triangle query~\cite{DBLP:journals/tods/KaraNNOZ20}, albeit our setting is \change{more restricted}: it considers full conjunctive queries and initially empty databases and the update times are amortized and have a polylog factor.

% \dano{Are we still restricted to an initially empty db now?}
% \mahmoud{Yes}

%\dano{Should we turn the above into formal corollaries, like Cor. 3.2?}

A close analysis of the lower bound proof for non-hierarchical queries in prior work \cite{BerkholzKS17} reveals that for any non-hierarchical query $Q$ and  any $\gamma >0$, there is no algorithm that solves $\ivmpm[Q]$ or $\ivmpmd[Q]$ with {\em amortized} $\tildeO{N^{1/2-\gamma}}$ update time, unless the OMv-conjecture fails \shortorfull{\cite{full-version}}{(Proposition \ref{thm:fully_dynamic_OMV_lower_bound} in 
Appendix~\ref{app:results})}. Following \change{Theorem~\ref{thm:main_fully_dynamic}} and  the fact that $\fw(\wh Q) = \frac{3}{2}$ for any Loomis-Whitney query $Q$, we conclude that both $\ivmpm[Q]$ and $\ivmpmd[Q]$ can be solved with $\tildeO{|\calD|^{1/2}}$ amortized update time, which is  optimal, unless the OMv conjecture fails\shortorfull{~\cite{full-version}}{}.
% \dano{Again, references to appendix -- we need to rewrite this to refer to the extended arxiv report.}

% \dano{Do we include Proposition \ref{thm:fully_dynamic_OMV_lower_bound} in 
% Appendix~\ref{app:results} in the final camera ready? If not, let us cite the extended technical report.}

First-order IVM, i.e., delta queries, and even higher-order IVM, i.e., delta queries with materialized views, cannot achieve the update time of our approach. It was already discussed in prior work that for the triangle join query, both IVM approaches need linear update time per single-tuple update~\cite{DBLP:conf/icdt/KaraNNOZ19,DBLP:journals/tods/KaraNNOZ20}.
See Appendix~\ref{app:comparison}.
The IVM$^\epsilon$ approach resorts to a heavy/light partitioning argument that is tailored to the triangle query and does not generalize to other cyclic queries~\cite{DBLP:conf/icdt/KaraNNOZ19,DBLP:journals/tods/KaraNNOZ20}. Instead, our approach solves this problem more systematically and for {\em any} query $Q$ by translating the temporal dimension (the tuple lifespan as defined by its insert and possible delete) into spatial attributes (the multivariate encoding of the tuple lifespan), taking a tree decomposition of the multivariate extension $\hat{Q}$ of $Q$, and by materializing and maintaining the bags of this tree decomposition.

We complement the  upper bound in \change{Theorem~\ref{thm:main_fully_dynamic}} with a lower bound in terms of the lower bound $\lb(\wh Q_{\bm\sigma})$ for the static evaluation of any component $\wh Q_{\bm\sigma}$ of $\wh Q$.
It is not conditioned of a complexity conjecture.
However, it only applies to the $\ivmpmd[Q]$ problem.
\change{In particular, it is meant to show that our upper bound from Theorem~\ref{thm:main_fully_dynamic} is tight for $\ivmpmd[Q]$ up to the gap between $\fw(\wh Q)$ and $\lb(\wh Q_{\bm\sigma})$.}
\change{It also applies to both amortized and non-amortized update time and enumeration delay.}
Sec.~\ref{sec:backword_reduction:body} gives the high-level idea:

% \dano{I would drop this sentence above as it seems to emphasise something that irritates the reviewer:\\
% "This lower bound is not conditioned on a complexity conjecture."
% Should we also turn the theorem below into a proposition?
% }
\begin{theorem}
\label{thm:fully_dynamic_lower_bound}
    Let $Q$ be a query and $\wh Q_{\bm \sigma}$ any component of its \multivariate extension. For any constant $\gamma > 0$, $\ivmpmd[Q]$ cannot be solved with \change{(amortized)} update time
    \change{$\tildeO{|\calD^{(\tau)}|^{\lb(\wh Q_{\bm\sigma})-1-\gamma}}$}
    \change{and (non-amortized) constant enumeration delay}.
\end{theorem}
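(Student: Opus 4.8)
The plan is to prove the lower bound by a reduction in the opposite direction to Theorem~\ref{thm:main_fully_dynamic}: from the static evaluation problem $\eval[\wh Q_{\bm\sigma}]$ for a single component to the dynamic delta-enumeration problem $\ivmpmd[Q]$. Fix a permutation $\bm\sigma$ and an input database $\wh\calD$ for $\wh Q_{\bm\sigma}$ of size $n$. The key idea is that the fresh ``lifespan'' variables $Z_1, \ldots, Z_k$ in $\wh Q_{\bm\sigma}$ play the role of update timestamps: a tuple of $\wh R_{\sigma_i}$ with lifespan coordinates $(z_1, \ldots, z_i)$ should be ``alive'' in $Q$ exactly during a time window determined by those coordinates, and nested containment of windows across atoms mirrors the prefix structure $Z_1; Z_1,Z_2; \ldots; Z_1,\ldots,Z_k$ of the component. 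So the reduction builds, from $\wh\calD$, a stream of single-tuple inserts and deletes to $Q$ of length $N = \bigO{n \cdot \polylog}$ (or $\bigO{n}$, depending on how carefully one encodes the $Z_i$-values as a total order) such that each output tuple of $\wh Q_{\bm\sigma}(\wh\calD)$ corresponds to a distinct timestamp $\tau$ at which $\delta_\tau Q(\calD) \neq \emptyset$, and moreover the delta at that time reveals the corresponding tuple of $Q$. Reading off $\wh Q_{\bm\sigma}(\wh\calD)$ then amounts to running the purported dynamic algorithm on this stream and, after each of the $N$ updates, invoking the constant-delay delta enumeration to collect the change; total cost is $N$ times the amortized update time plus $\bigO{|\wh Q_{\bm\sigma}(\wh\calD)|}$ for the enumeration.

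The concrete construction I would use: for the innermost atom $R_{\sigma_k}$, whose extended schema carries all of $Z_1,\ldots,Z_k$, encode each of its tuples as a single insert at a carefully chosen time and a matching delete one step later, so it is alive for an instant; for $R_{\sigma_{k-1}}$, carrying $Z_1,\ldots,Z_{k-1}$, make each tuple alive over the contiguous block of times corresponding to all extensions of its $(z_1,\ldots,z_{k-1})$-prefix; and so on up to $R_{\sigma_1}$, carrying only $Z_1$, whose tuples are alive over the largest blocks. Because the blocks for different atoms are nested according to prefix containment of the $Z$-tuples, a join tuple of $Q$ exists at time $\tau$ precisely when all $k$ relations contain matching tuples whose $Z$-prefixes are consistent with $\tau$ — which is exactly the join condition of $\wh Q_{\bm\sigma}$ with the $Z$-variables instantiated to the coordinates encoded by $\tau$. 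To make this literally a total order on timestamps I would lay the $Z$-values out as the leaves of a depth-$k$ tree in DFS order, giving timestamps in $[\,O(n^k)\,]$ in the naive encoding; to keep $N$ linear (needed so that $N^{\lb(\wh Q_{\bm\sigma})-1}$ is the right target) I would instead only materialize the at most $n$ relevant coordinate values per level and only the nonempty sub-blocks, so the number of distinct ``event times'' at which some insert/delete happens is $\bigO{n}$, with empty stretches collapsed.

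The argument then concludes as follows. Correctness: by the nesting property, $\bigcup_\tau \delta_\tau Q(\calD)$ — with signs — telescopes so that the multiset of tuples that ever appear, each tagged by the coordinates of its birth time, is in bijection with $\wh Q_{\bm\sigma}(\wh\calD)$; collecting all deltas over the run and keeping the inserts recovers the output. Complexity: if $\ivmpmd[Q]$ admitted amortized update time $\tildeO{N^{\lb(\wh Q_{\bm\sigma})-1-\gamma}}$, then the whole stream runs in $\tildeO{N^{\lb(\wh Q_{\bm\sigma})-\gamma}} = \tildeO{n^{\lb(\wh Q_{\bm\sigma})-\gamma}}$, and adding the $\bigO{|\wh Q_{\bm\sigma}(\wh\calD)|}$ enumeration cost gives an algorithm for $\eval[\wh Q_{\bm\sigma}]$ in time $\tildeO{n^{\lb(\wh Q_{\bm\sigma})-\gamma} + |\wh Q_{\bm\sigma}(\wh\calD)|}$, contradicting the definition of $\lb(\wh Q_{\bm\sigma})$ as the \emph{smallest} exponent (for $n$ large enough the $\tildeO{}$ polylog is absorbed into the $n^{-\gamma}$ slack, and one takes $\gamma/2$ if one wants a clean $n^{\alpha+o(1)}$ statement). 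The bijection at the head level must also be checked to be constructible within these bounds, i.e.\ that from the collected deltas one reads each output tuple of $\wh Q_{\bm\sigma}$ in $\bigO{1}$ amortized — which holds since the delta enumeration itself has constant delay.

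The main obstacle I anticipate is the faithful encoding of the lifespan/timestamp structure: I need the time windows assigned to tuples of different atoms to intersect \emph{exactly} when the corresponding $Z$-prefixes are consistent, with no spurious coincidences and no missed joins, while simultaneously (a) keeping $N$ linear in $n$ rather than $n^k$ so the lower-bound exponent comes out right, and (b) ensuring each join tuple is born at a unique timestamp so the deltas are singletons and the output-size term in the final running time is genuinely $|\wh Q_{\bm\sigma}(\wh\calD)|$ and not larger. Getting the interval-nesting bookkeeping right — essentially a careful DFS layout of a $k$-level tree of $Z$-coordinates, with a tie-breaking scheme so that only $\bigO{n}$ distinct event times occur — is where the real work lies; the complexity accounting afterwards is routine.
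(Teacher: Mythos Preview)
Your proposal is correct and takes essentially the same approach as the paper. The paper factors the reduction through the univariate (intersection-join) extension $\ov Q$: it first invokes the segment-tree backward construction $\iv(\wh\calD_{\bm\sigma})$ of Definition~\ref{defn:interval-version} and Theorem~\ref{thm:IJ:backward:reduction} to turn the $Z$-coordinates into nested intervals---which is exactly your ``depth-$k$ DFS layout,'' and which immediately gives the linear-size guarantee you worry about (Proposition~\ref{prop:iv_same_input_size}) as well as the output-size equality $|\ov Q(\ov\calD)|=|\wh Q_{\bm\sigma}(\wh\calD_{\bm\sigma})|$---and then sweeps the interval endpoints as $2|\ov\calD|$ insert/delete events into $Q$ while collecting the deltas (Lemma~\ref{lem:eval_univariate_to_maintain_join}).
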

% Since $|\calD^{(\tau)}|\leq N$, the above immediately implies that $\ivmpmd[Q]$ cannot be solved with amortized update time
%     $\tildeO{|\calD^{(\tau)}|^{\lb(\wh Q_{\bm\sigma})-1-\gamma}}$.

\section{IVM: Insert-Only Setting}
\label{sec:insert_only}

In this section, we give an overview of our algorithm for $\ivmp[Q]$ that meets the upper bound
of Theorem~\ref{thm:main_inserts}. We leave the details and proofs to \shortorfull{\cite{full-version}}{Appendix~\ref{app:insert_only}}.
We start with the following lemma (Recall notation from Sec.~\ref{sec:prelims}):
\begin{lemma}
    \label{lem:insert_only_in_AGM}
    Given a  query $Q$, an initially empty database $\calD^{(0)}$, and a stream of $N$ single-tuple inserts,
we can compute the {\em new} output tuples $\delta_\tau Q(D)$ after every insert
$\delta_\tau \calD$, where the total computation time over {\em all} inserts is $\bigO{N + \agm(Q, \calD^{(N)})}$.
\end{lemma}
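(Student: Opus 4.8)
The plan is to process the stream of inserts in geometrically growing batches and invoke a worst-case optimal join algorithm on each batch against the data seen so far, charging the cost to the AGM bound of the final database. First I would set up the batching: partition the stream of $N$ inserts into consecutive phases, where phase $j$ consists of the inserts that bring the database size from roughly $2^{j-1}$ to $2^j$; there are $O(\log N)$ phases and the final database $\calD^{(N)}$ has size at most $N$. Within a phase, I maintain a ``frozen'' snapshot $\calD_{\text{old}}$ of the database at the start of the phase and a small ``delta'' database $\Delta$ accumulating the new tuples of the current phase, so $\Delta$ never exceeds the size of $\calD_{\text{old}}$ (up to a constant). After each single-tuple insert $\delta_\tau\calD$, the genuinely new output tuples are exactly those tuples of $Q(\calD^{(\tau)})$ that use the just-inserted tuple; more conservatively, all new output tuples produced during the whole phase lie in $Q(\calD_{\text{old}} \cup \Delta) \setminus Q(\calD_{\text{old}})$.

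Next I would explain how to emit $\delta_\tau Q(\calD)$ incrementally rather than only at the end of a phase. The standard trick is the delta-query decomposition: for a single-tuple insert $+R_i(\bm t)$ at time $\tau$, the new output tuples are
\begin{align}
    \delta_\tau Q(\calD) = \{\bm t\} \bowtie \bigwedge_{j \neq i} R_j^{(\tau)}(\bm X_j),
\end{align}
restricted to tuples not already in $Q(\calD^{(\tau-1)})$. Each such delta query is a full join query with one relation pinned to a single tuple, so by a worst-case optimal join algorithm it can be evaluated in time $\tildeO{}$ proportional to the AGM bound of its body on the current database; and because any output tuple of $Q$ is produced exactly once over the whole stream (the first time all of its contributing tuples are present), summing these per-insert costs telescopes. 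The subtlety is that naively each delta-query evaluation could cost as much as a full join on $\calD^{(N)}$, so I would instead argue the sum directly: over the entire stream, the union of all delta-query outputs (with multiplicity one each) is a subset of $Q(\calD^{(N)})$, whose size is at most $\agm(Q,\calD^{(N)})$, and the ``wasted'' work of a worst-case optimal join is bounded by the same AGM quantity. Adding the $O(N)$ bookkeeping cost for touching each insert gives the claimed $\bigO{N + \agm(Q,\calD^{(N)})}$.

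The main obstacle, and the step that needs care, is bounding the \emph{total} work of the repeated worst-case optimal join calls by a single AGM bound rather than by $\sum_\tau \agm(\cdots)$, which could be much larger. The key idea is monotonicity of the AGM bound under inserts together with the geometric batching: within phase $j$ we do one worst-case optimal join of $Q$ on $\calD_{\text{old}} \cup \Delta$ (size $\Theta(2^j)$), costing $\tildeO{2^{j \cdot \rho^*}}$ where $\rho^*$ is the fractional edge cover number, and $\sum_j 2^{j\rho^*} = O(N^{\rho^*}) = \bigO{\agm(Q,\calD^{(N)})}$ since the geometric series is dominated by its last term and $\rho^* \geq 1$. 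To recover the \emph{incremental} emission required by the lemma (new tuples after \emph{every} insert, not every phase), I would run the phase-$j$ join lazily/interleaved with the inserts of the phase, or alternatively observe that re-deriving already-known output tuples is harmless for correctness and only the first derivation is reported; deduplication against the previous output can be done via a hash index on $Q(\calD^{(\tau-1)})$, maintained in $\bigO{1}$ amortized time per output tuple. I should also double-check that the $o(1)$ / polylog slack in worst-case optimal join runtimes is absorbed — here the lemma statement uses plain $\bigO{}$ for the AGM term, so I would either appeal to an exact $\bigO{\agm}$ worst-case optimal join bound (available for e.g. the Generic Join / NPRR algorithm up to the query-size constant, which is fixed) or note that the $\tildeO{}$ is harmless downstream; since the lemma as stated writes $\bigO{\agm(Q,\calD^{(N)})}$, I would cite the version of the worst-case optimal join theorem that achieves $O(\agm)$ with the query fixed.
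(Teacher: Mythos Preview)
Your proposal contains the right delta-query expression but then abandons it for a batching argument that does not meet the lemma's requirement. Batching into geometric phases and running one worst-case optimal join per phase does bound the total work by $\sum_j O(2^{j\rho^*}) = O(N^{\rho^*})$, but it only produces the new output tuples at phase boundaries. Your fixes for per-insert emission do not work: you cannot ``lazily interleave'' a join over tuples that have not yet arrived, and running the full join only at the end of a phase tells you, at best after post-processing, which insert created each output tuple --- but not \emph{at the time} of that insert, which is what the lemma demands.

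The paper instead makes your first idea work directly, and the missing piece is exactly the bound you flag as ``the main obstacle.'' Fix a fractional edge cover $(\lambda_{R(\bm X)})$ of $Q$. When insert $\tau$ puts a tuple $\bm y$ into atom $S(\bm Y)$, the delta query $\bigwedge_{R(\bm X)} \bigl(R^{(\tau)}(\bm X)\ltimes \bm y\bigr)$ is evaluated by a worst-case optimal join in time $\prod_R |R^{(\tau)}\ltimes\bm y|^{\lambda_R}\le\prod_R |R^{(N)}\ltimes\bm y|^{\lambda_R}$ (by monotonicity, since $R^{(\tau)}\subseteq R^{(N)}$). The sum of these residual AGM bounds over all $\bm y\in S^{(N)}$ is controlled by the \emph{Query Decomposition Lemma}, which is just a repeated application of H\"older's inequality, one variable of $\bm Y$ at a time:
\[
\sum_{\bm y\in \Dom(\bm Y)}\ \prod_{R(\bm X)}|R^{(N)}(\bm X)\ltimes\bm y|^{\lambda_{R(\bm X)}}\ \le\ \prod_{R(\bm X)}|R^{(N)}|^{\lambda_{R(\bm X)}}\ =\ \agm(Q,\calD^{(N)}).
\]
Summing over the constantly many atoms $S$ gives the stated total time. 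So the ``wasted work is bounded by the same AGM quantity'' claim that you asserted without proof \emph{is} this lemma; once you have it, no batching, phases, snapshots, or output-deduplication are needed.
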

The total computation time above is the same as the AGM bound~\cite{AtseriasGM13} of $Q$ over the final database $\calD^{(N)}$. 
In particular, even in the static setting where $\calD^{(N)}$ is given 
upfront, the output size $|Q(\calD^{(N)})|$ can be as large as $\agm(Q, \calD^{(N)})$ in the worst-case.
Moreover, worst-case optimal join algorithms cannot beat this runtime~\cite{Ngo:JACM:18,LeapFrogTrieJoin2014,WCOJGemsOfPODS2018}.
The above lemma is proved based on the {\em query decomposition lemma}~\cite{SkewStrikesBack2014,WCOJGemsOfPODS2018}, which
says the following.
Let $Q$ be a query and let $\bm Y \subseteq \vars(Q)$.
Then, the AGM bound of $Q$ can be decomposed into a sum of AGM bounds of ``residual'' queries: one query $Q\Join \bm y$ for each tuple $\bm y$ over the variables $\bm Y$. \shortorfull{}{Appendix~\ref{app:insert_only} gives a proof of the query decomposition lemma and an example.}
\change{
\begin{lemma}[Query Decomposition Lemma~\cite{SkewStrikesBack2014,WCOJGemsOfPODS2018}]
    Given a query $Q$ and a subset $\bm Y \subseteq \vars(Q)$, let $\left(\lambda_{R(\bm X)}\right)_{R(\bm X)\in\atoms(Q)}$ be a fractional edge cover of $Q$.
    Then, the following inequality holds:

    \begin{equation}
        \sum_{\bm y\in \Dom(\bm Y)}
        \underbrace{\prod_{R(\bm X) \in \atoms(Q)}
        |R(\bm X)\ltimes \bm y|^{\lambda_{R(\bm X)}}
        }_{\text{AGM-bound of $Q\ltimes \bm y$}}
        \leq
        \underbrace{\prod_{R(\bm X) \in \atoms(Q)}
        |R|^{\lambda_{R(\bm X)}}}_{\text{AGM-bound of $Q$}}
        \label{eq:query-decom-lemma}
    \end{equation}
    \label{lmm:query-decom-lemma}
\end{lemma}
In the above, $\bm y\in \Dom(\bm Y)$ indicates that the tuple $\bm y$ has schema $\bm Y$. Moreover,
$R(\bm X)\ltimes \bm y$ denotes the {\em semijoin} of the atom
$R(\bm X)$ with the tuple $\bm y$.
}

% \dano{In the above, the underbrace uses join while the formula uses semi-join. The connection between the two needs explaining.}
% \mahmoud{I now use semijoin in both for consistency.}

\begin{example}[for Lemma~\ref{lem:insert_only_in_AGM}]
    \label{ex:insert-only:agm}
    Consider the triangle query $Q_\triangle$ from Eq.~\eqref{eq:triangle}.
    Lemma~\ref{lem:insert_only_in_AGM} says that given a stream of $N$ single-tuple inserts
    into $R, S$, and $T$, we can compute the new output tuples after every insert
    in a total time of $\bigO{N^{3/2}}$.
    To achieve this, we need to maintain two indices for $R(A, B)$: one index that sorts $R$ first by $A$ and then by $B$, while the other sorts $R$ first by $B$ and then by $A$.
    For every insert to $R$, we update the two indices simultaneously.
    Similarly, we maintain two indices for each of $S$ and $T$.

    Following notation from Sec.~\ref{sec:prelims}, let $R^{(0)}\defeq \emptyset, R^{(1)}, \ldots, R^{(N)}$ be the sequence of versions of $R$ after each insert, and the same
    goes for $S$ and $T$.
    Suppose that the $\tau$-th insert in the stream is inserting a tuple $(a, b)$ into $R(A, B)$. The new output tuples that are added due to this insert correspond to the output
    of the following query:
    (Below, we drop the database instance $\calD$ from the notation $\delta_\tau Q_\triangle(\calD)$
    since $\calD$ is clear from the context.)
    \begin{align*}
    \delta_\tau Q_\triangle(A, B, C) \defeq \sigma_{B=b}S^{(\tau)}(B,C) \wedge \sigma_{A=a}T^{(\tau)}(A, C)
    \end{align*}
    The output size of this query is upper bounded by:
    \begin{align}
        \min(|\sigma_{B=b}S^{(\tau)}(B,C)|, |\sigma_{A=a}T^{(\tau)}(A, C)|)\leq
        \sqrt{|\sigma_{B=b}S^{(\tau)}(B,C)|\cdot |\sigma_{A=a}T^{(\tau)}(A, C)|}
        \label{eq:lemma_inserts_only_triangle_0}
    \end{align}
    % \ahmet{@Mahmoud: Should we better use $\bigO$-notation to express the above runtime?}
    Moreover, the above query can be computed in time within $\bigO{1}$ factor
    from the quantity in Eq.~\eqref{eq:lemma_inserts_only_triangle_0}.
    To achieve this time, we need to use the index for $S(B, C)$ that is indexed
    by $B$ first, and we also need the index for $T(A, C)$ that is indexed by $A$ first.
    Because $S^{(\tau)}\subseteq S^{(N)}$ and $T^{(\tau)}\subseteq T^{(N)}$, the quantity in Eq.~\eqref{eq:lemma_inserts_only_triangle_0} is bounded by:
    \begin{align}
        \sqrt{|\sigma_{B=b}S^{(N)}(B,C)|\cdot |\sigma_{A=a}T^{(N)}(A, C)|}
        \label{eq:lemma_inserts_only_triangle_1}
    \end{align}
    The query decomposition lemma
    \shortorfull{}{(specifically Eq.~\eqref{eq:query-decom-lemma:triangle} in Example~\ref{ex:query-decom-lemma:triangle})}
    says that the sum of the quantity in Eq.~\eqref{eq:lemma_inserts_only_triangle_1} over all $(a, b) \in R^{(N)}$
    is bounded by
    $\sqrt{|R^{(N)}|\cdot |S^{(N)}|\cdot |T^{(N)}|}\leq N^{3/2}$.
    Hence, all inserts into $R$ take time $\bigO{N^{3/2}}$.

    Now suppose that the $\tau$-th insert in the stream is inserting a tuple $(b, c)$ into $S(B, C)$. To handle this insert, we compute the query:
    \begin{align*}
        \delta_\tau Q_\triangle(A, B, C) \defeq  \sigma_{B=b}R^{(\tau)}(A, B) \wedge \sigma_{C=c}T^{(\tau)}(A,C)
    \end{align*}
    To compute this query in the desired time, we need to use the index
    for $T(A, C)$ that is indexed by $C$ first. This is why we need to maintain
    two indices for $T(A, C)$. The same goes for $R$ and $S$.
\end{example}

Instead of the AGM-bound, the upper bound in Theorem~\ref{thm:main_inserts} is given in terms of the fractional
hypertree width of $Q$. To achieve this bound, we take an optimal tree decomposition
of $Q$ and maintain a materialized relation for every bag in the tree decomposition using
Lemma~\ref{lem:insert_only_in_AGM}.
In order to support constant-delay enumeration of the output, we ``calibrate'' the bags
by semijoin reducing adjacent bags with one another.
The following example illustrates this idea.
\begin{example}[for Theorem~\ref{thm:main_inserts}, $\ivmp$]
    \label{ex:insert_only_full}
    Suppose we want to solve $\ivmp[Q]$
    for the following query $Q$ consisting of two adjacent triangles: One triangle over $\{A, B, C\}$ and the other over $\{B, C, D\}$.

    \begin{equation}
        Q(A, B, C, D) = R(A, B) \wedge S(B, C) \wedge T(A, C) \wedge U(B, D) \wedge V(C, D)
        \label{eq:2triangles}
    \end{equation}
    
    The above query $Q$ has a fractional hypertree width of $\frac{3}{2}$. In particular, one optimal tree decomposition of $Q$
    consists of two bags: One child bag $B_1 =\{A, B, C\}$ and another root bag
    $B_2=\{B, C, D\}$.
    Theorem~\ref{thm:main_inserts} says that given a stream of $N$ single-tuple inserts,
    $Q$ can be updated in a total time of $\bigO{N^{3/2}}$ where we can do
    constant-delay enumeration of the output after every insert. To achieve this, we maintain
    the following query plan:

    \begin{align*}
        Q_1(A, B, C) &= R(A, B) \wedge S(B, C) \wedge T(A, C)\\
        P_1(B, C) &= Q_1(A, B, C)\\
        Q_2'(B, C, D) &= P_1(B, C) \wedge S(B, C) \wedge U(B, D) \wedge V(C, D)
    \end{align*}
    
    By Lemma~\ref{lem:insert_only_in_AGM}, $Q_1$ above can be updated in a total time of $\bigO{N^{3/2}}$, as shown in Example~\ref{ex:insert-only:agm}.
    Also, $P_1$ can be updated in the same time because it is just a projection of $Q_1$.
    Regardless of the size of $P_1$, the query $Q_2'$ can be updated in a total time of $\bigO{N^{3/2}}$ because
    its AGM bound is upper bounded by the input relations $S, U$ and $V$. To do constant-delay enumeration of $Q(A, B, C, D)$, we enumerate $(b, c, d)$ from $Q_2'$,
    and for every $(b, c)$, we enumerate the corresponding $A$-values from $Q_1$.
    Note that at least one $A$-value must exist because $Q_2'$ includes $P_1$, which is
    the projection of $Q_1$.
\end{example}
\change{
In the above example, we only calibrate bottom-up from the leaf $Q_1$ to the root $Q_2'$. However, this is not sufficient for $\ivmpd[Q]$. Instead, we also need to calibrate top-down, as the following example demonstrates.
% For example, suppose we have an insert of a tuple $(a, b)$ into $R$ above.
% In order to enumerate the new output tuples corresponding to $(a, b)$, we have to start our enumeration from $Q_1$ as the root. For that purpose, $Q_1$ also needs to be calibrated
% with $Q_2'$.
}

\change{
\begin{example}[for Theorem~\ref{thm:main_inserts}, $\ivmpd$]
    \label{ex:insert_only_delta}
    Consider again the query~\eqref{eq:2triangles} from Example~\ref{ex:insert_only_full}.
    Suppose now that we want to extend our solution from Example~\ref{ex:insert_only_full}
    to the $\ivmpd[Q]$ problem.
    In particular, suppose we have an insert of a tuple $(a, b)$ into $R$ above.
    In order to enumerate the new output tuples corresponding to $(a, b)$, we have to start our enumeration from $Q_1$ as the root. For that purpose, $Q_1$ also needs to be calibrated
    with $Q_2'$.
    To that end, in addition to $Q_1, P_1$ and $Q_2'$ that are defined in Example~\ref{ex:insert_only_full},
    we also need to maintain the following relations:
    \begin{align*}
        P_1''(B, C) &= Q_2'(B, C, D)\\
        Q_1''(A, B, C) &= Q_1(A, B, C) \wedge P_1''(B, C)\\
    \end{align*}
    The AGM bound of $Q_1''$ is still $N^{3/2}$, and so is the AGM bound of $P_1''$.
    Hence, we can maintain all inserts into them in a total time of $\bigO{N^{3/2}}$.
    Moreover, note that by definition, $Q_1''$ must be the projection of $Q(A,B,C,D)$
    onto $\{A, B, C\}$.

    Now, suppose we have an insert of a tuple $\bm t$ into either one of the relations $S, U$ or $V$. Then, we can enumerate the new output tuples that are added due to this insert by starting from  $Q_2'$, just like we did in Example~\ref{ex:insert_only_full}.
    However, if we have inserts into $R$ or $T$, then we have to start our enumeration from $Q_1''$. For each output tuple $(a, b, c)$ of $Q_1''$ that joins with $\bm t$, we enumerate the corresponding $D$-values from $Q_2'$. There must be at least one $D$-value because $Q_1''$ is the projection of the output $Q$ onto $\{A, B, C\}$.
\end{example}
}
\section{Technical Tools: Intersection Joins}
\label{sec:intersection-joins}
Before we explain our upper and lower bounds for IVM in the \insertdelete setting in Sec.~\ref{sec:fully_dynamic}, we review in this section some necessary background
on intersection joins in the {\em static} setting.
%Intersection joins were introduced in Definition~\ref{defn:intersection-join}.

\begin{definition}[Intersection queries~\cite{KhamisCKO22}]
    \label{defn:intersection-join}
An {\em intersection query} is a (full conjunctive) query
that contains {\em interval variables}, which are variables
whose domains are discrete intervals over the natural numbers. 
For instance, the discrete interval $[3,7]$ consists of the natural 
numbers from $3$ to $7$.
For better distinction, we denote an interval variable
by $[X]$ and call the variables that are not interval variables {\em point variables}.
A value of an interval variable $[X]$ is denoted by $[x]$, which is an interval.
\nop{
% Given an interval $[x]$, we use $\leftendpoint{[x]}$ and 
% $\rightendpoint{[x]}$ to denote the left and right endpoints of $[x]$ respectively.
}
Two or more intervals {\em join} if they {\em intersect}.
In particular, 
the semantics of an intersection query
$Q = R_1(\bm X_1) \wedge \cdots \wedge R_k(\bm X_k)$ 
is defined as follows.
A tuple $\bm t$ over the schema $\vars(Q)$
is in the result of $Q$ if there are tuples $\bm t_i \in R_i$ for 
$i \in [k]$ such that for all $X \in \vars(Q)$, it holds:
\begin{itemize}[leftmargin=*]
\item if $X$ is a point variable, then the set of points
$\{\bm t_i(X)\mid i \in [k] \wedge X \in \bm X_i\}$ consists of a single value,
which is the point $\bm t(X)$;
\item if $[X]$ is an interval variable, then the intervals in the set
$\{\bm t_i([X]) \mid i \in [k] \wedge [X] \in \bm X_i\}$ have a non-empty intersection,
which is the interval $\bm t([X])$. 
\end{itemize}
\end{definition}
%See Example~\ref{ex:triangle-query:uni} for an example of the semantics of an intersection
%join and Section~\ref{sec:intersection-joins} for more background on intersection joins.

In this paper, we are only interested in a special class of intersection 
queries, defined below:

\begin{definition}[The \univariate extension $\ov Q$ of a query $Q$]
    \label{defn:univariate}
    Let $Q = R_1(\bm X_1)\wedge \cdots\wedge R_k(\bm X_k)$ be a query where $\vars(Q)$
    are all point variables, and let $[Z]$ be a new interval variable.
    The {\em \univariate} extension 
    $\ov{Q}$ of $Q$ is an intersection query that results 
    from $Q$ by adding $[Z]$ to the schema 
    of each atom:

    \begin{align}
    \label{eq:univariate}
    \ov{Q} = \ov{R}_1([Z], \bm X_1)\wedge \cdots\wedge \ov{R}_k([Z],\bm X_k)
    \end{align}
    
    A query $\ov Q$ is called a {\em \univariate extension} if it is the \univariate extension of some query $Q$.
\end{definition}
We call $\ov Q$ the ``\univariate extension'' of $Q$ because in Sec.~\ref{sec:forward_reduction:body}, we use the interval $[Z]$ in $\ov Q$ to represent the ``lifespan'' of tuples in $Q$.

\begin{example}
    \label{ex:triangle-query:uni}
Consider this intersection query, which is the \univariate extension of $Q_\triangle$ from Eq.~\eqref{eq:triangle}:
\begin{align}
    \hspace{-0.4cm}
    \ov Q_\triangle([Z], A, B, C) = \ov R([Z], A, B) \wedge \ov S([Z], B, C) \wedge \ov T([Z], A, C)
    \label{eq:triangle-query:uni}
\end{align}
In the above, $[Z]$ is an interval variable while $A, B, C$ are point variables.
Consider the database instance $\ov \calD$ depicted in Figure~\ref{table:intersection-join-example}. The output of $\ov Q_\triangle$ on this instance
is depicted in Figure~\ref{table:intersection-join-example:Q}.
\end{example}
\begin{figure}[ht!]
    \begin{subtable}[t]{.3\columnwidth}
    \centering
    \begin{tabular}[t]{c|c|c|}
        $[Z]$ & $A$ & $B$ \\\hline
        $[1, 8]$ & $a_1$ & $b_1$\\
        & &
    \end{tabular}
    \caption{\label{table:intersection-join-example:R}$\ov R$}
    \end{subtable}
    \begin{subtable}[t]{.3\columnwidth}
    \centering
    \begin{tabular}[t]{c|c|c|}
        $[Z]$ & $B$ & $C$ \\\hline
        $[2, 5]$ & $b_1$ & $c_1$\\
        $[4, 6]$ & $b_2$ & $c_1$
    \end{tabular}
    \caption{\label{table:intersection-join-example:S}$\ov S$}
    \end{subtable}
    \begin{subtable}[t]{.3\columnwidth}
    \centering
    \begin{tabular}[t]{c|c|c|}
        $[Z]$ & $A$ & $C$ \\\hline
        $[3, 7]$ & $a_1$ & $c_1$\\
        & &
    \end{tabular}
    \caption{\label{table:intersection-join-example:T}$\ov T$}
    \end{subtable}
    \begin{subtable}[t]{\columnwidth}
        \centering
        \begin{tabular}[t]{c|c|c|c|}
            $[Z]$ & $A$ & $B$ & $C$ \\\hline
            $[3, 5]$ & $a_1$ & $b_1$ & $c_1$
        \end{tabular}
    \caption{\label{table:intersection-join-example:Q}$\ov Q_\triangle$}
    \end{subtable}
    \caption{A database instance $\ov \calD$ for the query $\ov Q_\triangle$ in Eq.~\eqref{eq:triangle-query:uni}.}
    \Description{A database instance $\ov \calD$ for the query $\ov Q_\triangle$ in Eq.~\eqref{eq:triangle-query:uni}.}
    \label{table:intersection-join-example}
\end{figure}

Let $\ov Q$ be the \univariate extension of a query $Q$.
Prior work reduces the evaluation of $\ov Q$ 
to the evaluation of $\wh Q$, where 
$\wh Q$ is the multivariate extension of $Q$ (Eq.~\eqref{eq:multivariate}), whose variables are all point variables~\cite{KhamisCKO22}.
Moreover, it is shown in~\cite{KhamisCKO22} that this reduction is {\em optimal} (up to a $\tildeO{1}$ factor). In particular, the query $\ov Q$ is exactly as hard as the hardest component $\wh Q_{\bm\sigma}$ in $\wh Q$. This is shown by a backward reduction from (the static evaluation of) each component $\wh Q_{\bm\sigma}$
to $\ov Q$.
\change{We summarize both reductions below and defer the details to Appendix~\ref{app:intersection-joins}
and~\cite{KhamisCKO22}.}
We rely on both reductions 
to prove upper and lower bounds for IVM in the \insertdelete setting
(Sec.~\ref{sec:results_fully_dynamic}).

\change{
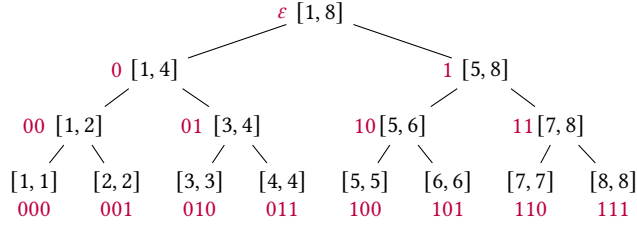
\begin{figure}
    \begin{tikzpicture}
        \node at (0,0)  (root) {$[1,8]$};
        \node at (-2.2,-0.75)  (c1) {$[1,4]$} edge[-] (root);
        \node at (2.2,-0.75)  (c2) {$[5,8]$} edge[-] (root);
        
        \node at (-3.2,-1.5)  (c21) {$[1,2]$} edge[-] (c1);
        \node at (-1.1,-1.5)  (c22) {$[3,4]$} edge[-] (c1);
        
        \node at (1.1,-1.5)  (c23) {$[5,6]$} edge[-] (c2);
        \node at (3.2,-1.5)  (c24) {$[7,8]$} edge[-] (c2);

        \node at (-3.8,-2.25)  (c31) {$[1,1]$} edge[-] (c21);
        \node at (-2.7,-2.25)  (c32) {$[2,2]$} edge[-] (c21);
        \node at (-1.6,-2.25)  (c33) {$[3,3]$} edge[-] (c22);
        \node at (-0.5,-2.25)  (c34) {$[4,4]$} edge[-] (c22);
        \node at (0.6,-2.25)  (c35) {$[5,5]$} edge[-] (c23);
        \node at (1.7,-2.25)  (c36) {$[6,6]$} edge[-] (c23);
        \node at (2.8,-2.25)  (c37) {$[7,7]$} edge[-] (c24);
        \node at (3.9,-2.25)  (c38) {$[8,8]$} edge[-] (c24);
        
        %%%%%%%%%%%%%%%%%%%%%%%%
         \node at (-0.5,0)  (x) {\color{purple}$\varepsilon$};
        \node at (-2.7,-0.75)  (x) {\color{purple}$0$};
        \node at (1.7,-0.75)  (x) {\color{purple}$1$};
        \node at (-3.8,-1.5)  (x) {\color{purple}$00$};
        \node at (-1.7,-1.5)  (x) {\color{purple}$01$};
        \node at (0.6,-1.5)  (x) {\color{purple}$10$};
        \node at (2.7,-1.5)  (x) {\color{purple}$11$};
        
        \node at (-3.8,-2.6)  (x) {\color{purple} $000$};
        \node at (-2.7,-2.6)  (x) {\color{purple}$001$};
        \node at (-1.6,-2.6)  (x) {\color{purple}$010$};
        \node at (-0.5,-2.6)  (x) {\color{purple}$011$};
        \node at (0.6,-2.6)  (x) {\color{purple}$100$};
        \node at (1.7,-2.6)  (x) {\color{purple}$101$};
        \node at (2.8,-2.6)  (x) {\color{purple}$110$};
        \node at (3.9,-2.6)  (x) {\color{purple}$111$};
    \end{tikzpicture}
    \caption{\change{A segment tree $\calT_8$. Each node is identified by a {\color{purple}bitstring}.}}
    \Description{A segment tree $\calT_8$. Each node is identified by a {\color{purple}bitstring}.}
    \label{fig:segment-tree}
\end{figure}
}

\subsection{Reduction from $\ov Q$ to $\wh Q$}
\label{sec:body:intersection-joins:forward}
Suppose we want to evaluate $\ov Q$ on a given database instance $\ov \calD$.
We construct a {\em segment tree}, $\calT_N$, over the interval variable $[Z]$,
where the parameter $N$ is roughly the maximum number of different $Z$-values.
A segment tree $\calT_N$ is a balanced binary tree with $N$ leaves.
Each node corresponds to an interval. The root corresponds
to the interval $[1, N]$, its left and right child correspond to the intervals $[1, N/2]$ and $[N/2+1, N]$ respectively, and so on.
Each node can be represented by a bitstring of length at most $\log_2 N$.
Figure~\ref{fig:segment-tree} depicts the segment tree $\calT_8$.
Each interval $[z]$ can be broken down into at most
$\bigO{\log N}$ nodes in $\calT_N$. We refer to these nodes as the {\em canonical partition}
of $[z]$, and denote them by $\canpart_N([z])$.
For example, the canonical partitions of $[2, 8]$ and $[2, 5]$ in the segment tree $\calT_8$ 
are:

\begin{align}
    \canpart_8([2, 8]) = \{001, 01, 1\},\quad\quad
    \canpart_8([2, 5]) = \{001, 01, 100\}
    \label{eq:canpart28:25}
\end{align}

The canonical partition of the intersection of some intervals is roughly the intersection of their canonical partitions \shortorfull{\cite{KhamisCKO22}}{(Lemma~\ref{lem:interv_dec})}.
\change{Two (or more) nodes in the segment tree correspond to overlapping intervals
if and only if one node is an ancestor of the other, which can only happen if one of the two corresponding bitstrings is a prefix of the other \shortorfull{~\cite{KhamisCKO22}}{(Lemma~\ref{lem:interv_dec})}.}
For example, the nodes labeled by $1$ and $100$ in Figure~\ref{fig:segment-tree} correspond to two overlapping intervals, namely $[5, 8]$ and $[5, 5]$, because the string $1$ is a prefix of $100$.
Therefore, testing intersections can be reduced to testing whether some bitstrings
form a chain of prefixes.
Armed with this idea, we convert $\ov\calD$ into a database instance $\wh\calD$ for 
$\wh Q$, called the {\em canonical partition} of $\ov\calD$ and denoted by $\canpart_N(\ov\calD)$. There is a mapping between the outputs of $\ov Q(\ov\calD)$ and $\wh Q(\wh \calD)$\change{, as the following example shows.}
\shortorfull{}{See Theorem~\ref{thm:IJ:forward:reduction} for more details.}
% Example~\ref{ex:IJ:forward:reduction} shows how to reduce $\ov Q_\triangle$ 
% from~\eqref{eq:triangle-query:uni} to $\wh Q_\triangle$ from Example~\ref{ex:transformations}.
\begin{example}
    \label{ex:IJ:forward:reduction:body}
    Suppose we want to reduce $\ov Q_\triangle$ 
    from Eq.~\eqref{eq:triangle-query:uni} to $\wh Q_\triangle$ from Example~\ref{ex:transformations}. In particular, we are given the database instance $\ov \calD$ from Figure~\ref{table:intersection-join-example} and want to compute the corresponding 
    instance $\wh \calD$ for $\wh Q_\triangle$. We  show how to construct
    $\wh R_{123}, \wh S_{123}, \wh T_{123}$ for $\wh Q_{123}$ in Eq.~\eqref{eq:triangle-query:multi-123:body}. The remaining $\wh Q_{132},\ldots, \wh Q_{321}$
    are similar.
    The query $\wh Q_{123}$ is meant to test whether $\ov R, \ov S, \ov T$ contain
    three intervals $[z^R], [z^S], [z^T]$ whose canonical partitions contain
    bitstrings $z^R, z^S, z^T$ where $z^R$ is a prefix of $z^S$ which is a prefix of $z^T$. This can only happen
    if there are bitstrings $z_2, z_3$ where $z^S = z^R\circ z_2$ and $z^T = z^S \circ z_3$, where $\circ$ denotes string concatenation. \change{To that end,} we define:
    \begin{align}
        \wh R_{123} &\defeq \{(z_1, a, b) \mid  \exists [z] : ([z], a, b)\in \ov R \wedge z_1\in \canpart_N([z])\}\label{eq:triangle:canpart:body}\\
        \wh S_{123} &\defeq \{(z_1, z_2, b, c) \mid  \exists [z] : ([z], b, c)\in \ov S \wedge (z_1\circ z_2)\in \canpart_N([z])\}\nonumber\\
        \wh T_{123} &\defeq \{(z_1, z_2, z_3, a, c) \mid  \exists z : ([z], a, c)\in \ov T \wedge (z_1\circ z_2\circ z_3)\in \canpart_N([z])\}\nonumber
    \end{align}
    \change{Note that $|\wh T_{123}| = \bigO{|\ov T|\cdot \polylog(N)}$ and can be constructed
    in time $\bigO{|\ov T|\cdot \polylog(N)}$, because the height of the segment tree is $\bigO{\log N}$.
    The same goes for $\wh R_{123}$ and $\wh S_{123}$.}
    Let $\varepsilon$ denote the empty string.
    Applying the above to $\ov \calD$ from Figure~\ref{table:intersection-join-example},
    $\wh R_{123}, \wh S_{123}, \wh T_{123}$ respectively contain the tuples:
    $(\varepsilon, a_1, b_1)$, $(\varepsilon, 01, b_1, c_1)$, $(\varepsilon, 01, \varepsilon, a_1, c_1)$, which join together producing the output tuple $(\varepsilon, 01, \varepsilon, a_1, b_1, c_1)$
    of $\wh Q_{123}$. This is a witness to the output tuple $([3, 5], a_1, b_1, c_1)$ of $\ov Q_\triangle$.
    \change{The full answer to $\ov Q_\triangle$, given by Figure~\ref{table:intersection-join-example:Q}, can be retrieved from the union of the answers to the six queries $\wh Q_{132},\ldots, \wh Q_{321}$.}
    % \dano{Can we say more precisely what is the full answer to $\ov Q_\triangle$?}
    % \mahmoud{I added a reference to the figure showing the full output}
    \shortorfull{}{See Example~\ref{ex:IJ:forward:reduction} for more details.}
\end{example}

\subsection{Reduction from $\wh Q$ to $\ov Q$}
\label{sec:body:intersection-joins:backward}
\change{
We now summarize the backward reduction from the static evaluation of each component $\wh Q_{\bm\sigma}$ of $\wh Q$ back to $\ov Q$, based on~\cite{KhamisCKO22}.
We will utilize this reduction later to establish our lower bound in Sec.~\ref{sec:backword_reduction:body}.
\begin{example}
    \label{ex:IJ:backward:reduction}
Continuing with Example~\ref{ex:triangle-query:uni}, let us consider the query
$\wh Q_{123}$ from~\eqref{eq:triangle-query:multi-123:body}.
Let $\wh\calD_{123}'$ be an {\em arbitrary} database instance for $\wh Q_{123}$.
(Unlike Example~\ref{ex:triangle-query:uni}, here we cannot make any assumption about
how $\wh \calD_{123}'$ was constructed.)
We show how to use an oracle for the intersection query $\ov Q_\triangle$ from~\eqref{eq:triangle-query:uni} in order to compute $\wh Q_{123}(\wh\calD_{123}')$ in the same time.
WLOG we can assume that each value $x$ that appears in $\wh \calD_{123}'$ is a bitstring
of length exactly $\ell$ for some constant $\ell$. (If a bitstring has length less than $\ell$, we can pad it with zeros.) We construct a segment tree $\calT_{N'}$ where $N'\defeq 2^{3\ell}$, i.e. a segment tree of depth $3 \ell$. (We chose $3$ in this example because it is the number of atoms in $\wh Q_{123}$.)
Each node $v$ in the segment tree corresponds to an interval that is contained in $[N']$
and is identified by a bitstring of length at most $3\ell$.
Given a bitstring $b$ of length at most $3\ell$, let $\segment_{N'}(b)$ denote
the corresponding interval in the segment tree $\calT_{N'}$.
We construct the following database instance $\ov\calD'$
for $\ov Q_\triangle$:
\begin{align*}
    \ov R' &= \{(\segment_{N'}(z_1), a, b)\mid (z_1, a, b) \in \wh R_{123}'\}\\
    \ov S' &= \{(\segment_{N'}(z_1\circ z_2), b, c)\mid (z_1, z_2, b, c) \in \wh S_{123}'\}\\
    \ov T' &= \{(\segment_{N'}(z_1\circ z_2\circ z_3), a, c)\mid (z_1, z_2, z_3, a, c) \in \wh T_{123}'\}
\end{align*}
Following~\cite{KhamisCKO22}, we can show that there is a one-to-one mapping 
between the output tuples of $\ov Q_\triangle(\ov\calD')$ and the output tuples of $\wh Q_{123}(\wh\calD_{123}')$.
This follows from the observation that the three intervals
$\segment_{N'}(z_1^R)$, $\segment_{N'}(z_1^S\circ z_2^S)$,
$\segment_{N'}(z_1^T\circ z_2^T\circ z_3^T)$ overlap if and only if
$z_1^R = z_1^S = z_1^T$ and $z_2^S=z_2^T$.
(Recall that the bitstrings $z_1^R, z_1^S, \ldots$ all have the same length $\ell$.)
Moreover note that $\ov R', \ov S', \ov T'$ have the same sizes as $\wh R_{123}', \wh S_{123}', \wh\calT_{123}'$ respectively and they can be constructed in linear time.
\end{example}
}
\section{IVM: \InsertDelete Setting}
\label{sec:fully_dynamic}

In this section, we give a brief overview of how we obtain our upper and lower bounds for IVM
in the insert-delete setting from \change{Theorem~\ref{thm:main_fully_dynamic} and~\ref{thm:fully_dynamic_lower_bound}}, respectively. Details are deferred to Appendix~\ref{app:fully_dynamic}.

\subsection{Upper bound for $\ivmpm[Q]$}
\label{sec:forward_reduction:body}
\change{
In order to prove Theorem~\ref{thm:main_fully_dynamic}, we start by describing an
algorithm that meets a weaker version of the upper bound in Theorem~\ref{thm:main_fully_dynamic}, where the current database size $|\calD^{(\tau)}|$ is replaced by the number of single-tuple updates $N$: (Note that 
$|\calD^{(\tau)}| \leq N$ and could be unboundedly smaller.)
\begin{lemma}
    \label{lmm:main_fully_dynamic}
    For any query $Q$,
    both $\ivmpm[Q]$ and $\ivmpmd[Q]$ can be solved with $\tildeO{N^{\fw(\wh Q)-1}}$ amortized update time and non-amortized constant enumeration delay, where
    $\wh Q$ is the \multivariate extension of $Q$ and
    $N$ is the number of single-tuple updates.
\end{lemma}
In particular, we can show that we can use the algorithm from Lemma~\ref{lmm:main_fully_dynamic} as a black box in order to meet the stronger upper bound in Theorem~\ref{thm:main_fully_dynamic}.
The main reason why the algorithm from Lemma~\ref{lmm:main_fully_dynamic} does not immediately
meet the upper bound from Theorem~\ref{thm:main_fully_dynamic} is that $N$ can grow much larger than the database size $|\calD|$, especially in the scenario where the update stream contains many deletes.
However, whenever that happens, we can ``reset'' the algorithm from Lemma~\ref{lmm:main_fully_dynamic} by restarting from scratch and inserting all the tuples in the current database $\calD$ as a stream of $|\calD|$ inserts.
If this is done carefully, then we can ensure that the total number of updates since the last reset is not significantly larger than $|\calD|$.
\shortorfull{}{Appendix~\ref{sec:forward_reduction:calD} gives the details.}
}

\change{We now focus on demonstrating our algorithm that proves Lemma~\ref{lmm:main_fully_dynamic} for $\ivmpm[Q]$.
\shortorfull{}{Algorithm~\ref{alg:ivmpm} in Appendix~\ref{sec:forward_reduction} 
gives the general algorithm.
Appendix~\ref{sec:forward_reduction:delta_version} shows how to extend it to $\ivmpmd[Q]$.}}
\begin{example}[for Lemma~\ref{lmm:main_fully_dynamic}]
    \label{ex:main_fully_dynamic}
    Suppose we want to solve the problem $\ivmpm[Q_\triangle]$ for $Q_\triangle$ in Eq.~\eqref{eq:triangle}, where we have a stream of $N$ single-tuple inserts/deletes into
    $R, S$ and $T$. For simplicity, assume that $N$ is known in advance.
    \change{(If $N$ is not known in advance, we can initially assume $N$ to be a constant and keep
    doubling $N$ every time we exceed it.\shortorfull{}{ See Remark~\ref{rmk:doubling:N} for details.})}
    We introduce a new interval variable $[Z]$ and use it to represent {\em time}. Specifically,
    $[Z]$ will represent the {\em lifespan} of every tuple in the database, in the spirit of temporal databases~\cite{DBLP:reference/db/JensenS18k}.
    By adding $[Z]$ to every atom, we obtain the \univariate extension $\ov Q_\triangle$ of $Q_\triangle$ in Eq.~\eqref{eq:triangle-query:uni}.
    Suppose that the $\tau$-th update is an insert of a tuple $(a, b)$ into $R$, i.e. $+R(a, b)$.
    Then, we apply the insert $+\ov R([\tau, \infty], a, b)$ into the \univariate extension,
    indicating that the tuple $(a, b)$ lives in $R$ from time $\tau$ on (since we don't
    know yet its future deletion time).
    Now suppose that the $\tau'$-th update (for some $\tau' >\tau$) is a delete of the same tuple $(a, b)$ in $R$.
    Then, we replace the tuple $([\tau, \infty], a, b)$ in $\ov R$ with $([\tau, \tau'], a, b)$.
    We can extract the result of $Q_\triangle$ at any time $\tau$ by selecting output tuples $([z], a, b, c)$ of
    $\ov Q_\triangle$ where the interval $[z]$ contains the current time $\tau$.
    Therefore, if we can efficiently maintain $\ov Q_\triangle$, we can efficiently maintain $Q_\triangle$.
    Table~\ref{tab:triangle:updates} shows the same stream of 8 updates into $Q_\triangle$ as~Table~\ref{tab:triangle:full_delta_enumeration} but adds the corresponding updates to $\ov Q_\triangle$. The final database $\ov \calD$
    after all updates have taken place is the same as the one shown in Figure~\ref{table:intersection-join-example}.
    
    \begin{table}[ht!]
        \begin{tabular}{|c|l|l|}
            \hline
            $\tau$ & $\delta_\tau\calD$ & $\delta_\tau\ov\calD$ \\\hline\hline 
            1& $+R(a_1, b_1)$ & $+\ov R([1, \infty], a_1, b_1)$ \\\hline
            2& $+S(b_1, c_1)$ & $+\ov S([2, \infty], b_1, c_1)$ \\\hline
            3& $+T(a_1, c_1)$ & $+\ov T([3, \infty], a_1, c_1)$ \\\hline
            4& $+S(b_2, c_1)$ & $+\ov S([4, \infty], b_2, c_1)$ \\\hline
            5& $-S(b_1, c_1)$ & $-\ov S([2, \infty], b_1, c_1)$, $+\ov S([2, 5], b_1, c_1)$\\\hline
            6& $-S(b_2, c_1)$ & $-\ov S([4, \infty], b_2, c_1)$, $+\ov S([4, 6], b_2, c_1)$\\\hline
            7& $-T(a_1, c_1)$ & $-\ov T([3, \infty], a_1, c_1)$, $+\ov T([3, 7], a_1, c_1)$ \\\hline
            8& $-R(a_1, b_1)$ & $-\ov R([1, \infty], a_1, b_1)$, $+\ov R([1, 8], a_1, b_1)$ \\\hline
        \end{tabular}
        \caption{Updates to $Q_\triangle$~\eqref{eq:triangle} (same as Table~\ref{tab:triangle:full_delta_enumeration}) along with the corresponding updates to $\ov Q_\triangle$~\eqref{eq:triangle-query:uni}. At the end (i.e.~$\tau = 8$), the database $\ov \calD$
        is the same as the one shown in Figure~\ref{table:intersection-join-example}.}
        \label{tab:triangle:updates}
        \vskip -0.6cm
    \end{table}
    
    To maintain $\ov Q_\triangle$, we use the six components  $\wh Q_{123}, \ldots, \wh Q_{321}$ of the \multivariate extension $\wh Q_\triangle$  of $Q_\triangle$ from Example~\ref{ex:transformations}.
    In particular, we construct and maintain the relations $\wh R_{123}, \wh S_{123}$
    and $\wh T_{123}$ from Eq.~\eqref{eq:triangle:canpart:body} and forth.
    We use them along with Theorem~\ref{thm:main_inserts} to maintain $\wh Q_{123}$ (and the same
    goes for $\wh Q_{132}, \ldots$).
    For example, suppose that the number of updates $N$ is $8$, and consider the insert
    $+S(b_1, c_1)$ at time $2$ in Table~\ref{tab:triangle:updates}.
    This insert corresponds to inserting the following tuples into $\wh S_{123}$:
    (Recall Eq.~\eqref{eq:canpart28:25}. Below, $z_1$ and $z_2$ are bitstrings and $z_1\circ z_2$ is their concatenation.)

    \begin{align*}
        +\{(z_1, z_2, b_1, c_1) \mid (z_1\circ z_2)\in \canpart_8([2, 8])\}
    \end{align*}
    The corresponding delete $-S(b_1, c_1)$ at time 5 corresponds to
    deleting and inserting the following two sets of tuples from/to $\wh S_{123}$:
    \begin{align*}
        &-\{(z_1, z_2, b_1, c_1) \mid (z_1\circ z_2)\in \canpart_8([2, 8])\}
        &+\{(z_1, z_2, b_1, c_1) \mid (z_1\circ z_2)\in \canpart_8([2, 5])\}
    \end{align*}
    
    Out of the box, Theorem~\ref{thm:main_inserts} is only limited to inserts. However, we can
    show that these pairs of inserts/deletes have a special structure that allows us to maintain
    the guarantees of Theorem~\ref{thm:main_inserts}.
    In particular, at time 5, when we truncate a tuple $\bm t$ from $[2, 8]$ to $[2, 5]$,
    even after truncation, $\bm t$ still joins with the same set of tuples that it used to join with before the truncation. This is because the only intervals that overlap with $[2, 8]$
    but not with $[2, 5]$ are the ones that start after 5. But these intervals are in the {\em future}, hence
    they don't exist yet in the database\shortorfull{}{ (see Proposition~\ref{prop:trunc})}.
    We use this idea to amortize the cost of truncations over the inserts.

    For $\ivmpm[Q_\triangle]$, our target is to do
    constant-delay enumeration of the full output $Q_\triangle$ at time $\tau$. To that end, we enumerate tuples
    $([z], a, b, c)$ of $\ov Q_\triangle$ where the interval $[z]$ contains the current time $\tau$. And to enumerate those, we enumerate tuples
    $(z_1, z_2, z_3, a, b, c)$ from $\wh Q_{123}, \ldots, \wh Q_{321}$ where $z_1\circ z_2\circ z_3$
    corresponds to an interval in $\canpart_N([z])$ that contains $\tau$. This is a {\em selection} condition
    over $(z_1, z_2, z_3)$. By construction, the tree decomposition of any component, say $\wh Q_{123}$, must contain a bag $W$ that contains the variables $\{z_1, z_2, z_3\}$.
    This is because, by Definition~\ref{defn:multivariate}, $\wh Q_{123}$ contains an atom $\wh T_{123}$ that contains these variables. We designate $W$ as the root of tree decomposition of $\wh Q_{123}$ and start our constant-delay enumeration from $W$.
    The same goes for $\wh Q_{132}, \ldots, \wh Q_{321}$.
    Moreover, the enumeration outputs of $\wh Q_{123}, \ldots, \wh Q_{321}$ are {\em disjoint}
    in this case.
    \shortorfull{}{See the proof of Lemma~\ref{lem:reduction_ivmtilde_ivmstar}.}

    For $\ivmpmd[Q_\triangle]$, our target is to enumerate the change to the output at time
    $\tau$. To that end, we enumerate tuples $([z], a, b, c)$ of $\ov Q_\triangle$ where the interval $[z]$ has $\tau$ {\em as an endpoint}. Appendix~\ref{sec:forward_reduction:delta_version}
    explains the technical challenges and how to address them.
\end{example}

\subsection{Lower bound for $\ivmpmd[Q]$}
\label{sec:backword_reduction:body}
\change{In order to prove the lower bound from Theorem~\ref{thm:fully_dynamic_lower_bound},
we prove a stronger lower bound where $|\calD^{(\tau)}|$ is replaced by the number of single-tuple updates $N$.
The following example demonstrates our lower bound for $\ivmpmd[Q]$,} which is based on
a reduction from the static evaluation of a component $\wh Q_{\bm\sigma}$ of the multivariate extension of $Q$ to $\ivmpmd[Q]$.
\shortorfull{}{Algorithm~\ref{alg:backward_reduction} in Appendix~\ref{sec:backward_reduction}
details the reduction.}

\begin{example}[for Theorem~\ref{thm:fully_dynamic_lower_bound}]
    \label{ex:backward:reduction}
    Consider $Q_\triangle$ from Eq.~\eqref{eq:triangle} and $\wh Q_{123}$ from Eq.~\eqref{eq:triangle-query:multi-123:body}.
    Suppose that there exists $\gamma > 0$ such that $\ivmpmd[Q_\triangle]$ can be solved with amortized
    update time \change{$\tildeO{|\calD^{(\tau)}|^\kappa}$ where $|\calD^{(\tau)}|$ is the database size, $\kappa \defeq \omega(\wh Q_{123}) - 1 - \gamma$,
    and $\omega$ is given by Definition~\ref{defn:fw_lb}.
    (Note that $\tildeO{|\calD^{(\tau)}|^\kappa} = \tildeO{N^\kappa}$ since $|\calD^{(\tau)}| \leq N$.)}
    We show that this implies that $\eval[\wh Q_{123}]$ can be solved in time $\tildeO{|\wh \calD_{123}|^{\omega(\wh Q_{123})-\gamma} + |\wh Q_{123}(\wh \calD_{123})|}$ on any database
    instance $\wh \calD_{123}$, thus contradicting the definition of $\omega(\wh Q_{123})$.

    Let us take an arbitrary database instance $\wh\calD_{123} = (\wh R_{123}, \wh S_{123}, \wh T_{123})$ and evaluate $\wh Q_{123}$ over $\wh\calD_{123}$ using the oracle that solves $\ivmpmd[Q_\triangle]$.
    Example~\ref{ex:IJ:backward:reduction} shows how to reduce $\eval[\wh Q_{123}]$ over 
    $\wh\calD_{123}$ to the static evaluation of $\ov Q_\triangle$ from Eq.~\eqref{eq:triangle-query:uni} over a database instance $\ov \calD = (\ov R, \ov S, \ov T)$ that satisfies:
    $|\ov\calD| = |\wh \calD_{123}|$ and
    $|\ov Q_{\triangle}(\ov\calD)| = |\wh Q_{123}(\wh \calD_{123})|$.
    % \begin{align*}
    %     |\ov\calD| = |\wh \calD_{123}|, \quad\quad|\ov Q_{\triangle}(\ov\calD)| = |\wh Q_{123}(\wh \calD_{123})|.
    % \end{align*}
    Moreover, $\ov\calD$ can be constructed in linear time\shortorfull{}{ (Theorem~\ref{thm:IJ:backward:reduction})}.
    Next, we reduce the evaluation of $\ov Q_\triangle$
    over $\ov\calD$ into $\ivmpmd[Q_\triangle]$.

    We interpret the interval $[Z]$ of a tuple in $\ov Q_\triangle$
    as the ``lifespan'' of a corresponding tuple in $Q_\triangle$, similar to Section~\ref{sec:forward_reduction:body}.
    \change{In particular, we initialize empty relations
    $R, S, T$ to be used as input for $Q_\triangle$, which will be maintained under a sequence of updates for the $\ivmpmd[Q_\triangle]$ problem.}
    We also sort the tuples $([\alpha, \beta], a, b)$ of $\ov R$ based on the end points $\alpha$ and $\beta$ of their $[Z]$-intervals. 
    % \ahmet{I would make clear in the notation that the $\alpha$ and $\beta$ appearing in different tuples can be different ($\alpha'$ or so).}
    Each tuple $([\alpha, \beta], a, b)$ appears twice
    in the order: once paired with its beginning $\alpha$ and another with its end $\beta$.
    Similarly, we add the tuples $([\alpha, \beta], b, c)$ of $\ov S$ and
    $([\alpha, \beta], a, c)$ of $\ov T$ to the same sorted list, where each tuple appears twice. Now, we go through the list
    in order. If the next tuple $([\alpha, \beta], a, b)$ is paired with its beginning $\alpha$, then we insert the tuple $(a, b)$ into $R$.
    However, if the next tuple $([\alpha, \beta], a, b)$ is paired with its end $\beta$, then we delete $(a, b)$ from $R$. The same goes for $S$ and $T$.
    The number of updates\change{, $N$,} is $2|\ov\calD|$,
    and the total time needed to process them is $\tildeO{|\ov\calD|^{\kappa + 1}}$.

    To compute the output of $\ov Q_\triangle$, we do the following. After every insert or delete in $Q_\triangle$, we use the oracle for $\ivmpmd[Q_\triangle]$ to enumerate the {\em change} in the output of $Q_\triangle$. Whenever the oracle reports an insert of an output tuple $(a, b, c)$ at time $\tau$
    and the delete of the same tuple at a later time $\tau'>\tau$, then we know that
    $([\tau, \tau'], a, b, c)$ is an output tuple of $\ov Q_\triangle$.
    Therefore, the overall time needed to compute the output of $\ov Q_\triangle$ (including the total update time from before) is $\tildeO{|\ov\calD|^{\kappa + 1} + |\ov Q_\triangle(\ov\calD)|}= $ $\tildeO{|\wh\calD_{123}|^{\kappa + 1} + |\wh Q_{123}(\wh\calD_{123})|}$.
    This contradicts the definition of $\omega(\wh Q_{123})$.
% \nop{
%     We create the output of $\ov Q_\triangle$ as follows. After every update to $Q_\triangle$, we use the oracle to obtain the change to $Q_\triangle$.
%     When the oracle reports an insert of a tuple $(a, b, c)$ in $Q_\triangle$ at time $\tau$,
%     we add the tuple $([\tau, \infty], a, b, c)$ to $\ov Q_\triangle$. If the same tuple is deleted at a later time $\tau'>\tau$, we truncate its interval to $[\tau, \tau']$. The cost of constructing $\ov Q_\triangle$'s output is amortized over all updates.
%     %At the end, we can enumerate the tuples in $\ov Q_\triangle$ with constant delay.
%     }

\shortorfull{}{Appendix~\ref{sec:backward_reduction} highlights the challenges in extending the above lower bound from $\ivmpmd$ to $\ivmpm$, which remains an open problem.}
\end{example}

\section{Conclusion}

\change{This paper puts forward two-way reductions between the dynamic and static query evaluation problems that allow us to transfer a wide class of algorithms and lower bounds between the two problems.}
For the dynamic problem, the paper characterizes the complexity gap between the insert-only and the insert-delete settings. The proposed algorithms recover best known \change{amortized update times} and produce new ones.
\change{The matching lower bounds justify why our upper bounds are natural.}
\change{The lower bound function $\omega(Q)$ that we define can be used as a general tool to assess the optimality of some algorithms in database theory.}

The assumptions that the queries have all variables free and the input database be initially empty can be lifted without difficulty. To support arbitrary conjunctive queries, our algorithms need to consider free-connex tree decompositions, which ensure constant-delay enumeration of the tuples in the query result. To support non-empty initial databases, we can use a preprocessing phase to construct our data structure, where all initial tuples get the same starting timestamp. The complexities stated in the paper need to be changed to account for the size of the initial database as part of the size of the update sequence.

\change{
Although the proposed dynamic algorithms \mvivm are designed to maintain base relations and the query output as sets, they can be extended to maintain bags. As in DBToaster~\cite{DBLP:journals/vldb/KochAKNNLS14} and F-IVM~\cite{FIVM:VLDBJ:2023}, \mvivm can be extended to maintain the multiplicity of each tuple in a base relation or view and enumerate each tuple in the query output together with its multiplicity.
}

\begin{acks}
    This work was partially supported by NSF-BSF 2109922, NSF-IIS 2314527, NSF-SHF 2312195,
    and UZH Global Strategy and Partnerships Funding Scheme,
    and was conducted while some of the authors participated in the Simons Program on Logic and Algorithms in Databases and AI.
\end{acks}

\bibliographystyle{ACM-Reference-Format}
\bibliography{bibliography}

\pagebreak

\appendix
\section{Comparison to Prior IVM Approaches}
\label{app:comparison}

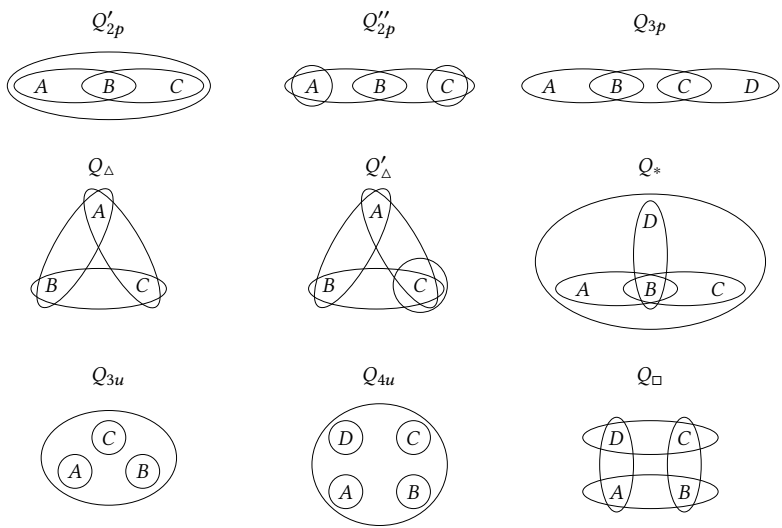
\begin{figure}[ht]
\begin{center}
 \scalebox{0.9}{  
    \begin{tikzpicture}

    \node at (1, 0.9) (A) {$Q_{2p}'$};
    \node at (0, 0) (A) {$A$};
    \node at (1, 0) (B) {$B$};
    \node at (2, 0) (B) {$C$};
    \draw (0.5,0) ellipse (0.9cm and 0.25cm);
    \draw (1.5,0) ellipse (0.9cm and 0.25cm);    
    \draw (1,0) ellipse (1.5cm and 0.5cm);

    \begin{scope}[xshift = 4cm]
        \node at (1, 0.9) (A) {$Q_{2p}''$};
        \node at (0, 0) (A) {$A$};
        \node at (1, 0) (B) {$B$};
        \node at (2, 0) (B) {$C$};
        \draw (0.5,0) ellipse (0.9cm and 0.25cm);
        \draw (1.5,0) ellipse (0.9cm and 0.25cm);    
        \draw (0,0) ellipse (0.3cm and 0.3cm);
        \draw (2,0) ellipse (0.3cm and 0.3cm);
    \end{scope}

    \begin{scope}[xshift = 7.5cm]
        \node at (1.5, 0.9) (A) {$Q_{3p}$};
        \node at (0, 0) (A) {$A$};
        \node at (1, 0) (B) {$B$};
        \node at (2, 0) (B) {$C$};
        \node at (3, 0) (B) {$D$};
        \draw (0.5,0) ellipse (0.9cm and 0.25cm);
        \draw (1.5,0) ellipse (0.9cm and 0.25cm);    
        \draw (2.5,0) ellipse (0.9cm and 0.25cm);
    \end{scope}

    \begin{scope}[xshift=0.5cm, yshift = -2.4cm]
        \node at (0.4, 1.2) (A) {$Q_{\triangle}$};
        \draw[rotate=60] (0,0) ellipse (1cm and 0.3cm);
        \draw[rotate=120] (-0.35,-0.6) ellipse (1cm and 0.3cm);
        \node at (0.35, 0.55) (A) {$A$};
        \node at (-0.35, -0.55) (B) {$B$};
        \node at (1, -0.55) (B) {$C$};
        \draw (0.35,-0.6) ellipse (1cm and 0.3cm);
    \end{scope}

    \begin{scope}[xshift=4.6cm, yshift = -2.4cm]
        \node at (0.4, 1.2) (A) {$Q_{\triangle}'$};
        \draw[rotate=60] (0,0) ellipse (1cm and 0.3cm);
        \draw[rotate=120] (-0.35,-0.6) ellipse (1cm and 0.3cm);
        \node at (0.35, 0.55) (A) {$A$};
        \node at (-0.35, -0.55) (B) {$B$};
        \node at (1, -0.55) (B) {$C$};
        \draw (0.35,-0.6) ellipse (1cm and 0.3cm);

        \draw (1, -0.55) ellipse (0.4cm and 0.4cm);   
    \end{scope}

    \begin{scope}[xshift=8cm, yshift = -3cm]
        \node at (1, 1.8) (A) {$Q_{\ast}$};
        \node at (0, 0) (A) {$A$};
        \node at (1, 0) (B) {$B$};
        \node at (2, 0) (B) {$C$};
        \node at (1, 1) (D) {$D$};    
        \draw (0.5,0) ellipse (0.9cm and 0.25cm);
        \draw (1.5,0) ellipse (0.9cm and 0.25cm);    
        \draw (1,0.5) ellipse (0.25cm and 0.8cm);
        \draw (1,0.4) ellipse (1.7cm and 1cm);
    \end{scope}
    
    \begin{scope}[xshift=0cm, yshift = -5.7cm]
        \node at (1, 1.4) (A) {$Q_{3u}$};
        \node at (0.5, 0) (A) {$A$};
        \node at (1.5, 0) (B) {$B$};
        \node at (1, 0.5) (C) {$C$};    
        \draw (0.5,0) ellipse (0.25cm and 0.25cm);
        \draw (1.5,0) ellipse (0.25cm and 0.25cm);    
        \draw (1,0.5) ellipse (0.25cm and 0.25cm);
        \draw (1,0.2) ellipse (1cm and 0.7cm);
    \end{scope}

    \begin{scope}[xshift=4.0cm, yshift = -6cm]
        \node at (1, 1.7) (A) {$Q_{4u}$};
        \node at (0.5, 0) (A) {$A$};
        \node at (1.5, 0) (B) {$B$};
        \node at (1.5, 0.8) (C) {$C$};    
        \node at (0.5, 0.8) (D) {$D$};        
        \draw (0.5,0) ellipse (0.25cm and 0.25cm);
        \draw (1.5,0) ellipse (0.25cm and 0.25cm);    
        \draw (1.5, 0.8) ellipse (0.25cm and 0.25cm);
        \draw (0.5, 0.8) ellipse (0.25cm and 0.25cm);
        \draw (1,0.4) ellipse (1cm and 0.9cm);
    \end{scope} 

    \begin{scope}[xshift=8cm, yshift = -6cm]
        \node at (1, 1.7) (A) {$Q_{\Box}$};
        \node at (0.5, 0) (A) {$A$};
        \node at (1.5, 0) (B) {$B$};
        \node at (1.5, 0.8) (C) {$C$};    
        \node at (0.5, 0.8) (D) {$D$};        
        
        \draw (1,0) ellipse (1cm and 0.25cm);
        \draw (1,0.8) ellipse (1cm and 0.25cm);
        \draw (1.5, 0.4) ellipse (0.25cm and 0.7cm);
        \draw (0.5, 0.4) ellipse (0.25cm and 0.7cm);
    \end{scope}     
    \end{tikzpicture}
 }   
\end{center}  
\caption{Hypergraphs of the queries analyzed in this section. Atoms are represented by hyperedges. }
    \label{fig:hypergraphs}
\end{figure}

In this section, we compare our \mvivm approach with existing IVM approaches. We later focus on two specific queries, namely $Q_{3p}$ and $Q_\triangle$, and explain in detail how they are maintained by all considered approaches. 
We focus on the problem 
$\ivmpm$, where the task is to process single-tuple inserts {\em and} deletes and enumerate the 
{\em full} query result upon each enumeration request.
We consider the following existing approaches: 
\begin{itemize}
\item \naive: This approach recomputes the query result from scratch after each update.

\item \deltac~\cite{ChirkovaY12}: This is a {\em first-order IVM} approach that materializes the query result and computes the delta or the change to the query result for each update. It updates the query result with the delta result. 

\item \fivm~\cite{DBLP:conf/sigmod/NikolicZ0O20, FIVM:VLDBJ:2023}: 
This is a {\em higher-order IVM} approach that speeds up the delta computation using a view tree whose structure is modelled on a variable order.
Its update times are shown to be at least as good as the prior higher-order IVM approach \textsf{DBToaster}~\cite{DBLP:journals/vldb/KochAKNNLS14}. 
\nop{
~\cite{DBLP:journals/vldb/KochAKNNLS14, DBLP:conf/sigmod/NikolicZ0O20, FIVM:VLDBJ:2023} 
In the following, we consider \fivm
 \fivm~\cite{DBLP:conf/sigmod/NikolicZ0O20, FIVM:VLDBJ:2023}, 
 a recently introduced higher order IVM approach that outperforms in practice prior approaches like  DBToaster~\cite{DBLP:journals/vldb/KochAKNNLS14}. 
}

\item \ivmeps~\cite{DBLP:journals/tods/KaraNNOZ20, DBLP:conf/icdt/KaraNNOZ19}:
This is an {\em  adaptive IVM} methodology that uses delta computation and materialized views and also takes the degrees of data values into account. 
It is adaptive in the sense that  it treats updates referring to values with high degree differently from those with low degree. The specific
attributes whose values are partitioned with regard to their 
degrees and the partitioning threshold that lead to the best update time can vary depending on the query structure. 
So far, there is no algorithm that produces the optimal partitioning strategy for arbitrary queries. Therefore, \ivmeps is not a concrete algorithm for arbitrary queries but rather a methodology.
It was shown in~\cite{DBLP:journals/tods/KaraNNOZ20, DBLP:conf/icdt/KaraNNOZ19} how to manually tailor this methodology to the triangle query and several non-hierarchical queries to achieve the worst-case optimal update time. 

\item \crown ({\em {\bf C}hange p{\bf RO}pagation {\bf W}ithout joi{\bf N}s})~\cite{DBLP:journals/pvldb/WangHDY23}:
This is a  higher-order IVM approach tailored to free-connex ($\alpha$-)acyclic queries. Its maintenance approach is in line with F-IVM, when applied to free-connex acyclic queries, yet it provides a more refined complexity analysis that takes the structure of the update sequence into account. For certain update sequences, such as {\em insert-only} or 
{\em first-in-first-out}, the amortized update time is constant. 
For the purpose of comparing against the other IVM approaches, which do not consider such a fine-grained analysis, we consider the complexity results of \crown for arbitrary update sequences. 
\end{itemize}

Another higher-order IVM approach is 
\textsf{DynYannakakis}~\cite{DBLP:conf/sigmod/IdrisUV17,DBLP:journals/vldb/IdrisUVVL20}, which can maintain acyclic queries. It achieves the same asymptotic update times as \fivm and \crown for acyclic queries.
%For hierarchical queries, all above algorithms besides \naive and \deltac achieve constant update time. 

We consider the following non-hierarchical  join queries 
(all variables are free). They are visualized in Figure~\ref{fig:hypergraphs}: 
\begin{align}
%Q_{2p} & = R(A,B) \wedge S(B,C) \\
%
Q_{2p}' & = R(A,B,C) \wedge S(A,B) \wedge T(B,C) \label{eq:ex-queries-begin}\\
Q_{3p} &  = R(A,B) \wedge S(B,C) \wedge T(C,D) \\
%
%Q_{3p}' &  = R(A,B,C,D) \wedge S(A,B) \wedge T(B,C) \wedge U(C,D) \\
%
%Q_{3p}' & = R(A,B)\wedge S(B,C)\wedge T(C,D)\wedge U(B,E)\wedge V(C,F)  \\
%
Q_{\triangle} & = R(A,B)\wedge S(B,C)\wedge T(A,C) \\
Q_{\triangle}' & = R(A,B)\wedge S(B,C)\wedge T(A,C)\wedge U(C) \\
%
%Q_{4p}' & = R(A,B,C,D,E)\wedge S(A,B)\wedge T(B,C)\wedge U(C,D)\wedge V(D,E) \\
%
%Q_{4p}'' & = R(A,B,C,D)\wedge S(A,B)\wedge T(B,C)\wedge U(C,D)\wedge V(D,E)  \\
%
Q_{\ast} &  = R(A,B,C,D)\wedge S(A,B)\wedge T(B,C)\wedge U(B,D) \\
 Q_{3u} &  = R(A,B,C)\wedge S(A)\wedge T(B)\wedge U(C) \\
Q_{4u} &  = R(A,B,C, D)\wedge S(A)\wedge T(B)\wedge U(C)\wedge V(D) \\
Q_{2p}'' & =  R(A) \wedge S(A,B) \wedge T(B,C) \wedge U(C) \\
Q_{\Box} & =  R(A,B) \wedge S(B,C) \wedge T(C,D) \wedge U(A,D) \label{eq:ex-queries-end}
\nop{
Q_{4p} & = R(A,B,C)\wedge S(A,B)\wedge T(B,C)\wedge V(C,D,E)\wedge L(C,D)\wedge M(D,E) & \text{\ (4-path + two ternary atoms)} \\
Q_{2\ast} &  = R(A,B) \wedge S(B,C) \wedge T(C,D) \wedge U(B,E) \wedge V(C,F)
 & \text{\ (edge + stars at ends)} \\
 Q_{2\ast} & = R(A,B,C,D)\wedge S(A)\wedge T(B)\wedge U(C)\wedge V(D) & \text{\ (four unary atoms enclosed by an atom)} \\ 
}
\end{align}

Figure~\ref{fig:comparison_prior_our} summarizes the update times of all aforementioned IVM approaches\change{, give in data complexity}. 
\change{The update times of 
\naive, \deltac, and \fivm are worst-case and  
those of \crown, \ivmeps, and \mvivm are amortized.
The update times of \mvivm have an extra 
$\polylog(|\calD|)$ factor, which is  hidden
by the $\widetilde \calO$-notation.
}
Since \crown is designed for 
free-connex acyclic queries, we skip its complexity for the acyclic ones, which are marked ``NA''.
%(corresponding cells marked by "-"). 
It is not clear how to manually tailor the \ivmeps approach to all queries, thus some entries are marked ``Open''.
\change{In terms of query complexity, $\mvivm$ has an extra factor of $k!$, where $k$ is the number of input relations, due to the use of \multivariate extensions (Definiton~\ref{defn:multivariate}).}
% For some queries, the performance of \ivmeps is not known yet. 
% The corresponding cells are marked with "-". 
%the update time achieved by For some queries, the update time achieved by \ivmeps

\begin{figure}[ht]
\begin{center}
\begin{tikzpicture}
\node at(0,0) {    
\begin{tabular}{l|cccccc}
& \naive & \deltac & \fivm & \crown & \ivmeps  & \mvivm\\
&  &  & & & (manual) & \\
\hline
$Q_{2p}'$ & $\bigO{|\calD|}$ & $\bigO{|\calD|}$ & $\bigO{|\calD|}$ & $\bigO{|\calD|}$ & {\em Open} &
 $\tildeO{|\calD|^{1/2}}$\\[0.4em]
$Q_{3p}$ & $\bigO{|\calD|}$ & $\bigO{|\calD|}$ & $\bigO{|\calD|}$ & $\bigO{|\calD|}$ & $\bigO{|\calD|^{1/2}}$ &
 $\tildeO{|\calD|^{1/2}}$\\[0.4em]
$Q_{\triangle}$ & $\bigO{|\calD|^{3/2}}$ & $\bigO{|\calD|}$ & $\bigO{|\calD|}$ & NA & $\bigO{|\calD|^{1/2}}$ &  $\tildeO{|\calD|^{1/2}}$ \\[0.4em]
$Q_{\triangle}'$ & $\bigO{|\calD|^{3/2}}$ & $\bigO{|\calD|}$ & $\bigO{|\calD|}$ & NA & $\bigO{|\calD|^{2/3}}$ &  $\tildeO{|\calD|^{2/3}}$ \\[0.4em]
$Q_{\ast}$ & $\bigO{|\calD|}$ & $\bigO{|\calD|}$ & $\bigO{|\calD|}$ &
 $\bigO{|\calD|}$ & {\em Open} &  $\tildeO{|\calD|^{2/3}}$ \\[0.4em]
$Q_{3u}$ & $\bigO{|\calD|}$ & $\bigO{|\calD|}$ & $\bigO{|\calD|}$ &
$\bigO{|\calD|}$ & {\em Open} &  $\tildeO{|\calD|^{2/3}}$ \\[0.4em]
$Q_{4u}$ & $\bigO{|\calD|}$ & $\bigO{|\calD|}$ & $\bigO{|\calD|}$ &
$\bigO{|\calD|}$ & {\em Open} &  $\tildeO{|\calD|^{3/4}}$ \\[0.4em]
$Q_{2p}''$ & $\bigO{|\calD|}$ & $\bigO{|\calD|}$ & $\bigO{|\calD|}$ & $\bigO{|\calD|}$
 & $\bigO{|\calD|^{1/2}}$ &
 $\tildeO{|\calD|}$\\[0.4em]
 $Q_{\Box}$ & $\bigO{|\calD|^{2}}$ & $\bigO{|\calD|}$ & $\bigO{|\calD|}$ & NA & $\bigO{|\calD|^{2/3}}$ & 
 $\tildeO{|\calD|}$
\end{tabular}
};
\change{
\coordinate (A) at (-4.2, -3.0);
    \coordinate (B) at (0, -3.0);
\draw[decorate,decoration={brace,amplitude=10pt}]
        (B) -- (A) node[midway, yshift=-0.55cm] {worst-case};

\coordinate (C) at (0.5, -3.0);
    \coordinate (D) at (5.2, -3.0);
\draw[decorate,decoration={brace,amplitude=10pt}]
        (D) -- (C) node[midway, yshift=-0.55cm] {amortized};
}
\end{tikzpicture}
\end{center}
\caption{Update times of different IVM approaches for the problem $\ivmpm$\change{, given in data complexity}. 
\change{The update times of the approaches 
\naive, \deltac, and \fivm are worst-case, while 
those of \crown, \ivmeps, and \mvivm are amortized.
$\widetilde \calO$ hides a 
$\polylog(|\calD|)$ factor.}
Queries are depicted in Figure~\ref{fig:hypergraphs} and are given by Eq.~\eqref{eq:ex-queries-begin}-\eqref{eq:ex-queries-end}. \mvivm is our approach, introduced in this paper.
\crown is not applicable to cyclic queries, thus some entries are marked ``NA''. \ivmeps is not a concrete algorithm but rather a general methodology. It is not clear
how to manually set it up for every query, thus some entries are marked ``Open''.
\change{In query complexity, $\mvivm$ has a query-dependant factor of $k!$, where $k$ is the number of input relations.}}
\label{fig:comparison_prior_our}
\end{figure}

We observe that, besides the last two queries, \mvivm is on par or outperforms the 
existing approaches. For the last two queries, it performs only worse than \ivmeps.
However, in contrast to \ivmeps, \mvivm is a systematic algorithm that works for all
queries, whereas \ivmeps is a general methodology where a heavy-light partitioning strategy needs to be manually tailored to a specific query. Such a strategy defines (i) on which tuples of variables in each relation to partition into heavy and light, and (ii) what is the heavy-light threshold for each relation partition. There are infinitely many possible partitioning thresholds and hence partitioning strategies. It is not possible to try them all out in order to find the best one.
Therefore, so far there does not exist an algorithm that produces an optimal \ivmeps strategy for any query.\footnote{At a technical level, the reason why \mvivm has a worse update time for a query like $Q_{\Box}$ than \ivmeps is because \mvivm is limited to the fractional hypertree width, while
for $Q_{\Box}$, it is known that the {\em submodular width} is strictly smaller than the fractional hypertree width~\cite{marx:subw,panda:pods17,2024arXiv240202001A}.
\mvivm does not immediately generalize to the submodular width because the query decomposition lemma (Lemma~\ref{lmm:query-decom-lemma}) does not.
Known algorithms that meet the submodular width~\cite{panda:pods17,2024arXiv240202001A}
are limited to the static setting and do not seem to generalize to the dynamic setting.
We leave this generalization as an open problem.}

In the following, we illustrate how the aforementioned IVM approaches
maintain the 3-path query $Q_{3p}$ and the triangle query $Q_\triangle$. 
%We focus on the problem $\ivmpm$, 
%where upon each enumeration request, the task is to enumerate 
%the full query result with constant delay. 
%For each approach, we explain how 
%an insert is executed. Deletes are handled analogously.
\paragraph{Convention}
To simplify the following presentation, we denote a single tuple update to some relation $R(A,B)$ by 
$\delta R(a,b)$, which expresses that the tuple $(a,b)$ is inserted to or deleted from $R$. 
Whenever it simplifies the explanation of an update procedure, we do not distinguish between an insert and a delete and describe the main computation steps common to both cases. 

\nop{
For any materialized relation or view $V(\bm X)$ with schema $\bm X$, we assume that we have indices that allow us to perform the following operations: 
(1) enumerate the tuples in $V$ with constant delay; 
(2) check if $V$ contains a given tuple in constant time; 
(3) insert or delete a tuple in constant time.
Similarly, given $\bm Y \subseteq \bm X$ and a tuple $\bm y$ over $\bm Y$, we assume that we can:
(1) enumerate the tuples in $\pi_{\bm X \setminus \bm Y} \sigma_{\bm Y = \bm y} V$ with constant delay; 
(2) check if $V$ contains a tuple $\bm x$ with $\pi_{\bm Y}\bm x = \bm y$ in constant time.
}

\subsection{\naive}
After each update, \naive computes from scratch a data structure that allows for
the constant-delay enumeration of the query result. The update time is given by the time 
to compute such a data structure. 
%Upon each enumeration request, it enumerates from the materialized result the output tuples with constant delay. 
%We analyze for each of the two queries the time to compute such
%a data structure.  
Using the Yannakakis algorithm \cite{Yannakakis81}, we can compute in
$\bigO{|\calD|}$ time 
such a data structure for $Q_{3p}$.
For $Q_{\triangle}$, the best known strategy to achieve constant-delay enumeration 
is to compute the query result using a worst-case optimal join algorithm, which takes 
$\bigO{|\calD|^{3/2}}$ time~\cite{Ngo:JACM:18}.

\subsection{\deltac}
\deltac maintains the materialized query result and computes for each update the 
corresponding delta query.
It updates the query result by iterating over the tuples in the result of the delta query and adding them to or deleting them from query result (depending on whether 
the update is an insert or a delete).

\paragraph{3-Path Query}
Given a single-tuple update $\delta R(a,b)$ to relation $R$, the delta of
$Q_{3p}$ is defined as:
\begin{align*}
\delta Q_{3p}(a,b,C,D)&  = \delta R(a,b) \wedge S(b,C) \wedge T(C,D),
\end{align*}
where the lower case letters $a$ and $b$ signalize that the variables $A$ and $B$ are fixed to the values $a$ and respectively $b$.
Using the Yannakakis algorithm \cite{Yannakakis81}, we can enumerate the 
the result of the delta query with constant delay after linear preprocessing time.  
The deltas for updates to the other relations 
can be computed analogously. 
We conclude that the overall update time is $\bigO{|\calD|}$.

\paragraph{Triangle Query}
%For the triangle query, we consider single-tuple updates to relation $R$. Updates to  the other relations are handled analogously. 
Given a single-tuple update $\delta R(a,b)$, 
the delta of $Q_{\triangle}$ is defined as:
\begin{align*}
\delta Q_{\triangle}(a,b,C) = \delta R(a,b) \wedge S(b,C) \wedge T(a, C)
\end{align*}
To compute this delta query, we need to intersect the list of $C$-values paired with $b$ in $S$ and the list of $C$-values paired with  $a$ in $T$. Since both lists can be of size $|\calD|$, the time to compute the intersection is $\bigO{|\calD|}$.
Updates to  the other relations are handled analogously. 
We conclude that the update time is $\bigO{|\calD|}$.

\nop{
\paragraph{2-Path Query}
Given a single-tuple update $\delta R(a,b)$ to relation $R$, the delta of the path query is defined as 
follows, where the lower case letters $a$ and $b$ signalize that the variables $A$ and $B$ are fixed to the values 
$a$ and respectively $b$:
\begin{align*}
\delta Q_p(a,b,C,D) = \delta R(a,b) \wedge S(b,C) \wedge T(C,D)
\end{align*}
Since the size of the result of the delta query is bounded by $|T|$, it can be computed in $\bigO{|\calD|}$
time. The time to compute the delta for single-tuple updates to $T$ is the same.

In case of a single-tuple update to relation $S$, the delta query is defined as: 
\begin{align*}
\delta Q_p(A,b,c,D) = R(A,b) \wedge \delta S(b,c) \wedge T(c,D)
\end{align*}
It consists of all $(a,d)$-pairs such that $a$ is paired with 
$b$ in $R$ and $d$ is paired with $c$ in $T$. The number of such combinations is at most quadratic. Hence, the time to compute 
the delta query is $\bigO{|\calD|^2}$.

We conclude that the overall update time is $\bigO{|\calD|^2}$.

\paragraph{Triangle Query}
For the triangle query, we consider single-tuple updates to relation $R$. Updates to  the other relations are handled analogously. 
Given a single-tuple update $\delta R(a,b)$, 
the delta of the triangle  query is defined as:
\begin{align*}
\delta Q_{\triangle}(a,b,C) = \delta R(a,b) \wedge S(b,C) \wedge T(C,a)
\end{align*}
To compute this delta query, we need to intersect the list of $C$-values paired with $b$ in $S$ and the list of $C$-values paired with  $a$ in $T$. Since both list can be of size $|\calD|$, the time to compute the intersection is $\bigO{|\calD|}$.
We conclude that the update time is $\bigO{|\calD|}$.
}
\subsection{\fivm}
\label{app:higher_order_IVM}
\fivm uses delta computation and maintains materialized views
besides the query result.  
The set of views is usually organized as a {\em view tree}. 
Given an update, it computes the delta of each view affected by the update. 
\nop{
F-IVM~\cite{DBLP:conf/sigmod/NikolicZ0O20, FIVM:VLDBJ:2023} maintains one view tree for each of the two queries $Q_p$ and $Q_{triangle}$.
The view tree allows to process updates to one relation in constant time.
Updates to the other relations can still require linear time. 
DBToaster~\cite{DBLP:journals/vldb/KochAKNNLS14} maintains for each of the two queries three view trees, one for each input relation. In the following, we illustrate one view tree for each query. 
}

\paragraph{3-Path Query}
Figure~\ref{fig:higher_order_path} (left) shows a view tree 
for the path query $Q_{3p}$. The leaves of the view tree 
are the input relations. 
A view with a single child 
is obtained from its child view by projecting away a variable.
A view with more that one child is obtained by joining the child
views.

\begin{figure}[ht]
\begin{minipage}{13.5cm}
\begin{center}
 \scalebox{0.95}{
    \begin{tikzpicture}
       \node at (0, 0.5) (VRST) {$V_{RST}(C)$};
      \node at (1, -0.5) (VT) {$V_{T}(C)$} edge[-] (VRST);      
      \node at (1, -1.5) (T) {$T(C,D)$} edge[-] (VT);  
      \node at (-1, -0.5) (VRS') {$V_{RS}'(C)$} edge[-] (VRST);      
      \node at (-1, -1.5) (VRS) {$V_{RS}(B,C)$} edge[-] (VRS');      
      \node at (0, -2.5) (S) {$S(B,C)$} edge[-] (VRS);      
      \node at (-2, -2.5) (VR) {$V_R(B)$} edge[-] (VRS);
      \node at (-2, -3.5) (R) {$R(A,B)$} edge[-] (VR);
    
    \begin{scope}[xshift=4.5cm]
       \node at (0, 0.5) (VRST) {\color{red}$\delta V_{RST}(C)$};
      \node at (1, -0.5) (VT) {$V_{T}(C)$} edge[-] (VRST);      
      \node at (1, -1.5) (T) {$T(C,D)$} edge[-] (VT);  
      \node at (-1, -0.5) (VRS') {\color{red}$\delta V_{RS}'(C)$} edge[-] (VRST);      
      \node at (-1, -1.5) (VRS) {\color{red}$\delta V_{RS}(b,C)$} edge[-] (VRS');      
      \node at (0, -2.5) (S) {$S(B,C)$} edge[-] (VRS);      
      \node at (-2, -2.5) (VR) {\color{red}$\delta V_R(b)$} edge[-] (VRS);
      \node at (-2, -3.5) (R) {\color{red}$\delta R(a,b)$} edge[-] (VR);
    \end{scope}

  \begin{scope}[xshift=9cm]
       \node at (0, 0.5) (VRST) {\color{red}$\delta V_{RST}(c)$};
      \node at (1, -0.5) (VT) {$V_{T}(C)$} edge[-] (VRST);      
      \node at (1, -1.5) (T) {$T(C,D)$} edge[-] (VT);  
      \node at (-1, -0.5) (VRS') {\color{red}$\delta V_{RS}'(c)$} edge[-] (VRST);      
      \node at (-1, -1.5) (VRS) {\color{red}$\delta V_{RS}(b,c)$} edge[-] (VRS');      
      \node at (0, -2.5) (S) {\color{red} $\delta S(b,c)$} edge[-] (VRS);      
      \node at (-2, -2.5) (VR) {$V_R(B)$} edge[-] (VRS);
      \node at (-2, -3.5) (R) {$R(A,B)$} edge[-] (VR);
    \end{scope}  
    \end{tikzpicture}
 }   
\end{center}  
  \end{minipage}
\caption{Left: View tree used by \fivm for the 3-path query 
$Q_{3p} = R(A,B)\wedge S(B,C)\wedge T(C,D)$; 
middle: delta view tree under a single-tuple update to relation $R$; 
right: delta view tree under a single-tuple update to relation $S$. 
}
\label{fig:higher_order_path}
\end{figure}
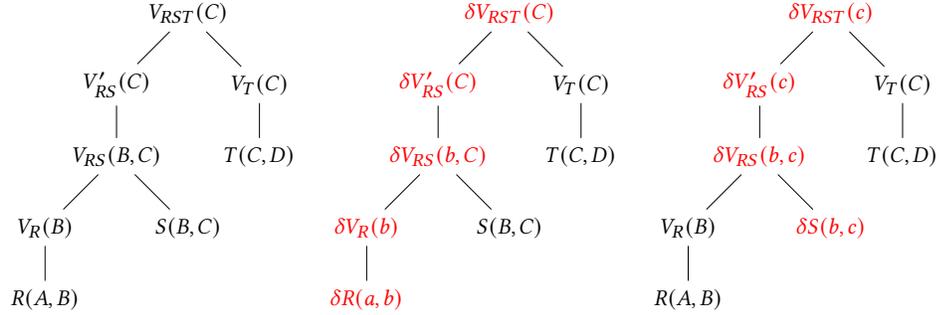

Algorithm~\ref{alg:enumeration_path_query} shows how to enumerate 
from the view tree the result of $Q_{3p}$ with constant delay. 
We use the root view $V_{RST}$ to retrieve the $C$-values that appear in the result tuples and use the view $V_{RS}$ and the relations $R$ and $T$ to fetch the matching $B$-, $A$-, and $D$-values.

\begin{algorithm}[ht]
    \caption{Enumeration of the result of $Q_{3p}$}
    \label{alg:enumeration_path_query}
    \begin{algorithmic}
        \State {\textbf{Input}}
        \begin{itemize}
            \item The view tree for $Q_{3p}$ \Comment{Figure~\ref{fig:higher_order_path} (left)}
        \end{itemize}
        \\\hrulefill
        \State {\textbf{Algorithm}} \\
        For each $C$-value $c \in V_{RST}$ \\
            \TAB for each $B$-value $b$ such that  $(b,c) \in V_{RS}$ \\
            \TAB\TAB for each $A$-value $a$ such that $(a,b) \in R$ \\
            \TAB\TAB\TAB for each $D$-value $d$ such that $(c,d) \in T$ \\
            \TAB\TAB\TAB\TAB \OUTPUT $(a,b,c,d)$
    \end{algorithmic}
\end{algorithm}

Next, we explain how the view tree is maintained under single-tuple updates to relations $R$ and $S$. Updates to $T$ are handled analogously.  
Given a single-tuple update $\delta R(a,b)$ to relation $R$, we need to compute the deltas of all views in the view tree from the leaf $R$ to the root. Figure~\ref{fig:higher_order_path} (middle) shows the {\em delta view tree} where the delta views are highlighted in red. 
We obtain the delta view $\delta V_R$ from $\delta R$ by projecting the tuple $(a,b)$ onto $B$, which requires constant time.
We compute $\delta V_{RS}$ by iterating over all $C$-values paired with $b$ in $S$, which takes at most linear time. 
We obtain $\delta V'_{RS}$ from $\delta V_{RS}$ by projecting away $b$ from all tuples in the latter view, which takes at most linear time. 
The root delta view $\delta V_{RST}$ is obtained
by computing the intersection of the $C$-values 
in $\delta V_{RS}'$ and those in $V_{T}$, which takes at most 
linear time. 

Given a single-tuple update $\delta S(b,c)$ to relation $S$, Figure~\ref{fig:higher_order_path} (right) shows the corresponding 
{\em delta view tree}. 
The computation of the delta view $\delta V_{RS}$ 
requires a constant-time lookup of $b$ in $V_R$.
The delta view $\delta V'_{RS}$ is obtained from 
$\delta V_{RS}$ by projecting the tuple $(b,c)$ onto $C$, which requires constant time. The computation of the root delta view $\delta V_{RST}$ 
requires a constant-time lookup of $c$ in $V_T$. 

%Analogously, we can show that the delta views in case of a single-tuple update to $T$ can be processed in constant time. 

We conclude that the view tree in 
Figure~\ref{fig:higher_order_path} (left) can be maintained in 
$\bigO{|\calD|}$ time, which means that the update time is $\bigO{|\calD|}$. 

\paragraph{Triangle Query}
Figure \ref{fig:higher_order_triangle} (left) shows 
a view tree for the triangle query $Q_{\triangle}$.
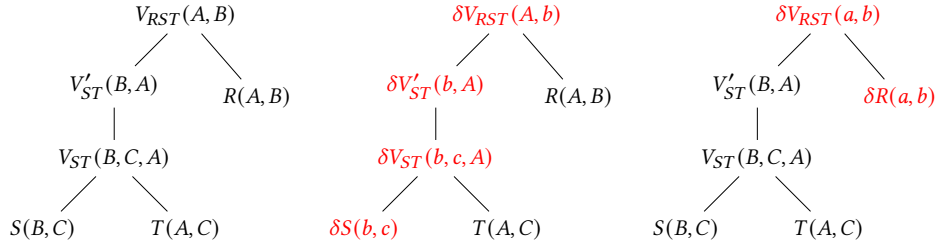
\begin{figure}[ht]
\begin{minipage}{13.5cm}
\begin{center}
\scalebox{0.95}{
    \begin{tikzpicture}
    
       \node at (0, -0.5) (A) {$V_{RST}(A,B)$};
      \node at (-1, -1.5) (C) {$V'_{ST}(B,A)$} edge[-] (A);
      \node at (1, -1.65) (B) {$R(A,B)$} edge[-] (A);      
      \node at (-1, -2.5) (D) {$V_{ST}(B,C,A)$} edge[-] (C);
      \node at (-2, -3.5) (E) {$S(B,C)$} edge[-] (D);
      \node at (0, -3.5) (F) {$T(A,C)$} edge[-] (D);      
    
    \begin{scope}[xshift=4.5cm]
    \node at (0, -0.5) (A) {\color{red} $\delta V_{RST}(A,b)$};
      \node at (-1, -1.5) (C) {\color{red}$\delta V'_{ST}(b,A)$} edge[-] (A);
      \node at (1, -1.65) (B) {$R(A,B)$} edge[-] (A);      
      \node at (-1, -2.5) (D) {\color{red}$\delta V_{ST}(b,c,A)$} edge[-] (C);
      \node at (-2, -3.5) (E) {\color{red}$\delta S(b,c)$} edge[-] (D);
      \node at (0, -3.5) (F) {$T(A,C)$} edge[-] (D);      
    \end{scope}

  \begin{scope}[xshift=9cm]
    \node at (0, -0.5) (A) {\color{red}$\delta V_{RST}(a,b)$};
      \node at (-1, -1.5) (C) {$V'_{ST}(B,A)$} edge[-] (A);
      \node at (1, -1.65) (B) {\color{red}$\delta R(a,b)$} edge[-] (A);      
      \node at (-1, -2.5) (D) {$V_{ST}(B,C,A)$} edge[-] (C);
      \node at (-2, -3.5) (E) {$S(B,C)$} edge[-] (D);
      \node at (0, -3.5) (F) {$T(A,C)$} edge[-] (D);      
    \end{scope}  
    
    \end{tikzpicture}
}    
\end{center}  
  \end{minipage}
\caption{Left: View tree used by \fivm for the triangle query 
$Q_{\triangle}= R(A,B)\wedge S(B,C)\wedge T(A,C)$; 
middle: delta view tree under a single-tuple update to relation $S$; 
right: delta view tree under a single-tuple update to relation $R$.}
\label{fig:higher_order_triangle}
\end{figure}
The view tree can be used as follows to enumerate the query result with constant delay. For each $(A,B)$-tuple $(a,b)$ in $V_{RST}$, we iterate over the $C$-values paired with $(a,b)$ in  $V_{ST}$. For each such $c$, we output the tuple $(a,b,c)$. 

Figure~\ref{fig:higher_order_triangle}  shows the delta view trees for single-tuple updates to $S$ (middle delta view tree) and $R$ (right delta view tree).
The computation of delta views is similar to the case of the 3-path query. Hence, we only give the computation times. 
Given an update to $S$, all delta views on the path from $S$ to the root 
can be computed in linear time.   
Given an update to $R$, the delta view $\delta V_{RST}$ can be computed in constant time. For updates to $T$, the delta views (not shown in 
Figure~\ref{fig:higher_order_triangle}) require at most linear computation time. 

We conclude that \fivm maintains the triangle query with 
$\bigO{|\calD|}$ update time. 

\nop{
Given an insert or a delete of a tuple $(b,c)$ to relation $S$, 
Figure \ref{fig:higher_order_triangle} (middle) shows the corresponding delta tree, which visualizes for each view, the delta view to be computed. 
The delta of the view $V_{ST}$ is obtained by computing the join of 
$(b,c)$ and the relation $T(A,C)$. This computation requires the iteration over all $A$-values paired with $c$ in $T$, which takes linear time. The delta of $\hat{V}_{ST}$ is obtained from $V_{ST}$ by skipping 
$c$ which takes linear time. 
The delta of $V_{RST}$ is the join of $\delta \hat{V}_{ST}$ and $R$, which requires to iterate over linearly $A$-values paired with $b$
in $R$. Overall, the update time is $\bigO{|\calD|}$. 

Given an insert or a delete of a tuple $(a,b)$ to relation $R$, 
Figure \ref{fig:higher_order} (right) shows the corresponding delta tree.
The only delta view that has to be computed is $\delta V_{RST}$.
To compute this delta view, it suffices to check in constant time whether $(a,b)$
is included in the view $\hat{V}_{ST}$. If yes, $(a,b)$ is added to 
$\delta V_{RST}$. Otherwise, $\delta V_{RST}$ is empty. 
Hence, computing $V_{RST} \cup \delta V_{RST}$ requires constant time.

The view tree can be used as follows to enumerate the query result with constant delay. For each $(A,B)$-tuple $(a,b)$ in $V_{RST}$, we iterate over the $C$-values paired with $(a,b)$ in  $V_{ST}$. For each such value $c$, we output the tuple $(a,b,c)$. Hence, the enumeration delay is constant. 

F-IVM maintains a single view tree fr the triangle query.  
DBToaster maintains one view tree, for each of the three relations. Each view tree supports constant update time for updates to one of the relations and linear update time for the other relations. 
}

\subsection{\ivmeps}
Similar to \fivm, \ivmeps uses  delta computation and materialized views. 
Additionally, it takes degrees of data values into account. 
It partitions relations based on value degrees. 
Since all relations appearing in the two queries $Q_{3p}$ and $Q_{\triangle}$ are binary, we focus here on the partitioning of binary relations.
The partition of a relation $R(A,B)$ on variable $A$ consists of 
 a {\em light} part $R^L$ and a {\em heavy} part $R^H$ defined as follows: $R^{L} = \{(a,b)\in R \mid |\sigma_{A=a}R| < |\calD|^{1/2}\}$ and 
$R^{H} = R \setminus R^{L}$. 
%In the following, we use  $|\calD|^{1/2}$ as threshold.
It directly follows from the definition 
of the relation parts: for any $A$-value $a$, it holds $|\sigma_{A=a} R^L| < |\calD|^{1/2}$; $|\pi_{A} R^H| \leq |\calD|^{1/2}$.

\paragraph{3-Path Query}
We partition relation $S$ on variable $B$.
The other relations are not partitioned. 
Given $s \in \{H,L\}$, we denote 
$Q^{s}_{3p} = R(A,B), S^s(B,C), T(C,D)$. The query $Q_{3p}$ can be rewritten as the disjoint union $\bigcup_{s \in \{H,L\}} Q_{3p}^{s}$. 
\ivmeps creates the two view trees visualized in Figure~\ref{fig:adaptive_path} (top row). The first tree represents $Q_{3p}^L$ and the second one represents $Q_{3p}^H$. 

\begin{figure}[ht]
\begin{minipage}{13.5cm}
 \begin{center}   
 \scalebox{0.95}{
    \begin{tikzpicture}
    \begin{scope}
        \node at (0,1) (G) {View tree for $Q_{3p}^{L}$};
       \node at (0, 0.5) (VRST) {$V_{RST}(C)$};
      \node at (1, -0.5) (VT) {$V_{T}(C)$} edge[-] (VRST);      
      \node at (1, -1.5) (T) {$T(C,D)$} edge[-] (VT);  
      \node at (-1, -0.5) (VRS') {$V_{RS}'(C)$} edge[-] (VRST);      
      \node at (-1, -1.5) (VRS) {$V_{RS}(B,C)$} edge[-] (VRS');      
      \node at (0, -2.5) (S) {$S^L(B,C)$} edge[-] (VRS);      
      \node at (-2, -2.5) (VR) {$V_R(B)$} edge[-] (VRS);
      \node at (-2, -3.5) (R) {$R(A,B)$} edge[-] (VR);
    \end{scope}
    
    \begin{scope}[xshift=6cm]
        \node at (0,1) (G) {View tree for $Q_{3p}^{H}$};
       \node at (0, 0.5) (VRST) {$V_{RST}(B)$};
      \node at (1, -0.5) (VR) {$V_{R}(B)$} edge[-] (VRST);      
      \node at (1, -1.5) (R) {$R(A,B)$} edge[-] (VR);  
      \node at (-1, -0.5) (VST') {$V_{ST}'(B)$} edge[-] (VRST);      
      \node at (-1, -1.5) (VST) {$V_{ST}(B,C)$} edge[-] (VST');      
      \node at (-2, -2.5) (S) {$S^H(B,C)$} edge[-] (VST);      
      \node at (0, -2.5) (VT) {$V_T(C)$} edge[-] (VST);
      \node at (0, -3.5) (T) {$T(C,D)$} edge[-] (VT);
    \end{scope}

    \end{tikzpicture}
 }   
 \end{center}   
  \end{minipage}

\vspace{1em}
  \begin{minipage}{13.5cm}
\begin{center}   
\scalebox{0.95}{  
    \begin{tikzpicture}
    \begin{scope}
        \node at (0,1) (G) {Delta view tree for $Q_{3p}^{L}$};
       \node at (0, 0.5) (VRST) {\color{red} $\delta V_{RST}(C)$};
      \node at (1, -0.5) (VT) {$V_{T}(C)$} edge[-] (VRST);      
      \node at (1, -1.5) (T) {$T(C,D)$} edge[-] (VT);  
      \node at (-1, -0.5) (VRS') {\color{red} $\delta V_{RS}'(C)$} edge[-] (VRST);      
      \node at (-1, -1.5) (VRS) {\color{red} $\delta V_{RS}(b,C)$} edge[-] (VRS');      
      \node at (0, -2.5) (S) {$S^L(B,C)$} edge[-] (VRS);      
      \node at (-2, -2.5) (VR) {\color{red} $\delta V_R(b)$} edge[-] (VRS);
      \node at (-2, -3.5) (R) {\color{red} $\delta R(a,b)$} edge[-] (VR);
    \end{scope}
    
    \begin{scope}[xshift=6cm]
        \node at (0,1) (G) {Delta view tree for $Q_{3p}^{H}$};
       \node at (0, 0.5) (VRST) {\color{red}$\delta V_{RST}(B)$};
      \node at (1, -0.5) (VR) {$V_{R}(B)$} edge[-] (VRST);      
      \node at (1, -1.5) (R) {$R(A,B)$} edge[-] (VR);  
      \node at (-1, -0.5) (VST') {\color{red}$\delta V_{ST}'(B)$} edge[-] (VRST);      
      \node at (-1, -1.5) (VST) {\color{red}$\delta V_{ST}(B,c)$} edge[-] (VST');      
      \node at (-2, -2.5) (S) {$S^H(B,C)$} edge[-] (VST);      
      \node at (0, -2.5) (VT) {\color{red}$\delta V_T(c)$} edge[-] (VST);
      \node at (0, -3.5) (T) {\color{red}$\delta T(c,d)$} edge[-] (VT);
    \end{scope}

    \end{tikzpicture}
 }   
 \end{center}   
  \end{minipage}
  \caption{Top row: View trees used by \ivmeps for the 3-path query 
$Q_{3p} = R(A,B)\wedge S(B,C)\wedge T(C,D)$; 
bottom row: delta view trees under a single-tuple update to relation $R$ (left) and relation $T$ (right).}
\label{fig:adaptive_path}
\end{figure}
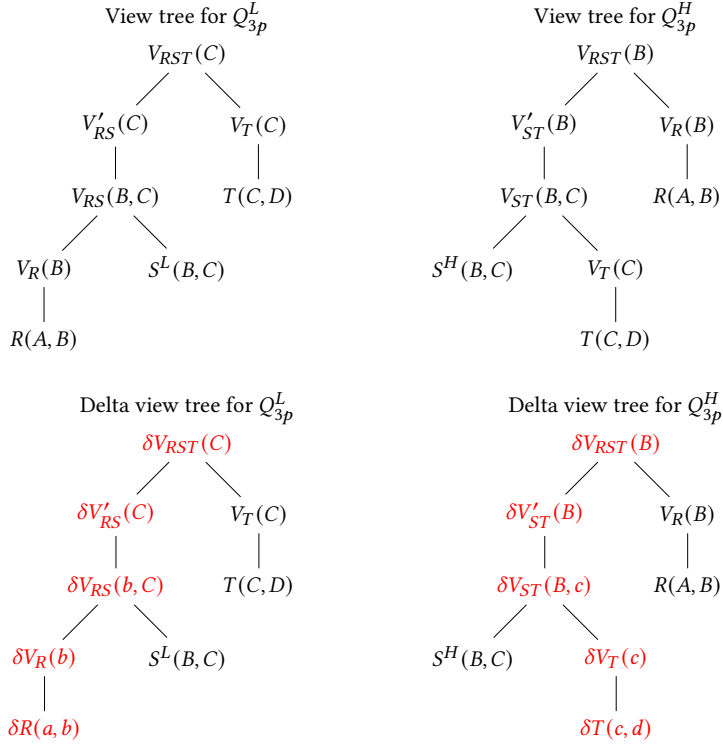

The result tuples of $Q_{3p}^L$ can be enumerated from its view tree
in exactly the same way as from the view tree maintained by \fivm and depicted in Figure~\ref{fig:higher_order_path}.
The enumeration from the view tree of $Q_{3p}^H$ follows the same principle. Upon each enumeration request for $Q_{3p}$, we can first 
enumerate the tuples from the view tree for $Q_{3p}^L$ and then 
enumerate the tuples from the view tree for $Q_{3p}^H$. Since the two queries are disjoint, no tuple will be enumerated twice. 

We analyze the times to maintain the two view trees under updates.
%We analyze the maintenance 
%of the view tree for $Q_p^L$ under single-tuple updates to $R$ and of the
%view tree for $Q_p^H$ under single-tuple updates to $T$.
%Updates to other relations are handled analogously. 
Figure~\ref{fig:adaptive_path} (bottom left) visualizes the delta view tree for $Q_{3p}^L$ under a single-tuple update $\delta R(a,b)$ to $R$.
We compute the delta view $\delta V_R$
in constant time by projecting the tuple $(a,b)$ onto $b$. To compute the delta view $\delta V_{RS}$, we need to iterate over the $C$-values paired with $b$ in $S^L$. Since $S^L$ is light, there are less than $|\calD|^{1/2}$ such $C$-values. Hence, the computation time for $\delta V_{RS}$ is $\bigO{|\calD|^{1/2}}$. The delta view $\delta V'_{RS}$ results from $\delta V_{RS}$ by projecting away from each tuple the 
value $b$, which takes $\bigO{|\calD|^{1/2}}$ time. The computation of 
$\delta V_{RST}$ requires the computation of the intersection of the $C$-values in $\delta V'_{RS}$ and the $C$-values in  $V_T$, which takes $\bigO{|\calD|^{1/2}}$ time. 

Now, we consider the delta view tree for $Q_{3p}^H$ under a single-tuple update $\delta T(c,d)$ to $T$ (bottom right delta view tree in Figure~\ref{fig:adaptive_path}).
The delta view $\delta V_T$
is computed by  projecting the tuple $(c,d)$ onto $c$, which takes constant time. To compute the delta view $\delta V_{ST}$, we iterate over the $B$-values paired with $c$ in $S^H$. Since $S^H$ is heavy, there are at most $|\calD|^{1/2}$ such $B$-values. Thus, the computation time for $\delta V_{ST}$ is $\bigO{|\calD|^{1/2}}$. The remaining delta views are computed within the same time bounds. 

Analogously, we can show that the two view trees in Figure~\ref{fig:adaptive_path}
can be maintained in $\bigO{|\calD|^{1/2}}$ time under updates to the other relations. 
We conclude that \ivmeps maintains the 3-path query with $\bigO{|\calD|^{1/2}}$ update time. 

\paragraph{Triangle Query}
In case of the triangle query, we partition relation $R$ on variable $A$, relation $S$ on $B$, and relation $T$ on $C$. 
Given $r,s,t \in \{H,L\}$, we denote 
$Q_{\triangle}^{rst} = R^r(A,B), S^s(B,C), T^t(A, C)$.
We replace the index $r$ by $*$ if the full relation $R$ participates in the query (and not only its heavy or light part). Same goes for relations $S$ and $T$.
The query $Q_{\triangle}$ can be
rewritten as the disjoint union $\bigcup_{r,s,t \in \{H,L\}} Q^{rst}$. 
\ivmeps creates the three view trees visualized in Figure~\ref{fig:adaptive_triangle}. Observe that each view tree refers at its leaves to two relation parts (e.g., $R^H$ and $S^L$ in the first view tree) and a full relation 
(e.g., $T$ in the first view tree). 
Similar to the case of the 3-path query, we can show that the three view trees can be maintained under single-tuple updates in $\bigO{|\calD|^{1/2}}$ time.

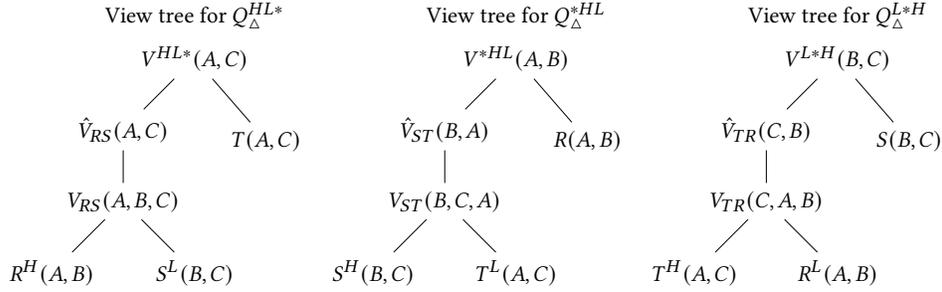
\begin{figure}[ht]
\begin{minipage}{13.5cm}
\scalebox{0.95}{
    \begin{tikzpicture}
    \begin{scope}
     \node at (0,0.1) (G) {View tree for $Q_\triangle^{HL*}$};
      \node at (0, -0.5) (A) {$V^{HL*}(A,C)$};
      \node at (-1, -1.5) (C) {$\hat{V}_{RS}(A,C)$} edge[-] (A);
      \node at (1, -1.65) (B) {$T(A,C)$} edge[-] (A);
      \node at (-1, -2.5) (D) {$V_{RS}(A,B,C)$} edge[-] (C);
      \node at (-2, -3.5) (E) {$R^H(A,B)$} edge[-] (D);
      \node at (0, -3.5) (F) {$S^L(B,C)$} edge[-] (D);            
    \end{scope}

    \begin{scope}[xshift=4.5cm]
      \node at (0,0.1) (G) {View tree for $Q_{\triangle}^{*HL}$};
      \node at (0, -0.5) (A) {$V^{*HL}(A,B)$};
      \node at (-1, -1.5) (C) {$\hat{V}_{ST}(B,A)$} edge[-] (A);
      \node at (1, -1.65) (B) {$R(A,B)$} edge[-] (A);      
      \node at (-1, -2.5) (D) {$V_{ST}(B,C,A)$} edge[-] (C);
      \node at (-2, -3.5) (E) {$S^H(B,C)$} edge[-] (D);
      \node at (0, -3.5) (F) {$T^L(A,C)$} edge[-] (D);      
    \end{scope}

    \begin{scope}[xshift=9cm]
      \node at (0,0.1) (G) {View tree for $Q_\triangle^{L*H}$};
      \node at (0, -0.5) (A) {$V^{L*H}(B,C)$};
      \node at (-1, -1.5) (C) {$\hat{V}_{TR}(C,B)$} edge[-] (A);
      \node at (1, -1.65) (B) {$S(B,C)$} edge[-] (A);
      \node at (-1, -2.5) (D) {$V_{TR}(C,A,B)$} edge[-] (C);
      \node at (-2, -3.5) (E) {$T^H(A,C)$} edge[-] (D);
      \node at (0, -3.5) (F) {$R^L(A,B)$} edge[-] (D);     
    \end{scope}
    \end{tikzpicture}
}    
  \end{minipage}
  \nop{

\vspace{1em}
  \begin{minipage}{13.5cm}
\scalebox{0.95}{ 
    \begin{tikzpicture}
    \begin{scope}
     \node at (0,0.1) (G) {Delta view tree for $Q_\triangle^{HL*}$};
      \node at (0, -0.5) (A) {$\delta V^{HL*}(a,C)$};
      \node at (-1, -1.5) (C) {$\delta \hat{V}_{RS}(a,C)$} edge[-] (A);
      \node at (1, -1.65) (B) {$T(A,C)$} edge[-] (A);
      \node at (-1, -2.5) (D) {$\delta V_{RS}(a,b,C)$} edge[-] (C);
      \node at (-2, -3.5) (E) {$\delta R^H(a,b)$} edge[-] (D);
      \node at (0, -3.5) (F) {$S^L(B,C)$} edge[-] (D);            
    \end{scope}

    \begin{scope}[xshift=4.5cm]
      \node at (0,0.1) (G) { Delta view tree for $Q_\triangle^{*HL}$};
      \node at (0, -0.5) (A) {$\delta V^{*HL}(a,b)$};
      \node at (-1, -1.5) (C) {$\hat{V}_{ST}(B,A)$} edge[-] (A);
      \node at (1, -1.65) (B) {$\delta R(a,b)$} edge[-] (A);      
      \node at (-1, -2.5) (D) {$V_{ST}(B,C,A)$} edge[-] (C);
      \node at (-2, -3.5) (E) {$S^H(B,C)$} edge[-] (D);
      \node at (0, -3.5) (F) {$T^L(A,C)$} edge[-] (D);      
    \end{scope}

    \begin{scope}[xshift=9cm]
      \node at (0,0.1) (G) {Delta view tree for $Q_\triangle^{L*H}$};
      \node at (0, -0.5) (A) {$\delta V^{L*H}(b,C)$};
      \node at (-1, -1.5) (C) {$\delta \hat{V}_{TR}(C,b)$} edge[-] (A);
      \node at (1, -1.65) (B) {$S^s(B,C)$} edge[-] (A);
      \node at (-1, -2.5) (D) {$\delta V_{TR}(C,a,b)$} edge[-] (C);
      \node at (-2, -3.5) (E) {$T^H(A,C)$} edge[-] (D);
      \node at (0, -3.5) (F) {$\delta R^L(a,b)$} edge[-] (D);     
    \end{scope}
    \end{tikzpicture}
}    
  \end{minipage}
 } 
  \caption{View trees used by \ivmeps for the triangle query 
$Q_{\triangle}(A,B,C) = R(A,B)\wedge S(B,C)\wedge T(A,C)$.
Note that $T$ is partitioned into $T^L$ and $T^H$ \underline{based on $C$}, while $R$ and $S$ are partitioned based on $A$ and $B$, respectively.}
\label{fig:adaptive_triangle}
\end{figure}

\nop{
We consider a single-tuple update $\delta R(a,b)$ to relation $R$.
First we compute the delta views in the delta view tree for $Q_\triangle^{*HL}$ 
(Figure~\ref{fig:adaptive} bottom center). The only delta view 
$\delta V^{*HL}(a,b)$ can be updated by a single constant-time lookup of the tuple 
$(a,b)$ in  $\hat{V}_{ST}(B,A)$. 

Then, we check whether $a$ is included in $R^L$ 
or $R^H$. If $a$ is included in $R^L$, we compute the delta views in 
the view tree $Q_\triangle^{L*H}$ as follows. The view $\delta V_{TR}(C,a,b)$ is computed by 
iterating over $C$-values in $T^H$ paired with $(a)$. Since $C$ is heavy in $T^H$, it obtains
at most $|D|^{1/2}$ $A$-values. Hence, the computation time is $\bigO{|D|^{1/2}}$. 
The delta view $\delta \hat{V}_{TR}(C,a,b)$ results from $\delta V_{TR}(C,a,b)$ by skipping the value $a$ from the output, which takes constant time. To  compute 
$\delta V^{L*H}(b,C)$, we nee to intersect the $C$-values in $\delta \hat{V}_{TR}(C,b)$
with the $C$-values paired with $b$ in $S$. The time to compute the intersection is given by the size of smaller set, hence the time is $\bigO{|D|^{1/2}}$.

If $a$ is included in $R^H$, we compute the delta views in 
the view tree $Q_\triangle^{HL*}$. We compute the view $\delta V_{RS}(a,b,C)$ by 
iterating over $C$-values paired with $b$ in $S^L$. Since $B$ is light in $S^L$, it contains
less than $|D|^{1/2}$ $C$-values paired with $b$. This, the computation time is $\bigO{|D|^{1/2}}$. 
The delta view $\delta \hat{V}_{RS}(a,C)$ results from $\delta V_{RS}(a,b,C)$ by skipping the value $b$, which takes constant time. To  compute 
$\delta V^{HL*}(b,C)$, we need to intersect the $C$-values in $\delta \hat{V}_{RS}(a,C)$
with the $C$-values paired with $a$ in $T$. Since the number of $C$-values in $\delta \hat{V}_{RS}(a,C)$ is bounded by $|D|^{1/2}$, the computation time is
$\bigO{|D|^{1/2}}$.

We conclude that the time to compute all delta view trees is $\bigO{|D|^{1/2}}$.

The tuples represented by the first view tree can be enumerated with constant delay by first iterating over the $(A,C)$-tuples in $V^{HL*}$ and iterating for each such tuple, over the $B$-values paired with $(a,c)$ in $V_{RS}$. The output tuples represented by the other two view trees can be enumerated analogously by using two views in each of the view trees.
Upon each enumeration request, we can first enumerate the tuples from the first view tree, then the second, and then the third. Since the tuple sets represented by the the three view trees are disjoint, no output tuple will be enumerated twice. 
}

\paragraph{Rebalancing and Amortization}
Updates can change the degrees of values, which means that a light value can become heavy and vice-versa.
The original \ivmeps algorithm moves tuples between the heavy and the light part of a relation whenever the degrees of the values in the tuples pass certain thresholds~\cite{DBLP:journals/tods/KaraNNOZ20}. This leads to additional time cost that can be amortized over update sequences such that the amortized update time remains $\bigO{|\calD|^{1/2}}$. 
\nop{
Using standard de-amortization techniques, the original 
\ivmeps algorithm can be tweaked so that the update time is $\bigO{|\calD|^{1/2}}$ in the worst case~\cite{DBLP:journals/tods/KaraNNOZ20}. 
}

\subsection{\crown}
\crown is a higher order IVM approach. 
It uses view trees where each view is either a projection of another view or a semi-join
between a view and other views. 
Since \crown is tailored to free-connex acyclic queries, 
we illustrate it for the 3-path query $Q_{3p}$ only.
\nop{
The complexity analysis is more refined in the sense that it takes the structure of the update sequence 
into account and shows that for certain update sequences, such as {\em insert-only} or 
{\em first-in-first-out} the update time is much lower than in the general setting. 
In the following, we illustrate CROWN for the path query $Q_p$ only.
%(the triangle query is not acyclic).
}

\paragraph{3-Path Query} 
The view trees constructed by \crown follow the so-called {\em generalized join trees} which, 
unlike classical join trees, can contain nodes that consist of variables that are a strict subset of the schema of a relation appearing in the query.    
Figure~\ref{fig:gen_join_tree} shows a possible join tree for $Q_{3p}$.
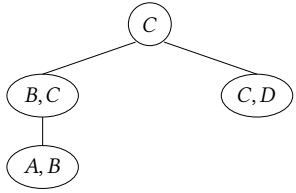
\begin{figure}[ht]
\begin{minipage}{13.5cm}
\begin{center}
 \scalebox{0.95}{ 
    \begin{tikzpicture}

\draw (0,0) ellipse (0.3cm and 0.3cm);
\node at (0,0) {$C$};

\draw (-1.5,-1) ellipse (0.5cm and 0.3cm);
\node at (-1.5,-1) {$B,C$};

\draw (1.5,-1) ellipse (0.5cm and 0.3cm);
\node at (1.5,-1) {$C,D$};

\draw (-1.5,-2) ellipse (0.5cm and 0.3cm);
\node at (-1.5,-2) {$A,B$};

\draw (-0.2,-0.25) -- (-1.5,-0.7);
\draw (0.2,-0.25) -- (1.5,-0.7);

\draw (-1.5,-1.3) -- (-1.5,-1.7);

\end{tikzpicture}
    }
\end{center}  
  \end{minipage}
\caption{A generalized join tree for the 3-path query $Q_{3p} = R(A,B), S(B,C), T(C,D)$.}
\label{fig:gen_join_tree}
\end{figure}
The view tree corresponding to this generalized join tree is the same as the one used by \fivm (Figure~\ref{fig:higher_order_path} left). The views $V_R$, $V'_{RS}$, and $V_T$
are called projection views, since they are obtained by a projection on their child views.  
The views $V_{RS}$ and $V_{RST}$ are called semi-join views, since they
are computed by semi-joining one child view with other child views. 
The enumeration of result tuples is exactly as explained in Section~\ref{app:higher_order_IVM}.

\crown uses special indices that can speed up delta computation by constant factors.
We explain these indices for the view tree in Figure~\ref{fig:higher_order_path} (left).
For each value $b$ in the projection view $V_R$, it  maintains a counter 
$\mycount_R$ that stores the number of tuples in $R$ with $B$-value $b$. 
There are similar counters $\mycount_{RS'}$ and $\mycount_{T}$ for the other projection views $V_{RS'}$ and $V_T$.
Additionally, there is  a counter 
$\mycount_{RS}$ that counts for each $(B,C)$-value $(b,c)$ in $S$, the number of child views that contain matching tuples.
More precisely: 
$$
\mycount_{RS}[b,c] = 
\begin{cases} 
2 & \text{if } b \in V_R \text{ and }  (b,c) \in S \\
0 & \text{if } b \notin V_R \text{ and }  (b,c) \notin S \\
1 & \text{otherwise} 
\end{cases}
$$
Crown maintains also a counter $\mycount_{RST}$ that counts for each $C$-value $c \in V_{RS}' \cup V_T$, 
the number of child views that contain $c$.

We explain how the view tree is maintained under an insert of a tuple $(a,b)$ to relation $R$. Other types of updates are handled analogously. 
If $b$ is included in $V_R$, we increase $\mycount_R[b]$ by $1$. Otherwise, we add $b$ to  $V_R$ and set $\mycount_R[b] = 1$. Then, we update $V_{RS}$ as follows. For each tuple $(b,c')$ in $S$, we increase 
$\mycount_{RS}[b,c']$ by $1$
and check whether $\mycount_{RS}[b,c'] = 2$. If yes, we add the tuple $(b,c')$ to $V_{RS}$.
To propagate the change further up to the root, we use the counters
$\mycount_{RS'}$ and $\mycount_{RST}$ in a similar manner. 
The overall update time is $\bigO{|\calD|}$ for arbitrary update sequences. 

%If the update sequence is insert-only, the amortized update time is   
%$\bigO{1}$~\cite{DBLP:journals/pvldb/WangHDY23}.

\nop{
The leaf nodes are the input relations. The {\em projection} views
$V_R$ and $V_T$ result from $R$ and respectively $T$ by projecting away variable $A$ and respectively $D$. The {\em semi-join} view 
$V_{RST}(B,C) = S(B,C) \ltimes V_{R}(B) \ltimes V_T(C)$, i.e., 
it contains those tuples from $S$ for which there are matching tuples
in $V_R$ and $V_T$. For each value $a$ in $V_R$, we have a counter 
$count_R[a]$ that maintains the number of tuples in $R$ with $A$-value $a$. Similarly, for each $C$-value $c$ in $V_T$, we have a counter 
$count_T[c]$ that maintains the number of tuples in $T$ with $C$-value $c$. 
Additionally, for each tuple $(b,c)$ in $V_{RST}$, we have a counter 
$count_S[a,b]$ that counts the number of child projections views that contain tuples that match with $(b,c)$. More precisely: 
$$
count_S[a,b] = 
\begin{cases} 
2 & \text{if } b \in V_R \text{ and }  c \in V_T \\
0 & \text{if } b \notin V_R \text{ and }  c \notin V_T \\
1 & \text{otherwise} 
\end{cases}
$$

Given an insert $+R(a,b)$ to relation $R$, we update the view tree as follows. If $b$ is included in $V_R$, we increase $count_R[b]$ by $1$. Otherwise, we add $b$ to  $V_R$, set $count_R[b] = 1$, and update $V_{RST}$ as follows. For each tuple $(b,c')$ in $S$, we increase $count_R[b,c']$ by $1$
and check whether $count_R[b,c] = 2$. If yes, we add the tuple $(b,c')$ to $V_{RST}$. This update takes $\bigO{|\calD|}$ time. Inserts to $T$ are handled analogously.  

Given insert $(b,c)$ to $S$, we first compute the correct count $count_S[b,c]$, which requires to look up $b$ in $V_R$ and $c$ in $V_T$. If $count_S[b,c] =2$, we add $(b,c)$ to $V_{R}$. Overall, the time to process updates to $S$ is constant.  

We can enumerate from the view tree all result tuples with constant delay, as follows. For each tuple $(b,c)$ in
$V_{RST}$, we iterate over all $A$-values $a$ paired with 
$b$ in $R$ and all $D$-values $d$ paired with $c$ in $T$.
We output each such tuple $(a,b,c,d)$.
}

\nop{
\subsection{Our Approach}
We consider again the the query 

$$Q_p(A,B,C,D) = R(A,B)\wedge S(B,C)\wedge T(C,D)$$

\subsubsection{Insert-Only Setting} 
We construct the same view tree as in Figure~\ref{fig:semi_join}.
In the worst-case the update time is linear. The amortized analysis 
is different, but leads to the same result. 
The idea is that the sum of the times to compute the deltas for each view is at most linear in in the update sequence $N$. Amortized over $N$, this gives amortized constant update time.

\subsubsection{Insert-Delete Setting} 
The time extension of the query $Q_p$ is

$$\overline{Q}_p([Z], A,B,C,D) = R([Z], A,B)\wedge S([Z], B,C)\wedge T([Z], C,D)$$

The multi-variate extension of $Q_p$ is the union of the following six queries:
\begin{align*}
\widehat{Q}_{123} = &\ \widehat{R}_{123}(Z_1, A,B)\wedge 
\widehat{S}_{123}(Z_1,Z_2, B,C)\wedge \widehat{T}_{123}(Z_1,Z_2,Z_3,C,D) \\
\widehat{Q}_{132} = &\ \widehat{R}_{132}(Z_1, A,B)\wedge 
\widehat{S}_{132}(Z_1,Z_2, Z_3, B,C)\wedge \widehat{T}_{132}(Z_1,Z_2, C,D) \\
\widehat{Q}_{213} = &\ \widehat{R}_{213}(Z_1, Z_2, A,B)\wedge \widehat{S}_{213}(Z_1, B,C)\wedge \widehat{T}_{213}(Z_1,Z_2,Z_3,C,D) \\
\widehat{Q}_{231} = &\ \widehat{R}_{231}(Z_1, Z_2, A,B)\wedge \widehat{S}_{231}(Z_1,Z_2, Z_3, B,C)\wedge \widehat{T}_{231}(Z_1,C,D) \\
\widehat{Q}_{312} = &\ \widehat{R}_{312}(Z_1, Z_2, Z_3, A,B)\wedge 
\widehat{S}_{312}(Z_1 B,C)\wedge \widehat{T}_{312}(Z_1,Z_2,C,D) \\
\widehat{Q}_{321} = &\ \widehat{R}_{321}(Z_1, Z_2, Z_3, A,B)\wedge \widehat{S}_{321}(Z_1,Z_2, B,C)\wedge \widehat{T}_{321}(Z_1,C,D)
\end{align*}
\begin{figure}[ht]
\begin{minipage}{13.5cm}
\begin{center}
    \begin{tikzpicture}

\draw (0,0) ellipse (1cm and 0.5cm);
\node at (0,0.2) {$Z_1,Z_2,Z_3$};
\node at (0,-0.2) {$C,D$};

\draw (0,-1.5) ellipse (1cm and 0.5cm);
\node at (0,-1.3) {$Z_1,Z_2$};
\node at (0,-1.7) {$B,C$};

\draw (0,-3) ellipse (1cm and 0.5cm);
\node at (0,-2.8) {$Z_1,Z_2$};
\node at (0,-3.2) {$A,B$};

\draw (0,-0.5) -- (0,-1);
\draw (0,-2) -- (0,-2.5);

\begin{scope}[xshift = 7cm, yshift = 1.8cm]
    \node at (-2, -1.5) (A) {$V_{RST}(B,C)$};
    \node at (-3.5, -2.5) (D1) {$V_{R}(B)$} edge[-] (A);
    \node at (0, -2.5) (D2) {$V_{T}(C)$} edge[-] (A);  
    \node at (-2, -2.5) (S) {$S(B,C)$} edge[-] (A);
    \node at (-3.5, -3.5) (E) {$R(A,B)$} edge[-] (D1);
    \node at (0, -3.5) (F) {$T(C,D)$} edge[-] (D2);      
\end{scope}
    
    \end{tikzpicture}
\end{center}  
  \end{minipage}
\caption{hhh}
\label{fig:semi_join_}
\end{figure}
}

\subsection{\mvivm}
Finally, we highlight some of the key insights of our approach, \mvivm, that enables it
to maintain the triangle query $Q_\triangle$ in
\change{amortized time $\tildeO{|\calD|^{1/2}}$}.
Example~\ref{ex:main_fully_dynamic} presents the high-level idea while Appendix~\ref{sec:forward_reduction} explains the technical subtleties that arise.
At a high-level, our approach maintains $Q_\triangle$ by constructing and maintaining the six multivariate extension components $\wh Q_{123},\ldots,\wh Q_{321}$ given by Eq.~\eqref{eq:triangle-query:multi-123:body}
and~\eqref{eq:triangle-query:multi-132-321}.
Each of these component queries is maintained using Theorem~\ref{thm:main_inserts} along with a variety of technical lemmas described in Appendix~\ref{sec:forward_reduction}.
It can be mechanically verified that each of these components has a fractional hypertree width of $\frac{3}{2}$, leading to our amortized update time of $\tildeO{|\calD|^{1/2}}$.
Let's take the first component $\wh Q_{123}$ as an example:
\begin{align*}
    \wh Q_{123} =
        \wh R_{123}(Z_1, A, B) \wedge 
        \wh S_{123}(Z_1, Z_2, B, C) \wedge
        \wh T_{123}(Z_1, Z_2, Z_3, A, C)
\end{align*}
This component has an optimal tree decomposition, depicted in Figure~\ref{fig:triangle-query:multi-123:TD}, that consists of
two bags: 
\begin{itemize}
    \item a root bag containing $\{Z_1, Z_2, A, B, C\}$   ({\em without} $Z_3$), and
    \item a child bag containing $\{Z_1, Z_2, Z_3, A, C\}$ ({\em without} $B$).
\end{itemize}
\begin{figure}[ht]
    \centering
    \begin{tikzpicture}[scale = .4, every node/.style={scale=0.9}]
        \node[draw,ellipse] (root) {$Z_1, Z_2, A, B, C$};
        \node[draw,ellipse, below = 0.7 of root] (child) {$Z_1, Z_2, Z_3, A, C$};
        \draw (root) -- (child);
    \end{tikzpicture}
\caption{An optimal tree decomposition for $\wh Q_{123}$ from Eq.~\eqref{eq:triangle-query:multi-123:body}.}
\label{fig:triangle-query:multi-123:TD}
\end{figure}
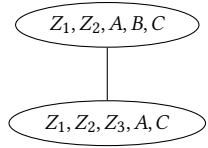
Recall that $Z_1$, $Z_2$, and $Z_3$ are temporal attributes representing the lifespans of tuples in $R$, $S$, and $T$. In particular, the lifespan of a tuple in $T(A, C)$ is represented by the concatenation of $Z_1\circ Z_2\circ Z_3$; see Example~\ref{ex:IJ:forward:reduction:body}.
The above tree decomposition corresponds to a query plan where we are projecting $Z_3$ out of the child bag
before joining with the root bag.
Projecting $Z_3$ out of the child bag corresponds to ``expanding'' the lifespans of tuples of $T(A, C)$ from the lifespan $Z_1\circ Z_2\circ Z_3$ to the {\em longer} lifespan $Z_1\circ Z_2$. What this means algorithmically is that we are {\em lazily} processing deletes from $T$ in order to avoid doing excessive work for each delete, especially in the case where the same tuple from $T$ is inserted and deleted many times. In particular, a sequence of repeated insert-delete of the same tuple of $T$ corresponds to lifespans whose encodings may share the same prefix $Z_1\circ Z_2$. Hence, we don’t need to propagate changes from the child bag to the root bag for every such insert-delete because the lifespan $Z_1\circ Z_2$ remains in the child bag throughout all these insert-deletes. Without this lazy processing, each such repeated insert/delete into $T$ would have required linear update time, thus we would have failed to meet our target update time of $\tildeO{|\calD|^{1/2}}$. This implicit lazy processing is just one example of sophisticated behavior that comes automatically out of our approach.

\section{Missing Details in Preliminaries}
\label{app:prelims}

\paragraph{Query Classes}

We give below some examples of {\em acyclic}, {\em hierarchical}, and {\em Loomis-Whitney queries},
where the latter two are defined in Section~\ref{sec:prelims}.
Recall that for a query $Q$ and a variable $X \in \vars(Q)$, we use $\atoms(X)$ to denote
the set of atoms of $Q$ that contain $X$ in their schema.
\begin{example}[Query Classes]
    \label{ex:queries}
    The query $Q_{\text{h}} = R(A,B) \wedge S(A,C)$ is hierarchical. 
    We have $\atoms(A) = \{R(A,B), S(A,C)\}$, 
    $\atoms(B) = \{R(A,B)\}$, and $\atoms(C) = \{S(A,C)\}$.
    It holds $\atoms(B)\subseteq \atoms(A)$, 
    $\atoms(C)\subseteq \atoms(A)$, and $\atoms(B) \cap \atoms(C) = \emptyset$.
    The query 
    $Q_{\text{nh}} = R(A) \wedge S(A,B) \wedge T(B)$ is acyclic but not hierarchical. Indeed,
    we have $\atoms(A)= \{R(A), S(A,B)\}$ and $\atoms(B)= \{T(B),$ $S(A,B)\}$ and it does not hold $\atoms(A)\subseteq \atoms(B)$ nor $\atoms(B)\subseteq \atoms(A)$ nor $\atoms(A)\cap \atoms(B) = \emptyset$. A possible join tree for 
    $Q_{nh}$ has $R(A,B)$ as root and the other two atoms as the 
    children of the root. The triangle query (or Loomis-Whitney query of degree $3$)
    is of the form $Q_{\triangle} = R(A,B) \wedge S(B,C) \wedge T(A,C)$.
    A Loomis-Whitney query of degree $4$ is of the form 
    $Q_{\text{lw4}} = R(A,B,C) \wedge S(B,C,D) \wedge T(C,D,A) \wedge U(D,A,B)$.
\end{example}

\paragraph{Variable Elimination Orders}
We review the equivalence between tree decompositions and {\em variable elimination orders},
which is a folklore concept in graph and database theory; see e.g.~\cite{faq} and references
thereof.
Given a query $Q$, a {\em variable elimination order} $\bm\mu$ for $Q$ is an order of the variables of $Q$ that we use to {\em eliminate} them one by one.
In order to {\em eliminate} a variable $X$, we take all atoms $R(\bm X)$ where $X \in \bm X$
and replace them with a new atom $R_X^{\bm\mu}(\bm U_{X}^{\bm\mu}-\{X\})$ where $\bm U_{X}^{\bm\mu}$ is the union of schemas $\bm X$ of all these atoms $R(\bm X)$ that contain $X$. It is long known that for a fixed variable elimination order $\bm\mu$, the sets $\bm U_X^{\bm\mu}$ are the bags of some tree decomposition of $Q$.
Moreover, one of these tree decompositions that are obtained from variable elimination orders is {\em optimal}, i.e. has a minimum fractional hypertree width.
Therefore, the fractional hypertree width can be equivalently defined as follows.
Let $\Sigma_{\vars(Q)}$ denote the set of all permutations of the variables of $Q$, and let $\bm\mu \in \Sigma_{\vars(Q)}$ be a variable elimination order for $Q$:
\begin{align}
\fw(\bm\mu) &\defeq \max_{X \in \bm\mu} \rho^*(Q_{\bm U_X^{\bm\mu}})\nonumber\\
\fw(Q) &\defeq \min_{\bm\mu \in \Sigma_{\vars(Q)}} \fw(\bm\mu) \label{eq:fhtw:vo}
\end{align}
In particular, it is known that~\eqref{eq:fhtw:vo} and~\eqref{eq:fhtw} are equivalent; see for example~\cite{faq} and references thereof.
We call $\fw(\bm\mu)$ the {\em fractional hypertree width} of the variable elimination order $\bm\mu$. Moreover, we call $\bm\mu$ {\em optimal} if $\fw(\bm\mu) = \fw(Q)$.

\paragraph{Online Matrix-Vector Multiplication (OMv) Conjecture}
In order to state the OMv conjecture, we first state the OMv problem.
Given a Boolean matrix $\bm M$ and 
$n$ column vectors 
$\bm v_1, \ldots , \bm v_n$ that arrive one-by-one, the task is to output 
the vector $\bm M \bm v_r$ after seeing $\bm v_r$ for each $r \in [n]$:\\
\noindent
\vspace*{.5em}
\fbox{%
    \parbox{0.96\linewidth}{%
    \begin{tabular}{ll}
   Problem: & OMv \\ 
        Given: & $n \times n$ Boolean matrix $\bm M$ and a stream of $n$ $n\times 1$ Boolean column vectors $\bm v_1, \ldots , \bm v_n$  \\
        Task: & Compute $\bm M \bm v_r$ after seeing each $\bm v_r$ \\
    \end{tabular}
    }}

%\vspace*{.2em}

\begin{conjecture}[OMv~\cite{HenzingerKNS15}]
    \label{conj:omv}
    For any $\gamma > 0$, there is no algorithm that solves the OMv problem in $\bigO{n^{3-\gamma}}$ time. 
\end{conjecture}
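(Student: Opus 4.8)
\noindent\textbf{Caveat: the statement is a \emph{conjecture}, not a theorem, so what follows is a discussion of the evidence and of what a proof would require, rather than a proof plan.} There is a trivial $\bigO{n^3}$ algorithm for OMv: buffer the $r$-th query vector and compute $\bm M\bm v_r$ by the schoolbook matrix--vector product in $\bigO{n^2}$ time, over $n$ rounds; the conjecture asserts this is essentially optimal up to subpolynomial factors. The first thing one would try to break it is to accumulate several incoming vectors into a thin matrix and then invoke fast (rectangular) matrix multiplication. This attempt fails, because OMv insists that $\bm M\bm v_r$ be emitted \emph{before} $\bm v_{r+1}$ is revealed, so no latency-incurring batching is permitted. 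This ``online barrier'' --- the $n^{\omega}$-style speedups that beat $n^3$ offline are provably unusable here --- is exactly what makes the conjecture believable.

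\noindent\textbf{Supporting evidence.} The best algorithm known, due to Larsen and Williams, runs in time $n^3 / 2^{\Omega(\sqrt{\log n})}$, i.e.\ only a subpolynomial improvement over the trivial bound, so even a mild polynomial speedup would already be a breakthrough. The conjecture is also robust: Henzinger, Krinninger, Nanongkai, and Saranurak~\cite{HenzingerKNS15} showed it is equivalent, up to polylogarithmic factors, to the apparently easier online vector--matrix--vector (OuMv) problem, where one must output only the scalar $\bm u_r^{\top}\bm M\bm v_r$ for online pairs $(\bm u_r,\bm v_r)$ after preprocessing a fixed $\bm M$. Finally, the conjecture now underpins a large body of conditional lower bounds for dynamic graph and database problems, and no consequence derived from it is known to be false.

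\noindent\textbf{The main obstacle.} A genuine proof would have to rule out \emph{every} algorithm --- including algebraic ones exploiting fast matrix multiplication --- running in $\bigO{n^{3-\gamma}}$ time. This is far beyond current unconditional lower-bound technology: even showing that Boolean matrix multiplication itself requires $n^{2+\varepsilon}$ time in a general model of computation is open, and OMv is not known to follow from any of the standard fine-grained assumptions (SETH, 3SUM, APSP) nor from $\mathrm{P}\neq\mathrm{NP}$. So the obstacle is not a gap in some particular argument but the complete absence of a technique for proving super-quadratic unconditional time lower bounds against unrestricted computation. For this reason the statement is adopted here, as elsewhere in the literature, as a hypothesis rather than established as a theorem.
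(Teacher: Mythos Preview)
Your assessment is correct: the statement is a conjecture, and the paper does not attempt to prove it---it is simply stated (with attribution to~\cite{HenzingerKNS15}) and then used as a hypothesis for conditional lower bounds. Your discussion of the evidence, the online barrier, and the obstacles to an unconditional proof is appropriate and goes well beyond what the paper itself says about the conjecture; there is nothing to compare against and nothing to correct.
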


\paragraph{Online Vector-Matrix-Vector Multiplication (OuMv) Conjecture}
The OuMv problem is defined as follows.
Given a Boolean matrix $\bm M$ and 
$n$ pairs of column vectors 
$(\bm u_1, \bm v_1), \ldots , (\bm u_n, \bm v_n)$ that arrive one-by-one, the task is to output 
the product ${\bm u}^{\top}_r \bm M \bm v_r$ after seeing $(\bm u_r, \bm v_r)$ for each $r \in [n]$:\\
\noindent
\vspace*{.5em}
\fbox{%
    \parbox{0.96\linewidth}{%
    \begin{tabular}{ll}
   Problem: & OuMv \\ 
        Given: & $n \times n$ Boolean matrix $\bm M$ and a stream of $n$ pairs of
        $n\times 1$ Boolean column vectors\\&$(\bm u_1, \bm v_1), \ldots , (\bm u_n, \bm v_n)$  \\
        Task: & Compute $\bm u_r^{\top} \bm M \bm v_r$ after seeing each $(\bm u_r, \bm v_r)$ \\
    \end{tabular}
    }}

%\vspace*{.2em}

\begin{conjecture}[OuMv~\cite{HenzingerKNS15}]
    \label{conj:oumv}
    For any $\gamma > 0$, there is no algorithm that solves the OuMv problem in $\bigO{n^{3-\gamma}}$ time. 
\end{conjecture}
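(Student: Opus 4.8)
The plan is not to establish Conjecture~\ref{conj:oumv} outright --- an unconditional proof is beyond reach --- but to reduce it to the OMv conjecture (Conjecture~\ref{conj:omv}), showing the two are equivalent, so that relying on OuMv amounts to relying on OMv. Following~\cite{HenzingerKNS15}, I would argue both implications. One direction is routine: if OMv admitted an $\bigO{n^{3-\gamma}}$ algorithm for some $\gamma>0$, then so would OuMv, since after preprocessing the matrix $\bm M$ one answers each incoming pair $(\bm u_r,\bm v_r)$ by calling the OMv subroutine to obtain $\bm M\bm v_r$ and then forming the Boolean inner product $\bm u_r^{\top}(\bm M\bm v_r)$ in an extra $\bigO{n}$ time per query, for a total of $\bigO{n^{3-\gamma}+n^2}$. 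Contrapositively, the OuMv conjecture implies the OMv conjecture.

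The substantive direction --- the one needed to obtain Conjecture~\ref{conj:oumv} from Conjecture~\ref{conj:omv} --- is the converse: a truly subcubic algorithm for OuMv would yield a truly subcubic algorithm for OMv. Given an $n\times n$ OMv instance $\bm M$ and online queries $\bm v_1,\dots,\bm v_n$, one must output each $\bm M\bm v_r$; the idea is to recover these vectors through OuMv probes, noting that for a set $T$ of row indices the probe $(\mathbf{1}_T,\bm v_r)$ returns $\bigvee_{i\in T}(\bm M\bm v_r)_i$, so a group-testing / binary-search schedule over row subsets determines the support of $\bm M\bm v_r$. Because the OuMv subroutine processes queries against a \emph{fixed} matrix in batches of a prescribed size, one partitions $\bm M$ into sub-blocks of a carefully chosen dimension and runs the subroutine per block, tuning the block dimension so as to balance the number of probes against the (subcubic) cost per batch; this step relies on the robustness of the OMv/OuMv conjectures under rescaling the matrix and query dimensions --- the rectangular ``$\mathrm{OMv}$'' self-reductions of~\cite{HenzingerKNS15} --- which is what finally pushes the running time strictly below $n^3$.

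I expect this balancing to be the main obstacle. An OuMv probe returns a single bit whereas $\bm M\bm v_r$ carries up to $n$ bits, so the most direct recovery uses $\Theta(n^2)$ probes against the $n\times n$ matrix, hence $\Theta(n)$ separate OuMv batches and a total of $\bigO{n^{4-\gamma}}$, which is worse than the naive $\bigO{n^3}$ bound for OMv; only the block-dimension choice combined with the dimension-rescaling self-reductions gets below $n^3$, and everything else is bookkeeping. The outcome is the equivalence of Conjectures~\ref{conj:omv} and~\ref{conj:oumv}: the OuMv conjecture rests on exactly the same footing as the OMv conjecture, and --- like it --- remains a conjecture rather than a theorem.
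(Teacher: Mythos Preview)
The paper does not prove this statement at all: it is stated as a conjecture, with no accompanying argument. The only related remark in the paper is the one-line citation ``It is known that the OMv conjecture implies the OuMv conjecture~\cite{HenzingerKNS15}'', which is exactly the implication the paper needs for its lower-bound reductions. So there is no proof in the paper to compare against; you are attempting something the paper deliberately does not do.

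That said, your plan is reasonable in spirit and goes beyond what the paper requires. Your ``routine'' direction (subcubic OMv $\Rightarrow$ subcubic OuMv, hence the OuMv conjecture implies the OMv conjecture) is correct and easy, but it is the direction the paper does \emph{not} need. The direction the paper actually relies on is your ``substantive'' one: subcubic OuMv $\Rightarrow$ subcubic OMv. Your sketch here is honest about the obstacle --- a single OuMv probe returns one bit while $\bm M\bm v_r$ has $n$ bits, so the naive recovery costs $\Theta(n^2)$ probes per vector and overshoots to $n^{4-\gamma}$ --- but it stops short of the actual argument. The reduction in~\cite{HenzingerKNS15} does not use group testing over rows of the original $n\times n$ matrix; it partitions $\bm M$ into $n^\alpha \times n^\alpha$ blocks (for a suitably chosen $\alpha<1$ depending on $\gamma$), runs the OuMv oracle on each block with the $n$ restricted query vectors plus all $n^\alpha$ unit vectors as the $\bm u$-side, and reassembles the answers. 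The parameter balancing is what makes it work, and you have correctly flagged that as the crux without actually carrying it out. If you want a self-contained justification, that is the piece to fill in; otherwise, citing~\cite{HenzingerKNS15} as the paper does is entirely appropriate for a conjecture.
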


It is known that the OMv conjecture implies the OuMv conjecture~\cite{HenzingerKNS15}.
\section{Missing Details in Section \ref{sec:insert_only}}
\label{app:insert_only}

In this section, we explain in detail our results on IVM in the insert-only setting, where no deletes
are allowed. The main goal is to prove Theorem~\ref{thm:main_inserts}
(as well as \change{Proposition~\ref{prop:insert_only_lower_bound}}).
The results in this section also serve as building blocks for our results on the \insertdelete setting in Section~\ref{app:fully_dynamic}.

\subsection{Preliminaries: The Query Decomposition Lemma}
We prove the query decomposition lemma (Lemma~\ref{lmm:query-decom-lemma})
and illustrate it by an example. This lemma is the core of the proof of Lemma~\ref{lem:insert_only_in_AGM}, which in turn is a key ingredient in the proof of Theorem~\ref{thm:main_inserts}. Recall notation from Section~\ref{app:prelims}.

The query decomposition lemma basically
says the following.
Let $Q$ be a query and let $\bm Y \subseteq \vars(Q)$.
Then, the AGM bound of $Q$ can be decomposed into a sum of AGM bounds of ``residual'' queries: one query $Q\Join \bm y$ for each tuple $\bm y$ over the variables $\bm Y$.
\begin{citedlem}[\ref{lmm:query-decom-lemma} (Query Decomposition Lemma~\cite{SkewStrikesBack2014,WCOJGemsOfPODS2018})]
    Given a query $Q$ and a subset $\bm Y \subseteq \vars(Q)$, let $\left(\lambda_{R(\bm X)}\right)_{R(\bm X)\in\atoms(Q)}$ be a fractional edge cover of $Q$.
    Then, the following inequality holds:

    \begin{equation}
        \sum_{\bm y\in \Dom(\bm Y)}
        \underbrace{\prod_{R(\bm X) \in \atoms(Q)}
        |R(\bm X)\ltimes \bm y|^{\lambda_{R(\bm X)}}
        }_{\text{AGM-bound of $Q\ltimes \bm y$}}
        \leq
        \underbrace{\prod_{R(\bm X) \in \atoms(Q)}
        |R|^{\lambda_{R(\bm X)}}}_{\text{AGM-bound of $Q$}}
        \label{eq:query-decom-lemma:app}
    \end{equation}
\end{citedlem}
In the above, $\bm y\in \Dom(\bm Y)$ indicates that the tuple $\bm y$ has schema $\bm Y$. Moreover,
$R(\bm X)\ltimes \bm y$ denotes the {\em semijoin} of the atom
$R(\bm X)$ with the tuple $\bm y$.
\begin{proof}[Proof of Lemma~\ref{lmm:query-decom-lemma}]
In order to prove inequality~\eqref{eq:query-decom-lemma:app}, we prove the following special case
where $\bm Y$ consists of a single variable $Y$:
\begin{equation}
    \sum_{y \in \Dom(Y)} \prod_{R(\bm X) \in \atoms(Q)}
    |R(\bm X)\ltimes y|^{\lambda_{R(\bm X)}}
    \leq
    \prod_{R(\bm X) \in \atoms(Q)}
    |R|^{\lambda_{R(\bm X)}}
    \label{eq:query-decom-lemma:one-var}
\end{equation}
Inequality~\eqref{eq:query-decom-lemma:app} follows by repeatedly applying~\eqref{eq:query-decom-lemma:one-var} to each variable in $\bm Y$.
To that end, note that for each ``residual'' query, $Q_y \defeq Q\Join y$, the following is a valid fractional edge cover, where $R_y(\bm X)\defeq R(\bm X) \ltimes y$:
\begin{align*}
    \left(\lambda_{R_y(\bm X)}\right)_{R_y(\bm X) \in \atoms(Q_y)}
\end{align*}

Now, we prove~\eqref{eq:query-decom-lemma:one-var}:
\begin{align}
    \sum_{y\in\Dom(Y)} \prod_{R(\bm X) \in \atoms(Q)}
    |R(\bm X)\ltimes y|^{\lambda_{R(\bm X)}} &=\nonumber\\
    \sum_{y\in\Dom(Y)}
    \prod_{\substack{R(\bm X) \in \atoms(Q)\\Y \in \bm X}}
    |\sigma_{Y=y}R(\bm X)|^{\lambda_{R(\bm X)}}
    \prod_{\substack{R(\bm X) \in \atoms(Q)\\Y \notin \bm X}}
    |R|^{\lambda_{R(\bm X)}}&=\nonumber\\
    \prod_{\substack{R(\bm X) \in \atoms(Q)\\Y \notin \bm X}}
    |R|^{\lambda_{R(\bm X)}}
    \sum_{y\in\Dom(Y)}
    \prod_{\substack{R(\bm X) \in \atoms(Q)\\Y \in \bm X}}
    |\sigma_{Y=y}R(\bm X)|^{\lambda_{R(\bm X)}}
    &\leq\label{eq:query-decom:holder}\\
    \prod_{\substack{R(\bm X) \in \atoms(Q)\\Y \notin \bm X}}
    |R|^{\lambda_{R(\bm X)}}
    \prod_{\substack{R(\bm X) \in \atoms(Q)\\Y \in \bm X}}
    \left(\sum_{y\in\Dom(Y)}
    |\sigma_{Y=y}R(\bm X)|\right)^{\lambda_{R(\bm X)}}
    \nonumber&=\\
    \prod_{\substack{R(\bm X) \in \atoms(Q)\\Y \notin \bm X}}
    |R|^{\lambda_{R(\bm X)}}
    \prod_{\substack{R(\bm X) \in \atoms(Q)\\Y \in \bm X}}
    |R|^{\lambda_{R(\bm X)}}\nonumber&=\nonumber\\
    \prod_{R(\bm X) \in \atoms(Q)}
    |R|^{\lambda_{R(\bm X)}}\nonumber
\end{align}
Inequality~\eqref{eq:query-decom:holder} above follows from H\"older's inequality,
which applies specifically because the sum of $\lambda_{R(\bm X)}$ over all atoms
$R(\bm X)$ that contain $Y$ is at least 1.
\end{proof}

\begin{example}[for Lemma~\ref{lmm:query-decom-lemma}]
    Consider the triangle query:
    \begin{equation*}
        Q_\triangle(A, B, C) = R(A, B) \wedge S(B, C) \wedge T(A, C)
    \end{equation*}
    By choosing $\lambda_{R(A, B)} = \lambda_{S(B,C)} = \lambda_{T(A, C)}=\frac{1}{2}$ and $\bm Y =\{A, B\}$,
    inequality~\eqref{eq:query-decom-lemma:app} implies the following:
    \begin{align}
        \sum_{(a, b) \in R} \sqrt{|\sigma_{B=b} S(B, C)| \cdot|\sigma_{A=a} T(A, C)|}
        \leq \sqrt{|R|\cdot|S| \cdot |T|}
        \label{eq:query-decom-lemma:triangle}
    \end{align}
    This inequality can be proved as follows:
    \begin{align*}
        \sqrt{|R|\cdot|S| \cdot |T|} &=\\
        \sqrt{|S|} \cdot\sqrt{\left(\sum_{a}|\sigma_{A = a}R(A, B)|\right)
            \left(\sum_{a}|\sigma_{A = a}T(A, C)|\right)}&\geq\\
        \sqrt{|S|}\cdot\sum_{a}\sqrt{|\sigma_{A = a}R(A, B)|\cdot|\sigma_{A = a}T(A, C)|}&=\\
        \sum_{a}\sqrt{|\sigma_{A = a}R(A, B)|\cdot|\sigma_{A = a}T(A, C)|\cdot|S|}&=\\
        \sum_{a}\sqrt{|\sigma_{A = a}T(A, C)|}\sqrt{
            \left(\sum_{b}|\sigma_{A = a, B = b}R(A, B)|\right)\left(\sum_b{|\sigma_{B=b} S(B, C)|}\right)
        }&\geq\\
        \sum_{a}\sqrt{|\sigma_{A = a}T(A, C)|}\sum_{b}\sqrt{
            |\sigma_{A = a, B = b}R(A, B)|\cdot{|\sigma_{B=b} S(B, C)|}
        }&=\\
        \sum_{a, b}\sqrt{|\sigma_{A = a, B = b}R(A, B)|\cdot|\sigma_{B=b} S(B, C)|\cdot|\sigma_{A = a}T(A, C)|}&=\\
        \sum_{(a, b) \in R} \sqrt{|\sigma_{B=b} S(B, C)| \cdot |\sigma_{A=a} T(A, C)|}
    \end{align*}
    The two inequalities above follow from the Cauchy-Schwarz inequality, which is a special
    case of H\"older's inequality.
    \label{ex:query-decom-lemma:triangle}
\end{example}

\subsection{Proof of Lemma~\ref{lem:insert_only_in_AGM}}
\label{sec:insert_only:ub:agm}
Theorem~\ref{thm:main_inserts} gives an upper bound on the amortized update time
for IVM in the insert-only setting in terms of the fractional hypertree width of the query $Q$.
Before we show how to achieve this bound in Sections~\ref{sec:insert_only:ub:full} and~\ref{sec:insert_only:ub:delta}, we prove in this section a weaker bound that is based 
on the AGM-bound of $Q$.
In particular, we prove Lemma~\ref{lem:insert_only_in_AGM}, which says the following. Given a query $Q$ (whose input relations are initially empty)
and a stream of $N$ single-tuple inserts into $Q$, we can update $Q$ after every insert
{\em and} report the new output tuples $\delta Q$
in a total time for {\em all} updates/reports {\em together} that does not exceed the AGM bound of $Q$.
Note that  here we are referring to the {\em final} AGM bound of $Q$
after all inserts have been performed.

\begin{citedlem}[\ref{lem:insert_only_in_AGM}]
    Given a query $Q$, an initially empty database $\calD^{(0)}$, and a stream of $N$ single-tuple inserts,
    we can compute the {\em new} output tuples, $\delta_\tau Q(D)$, after every insert
    $\delta_\tau \calD$ where the total time over {\em all} inserts combined is $\bigO{N + \agm(Q, \calD^{(N)})}$.
\end{citedlem}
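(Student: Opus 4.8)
The plan is to turn each single-tuple insert into the evaluation of one \emph{residual} join query, charge its cost to an AGM bound taken over the \emph{final} database (using that insert-only streams are monotone), and then sum the charges with the query decomposition lemma (Lemma~\ref{lmm:query-decom-lemma}). First I would identify $\delta_\tau Q(\calD)$. Assume $Q$ has no self-joins (the general case is a routine constant-size union, discussed at the end). If the $\tau$-th update inserts a tuple $\bm t = \bm t_\tau$ into $R_j(\bm X_j)$, then since no deletes occur we have $Q(\calD^{(\tau-1)})\subseteq Q(\calD^{(\tau)})$ and every newly produced tuple must use $\bm t$; conversely no tuple of $Q(\calD^{(\tau-1)})$ can have $R_j$-projection $\bm t$. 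Hence $\delta_\tau Q(\calD) = \{\bm s\in Q(\calD^{(\tau)}) : \bm s|_{\bm X_j} = \bm t\}$. Writing $Q\Join\bm t$ for the residual query obtained from $Q$ by replacing $R_j$ by the singleton $\{\bm t\}$ and semijoining every other atom $R_i(\bm X_i)$ with $\bm t$ on $\bm X_i\cap\bm X_j$, computing the delta at step $\tau$ is exactly evaluating $(Q\Join\bm t)(\calD^{(\tau)})$ (extended by the fixed $\bm X_j$-values).

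Next I would bound the cost of one step. Fix $(\lambda_i)_{i\in[k]}$ to be an optimal fractional edge cover of $Q$ with respect to the \emph{final} database $\calD^{(N)}$, so that $\agm(Q,\calD^{(N)}) = \prod_{i\in[k]}|R_i^{(N)}|^{\lambda_i}$; then $(\lambda_i)_{i\neq j}$ is a fractional edge cover of $Q\Join\bm t$. Maintaining on each relation the $\bigO{1}$ attribute orderings a worst-case optimal join algorithm needs (constantly many, since $Q$ is fixed) --- which costs $\bigO{1}$ per insert with hash-based structures --- evaluating $(Q\Join\bm t_\tau)(\calD^{(\tau)})$ takes time $\bigO{1 + C_\tau}$ where
\begin{align*}
C_\tau &\;\le\; \agm(Q\Join\bm t_\tau,\calD^{(\tau)}) \;\le\; \prod_{i\neq j}\bigl|R_i^{(\tau)}\ltimes\bm t_\tau\bigr|^{\lambda_i}\\
&\;\le\; \prod_{i\neq j}\bigl|R_i^{(N)}\ltimes\bm t_\tau\bigr|^{\lambda_i};
\end{align*}
the first inequality is worst-case optimal join~\cite{Ngo:JACM:18}, the second is our choice of cover, and the last is the insert-only monotonicity $R_i^{(\tau)}\subseteq R_i^{(N)}$.

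Finally I would sum over the stream. Group the inserts by target relation; over the delete-free stream the distinct tuples ever inserted into $R_j$ form a subset of $R_j^{(N)}$, and a repeated or no-op insert contributes an empty delta detected in $\bigO{1}$. Since $|R_j^{(N)}\ltimes\bm y| = 1$ for $\bm y\in R_j^{(N)}$, for each $j$
\begin{align*}
\sum_{\bm t_\tau\text{ into }R_j}\ \prod_{i\neq j}\bigl|R_i^{(N)}\ltimes\bm t_\tau\bigr|^{\lambda_i} &\;\le\; \sum_{\bm y\in R_j^{(N)}}\ \prod_{i\in[k]}\bigl|R_i^{(N)}\ltimes\bm y\bigr|^{\lambda_i}\\ &\;\le\; \agm(Q,\calD^{(N)}),
\end{align*}
the last step being Lemma~\ref{lmm:query-decom-lemma} applied with $\bm Y = \bm X_j$. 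Adding the $\bigO{1}$ per-insert overheads and summing over the $k = \bigO{1}$ atoms yields total time $\bigO{N + k\cdot\agm(Q,\calD^{(N)})} = \bigO{N + \agm(Q,\calD^{(N)})}$, as claimed.

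The combinatorial heart of the argument --- pairing the query decomposition lemma with insert-only monotonicity --- is short, and it is precisely this monotonicity that fails once deletes are allowed (then $R_i^{(\tau)}\not\subseteq R_i^{(N)}$ and the telescoping collapses). So the main obstacle I expect is the low-level bookkeeping needed to make each step run in $\bigO{\agm(\cdot)}$ with \emph{no} extra logarithmic factor: choosing and maintaining suitable (hash-based) indices for every variable ordering the join algorithm uses, updating all $\bigO{1}$ of them in $\bigO{1}$ time per insert, and --- in the self-join case --- expressing $\delta_\tau Q$ as a constant-size union over the atoms carrying the updated relation symbol together with a short inclusion argument to avoid double-counting.
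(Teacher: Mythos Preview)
Your proposal is correct and follows essentially the same approach as the paper: identify $\delta_\tau Q$ as a residual query, bound its evaluation cost by its AGM bound via a worst-case optimal join algorithm (using hash-based indices to avoid log factors), apply insert-only monotonicity to replace $R_i^{(\tau)}$ by $R_i^{(N)}$, and then sum over all inserts into a fixed atom using the query decomposition lemma with $\bm Y = \bm X_j$. The paper's proof keeps the inserted atom in the product (whereas you drop it and later reinsert it using $|R_j^{(N)}\ltimes\bm y|=1$), and it does not single out the self-join case, but these are cosmetic differences.
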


The proof of Lemma~\ref{lem:insert_only_in_AGM} relies on the query decomposition lemma (Lemma~\ref{lmm:query-decom-lemma}~\cite{SkewStrikesBack2014,WCOJGemsOfPODS2018}). Example~\ref{ex:insert-only:agm} demonstrates Lemma~\ref{lem:insert_only_in_AGM} over the triangle query $Q_{\triangle}$.
Lemma~\ref{lem:insert_only_in_AGM} considers only the $\ivmpd[Q]$ problem, but the proof can be straightforwardly extended to $\ivmp[Q]$.

\begin{proof}[Proof of Lemma~\ref{lem:insert_only_in_AGM}]
    For a given relation $R$ in $Q$, we use
    $R^{(0)} \defeq \emptyset,$ $R^{(1)},$ $\ldots,$ $R^{(N)}$
    to denote the sequence of versions of $R$ after each insert.
    Because we only have inserts here, we have
    $0 =|R^{(0)}| \leq |R^{(1)}| \leq \ldots \leq |R^{(N)}| \leq N$.
    Moreover, for each $\tau \in [N]$, let $\delta_\tau R\defeq R^{(\tau)} \setminus R^{(\tau-1)}$. Note that $|\delta_\tau R|$ is either 1 or 0 depending on whether
    the $\tau$-th insert was into the relation $R$ or into another relation of $Q$.
    Let $\left(\lambda_{R(\bm X)}\right)_{R(\bm X)\in\atoms(Q)}$ be a fractional edge cover
    for $Q$. We will show how to perform all $N$ inserts and report $\delta_\tau Q$ after every insert in a total time of $\bigO{\theta}$ (over all inserts/reports combined) where
    \begin{align*}
        \theta\defeq\prod_{R(\bm X)\in\atoms(Q)}|R^{(N)}|^{\lambda_{R(\bm X)}}
    \end{align*}
    In particular, for a fixed atom $S(\bm Y)\in \atoms(Q)$, we will show how to perform all inserts into
    $S$ in a total time of $\bigO{\theta}$.
    For any $\tau \in [N]$ where $\delta_\tau S \neq \emptyset$,
    the insertion of $\delta_\tau S$ into $S$ can be performed in time:
    \begin{equation}
        \prod_{R(\bm X)\in\atoms(Q)}|R^{(\tau)}(\bm X)\ltimes \delta_\tau S(\bm Y)|^{\lambda_{R(\bm X)}}
        \label{eq:lem:insert_only_in_AGM:1}
    \end{equation}
    This can be done by solving the query
    \begin{align*}
        \delta_\tau Q \defeq \bigwedge_{R(\bm X)\in\atoms(Q)}R^{(\tau)}(\bm X)\ltimes \delta_\tau S(\bm Y)
    \end{align*}
    In particular,~\eqref{eq:lem:insert_only_in_AGM:1} is the AGM bound of $\delta_\tau Q$. We can meet this bound if we solve $\delta_\tau Q$ using a {\em worst-case optimal} join algorithm~\cite{Ngo:JACM:18,LeapFrogTrieJoin2014,WCOJGemsOfPODS2018}.
    For that purpose, each relation $R$ needs to be indexed with a {\em trie}~\cite{LeapFrogTrieJoin2014} using an appropriate variable ordering for this query.\footnote{We might need to
    maintain multiple tries (with different variable orderings) for the same relation $R(\bm X)$ in order to handle inserts into different atoms $S(\bm Y)$.}
    The total runtime of all inserts into $S$ is therefore bounded by:~\footnote{In
    order to avoid incurring an extra factor of $\log N$ in the runtime, we can switch from tries to hash tables~\cite{LeapFrogTrieJoin2014}.}
    \begin{align*}
        \sum_{\tau \in [N]}
        \prod_{R(\bm X)\in\atoms(Q)}|R^{(\tau)}(\bm X)\ltimes \delta_\tau S(\bm Y)|^{\lambda_{R(\bm X)}} &\leq
        \sum_{\tau \in [N]}
        \prod_{R(\bm X)\in\atoms(Q)}|R^{(N)}(\bm X)\ltimes \delta_\tau S(\bm Y)|^{\lambda_{R(\bm X)}} \\
        &\leq \prod_{R(\bm X)\in\atoms(Q)}
        |R^{(N)}|^{\lambda_{R(\bm X)}} = \theta
    \end{align*}
    The first inequality above holds because $R^{(\tau)}\subseteq R^{(N)}$,
    while the second inequality follows from~\eqref{eq:query-decom-lemma:app}.
    Moreover, note that~\eqref{eq:lem:insert_only_in_AGM:1} is an upper bound on the output size of $\delta_\tau Q$ (because it is the AGM-bound). From the above, we have
    $\sum_{\tau \in [N]} |\delta_\tau Q| \leq \theta$.
\end{proof}

\subsection{Upper bound for $\ivmp[Q]$}
\label{sec:insert_only:ub:full}
In this section, we prove Theorem~\ref{thm:main_inserts} for $\ivmp[Q]$, which basically says the following:
Given a query $Q$ whose fractional hypertree width is ${\fw(Q)}$ and a stream of $N$ single-tuple inserts
(where all input relations are initially empty),
we can maintain some data structure for $Q$ and update it after every insert where the total update time
over all inserts combined is $\bigO{N^{\fw(Q)}}$. Moreover, we can use this data structure
at any time to do constant-delay enumeration of the output of $Q$.
In particular, fix an optimal tree decomposition of $Q$, i.e.~one whose width is ${\fw(Q)}$. The data structure that we use consists of a collection of intermediate relations: One for each
bag of the chosen tree decomposition.
Moreover, every bag will be ``calibrated'' with its children bags in the tree decomposition.
We use Lemma~\ref{lem:insert_only_in_AGM} to update every bag within a total time of $\bigO{N^{\fw(Q)}}$.
Moreover, because every bag is calibrated with its children bags, we can always do
constant-delay enumeration starting from the root bag. Example~\ref{ex:insert_only_full} demonstrates the idea.

\begin{citedthm}[\ref{thm:main_inserts}]
    For any query $Q$,
    both $\ivmp[Q]$ and $\ivmpd[Q]$ can be solved with $\bigO{N^{\fw(Q)-1}}$ amortized update time, where $N$ is the number of single-tuple inserts.
\end{citedthm}
\begin{proof}[Proof of Theorem~\ref{thm:main_inserts}]
    \underline{The case of $\ivmp[Q]$:}
    Let $(\calT, \chi)$ be a tree decomposition of $Q$ whose fractional hypertree width is ${\fw(Q)}$. (Recall notation from Section~\ref{app:prelims}.)
    Fix an orientation of $\calT$ starting from some root node $r \in V(\calT)$.
    Given a node $t \in V(\calT)$, we will use $\parent(t), \children(t), \ancestors(t)$ and $\decendants(t)$ to refer to the parent, children, ancestors and descendants of $t$ in $\calT$, respectively, under the fixed orientation of $\calT$.
    For every node $t \in V(\calT)$,
    let $\bm X_t \defeq \chi(t)$ be the variables of $t$
    and $Q_t \defeq Q_{\bm X_t}$ be the {\em restriction} of $Q$ to $\bm X_t$, 
    given by Definition~\ref{defn:restriction}.
    We call $Q_t$ a {\em bag query}.
    Let $\bm X \defeq \vars(Q)$.
    By definition, the query $Q$ is equivalent to the join of the outputs of all bag queries:
    \begin{align*}
        Q(\bm X) = \bigwedge_{t \in V(\calT)} Q_t(\bm X_t)
    \end{align*}
    Moreover, for every node $t \in V(\calT)$, we have $\rho^*(Q_t) \leq {\fw(Q)}$.
    For each node $t$, let $\bm Y_t \defeq \bm X_t \cap \bm X_p$ where $p$ is the parent of $t$.
    (If $t$ is the root, then $\bm X_p$ is considered to be empty, hence $\bm Y_t$ is empty.)
    We construct a query plan for $Q$ by defining a collection of intermediate relations
    $Q_t', P_t'$ for every node $t \in V(\calT)$. In particular, $Q_t', P_t'$ are defined
    inductively starting from the leaves of $\calT$ all the way to the root $r$:
    \begin{align}
        Q'_t(\bm X_t) &\defeq Q_t(\bm X_t) \wedge \bigwedge_{c \in \children(t)} P_c'(\bm Y_c)\label{eq:td:bottomup:Q}\\
        P_t'(\bm Y_t) &\defeq \pi_{\bm Y_t} Q'_t(\bm X_t)\label{eq:td:bottomup:P}
    \end{align}
    For any node $t \in V(\calT)$, we have:
    \begin{align}
        Q'_t(\bm X_t) = \pi_{\bm X_t}\bigwedge_{d \in \decendants(t)\cup\{t\}} Q_d(\bm X_d)
        \label{eq:td:bottomup}
    \end{align}
    The proof is by induction starting from the leaves of $\calT$ to the root.
    It is similar to the bottom-up phase of the Yannakakis algorithm~\cite{Yannakakis81}.

    The AGM-bound of each bag query $Q_t$ is at most $N^{{\fw(Q)}}$. Since adding a new relation to a bag does not increase the AGM bound of the bag, the AGM-bound of each $Q'_t$ is also at most $N^{{\fw(Q)}}$. Hence, the  output size of each $Q'_t$ (and by extension $P'_t$) is at most $N^{\fw(Q)}$.
    Each query $Q'_t$ can have two types of inserts: Inserts coming from (the input relations of) $Q_t$ and inserts coming from $P'_c$ for the children $c$ of $t$.
    By Lemma~\ref{lem:insert_only_in_AGM}, all updates to $Q'_t$ combined take time
    $\bigO{N^{\fw(Q)}}$. Moreover, the projection $P'_t$ of $Q'_t$ can also be updated
    in the same time.

    In order to do constant-delay enumeration of the output of $Q(\bm X)$, we start from the root bag $r$ and enumerate the output of $Q'_r$. By~\eqref{eq:td:bottomup},
    $Q'_r(\bm X_r) = \pi_{\bm X_r} Q(\bm X)$. For every tuple $\bm x_r$ in the output of $Q'_r$, we enumerate the corresponding tuples $\bm x_c$ in the output of $Q'_c$ for every child $c$ of $r$. (Note that for every child $c$, there must be at least one tuple $\bm x_c$ that joins with $\bm x_r$, thanks to~\eqref{eq:td:bottomup}.)
    We inductively continue this process until we reach the leaves of $\calT$.
    In particular, for every node $t \in V(\calT)$ and every tuple $\bm x_t$ that we enumerate from $Q'_t$, we enumerate the corresponding tuples $\bm x_c$ from $Q'_c$ for every child $c$ of $t$, and we rely on~\eqref{eq:td:bottomup} to ensure that there is at least
    one tuple $\bm x_c$ for every child $c$ of $t$.
\end{proof}

\subsection{Upper bound for $\ivmpd[Q]$}
\label{sec:insert_only:ub:delta}
In this section, we prove Theorem~\ref{thm:main_inserts} for the $\ivmpd[Q]$ problem.
In particular, we explain how to modify the algorithm and proof from Section~\ref{sec:insert_only:ub:full}, which was originally meant for the $\ivmp[Q]$ problem.
Just like we did in Section~\ref{sec:insert_only:ub:full}, here also we will construct an optimal
tree decomposition of $Q$ that minimizes the fractional hypertree width, and we will maintain
materialized relations for every bag of the tree decomposition.
However, suppose that we have an insert of a tuple $\bm t$ into a relation $R$ and we want to enumerate
the new output tuples $\delta Q$ that are added due to this particular insert.
We can no longer start our constant-delay enumeration from the root of the tree decomposition, as we did in Section~\ref{sec:insert_only:ub:full}. Instead,
we have to find the bag $B$ that contains $R$, select $\bm t$, and start our enumeration
from $B$, as if $B$ were the root of the tree decomposition.
In order to be able to enumerate the output of $Q$ starting from any bag $B$ of the tree decomposition, all bags need to be {\em fully calibrated} with each other.
In particular, in Section~\ref{sec:insert_only:ub:full}, we only calibrated bags in a bottom-up
fashion where each bag was calibrated with its children bags. But here, we have to do full calibration consisting of two phases: One bottom-up phase where we calibrate each bag with its children, followed by a top-down phase where we calibrate each bag with its parent.
These two calibrations are similar to the two phases of the Yannakakis algorithm~\cite{Yannakakis81}.

\begin{citedthm}[\ref{thm:main_inserts}]
    For any query $Q$,
    both $\ivmp[Q]$ and $\ivmpd[Q]$ can be solved with $\bigO{N^{\fw(Q)-1}}$ amortized update time, where $N$ is the number of single-tuple inserts.
\end{citedthm}
\change{Example~\ref{ex:insert_only_delta} illustrates the proof idea.}
\begin{proof}[Proof of Theorem~\ref{thm:main_inserts}]
    \underline{The case of $\ivmpd[Q]$:}
    We will use the same notation as in the proof of Theorem~\ref{thm:main_inserts} for $\ivmp[Q]$ in Section~\ref{sec:insert_only:ub:full}.
    In particular, here also we will fix an optimal tree decomposition $(\calT, \chi)$ of $Q$ and we will fix an orientation of $\calT$ starting from some root node $r \in V(\calT)$. In addition to $Q'_t$ and $P'_t$ given by~\eqref{eq:td:bottomup:Q} and~\eqref{eq:td:bottomup:P}, we will maintain two additional materialized relations $Q_t''$ and $P_t''$ for every node $t \in V(\calT)$. However, this time, we will define them inductively starting from the root $r$ all the way to the leaves of $\calT$: (Recall that for the root $r$, we have $\bm Y_r = \emptyset$.)
    \begin{align*}
        Q''_t(\bm X_t) &\defeq Q'_t(\bm X_t) \wedge P''_t(\bm Y_t)\\
        P''_t(\bm Y_t) &\defeq
        \begin{cases}
            \true&\text{if $t = r$}\\
            \pi_{\bm Y_t} Q''_p(\bm X_p),\text{\quad where $p = \parent(t)$} &\text{otherwise}
        \end{cases}
    \end{align*}
    By induction, we can now prove that for any node $t$,
    \begin{align}
        Q''_t(\bm X_t) = \pi_{\bm X_t}Q(\bm X) \label{eq:td:topdown}
    \end{align}
    (Compare the above to~\eqref{eq:td:bottomup}.)
    The induction now starts from the root $r$ and goes all the way to the leaves of $\calT$. It is similar to the top-down phase of the Yannakakis algorithm~\cite{Yannakakis81}.
    The AGM bound for each $Q''_t$ is still upper bounded by $N^{\fw(Q)}$. Hence, they can be maintained in a total time of $\bigO{N^{\fw(Q)}}$. The same goes for $P''_t$.
    
    Now suppose we have an insert of a tuple $\bm t$ into a relation $R(\bm Z)$ and we want to enumerate the new output tuples $\delta Q$ that are added due to this insert.
    We find the node $t$ in $\calT$ where $\bm Z \subseteq\chi(t)$. The existence of such a node $t$ is guaranteed by the definition of a tree decomposition (Definition~\ref{defn:TD}). Now, we re-orient the tree $\calT$ so that $t$ becomes the root.
    We enumerate output tuples $\bm x_t$ of $Q''_t$ that join with $\bm t$. For each such tuple $\bm x_t$, we enumerate the corresponding tuples $\bm x_c$ from $Q''_c$ for every child $c$ of $t$. (These are the children of $t$ under the new orientation of the tree $\calT$). By~\eqref{eq:td:topdown}, we guarantee that for each tuple $\bm x_t$
    and each child $c$, there must be at least one tuple $x_c$ in $Q''_c$ that joins with $\bm x_t$. We continue this process inductively all the way to the leaves of $\calT$.
\end{proof}

\subsection{Lower bound for $\ivmp[Q]$ and $\ivmpd[Q]$}
\label{sec:insert_only:lb}
In this section, we prove Proposition~\ref{prop:insert_only_lower_bound}, which gives a lower bound on the amortized updated time for both $\ivmp[Q]$ and $\ivmpd[Q]$ in terms of a lower bound on the static evaluation of $Q$.

\change{
\begin{citedprop}[\ref{prop:insert_only_lower_bound}]
    For any query $Q$ and any constant $\gamma > 0$,
    neither $\ivmp[Q]$ nor $\ivmpd[Q]$ can be solved with $\tildeO{N^{\lb(Q)-1-\gamma}}$ (amortized) update time.
\end{citedprop}
}
\begin{proof}[Proof of Proposition~\ref{prop:insert_only_lower_bound}]
    Fix a query $Q$. We first consider $\ivmp[Q]$.
    Suppose, for sake of contradiction, that there is a constant $\gamma > 0$
    where $\ivmp[Q]$ can be solved with amortized update time $\tildeO{N^\kappa}$
    where $\kappa \defeq \omega(Q) - 1 - \gamma$. We will show how to use an oracle
    for $\ivmp[Q]$ to solve $\eval[Q]$ in time $\tildeO{|\calD|^{\omega(Q)-\gamma} + |Q(\calD)|}$
    on any database instance $\calD$. This contradicts the definition of $\omega(Q)$
    (Definition~\ref{defn:fw_lb}).

    We construct a stream of inserts of length $N = |\calD|$ where we insert
    one tuple of $\calD$ at a time. We apply these inserts one by one in a total time of $\tildeO{N^{1+\kappa}}$.
    After the last insert, we invoke constant-delay enumeration of $Q(\calD)$.
    Thus, the overall time is $\tildeO{N^{1+\kappa} + |Q(\calD)|}$.

    For $\ivmpd[Q]$, the proof is very similar except that we enumerate $\delta Q(\calD)$
    after every insert.
\end{proof}

\section{Missing Details in Section~\ref{sec:intersection-joins}}

\label{app:intersection-joins}

In this section, we review some necessary background on intersection joins in the {\em static} setting.
Let $Q$ be a query, $\ov Q$ be its \univariate extension, and $\wh Q$ be its \multivariate extension.
Suppose we want to evaluate $\ov Q$ over a given database instance $\ov\calD$.
This evaluation problem can be reduced~\cite{KhamisCKO22} to the evaluation of 
$\wh Q$ over some database instance, $\wh \calD$, that satisfies $|\wh\calD|=\tildeO{|\ov\calD|}$ and can be constructed in time $\tildeO{|\ov\calD|}$.
Moreover, this reduction is {\em optimal} (up to a $\tildeO{1}$ factor). In particular, the query $\ov Q$ is exactly as hard as the hardest component $\wh Q_{\bm\sigma}$ in $\wh Q$. This is shown~\cite{KhamisCKO22} by a backward reduction from each component $\wh Q_{\bm\sigma}$
to $\ov Q$. We explain both reductions in detail below.
Recall that both reductions are in the context of {\em static evaluation}.
However later on in Section~\ref{app:fully_dynamic}, we will rely on both reductions  in order
to prove upper and lower bounds on the IVM problem in the \insertdelete setting.

\subsection{Reduction from $\ov Q$ to $\wh Q$}
\label{sec:intersection-joins:forward}
The reduction in~\cite{KhamisCKO22} from a \univariate extension query $\ov Q$ to the corresponding \multivariate extension query $\wh Q$ relies on constructing a {\em segment tree}
for the interval variable $[Z]$ in $\ov Q$.

\paragraph{Segment Trees}
Given $N = 2^n$ for some $n \in \mathbb{N}$, a {\em (discrete) segment tree} $\calT_N$ over $N$ is a complete 
binary tree whose nodes are discrete intervals that are subsets of $[N]$. A segment tree
has the following properties:

\begin{itemize}[leftmargin=*]
    \item The leaves from left-to-right are the intervals 
    $[1,1]$, \ldots , $[N,N]$ consisting of single values.

    \item Each inner node is the union of its child nodes. 
\end{itemize}

Figure~\ref{fig:segment-tree} depicts a segment tree $\calT_8$ for $N = 8$.
Following \cite{KhamisCKO22}, we identify each node of a segment tree by a bitstring:
The root is identified by the empty string $\varepsilon$;
for each inner node that is identified by a bitstring $b$,
the left and right child nodes are identified by the bitstrings $b\circ 0$ and respectively 
$b\circ 1$ (where $\circ$ denotes string concatenation). We denote by $V(\calT_N)$ the set of nodes of the segment tree $\calT_N$ and 
by $\parent(v)$ the parent node of a node $v$.
Given an interval $x \subseteq [N]$, its {\em canonical partition} with respect to 
$N$ is defined as 
the set of {\em maximal} nodes $y$ of $\calT_N$
whose corresponding intervals are contained in $x$, i.e.
\begin{align*}\canpart_N(x) \defeq \{y \in V(\calT_N) \mid  y \subseteq x \text{ but } 
\parent(y) \not\subseteq x\}
\end{align*}
For any interval $x \subseteq [N]$, 
$\canpart_N(x)$ is partition of $x$; see~\cite{KhamisCKO22}.
Moreover $|\canpart_N(x)| = \bigO{\log N}$.
For example, in the segment tree $\calT_8$ from Figure~\ref{fig:segment-tree}, we have,
\begin{align*}
    \canpart_8([1,8]) &= \{\varepsilon\} \\
    \canpart_8([2, 5]) &= \{001, 01, 100\}
\end{align*}

\begin{lemma}[\cite{KhamisCKO22}] 
\label{lem:interv_dec}
Consider a set of discrete intervals $\calI = \{x_1, \ldots, x_k\}$ with $x_i \subseteq [N]$ for some
$N \in \mathbb{N}$ for $i \in [k]$. It holds that $\bigcap_{i \in [k]} x_i \neq \emptyset$ if and only if 
there is a permutation $\bm\sigma$ of $[k]$ and a tuple $(b_1, \ldots , b_k)$ of bitstrings such that
\begin{align}
(b_1 \circ\cdots \circ b_j) \in \canpart_N(x_{\sigma_j}) \text{ for }j \in [k].
\label{eq:interv_dec}
\end{align}
Moreover, the set of bitstrings $b_1\circ\cdots\circ b_k$ for all tuples $(b_1, \ldots , b_k)$ that satisfy~\eqref{eq:interv_dec} for some permutation 
$\bm \sigma$ of $[k]$ is exactly the canonical
partition of the intersection $\bigcap_{i \in [k]} x_i$:
\begin{align*}
    \canpart_N(\bigcap_{i \in [k]} x_i) = \{(b_1 \circ \ldots \circ b_k) \mid
    \text{there is a permutation $\bm\sigma$ of $[k]$ that satisfies~\eqref{eq:interv_dec}}\}
\end{align*}
\end{lemma}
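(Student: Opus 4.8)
The plan is to prove the displayed ``moreover'' equality $\canpart_N(y^\ast)=S$, where $y^\ast\defeq\bigcap_{i\in[k]}x_i$ and $S$ is the right-hand side of that equality (the set of all bitstrings $b_1\circ\cdots\circ b_k$ obtained from a tuple $(b_1,\ldots,b_k)$ and a permutation $\bm\sigma\in\Sigma_k$ satisfying~\eqref{eq:interv_dec}). The ``if and only if'' then follows for free: $y^\ast$ is an interval contained in $[N]$, and a nonempty such interval has a nonempty canonical partition, so $y^\ast\neq\emptyset\iff\canpart_N(y^\ast)\neq\emptyset\iff S\neq\emptyset\iff$ a tuple as in~\eqref{eq:interv_dec} exists. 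Write $I_v$ for the dyadic interval labelling a node $v\in V(\calT_N)$ and $\preceq$ for the prefix order on bitstrings. I will use three facts, all immediate from the definition of $\calT_N$ and the properties of $\canpart_N$ recalled above: \emph{(F1)} $b\preceq a$ iff $I_a\subseteq I_b$, distinct nodes carry distinct intervals, and any two of the $I_v$ are disjoint or nested; \emph{(F2)} $\canpart_N(x)$ is a partition of $x$ into nodes that are maximal with respect to $I_v\subseteq x$; and \emph{(F3)} hence, whenever $I_w\subseteq x$, there is a \emph{unique} $u\in\canpart_N(x)$ with $u\preceq w$ --- indeed $I_w$ meets some $I_u$, so by laminarity $I_w\subseteq I_u$ or $I_u\subsetneq I_w$, the latter contradicting maximality of $u$, and disjointness of the $I_u$ then forces uniqueness.

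For the inclusion $S\subseteq\canpart_N(y^\ast)$, I would take $\bm\sigma$ and $b_1,\ldots,b_k$ as in~\eqref{eq:interv_dec} and put $c_j\defeq b_1\circ\cdots\circ b_j$, so that $c_1\preceq c_2\preceq\cdots\preceq c_k$. By (F1), $I_{c_k}\subseteq I_{c_j}\subseteq x_{\sigma_j}$ for every $j$ (the last inclusion since $c_j\in\canpart_N(x_{\sigma_j})$), and as $\bm\sigma$ is a permutation this gives $\emptyset\neq I_{c_k}\subseteq\bigcap_i x_i=y^\ast$. Moreover $c_k$ is maximal with $I_{c_k}\subseteq y^\ast$: a parent of $c_k$ with $I_{\parent(c_k)}\subseteq y^\ast\subseteq x_{\sigma_k}$ would contradict maximality of $c_k$ inside $\canpart_N(x_{\sigma_k})$. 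Hence $c_k\in\canpart_N(y^\ast)$.

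For the reverse inclusion, I would fix $w\in\canpart_N(y^\ast)$, so that $I_w\subseteq y^\ast\subseteq x_i$ for every $i$, and let $u_i\in\canpart_N(x_i)$ be the unique node with $u_i\preceq w$ given by (F3). All the $u_i$ are prefixes of the single string $w$, hence pairwise $\preceq$-comparable, so there is $\bm\sigma\in\Sigma_k$ with $u_{\sigma_1}\preceq u_{\sigma_2}\preceq\cdots\preceq u_{\sigma_k}$. The decisive claim is that $u_{\sigma_k}=w$: if $w$ is the root this is clear; otherwise maximality of $w$ in $\canpart_N(y^\ast)$ supplies an index $i_0$ with $I_{\parent(w)}\not\subseteq x_{i_0}$, and then $u_{i_0}$ cannot be a \emph{proper} prefix of $w$ (any proper prefix of $w$ is a prefix of $\parent(w)$, so its interval would contain $I_{\parent(w)}$, which lies inside $x_{i_0}$ --- impossible), forcing $u_{i_0}=w$; since $w\in\{u_1,\ldots,u_k\}$ and every $u_i\preceq w$, we conclude $u_{\sigma_k}=w$. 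Finally, setting $b_1\defeq u_{\sigma_1}$ and, for $2\leq j\leq k$, letting $b_j$ be the unique bitstring with $u_{\sigma_j}=u_{\sigma_{j-1}}\circ b_j$ (so $b_j=\varepsilon$ whenever $u_{\sigma_{j-1}}=u_{\sigma_j}$) yields $b_1\circ\cdots\circ b_j=u_{\sigma_j}\in\canpart_N(x_{\sigma_j})$ for all $j$ --- that is,~\eqref{eq:interv_dec} holds --- and $b_1\circ\cdots\circ b_k=u_{\sigma_k}=w$, so $w\in S$. The two inclusions give $\canpart_N(y^\ast)=S$, hence the lemma.

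I expect the main obstacle to be exactly the identity $u_{\sigma_k}=w$ in the second inclusion: this is the one step that genuinely uses the \emph{maximality} of $w$ inside $\canpart_N(y^\ast)$, rather than mere containment $I_w\subseteq y^\ast$, and it is what prevents the output bitstring $b_1\circ\cdots\circ b_k$ from collapsing to a proper ancestor of $w$. Fact (F3) --- uniqueness of the canonical piece of $x_i$ lying above a given node --- is the other place where the laminar, dyadic structure of segment-tree intervals must be invoked explicitly; everything else is routine prefix bookkeeping.
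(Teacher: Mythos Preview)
The paper does not prove this lemma; it is cited from~\cite{KhamisCKO22} and only accompanied by an informal one-paragraph explanation (``a set of $k$ intervals intersect iff their canonical partitions contain $k$ nodes on the same root-to-leaf path''). So there is no proof in the paper to compare against. That said, your argument is correct and complete: reducing the biconditional to the ``moreover'' set equality is clean, the laminarity/maximality fact (F3) is exactly the right tool, and the step you flagged as delicate---showing $u_{\sigma_k}=w$ via the index $i_0$ where $I_{\parent(w)}\not\subseteq x_{i_0}$---is handled correctly and is indeed the one place where maximality of $w$ in $\canpart_N(y^\ast)$ (rather than mere containment) is essential.
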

(Recall that $\circ$ is the concatenation operator on strings.)
The above lemma is the core of the reduction.
It basically says that a set of $k$ intervals $\calI$ intersect if and only if
their canonical partitions contain $k$ nodes that fall on the same root-to-leaf path in the segment tree. The nodes on this path can appear in any order, and the order is captured by the permutation $\bm\sigma$. For example, consider again the segment tree from Figure~\ref{fig:segment-tree}.
The intervals $([1, 8], [2, 5], [3, 7])$ overlap. Indeed, the canonical partitions of these intervals
contain the nodes $(\varepsilon, 01, 01)$ respectively, and these nodes fall on the path
from the root $\varepsilon$ to the leaf $010$, thus witnessing the intersection.
Alternatively, the intersection is also witnessed by the nodes $(\varepsilon, 100, 10)$,
which fall on the path from the root $\varepsilon$ to the leaf $100$.
These are the only two witness to the intersection, and they correspond to the two bitstrings $01$ and $100$. Note that $\{01, 100\}$ is indeed the canonical partition of $[3, 5]$, which is the intersection of the three intervals $([1, 8], [2, 5], [3, 7])$. This is basically what the second part of the lemma says.

\begin{example}
    \label{ex:IJ:forward:reduction}
Suppose we want to solve the query $\ov Q_\triangle$ from Example~\ref{ex:triangle-query:uni}. We can start by constructing
a segment tree for the interval variable $[Z]$ and replacing each value $[z]$ of
$[Z]$ by the components in its canonical partition $\canpart_N([z])$. 
We can then use Lemma~\ref{lem:interv_dec} to look for intersections. In particular, let's first try to find triples of segment-tree nodes $(z^R, z^S, z^T)$ from
$R, S, T$ respectively that appear on the same root-to-leaf path in the specific order $(z^R, z^S, z^T)$.
This can only happen if the bitstring $z^R$ is a prefix of $z^S$ which in turn is a prefix of $z^T$. To test this prefix pattern, we can break each $z^S$ into a concatenation two strings
$z^S = z^S_1\circ z^S_1$ in all possible ways and similarly break $z^T$ into a concatenation
of three string $z^T = z^T_1\circ z^T_2\circ z^T_3$ and then check whether
$z^R = z^S_1 = z^T_1$ and $z^S_2 = z^T_2$. In particular,
we construct the following input relations:
\begin{align}
    \wh R_{123} &= \{(z_1, a, b) \mid  \exists [z] : ([z], a, b)\in \ov R \wedge z_1\in \canpart_N([z])\}\label{eq:triangle:canpart}\\
    \wh S_{123} &= \{(z_1, z_2, b, c) \mid  \exists [z] : ([z], b, c)\in \ov S \wedge (z_1\circ z_2)\in \canpart_N([z])\}\nonumber\\
    \wh T_{123} &= \{(z_1, z_2, z_3, a, c) \mid  \exists z : ([z], a, c)\in \ov T \wedge (z_1\circ z_2\circ z_3)\in \canpart_N([z])\}\nonumber
\end{align}
Then we solve the query
\begin{align}
    \wh Q_{123}(Z_1, Z_2, Z_3, A, B, C) =
        \wh R_{123}(Z_1, A, B) \wedge 
        \wh S_{123}(Z_1, Z_2, B, C) \wedge
        \wh T_{123}(Z_1, Z_2, Z_3, A, C)
    \label{eq:triangle-query:multi-123}
\end{align}
Note that $|\wh T_{123}| = \bigO{|\ov T|\cdot \polylog(N)}$ and the same goes for $\wh R_{123}$ and $\wh S_{123}$.
This is because the height of the segment tree is $\bigO{\log N}$.
Consider now the database instance depicted in Figure~\ref{table:intersection-join-example}.
Note that the corresponding relation $\wh R_{123}$ above contains (among others) the tuple
$(\varepsilon, a_1, b_1)$  while $\wh S_{123}$ contains the tuples
$(\varepsilon, 01, b_1, c_1)$ and $\wh T_{123}$
contains the tuple $(\varepsilon, 01, \varepsilon, a_1, c_1)$.
Those tuples join together producing the output tuple $(\varepsilon, 01, \varepsilon, a_1, b_1, c_1)$
of $\wh Q_{123}$, which is a witness to the output tuple $([3, 5], a_1, b_1, c_1)$ of $\ov Q_\triangle$.

The query $\wh Q_{123}$ above only considers the specific permutation $(R, S, T)$ of the input relations $\{R, S, T\}$. We have five other permutations corresponding to the following five queries:
\begin{align}
    \wh Q_{132} &=
        \wh R_{132}(Z_1, A, B) \wedge
        \wh S_{132}(Z_1, Z_2, Z_3, B, C) \wedge
        \wh T_{132}(Z_1, Z_2, A, C) \nonumber\\
    \wh Q_{213} &=
        \wh R_{213}(Z_1, Z_2, A, B) \wedge
        \wh S_{213}(Z_1, B, C) \wedge
        \wh T_{213}(Z_1, Z_2, Z_3, A, C) \nonumber\\
    \wh Q_{231} &=
        \wh R_{231}(Z_1, Z_2, Z_3, A, B) \wedge
        \wh S_{231}(Z_1, B, C) \wedge
        \wh T_{231}(Z_1, Z_2, A, C) \nonumber\\
    \wh Q_{312} &=
        \wh R_{312}(Z_1, Z_2, A, B) \wedge
        \wh S_{312}(Z_1, Z_2, Z_3, B, C) \wedge
        \wh T_{312}(Z_1, A, C) \nonumber\\
    \wh Q_{321} &=
        \wh R_{321}(Z_1, Z_2, Z_3, A, B) \wedge
        \wh S_{321}(Z_1, Z_2, B, C) \wedge
        \wh T_{321}(Z_1, A, C) \label{eq:triangle-query:multi-132-321}
\end{align}
Over the database instance depicted in Figure~\ref{table:intersection-join-example},
note that $\wh R_{132}$ contains the tuple $(\varepsilon, a_1, b_1)$ while
$\wh T_{132}$ contains $(\varepsilon, 10, a_1, c_1)$ and
$\wh S_{132}$ contains $(\varepsilon, 10, 0, b_1, c_1)$.
These tuples join together producing the output tuple $(\varepsilon, 10, 0, a_1, b_1, c_1)$
of $\wh Q_{132}$, which is a yet another witness to the output tuple $([3, 5], a_1, b_1, c_1)$ of $\ov Q_\triangle$. In particular, the output tuple $([3, 5], a_1, b_1, c_1)$
has two different witnesses in $\wh Q_\triangle$ specifically  because the canonical partition
of $[3, 5]$ consists of two nodes, namely $01$ and $100$.
More generally, any output tuple $([z], a, b, c)$ of $\ov Q_\triangle$ has as many witnesses in $\wh Q_\triangle$
as the number of nodes in the canonical partition of $[z]$.
(Recall that for any interval $[z]$, we have $|\canpart_N([z])| = \bigO{\log N}$.)

The answer to $\ov Q_\triangle$ can be retrieved from the union of the answers to the six queries above. The new relations $\wh R_{123}, \wh S_{123}, \ldots$ can be constructed in time
$\bigO{N\cdot\polylog N}$ where $N$ is the input size to $\ov Q_\triangle$.
\end{example}

Extrapolating from the above example, we introduce the following definition:
\begin{definition}[Canonical partition of a relation $\canpart_N^{(i)}(R)$]
    \label{defn:canpart:tuple}
    Let $\bm t = ([z], x_1, \ldots, x_m)$ be a tuple where $[z]$ is an interval and
    $x_1, \ldots, x_m$ are points. Given natural numbers $N$ and $i$, the
    {\em canonical partition} of $\bm t$ is defined as the following set of tuples
    \begin{align}
        \canpart_N^{(i)}([z], x_1, \ldots, x_m) \defeq
        \{(z_1, \ldots, z_i, x_1, \ldots, x_m) \mid (z_1\circ \cdots \circ z_i) \in \canpart_N([z])\}
    \end{align}
    We lift the definition of a canonical partition $\canpart_N^{(i)}$ to relations and define $\canpart_N^{(i)}(R)$ as
    the union of the canonical partitions of the tuples in the given relation $R$.
    We also lift the definition to deltas and define $\canpart_N^{(i)}(\delta R)$ as the corresponding delta for $\canpart_N^{(i)}(R)$.
\end{definition}

\begin{definition}[Canonical partition of a database instance $\canpart_N(\ov \calD)$]
    Let $Q = R_1(\bm X_1)\wedge \cdots\wedge R_k(\bm X_k)$ be a query and $\ov Q = \ov R_1([Z], \bm X_1)\wedge \cdots\wedge \ov R_k([Z],\bm X_k)$ its \univariate extension.
    Given a permutation $\bm\sigma\in\Sigma_k$, let $\wh Q_{\bm\sigma}$ be the corresponding component
    of its \multivariate extension $\wh Q$:
    \begin{align*}
        \wh Q_{\bm\sigma} = \wh R_{\sigma_1}(Z_1, \bm X_{\sigma_1}) \wedge \wh R_{\sigma_2}(Z_1, Z_2, \bm X_{\sigma_2}) \wedge \cdots\wedge \wh R_{\sigma_k}(Z_1, \ldots, Z_k,\bm X_{\sigma_k})
    \end{align*}
    Given a database instance $\ov \calD$ for $\ov Q$, the {\em canonical partition} of $\ov \calD$ using $\bm\sigma$, denoted by $\canpart_N^{\bm\sigma}(\ov \calD)$, is a database instance $\wh \calD_{\bm\sigma}$
    for $\wh Q_{\bm\sigma}$ that is defined as:
    \begin{align}
        \wh R_{\sigma_i} \defeq \canpart_N^{(i)}(\ov R_{\sigma_i}),
        \quad\quad\text{ for }i \in [k]
        \label{eq:canpart:instance}
    \end{align}
    Moreover, define $\canpart_N(\ov \calD)$ to be a database instance for $\wh Q$
    which is a combination of $\canpart_N^{\bm \sigma}(\ov \calD)$ for every $\bm\sigma \in
    \Sigma_k$:
    \begin{align*}
        \canpart_N(\ov \calD) \defeq \left(\canpart_N^{\bm\sigma}(\ov \calD)\right)_{\bm \sigma \in \Sigma_k}
    \end{align*}
    We lift the definition of $\canpart_N^{\bm\sigma}$ to deltas and define $\canpart_N^{\bm\sigma}(\delta\ov\calD)$
    to be the corresponding delta for $\canpart_N^{\bm\sigma}(\ov\calD)$.
    We define $\canpart_N(\delta\ov\calD)$ similarly.
    \label{defn:canpart:instance}
\end{definition}
% \ahmet{The usage of $\delta$ is different than how it is introduced in the prelims. We should agree on the definition of $\delta$.}
\eqref{eq:triangle:canpart} is a special case of~\eqref{eq:canpart:instance} above.
The following theorem basically says that evaluating $\ov Q(\ov \calD)$ is
equivalent to evaluating $\wh Q$ on the canonical partition of $\ov \calD$.
\begin{theorem}[Implicit in~\cite{KhamisCKO22}]
    \label{thm:IJ:forward:reduction}
    Given a tuple $(z_1, \ldots, z_k, x_1, \ldots, x_m)$, define
    \begin{align}
        G_k(z_1, \ldots, z_k, x_1, \ldots, x_m) \defeq
        (z_1 \circ \cdots\circ z_k, x_1, \ldots, x_m)
    \end{align}
    We lift the definition of $G_k$ to relations and define $G_k(R)$ to be the relation resulting from
    applying $G$ to every tuple in $R$. Then,
    \begin{align}
        \canpart_N^{(1)}(\ov Q(\ov \calD)) = 
        G_k(\wh Q(\canpart_N(\ov \calD)))
        \label{eq:IJ:forward:reduction}
    \end{align}
\end{theorem}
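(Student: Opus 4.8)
The plan is to prove the set identity~\eqref{eq:IJ:forward:reduction} by establishing the two inclusions separately, in each direction reducing everything to Lemma~\ref{lem:interv_dec}, which is the bridge between ``a family of intervals has a common point'' and ``a family of segment-tree bitstrings forms a prefix chain''. Throughout, fix $Q = R_1(\bm X_1)\wedge\cdots\wedge R_k(\bm X_k)$, its \univariate extension $\ov Q$, its \multivariate extension $\wh Q = \bigvee_{\bm\sigma\in\Sigma_k}\wh Q_{\bm\sigma}$, and assume $N$ is large enough that every interval occurring in $\ov\calD$ is a subset of $[N]$. For a point tuple $\bm x$ over $\vars(Q)$ and an atom index $i$, write $\bm x_i \defeq \pi_{\bm X_i}\bm x$. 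I will use repeatedly that, by the join semantics of both $\ov Q$ and each $\wh Q_{\bm\sigma}$, all occurrences of a point variable are forced to agree, so a consistent point assignment $\bm x$ on one side of~\eqref{eq:IJ:forward:reduction} transfers verbatim to the other and the only real content concerns the interval/bitstring parts.

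\textbf{The inclusion ``$\subseteq$''.} A generic element of $\canpart_N^{(1)}(\ov Q(\ov\calD))$ is $(b,\bm x)$ with $([z],\bm x)\in\ov Q(\ov\calD)$ and $b\in\canpart_N([z])$. By the semantics of the intersection query $\ov Q$, there are tuples $([z_i],\bm x_i)\in\ov R_i$ for $i\in[k]$ with $\bigcap_{i\in[k]}[z_i] = [z]$. Since $b\in\canpart_N([z]) = \canpart_N(\bigcap_i [z_i])$, the ``moreover'' part of Lemma~\ref{lem:interv_dec} supplies a permutation $\bm\sigma\in\Sigma_k$ and bitstrings $b_1,\ldots,b_k$ with $b = b_1\circ\cdots\circ b_k$ and $(b_1\circ\cdots\circ b_j)\in\canpart_N([z_{\sigma_j}])$ for every $j\in[k]$. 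Then, by Definitions~\ref{defn:canpart:tuple} and~\ref{defn:canpart:instance} (see~\eqref{eq:canpart:instance}), the tuple $(b_1,\ldots,b_j,\bm x_{\sigma_j})$ belongs to the relation $\wh R_{\sigma_j}$ of $\canpart_N^{\bm\sigma}(\ov\calD)$. These $k$ tuples join in $\wh Q_{\bm\sigma}$: on the interval variables $Z_1,\ldots,Z_{\min(i,j)}$ the tuples for atoms $\sigma_i$ and $\sigma_j$ both read $b_1,b_2,\ldots$, and on the point variables they agree because the $\bm x_{\sigma_j}$ are restrictions of the single tuple $\bm x$. Hence $(b_1,\ldots,b_k,\bm x)\in\wh Q_{\bm\sigma}(\canpart_N^{\bm\sigma}(\ov\calD))\subseteq\wh Q(\canpart_N(\ov\calD))$, and $G_k$ maps it to $(b_1\circ\cdots\circ b_k,\bm x) = (b,\bm x)$.

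\textbf{The inclusion ``$\supseteq$''.} A generic element of $G_k(\wh Q(\canpart_N(\ov\calD)))$ is $(b,\bm x)$ with $(z_1,\ldots,z_k,\bm x)\in\wh Q_{\bm\sigma}(\canpart_N^{\bm\sigma}(\ov\calD))$ for some $\bm\sigma\in\Sigma_k$ and $b = z_1\circ\cdots\circ z_k$. Unpacking the join, for each $j\in[k]$ there is a tuple $(z_1,\ldots,z_j,\bm x_{\sigma_j})\in\wh R_{\sigma_j} = \canpart_N^{(j)}(\ov R_{\sigma_j})$, which by Definition~\ref{defn:canpart:tuple} originates from some $([w_j],\bm x_{\sigma_j})\in\ov R_{\sigma_j}$ with $(z_1\circ\cdots\circ z_j)\in\canpart_N([w_j])$. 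Treating $z_1,\ldots,z_k$ as the bitstrings $b_1,\ldots,b_k$ and $[w_j]$ as the interval assigned to position $\sigma_j$, the family $([w_j])_{j}$ satisfies condition~\eqref{eq:interv_dec}, so Lemma~\ref{lem:interv_dec} gives $\bigcap_{j\in[k]}[w_j]\neq\emptyset$ and $b = z_1\circ\cdots\circ z_k\in\canpart_N(\bigcap_j [w_j])$. The point parts of the $\ov R_{\sigma_j}$-tuples are mutually consistent (they were already consistent inside the $\wh Q_{\bm\sigma}$-join), so $(\bigcap_j [w_j],\bm x)\in\ov Q(\ov\calD)$, and since $b\in\canpart_N(\bigcap_j[w_j])$ we conclude $(b,\bm x)\in\canpart_N^{(1)}(\ov Q(\ov\calD))$.

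\textbf{Main obstacle.} There is no heavy computation here; the only delicate point is applying Lemma~\ref{lem:interv_dec} with the right bookkeeping — checking that the order in which the pieces $b_1,\ldots,b_k$ are concatenated into a segment-tree node is exactly the order recorded by $\bm\sigma$, i.e.\ that atom $R_{\sigma_j}$ receives the length-$j$ prefix $b_1\circ\cdots\circ b_j$, which is precisely how $\wh Q_{\bm\sigma}$ and $\canpart_N^{\bm\sigma}$ are set up in~\eqref{eq:multivariate_component} and~\eqref{eq:canpart:instance}. One should also keep in mind that $G_k$ is not injective and that $\canpart_N^{(1)}$ expands each $\ov Q$-output into several tuples: a single output $([z],\bm x)$ of $\ov Q$ corresponds to the $|\canpart_N([z])| = \bigO{\log N}$ nodes of its canonical partition, and each such node may itself be produced by one or more components $\wh Q_{\bm\sigma}$; the identity asserts that these two expansions coincide, which is exactly why it is stated with $\canpart_N^{(1)}$ and $G_k$ rather than as a naive bijection on output tuples.
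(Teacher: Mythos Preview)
The paper does not actually prove this theorem; it is stated as ``Implicit in~\cite{KhamisCKO22}'' and no proof is given in the text. Your proof is correct and is exactly the natural argument one would extract from that prior work: both inclusions are a direct unwinding of Definitions~\ref{defn:canpart:tuple}--\ref{defn:canpart:instance} together with Lemma~\ref{lem:interv_dec}, which is precisely the lemma the paper singles out as ``the core of the reduction''. Your bookkeeping of the permutation $\bm\sigma$ and the prefix chain $b_1\circ\cdots\circ b_j$ matches the intended setup in~\eqref{eq:multivariate_component} and~\eqref{eq:canpart:instance}, and your remark about $G_k$ not being injective and each output of $\ov Q$ spawning $|\canpart_N([z])|$ witnesses is the right way to see why the statement is phrased with $\canpart_N^{(1)}$ and $G_k$ rather than as a bijection.
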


\subsection{Reduction from $\wh Q$ to $\ov Q$}
\label{sec:intersection-joins:backward}
In the previous Section~\ref{sec:intersection-joins:forward}, we saw how to reduce the {\em static}
evaluation of $\ov Q$
to $\wh Q$.
We now summarize the backward reduction from the static evaluation of each component $\wh Q_{\bm\sigma}$ of $\wh Q$ back to $\ov Q$, based on~\cite{KhamisCKO22}. Later on, we will use this backward reduction to prove {\em lower bounds} on the IVM problem in Section~\ref{sec:backward_reduction}.

\change{Example~\ref{ex:IJ:backward:reduction} introduces the high-level idea of the backward reduction.}
Extrapolating from \change{Example~\ref{ex:IJ:backward:reduction}}, we introduce the following concept:
\begin{definition}[Interval version of a database instance $\iv(\wh\calD_{\bm\sigma})$]
    \label{defn:interval-version}
    Let $Q = R_1(\bm X_1)\wedge \cdots\wedge R_k(\bm X_k)$ be a query,
    $\ov Q$ its \univariate extension, $\bm\sigma$ be a permutation of $[k]$, and $\wh Q_{\bm\sigma}$ be a component of the \multivariate extension $\wh Q$ of $Q$.
    Let $\wh\calD_{\bm\sigma}$ be an {\em arbitrary} database instance for $\wh Q_{\bm\sigma}$. Define the {\em interval version}  of $\wh\calD_{\bm\sigma}$, denoted by $\iv(\wh\calD_{\bm\sigma})$, to be a database instance $\ov D$ for $\ov Q$
    that is constructed as follows:
    WLOG assume each value $x$ that appears in $\wh\calD_{\bm\sigma}$ is a bitstring
    of length exactly $\ell$ for some constant $\ell$ (pad with zeros if needed) and let $N\defeq 2^{k\ell}$.
    Let $\segment_{N}(b)$ map a bitstring $b$ of length at most $k\ell$ to the corresponding interval in the segment tree $\calT_{N}$.
    Then $\iv(\wh\calD_{\bm\sigma})$ consists of the following relations for $i \in [k]$:
    \begin{align*}
        \ov R_{\sigma_i} \defeq \{(\segment_{N}(z_1\circ \cdots \circ z_i), x_1, \ldots, x_{m_i})
        \mid (z_1, \ldots, z_i, x_1, \ldots, x_{m_i}) \in \wh R_{\sigma_i}\}
    \end{align*}
    We lift the definition of $\iv$ to deltas and define $\iv(\delta\wh\calD_{\bm\sigma})$
    to be the corresponding delta for $\iv(\wh\calD_{\bm\sigma})$.
\end{definition}

\begin{proposition}
For any database instance $\wh\calD_{\bm\sigma}$ for $\wh Q_{\bm\sigma}$, we have
\begin{align}
    |\iv(\wh\calD_{\bm\sigma})| = |\wh\calD_{\bm\sigma}|
    \label{eq:iv_same_input_size}
\end{align}
Moreover, $\iv(\wh\calD_{\bm\sigma})$ can be computed in linear time in $|\wh\calD_{\bm\sigma}|$.
\label{prop:iv_same_input_size}
\end{proposition}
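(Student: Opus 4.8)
The plan is to establish the size identity by exhibiting, for each $i \in [k]$, an explicit bijection between the tuples of the relation $\wh R_{\sigma_i}$ of $\wh\calD_{\bm\sigma}$ and the tuples of the relation $\ov R_{\sigma_i}$ of $\iv(\wh\calD_{\bm\sigma})$; summing over $i \in [k]$ then yields~\eqref{eq:iv_same_input_size}, and accounting for the work spent building these bijections yields the running-time bound.

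Concretely, I would fix $i \in [k]$ and recall from Definition~\ref{defn:interval-version} that $\ov R_{\sigma_i}$ is, by construction, the image of $\wh R_{\sigma_i}$ under the map
\[
    \phi_i : (z_1, \ldots, z_i, x_1, \ldots, x_{m_i}) \;\longmapsto\; \bigl(\segment_N(z_1 \circ \cdots \circ z_i),\, x_1, \ldots, x_{m_i}\bigr),
\]
so $\phi_i$ is surjective onto $\ov R_{\sigma_i}$ and it remains only to check injectivity. Since the point coordinates are copied verbatim, injectivity of $\phi_i$ reduces to injectivity of the map $(z_1, \ldots, z_i) \mapsto \segment_N(z_1 \circ \cdots \circ z_i)$ on $i$-tuples of bitstrings of length exactly $\ell$. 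I would argue this in two steps: (i) because all $z_j$ have the common length $\ell$ --- which is exactly what the uniform zero-padding in Definition~\ref{defn:interval-version} guarantees --- the concatenation $z_1 \circ \cdots \circ z_i$ has length exactly $i\ell \le k\ell$ and decomposes back into the blocks $z_1, \ldots, z_i$ in a unique way; and (ii) a bitstring of length $i\ell$ labels a node at depth $i\ell$ of the segment tree $\calT_N$, and the $2^{i\ell}$ nodes at any fixed depth are assigned pairwise disjoint nonempty subintervals of $[1,N]$, so distinct length-$i\ell$ bitstrings are mapped by $\segment_N$ to distinct intervals. Composing (i) and (ii) gives injectivity of $\phi_i$, hence $|\ov R_{\sigma_i}| = |\wh R_{\sigma_i}|$, and summing over $i \in [k]$ gives $|\iv(\wh\calD_{\bm\sigma})| = \sum_{i \in [k]} |\ov R_{\sigma_i}| = \sum_{i \in [k]} |\wh R_{\sigma_i}| = |\wh\calD_{\bm\sigma}|$.

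For the linear-time claim, I would observe that $\iv(\wh\calD_{\bm\sigma})$ is produced by a single pass over each $\wh R_{\sigma_i}$, applying $\phi_i$ to each tuple; applying $\phi_i$ to one tuple means concatenating at most $k$ bitstrings of length $\ell = \bigO{\log|\wh\calD_{\bm\sigma}|}$ and emitting the endpoints $c\cdot 2^{k\ell-d}+1$ and $(c+1)\cdot 2^{k\ell-d}$ of the corresponding depth-$d$ segment-tree node, where $d$ is the length of the concatenation and $c$ its value as a binary number. In the word-RAM model with word size $\Theta(\log|\wh\calD_{\bm\sigma}|)$ this is $\bigO{k}=\bigO{1}$ word operations per tuple, so the total time is $\bigO{|\wh\calD_{\bm\sigma}|}$. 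The argument is almost entirely bookkeeping; the only point needing a little care is the injectivity step, i.e.\ noticing that the uniform padding to length $\ell$ is precisely what makes the concatenation uniquely decodable and that the segment-tree node-to-interval assignment is injective --- there is no genuine obstacle.
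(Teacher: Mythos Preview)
Your proposal is correct and, in fact, more detailed than what the paper offers: the paper states Proposition~\ref{prop:iv_same_input_size} without proof, treating it as immediate from Definition~\ref{defn:interval-version} (the only justification appears in Example~\ref{ex:IJ:backward:reduction}, which simply notes that $\ov R', \ov S', \ov T'$ have the same sizes as $\wh R_{123}', \wh S_{123}', \wh T_{123}'$ and can be constructed in linear time). Your explicit bijection argument, with the observation that uniform padding to length $\ell$ makes the concatenation uniquely decodable and that distinct equal-length bitstrings label disjoint segment-tree intervals, is exactly the right way to spell out why the construction is injective.
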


The following theorem basically says that evaluating $\wh Q_{\bm\sigma}$ on
$\wh\calD_{\bm\sigma}$ is equivalent to evaluating $\ov Q$ on the interval version
of $\wh\calD_{\bm\sigma}$.
\begin{theorem}[Implicit in~\cite{KhamisCKO22}]
    \label{thm:IJ:backward:reduction}
    Given a tuple $(z_1, \ldots, z_k, x_1, \ldots, x_m)$, define
    \begin{multline}
        H_k([z], x_1, \ldots, x_m) \defeq
        \{(z_1, \ldots, z_k, x_1, \ldots, x_m)\mid
        (z_1 \circ \cdots \circ z_k) \in \canpart_N([z]) \wedge |z_1| = \cdots = |z_k|\}
    \end{multline}
    Moreover, we lift $H_k$ to relations and define $H_k$ of a relation $R$ to be the relation resulting from applying $H_k$ to every tuple in $R$.
    Then,
    \begin{align}
        \wh Q_{\bm\sigma}(\wh\calD_{\bm\sigma}) &= 
        H_k(\ov Q(\iv(\wh\calD_{\bm\sigma})))
        \label{eq:IJ:backward:reduction}\\
        |\wh Q_{\bm\sigma}(\wh\calD_{\bm\sigma})| &= 
        |\ov Q(\iv(\wh\calD_{\bm\sigma}))|
        \label{eq:iv_same_output_size}
    \end{align}
\end{theorem}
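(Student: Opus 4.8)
The plan is to exhibit an explicit correspondence between the output of $\wh Q_{\bm\sigma}$ on $\wh\calD_{\bm\sigma}$ and the output of $\ov Q$ on $\iv(\wh\calD_{\bm\sigma})$, and then to verify that this correspondence is exactly the fibration induced by $H_k$. I would first record the two structural facts about the construction $\iv$ (Definition~\ref{defn:interval-version}) that drive everything: (i) every interval occurring in the relation $\ov R_{\sigma_i}$ is of the form $\segment_N(w)$ for a bitstring $w$ of length exactly $i\ell$, and the per-atom map $(z_1,\dots,z_i,\bm x)\mapsto(\segment_N(z_1\circ\cdots\circ z_i),\bm x)$ from $\wh R_{\sigma_i}$ to $\ov R_{\sigma_i}$ is a bijection --- injectivity here uses that $\segment_N$ is injective on bitstrings of length at most $k\ell$ and that the decomposition of a bitstring into equal-length blocks of length $\ell$ is unique (this is also how Proposition~\ref{prop:iv_same_input_size} follows); and (ii) the segment-tree overlap criterion from Lemma~\ref{lem:interv_dec}: two nodes $\segment_N(w),\segment_N(w')$ have overlapping intervals iff one of $w,w'$ is a prefix of the other, so a family of segment-tree nodes whose intervals share a common point is pairwise prefix-comparable and hence forms a chain in the prefix (tree) order.

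\textbf{Forward inclusion.} I would take an output tuple $(z_1,\dots,z_k,\bm x)$ of $\wh Q_{\bm\sigma}(\wh\calD_{\bm\sigma})$ (so $|z_j|=\ell$ for all $j$ and $(z_1,\dots,z_i,\bm x|_{\bm X_{\sigma_i}})\in\wh R_{\sigma_i}$ for every $i\in[k]$), set $w_i\defeq z_1\circ\cdots\circ z_i$, and observe that $w_1,\dots,w_k$ form a prefix chain, so the corresponding intervals are nested, $\segment_N(w_1)\supseteq\cdots\supseteq\segment_N(w_k)$, with intersection the non-empty node $\segment_N(w_k)$. Since $(\segment_N(w_i),\bm x|_{\bm X_{\sigma_i}})\in\ov R_{\sigma_i}$, the tuple $(\segment_N(w_k),\bm x)$ lies in $\ov Q(\iv(\wh\calD_{\bm\sigma}))$, and because $\segment_N(w_k)$ is itself a node we have $\canpart_N(\segment_N(w_k))=\{w_k\}$, so $(z_1,\dots,z_k,\bm x)\in H_k(\segment_N(w_k),\bm x)$.

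\textbf{Backward inclusion (the crux).} Given an output tuple $([z],\bm x)$ of $\ov Q(\iv(\wh\calD_{\bm\sigma}))$, I would pick witness tuples $([z'_i],\bm x|_{\bm X_{\sigma_i}})\in\ov R_{\sigma_i}$ with $\bigcap_i[z'_i]=[z]$; by (i) each $[z'_i]=\segment_N(w_i)$ with $|w_i|=i\ell$, where $w_i$ is built from the blocks of some tuple of $\wh R_{\sigma_i}$. Since the $\segment_N(w_i)$ share a common point, by (ii) the $w_i$ form a chain, and because their lengths are the pairwise distinct values $\ell,2\ell,\dots,k\ell$, this chain is ordered by length: $w_i$ is a prefix of $w_{i+1}$. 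The crucial consequence --- the step I expect to be the main obstacle to state cleanly --- is a ``gluing'': since $w_i$ is the length-$i\ell$ prefix of $w_k$, the $j$-th length-$\ell$ block of $w_i$ is the same no matter which witness produced it, so writing $z_j$ for the $j$-th length-$\ell$ block of $w_k$ we get $(z_1,\dots,z_i,\bm x|_{\bm X_{\sigma_i}})\in\wh R_{\sigma_i}$ for every $i$, i.e. $(z_1,\dots,z_k,\bm x)\in\wh Q_{\bm\sigma}(\wh\calD_{\bm\sigma})$. Moreover $[z]=\segment_N(w_k)=\segment_N(z_1\circ\cdots\circ z_k)$ is a node, so $\canpart_N([z])=\{z_1\circ\cdots\circ z_k\}$, and uniqueness of the equal-length split gives $H_k([z],\bm x)=\{(z_1,\dots,z_k,\bm x)\}$; in particular the recovered tuple depends only on $([z],\bm x)$, not on the choice of witnesses.

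\textbf{Conclusion.} Combining the two directions, $([z],\bm x)\mapsto H_k([z],\bm x)$ sends each output tuple of $\ov Q(\iv(\wh\calD_{\bm\sigma}))$ to a singleton inside $\wh Q_{\bm\sigma}(\wh\calD_{\bm\sigma})$; the forward direction shows these singletons exhaust $\wh Q_{\bm\sigma}(\wh\calD_{\bm\sigma})$, and injectivity of $\segment_N$ together with uniqueness of equal-length splits shows they are pairwise disjoint. This simultaneously yields $\wh Q_{\bm\sigma}(\wh\calD_{\bm\sigma})=H_k(\ov Q(\iv(\wh\calD_{\bm\sigma})))$ and, since every fiber is a singleton, $|\wh Q_{\bm\sigma}(\wh\calD_{\bm\sigma})|=|\ov Q(\iv(\wh\calD_{\bm\sigma}))|$. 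Everything besides the gluing step is routine bookkeeping about segment trees, imported from~\cite{KhamisCKO22} and Lemma~\ref{lem:interv_dec}.
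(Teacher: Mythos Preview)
Your proposal is correct and matches the argument the paper sketches. Note that the paper does not actually prove this theorem---it is stated as implicit in~\cite{KhamisCKO22}, and the only justification given is the observation in Example~\ref{ex:IJ:backward:reduction} that the intervals $\segment_{N'}(z_1^R)$, $\segment_{N'}(z_1^S\circ z_2^S)$, $\segment_{N'}(z_1^T\circ z_2^T\circ z_3^T)$ overlap iff $z_1^R=z_1^S=z_1^T$ and $z_2^S=z_2^T$. Your two-direction argument is exactly the general form of this observation: the distinct block-lengths $\ell,2\ell,\dots,k\ell$ force any overlapping family of segment-tree nodes into a prefix chain ordered by length, which is the ``gluing'' step you identify as the crux.
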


\section{Missing Details from Section~\ref{sec:fully_dynamic}}
\label{app:fully_dynamic}
\change{
Theorem~\ref{thm:main_fully_dynamic} gives an upper bound on the
amortized update time for both the $\ivmpm$ and $\ivmpmd$
problems in terms of the current database size $|\calD|$.
In order to prove it, we first prove Lemma~\ref{lmm:main_fully_dynamic}
which gives a weaker upper bound where $|\calD|$ is replaced by the number of single-tuple
updates $N$. We prove Lemma~\ref{lmm:main_fully_dynamic} for $\ivmpm$ in Section~\ref{sec:forward_reduction} and for $\ivmpmd$ in Section~\ref{sec:forward_reduction:delta_version}.
In Section~\ref{sec:forward_reduction:calD}, we show how to use Lemma~\ref{lmm:main_fully_dynamic} as a black box in order to prove Theorem~\ref{thm:main_fully_dynamic}.

In contrast to Theorem~\ref{thm:main_fully_dynamic}, Theorem~\ref{thm:fully_dynamic_lower_bound} gives a lower bound
on the amortized update time only for $\ivmpmd$ in terms of the current database size $|\calD|$.
We prove it in Section~\ref{sec:backward_reduction}, by proving a stronger lower bound where $|\calD|$ is replaced by the number of single-tuple updates $N$.
}

\subsection{Upper Bound for $\ivmpm[Q]$}
\label{sec:forward_reduction}

Example~\ref{ex:main_fully_dynamic} introduces the high-level idea of our algorithm for $\ivmpm[Q]$.
Algorithm~\ref{alg:ivmpm} summarizes our algorithmic
framework for $\ivmpm[Q]$ that meets Lemma~\ref{lmm:main_fully_dynamic}.
Algorithm~\ref{alg:ivmpm} assumes that the length $N$ of the update stream is known in advance.
However, the following remark explains how to remove this assumption.
\begin{remark}
    \label{rmk:doubling:N}
If the number of updates $N$ is not known in advance, we can initialize $N$ to $1$ and keep doubling $N$ every time we reach $N$ updates.
We show in the proof of Lemma~\ref{lem:reduction_ivmtilde_ivmstar} that amortized analysis maintains the same overall runtime.
\end{remark}

\begin{algorithm}[th!]
    \caption{Algorithm of $\ivmpm$[Q]}
    \label{alg:ivmpm}
    \begin{algorithmic}
        \State {\textbf{Inputs}}
        \begin{itemize}
            \item $Q = R_1(\bm X_1)\wedge \cdots\wedge R_k(\bm X_k)$ \Comment{Input query to $\ivmpm$}
            \item An initially empty database $\calD$ \Comment{Database instance for $Q$}
            \item $N$: The length of the update stream \Comment{See Remark~\ref{rmk:doubling:N}}
        \end{itemize}
        \\\hrulefill
        \State {\textbf{Initialization}}
        \begin{itemize}
            \item Construct the \univariate extension $\ov Q$ \Comment{Definition~\ref{defn:univariate}}
            \item Initialize $\ov \calD$ to be empty \Comment{Database instance for $\ov Q$}
            \item Construct the \multivariate extension $\wh Q$ 
            \Comment{Eq.~\eqref{eq:multivariate}}
            \item Initialize $\wh \calD$ to be empty \Comment{Database instance for $\wh Q$}
            \item For each component $\wh Q_{\bm\sigma}$ of $\wh Q$
            \begin{itemize}
                \item Initialize a tree decomposition using Theorem~\ref{thm:main_inserts}
            \end{itemize}
        \end{itemize}
        \\\hrulefill
        \State {\textbf{Invariants}}
        \begin{itemize}
            \item $\wh \calD= \canpart_N(\ov \calD)$\Comment{Definition~\ref{defn:canpart:instance}}
        \end{itemize}
        \\\hrulefill
        \State {\textbf{Insertion $\delta_\tau\calD = \{+R_j(\bm t)\}$}}\Comment{The $\tau$-th update inserts a tuple $\bm t$ into $R_j$}
        \begin{itemize}
            \item $\delta_\tau \ov\calD \gets \{+\ov R_j([\tau, \infty], \bm t)\}$
            \Comment{Insert $([\tau, \infty], \bm t)$ into $\ov R_j$}
            \item $\ov\calD \gets \ov\calD \applydelta \delta_\tau \ov D$\Comment{Apply $\delta_\tau\ov \calD$ to $\ov\calD$}
            \item $\delta_\tau\wh \calD \gets \canpart_N(\delta_\tau \ov \calD)$ \Comment{Definition~\ref{defn:canpart:instance}}
            \item $\wh\calD \gets \wh\calD \applydelta \delta_\tau\wh \calD$ \Comment{Use Theorem~\ref{thm:main_inserts}}
        \end{itemize}
        \\\hrulefill
        \State {\textbf{Deletion $\delta_\tau\calD=\{-R_j(\bm t)\}$}}\Comment{The $\tau$-th update deletes a tuple $\bm t$ from $R_j$}
        \begin{itemize}
            \item Find $([\tau', \infty], \bm t) \in \ov R_j$
            \item $\delta_\tau \ov \calD \gets \{- \ov R_j([\tau', \infty], \bm t), + \ov R_j([\tau', \tau], \bm t)\}$
            \Comment{Replace $([\tau', \infty], \bm t)$ with $([\tau', \tau], \bm t)$ in $\ov R_j$}
            \item $\ov\calD \gets \ov\calD \applydelta \delta_\tau \ov D$
            \item $\delta_\tau\wh \calD \gets \canpart_N(\delta_\tau \ov \calD)$
            \item $\wh\calD \gets \wh\calD \applydelta \delta_\tau\wh \calD$ \Comment{Proposition~\ref{prop:trunc}}
        \end{itemize}
        \\\hrulefill
        \State{\textbf{Enumeration}}
        \begin{itemize}
            \item Enumerate (using Theorem~\ref{thm:main_inserts})
            \begin{align*}
                \pi_{\vars(Q)}\bigcup_{\substack{z_1, \ldots , z_{k+1} \in \{0,1\}^*\\ z_1 \circ \cdots \circ z_{k+1} = \canpart([\tau,\tau])}} \sigma_{Z_1 = z_1, \ldots , Z_k = z_k}\wh{Q}(\wh\calD)
            \end{align*}
        \end{itemize}
    \end{algorithmic}
\end{algorithm}

For the rest of this section, we prove that Algorithm~\ref{alg:ivmpm} meets Lemma~\ref{lmm:main_fully_dynamic}.
Fix a query $Q$ together with its 
\univariate extension $\ov{Q}$ (Definition~\ref{defn:univariate}) and
its \multivariate extension $\wh{Q}$ (Eq.~\eqref{eq:multivariate}).
We split the proof of  
Lemma~\ref{lmm:main_fully_dynamic} for $\ivmpm[Q]$
into three parts. 
First, we reduce the maintenance of $Q$ to the maintenance of $\ov{Q}$
(Lemma~\ref{lem:reduction_ivmpm_ivmtilde}). 
Then, we reduce 
the latter problem to the maintenance 
of $\wh{Q}$ (Lemma~\ref{lem:reduction_ivmtilde_ivmstar}). 
Finally, we show that 
$\wh{Q}$ can be maintained with $\tildeO{N^{{\fw(\wh Q)}-1}}$ amortized update time,
where ${\fw(\wh Q)}$ is the fractional hypertree width of $\wh{Q}$
(Lemma~\ref{lem:upper_bound_maintenance_multivariate}).
Lemma~\ref{lmm:main_fully_dynamic} for $\ivmpm[Q]$ then follows from Lemmas
\ref{lem:reduction_ivmpm_ivmtilde}, \ref{lem:reduction_ivmtilde_ivmstar}, and \ref{lem:upper_bound_maintenance_multivariate}.

\subsubsection{From the Maintenance of $Q$ to the Maintenance of $\ov{Q}$}
\label{sec:q_to_uni_q}
We reduce the maintenance of $Q$ to the maintenance of $\ov{Q}$.
Let $\calD$ be the database instance for $Q$ and $\ov \calD$ be the database instance for $\ov Q$.
As mentioned before,
the idea is to use the interval variable $[Z]$
in $\ov{Q}$ to describe the ``lifespan'' of the 
tuples in the database $\calD$ for $Q$. 
Given an atom $\ov{R}([Z],\bm X)$ in $\ov{Q}$
and a tuple $\bm t$ over $\bm X$, we
introduce the following types of updates to 
relation $\ov{R}$:
\begin{itemize}[leftmargin=*]
\item $+\ov{R}([\tau,\infty],\bm t)$:
Intuitively, this single-tuple insert 
indicates that the tuple $\bm t$ lives in the database 
from time $\tau$ on.

\item $(-\ov{R}([\tau', \infty], \bm t), +\ov{R}([\tau', \tau], \bm t)$) for some $\tau' < \tau$.
This compound update is called {\em truncation} 
and replaces $([\tau',\infty],\bm t)$ in $\ov{R}$ with $([\tau',\tau],\bm t)$. 
Intuitively, it 
signals that the lifespan of tuple $\bm t$ ends at time $\tau$. 
\end{itemize}
We call an update stream $\delta_1 \ov \calD, \delta_2 \ov \calD, \ldots$ for $\ov \calD$ {\em restricted}
if each update $\delta_\tau \ov\calD$ is an insert of the form $+\ov{R}([\tau,\infty],\bm t)$ or a truncation 
of the form 
$(-\ov{R}([\tau', \infty], \bm t), +\ov{R}([\tau', \tau], \bm t))$.
The {\em restricted result} of $\ov{Q}$ after update $\delta_\tau \ov\calD$
is defined as all output tuples of $\ov Q$ whose $[Z]$-interval contains $\tau$:
\begin{align}
\label{eq:univariate_restricted_output}
\ov{Q}_\tau(\ov\calD^{(\tau)}) \defeq \sigma_{\tau \in [Z]}\ov{Q}(\ov\calD^{(\tau)}),
\end{align}
Note that by construction, we have
\begin{align}
    Q(\calD^{(\tau)}) = \pi_{\vars(Q)} \ov{Q}_\tau(\ov\calD^{(\tau)})
    \label{eq:from_ovQ_to_Q}
\end{align}
We introduce the maintenance problem for the intersection query $\ov{Q}$:

\noindent
\vspace*{.5em}
\fbox{%
    \parbox{0.96\linewidth}{%
    \begin{tabular}{ll}
   Problem: & $\ivmov[\ov{Q}]$ \\
   Parameter: & A \univariate extension query $\ov{Q}$ \\ 
        Given: & Restricted update stream
        $\delta_1 \ov \calD, \delta_2 \ov \calD, \ldots$\\
        Task: & Ensure constant-delay enumeration of
        $\ov{Q}_\tau(\ov\calD^{(\tau)})$ after each update $\delta_\tau\ov\calD$ \\
    \end{tabular}
    }}
\vspace*{.5em}

We make the reduction from $\ivmpm[Q]$ to 
$\ivmov[\ov{Q}]$ precise:
\begin{lemma}
\label{lem:reduction_ivmpm_ivmtilde}
If $\ivmov[\ov{Q}]$ can be solved with (amortized) $f(N)$ update time for any update stream of length $N$ and some function $f$,
then $\ivmpm[Q]$ can be solved with (amortized) $\bigO{f(N)}$ update time. 
\end{lemma}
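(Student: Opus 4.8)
The plan is to reduce $\ivmpm[Q]$ to $\ivmov[\ov Q]$: given an algorithm for the latter, I would maintain the $\ivmov[\ov Q]$ data structure and, as each single-tuple update to $\calD$ arrives, translate it on the fly into a single restricted update to $\ov\calD$; the enumeration of $Q(\calD^{(\tau)})$ is then obtained from the enumeration of $\ov Q_\tau(\ov\calD^{(\tau)})$ by projecting onto $\vars(Q)$ (Eq.~\eqref{eq:from_ovQ_to_Q}).

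\textbf{The translation.} Alongside the $\ivmov[\ov Q]$ structure I would keep, for each atom $\ov R_j([Z],\bm X_j)$ of $\ov Q$, a dictionary mapping every point tuple $\bm t$ currently in $R_j^{(\tau)}$ to the timestamp $\tau'$ of its last insertion. On the $\tau$-th update: if it is an insert $+R_j(\bm t)$ with $\bm t\notin R_j^{(\tau-1)}$, emit the restricted insert $+\ov R_j([\tau,\infty],\bm t)$ and record $\bm t\mapsto\tau$; if it is a delete $-R_j(\bm t)$ with $\bm t\in R_j^{(\tau-1)}$, look up $\tau'$ and emit the truncation $(-\ov R_j([\tau',\infty],\bm t),\,+\ov R_j([\tau',\tau],\bm t))$, then drop $\bm t$ from the dictionary. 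Each such step is $\bigO{1}$ plus one call into the $\ivmov[\ov Q]$ algorithm; since the generated restricted stream has the same length $N$ as the original stream, the total cost is (amortized) $\bigO{f(N)}$ per update.

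\textbf{Correctness.} By induction on $\tau$ I would prove the invariant that each $\ov R_j^{(\tau)}$ contains exactly one interval $[\tau_1,\tau_2]$ (with $\tau_2=\infty$ permitted) for every maximal lifespan during which $\bm t$ belonged to $R_j$, the open-ended one (if any) recording current membership; because inserts fire only when $\bm t$ is absent and truncations only when it is present, these lifespans are pairwise disjoint for a fixed $\bm t$. Hence $\bm t\in R_j^{(\tau)}$ iff some $([z],\bm t)\in\ov R_j^{(\tau)}$ has $\tau\in[z]$, and such $[z]$ is then unique. Plugging this into the semantics of the intersection query $\ov Q$ gives $Q(\calD^{(\tau)})=\pi_{\vars(Q)}\ov Q_\tau(\ov\calD^{(\tau)})$: an output tuple $\bm x$ of $Q$ at time $\tau$ picks out, in each $\ov R_j$, the unique $\tau$-containing interval whose point part agrees with $\bm x$; the intersection of these intervals contains $\tau$, is therefore nonempty, and yields an output tuple of $\ov Q_\tau$ projecting to $\bm x$ — and the converse is immediate.

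\textbf{Enumeration and the main obstacle.} To enumerate $Q(\calD^{(\tau)})$ I would run the constant-delay enumeration of $\ov Q_\tau(\ov\calD^{(\tau)})$ supplied by $\ivmov[\ov Q]$ and emit, for each tuple, its restriction to $\vars(Q)$, preserving constant delay. This is sound only if the projection is duplicate-free, and that is the delicate point: by the uniqueness clause of the invariant, the point part $\bm x$ of a tuple in $\ov Q_\tau(\ov\calD^{(\tau)})$ determines, in every atom, the unique $\tau$-containing matching interval, hence the whole tuple, so $([z],\bm x)\mapsto\bm x$ is injective on $\ov Q_\tau(\ov\calD^{(\tau)})$. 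This injectivity — and the disjointness-of-lifespans fact it rests on, which is exactly where the restricted structure of the update stream is used — is the one step that needs care; the rest is bookkeeping.
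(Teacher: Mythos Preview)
Your proof is correct and follows essentially the same approach as the paper: translate inserts to $+\ov R_j([\tau,\infty],\bm t)$, translate deletes to truncations, and enumerate by projecting $\ov Q_\tau(\ov\calD^{(\tau)})$ onto $\vars(Q)$, relying on the uniqueness of the $\tau$-containing lifespan per point tuple to guarantee duplicate-free projection. You actually supply more detail than the paper does---the explicit dictionary for locating $\tau'$, the inductive invariant on disjoint maximal lifespans, and the injectivity argument---whereas the paper simply asserts that ``by construction, the tuples in $\ov Q_\tau(\ov\calD^{(\tau)})$ must be distinct with respect to their projections onto their point variables.''
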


\begin{proof}
The reduction from $\ivmpm[Q]$ to 
$\ivmov[\ov{Q}]$ works as follows. 
Consider an update stream 
$\delta_1\calD, \delta_2\calD, \ldots$ for $Q$.
For each $\tau \geq 1$, we pursue the following strategy. 
If $\delta_\tau\calD$ is an insert $+R(\bm t)$,
we execute the insert
$+\ov{R}([\tau,\infty],\bm t)$ on the database $\ov \calD$ of $\ov{Q}$.
If it is a delete $-R(\bm t)$, we apply the truncation 
$(-\ov{R}([\tau', \infty], \bm t), +\ov{R}([\tau', \tau], \bm t))$.
For any enumeration 
request to $Q$ that comes immediately after 
some update $\delta_\tau\calD$, 
we trigger the enumeration procedure for $\ov{Q}$,
which enumerates the tuples in 
$\ov{Q}_\tau(\ov\calD^{(\tau)})$
with constant delay. 
(Recall~\eqref{eq:from_ovQ_to_Q}.)
By construction, the tuples in $\ov{Q}_\tau(\ov\calD^{(\tau)})$ must be distinct 
with respect to their projections onto their point variables (i.e. $\vars(Q)$). In particular,
\begin{align}
    |Q(\calD^{(\tau)})| = |\ov{Q}_\tau(\ov\calD^{(\tau)})|
    \label{eq:from_ovQ_to_Q_sizes}
\end{align}
Hence, by reporting the projections of the tuples in $\ov{Q}_\tau(\ov\calD^{(\tau)})$ onto $\vars(Q)$, we enumerate the output 
$Q(\calD^{(\tau)})$ with constant delay.
\end{proof}

The following proposition will be useful later for the proof of Lemma~\ref{lmm:main_fully_dynamic}. It basically says that when we truncate
a tuple
from $\ov{R}([\tau', \infty], \bm t)$ to $\ov{R}([\tau', \tau], \bm t)$,
the truncated tuple is still going to join with all other tuples that it used to join with before the truncation.
\begin{proposition}
    \label{prop:trunc}
    Consider a truncation update
    $\delta_\tau\ov\calD =$ $\{-\ov R_j([\tau', \infty],\bm t_j),$ $+\ov R_j([\tau', \tau], \bm t_j)\}$.
    Suppose that the tuple before truncation $\ov R_j([\tau', \infty], \bm t_j)$
    used to join with a set of tuples in $\{\ov R_i([\alpha_i, \beta_i], \bm t_i)\mid i \in S\}$ for some $S \subseteq [k]$, i.e.
    \begin{align*}
        \Join_{i \in S} \ov R_i([\alpha_i, \beta_i], \bm t_i) \Join
        \ov R_j([\tau', \infty], \bm t_j) = ([\alpha, \beta], \bm t)
    \end{align*}
    Then, the following hold:
    \begin{enumerate}
        \item $\beta_i = \infty$ for all $i \in S$ and $\beta = \infty$ \label{prop:trunc:1}
        \item The truncated tuple $\ov R_j([\tau', \tau], \bm t_j)$ still joins with the same set of tuples. Namely \label{prop:trunc:2}
        \begin{align*}
            \Join_{i \in S} \ov R_i([\alpha_i, \beta_i], \bm t_i) \Join
            \ov R_j([\tau', \tau], \bm t_j) = ([\alpha, \tau], \bm t)
        \end{align*}
    \end{enumerate}
\end{proposition}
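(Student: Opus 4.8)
The plan is to reduce Proposition~\ref{prop:trunc} to a single structural invariant of the databases produced by a restricted update stream and then read off both claims by elementary arithmetic of discrete intervals. The invariant I will use is: immediately before the truncation $\delta_\tau\ov\calD$ is applied, i.e. in $\ov\calD^{(\tau-1)}$, every stored interval $[\gamma,\delta]$ has $\gamma\le\tau-1$ and is either right-unbounded ($\delta=\infty$) or satisfies $\delta\le\tau-1$. Informally, a left endpoint is an insertion timestamp and a finite right endpoint is a past deletion timestamp, so the only intervals in $\ov\calD^{(\tau-1)}$ that ``reach'' the current time $\tau$ are the still-live ones of the form $[\gamma,\infty]$. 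The main obstacle is claim~\ref{prop:trunc:1}: I must rule out that any surviving join partner $\ov R_i([\alpha_i,\beta_i],\bm t_i)$ is an already-truncated tuple, and this is exactly where the invariant bites — a finite $\beta_i\le\tau-1$ would cap the intersection strictly below $\tau$, and then the truncation of $\bm t_j$ would leave the join untouched, so such a join is not among those the proposition is about (those are the joins whose lifespan reaches the current time, hence the ones contributing to $\ov{Q}_\tau$ in~\eqref{eq:univariate_restricted_output}).

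I would prove the invariant by induction on the length of the restricted prefix $\delta_1\ov\calD,\ldots,\delta_{\tau-1}\ov\calD$. The base case is the empty database. For the inductive step, a restricted update at a time $s\le\tau-1$ is either an insert $+\ov R([s,\infty],\bm t)$, which introduces an interval with left endpoint $s\le\tau-1$ and right endpoint $\infty$, or a truncation $(-\ov R([\gamma,\infty],\bm t),+\ov R([\gamma,s],\bm t))$, which replaces an interval already satisfying the invariant by one with the same left endpoint $\gamma\le s\le\tau-1$ and right endpoint $s\le\tau-1$; all other tuples are unchanged. Hence the invariant is preserved at every step up to $\tau-1$.

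Next I would establish claim~\ref{prop:trunc:1}. The tuples $\ov R_i([\alpha_i,\beta_i],\bm t_i)$ for $i\in S$ all lie in $\ov\calD^{(\tau-1)}$ (this is what ``used to join with'' means), so by the invariant $\alpha_i\le\tau-1$ and $\beta_i\in\{\infty\}\cup\{1,\ldots,\tau-1\}$ for each $i\in S$. Since intersecting discrete intervals acts as $\max$ on left endpoints and $\min$ on right endpoints, the hypothesis $\Join_{i\in S}\ov R_i([\alpha_i,\beta_i],\bm t_i)\Join\ov R_j([\tau',\infty],\bm t_j)=([\alpha,\beta],\bm t)$ forces $\alpha=\max(\{\tau'\}\cup\{\alpha_i:i\in S\})$ and $\beta=\min(\{\infty\}\cup\{\beta_i:i\in S\})$, with the point tuple $\bm t$ pinned down by the (consistent) point values of $\bm t_j$ and the $\bm t_i$. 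Suppose, for contradiction, that some $\beta_i<\infty$; then $\beta\le\beta_i\le\tau-1<\tau$, so $\big(\bigcap_{i\in S}[\alpha_i,\beta_i]\big)\cap[\tau',\tau]=[\alpha,\min(\beta,\tau)]=[\alpha,\beta]=\big(\bigcap_{i\in S}[\alpha_i,\beta_i]\big)\cap[\tau',\infty]$, i.e. the truncation does not alter this join at all, so it is not one of the joins the proposition addresses — a contradiction. Hence no $\beta_i$ is finite, so $\beta_i=\infty$ for all $i\in S$ and therefore $\beta=\min(\{\infty\})=\infty$, which is claim~\ref{prop:trunc:1}.

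Finally, claim~\ref{prop:trunc:2} is immediate from claim~\ref{prop:trunc:1}. Since every $\beta_i=\infty$, replacing $\bm t_j$'s interval $[\tau',\infty]$ by $[\tau',\tau]$ changes only an interval coordinate, so all point-value agreements are preserved and the resulting point tuple is still $\bm t$; moreover $\big(\bigcap_{i\in S}[\alpha_i,\infty]\big)\cap[\tau',\tau]=\big[\max(\{\tau'\}\cup\{\alpha_i:i\in S\}),\,\tau\big]=[\alpha,\tau]$, which is a non-empty interval because $\alpha\le\tau-1<\tau$ by the invariant and $\tau'<\tau$ by assumption. Hence $\ov R_j([\tau',\tau],\bm t_j)$ joins with exactly the same set $\{\ov R_i([\alpha_i,\beta_i],\bm t_i):i\in S\}$, yielding the output tuple $([\alpha,\tau],\bm t)$, as required. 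The only genuinely delicate step is the case analysis inside claim~\ref{prop:trunc:1}; the rest is bookkeeping with $\min$ and $\max$ over interval endpoints.
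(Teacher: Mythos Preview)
Your argument follows the same line as the paper's: both rest on the observation that every interval stored in $\ov\calD^{(\tau-1)}$ has left endpoint $<\tau$ and right endpoint either $<\tau$ or $=\infty$, and both claims are then immediate interval arithmetic. One remark on your handling of claim~\ref{prop:trunc:1}: what you frame as a proof by contradiction is not one---if some $\beta_i<\infty$ then the join is simply unchanged by truncation, which contradicts nothing in the stated hypotheses (and indeed shows that claim~\ref{prop:trunc:1} as literally written fails for such already-dead join partners); the paper's own proof makes the same tacit restriction to intervals with $\alpha_i<\tau\le\beta_i$, so this is an imprecision in the proposition's statement rather than a flaw in your reasoning.
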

\begin{proof}
    Right before time $\tau$ when we are about to truncate, the database $\ov \calD^{(\tau-1)}$
    does not yet contain any interval $[\alpha_i, \beta_i]$ where $\alpha_i \geq \tau$.
    This is because $\tau$ is the current time, and those intervals $[\alpha_i, \beta_i]$
    where $\alpha_i \geq \tau$ start in the future. This proves part~\ref{prop:trunc:2}.
    Moreover, every interval $[\alpha_i, \beta_i]$ 
    in the database $\ov \calD^{(\tau-1)}$ where $\alpha_i < \tau \leq \beta_i$
    must satisfy $\beta_i = \infty$. This is because such an interval ends in the future, thus we could
    not have encountered its end yet. This proves part~\ref{prop:trunc:1}.
\end{proof}

\subsubsection{From the Maintenance of $\ov{Q}$ to the Maintenance of $\wh{Q}$}
\label{sec:uni_q_to_multi_q}
Now, we give a reduction from the maintenance of $\ov{Q}$
to the maintenance of $\wh{Q}$.
Let $\wh \calD$ be the database instance for $\wh Q$.
In order to translate an update $\delta_\tau \ov\calD$ for $\ov{Q}$
into an update $\delta_\tau \wh\calD$ for $\wh Q$, we take the {\em canonical partition} of $\delta_\tau\ov\calD$, as given by Definition~\ref{defn:canpart:instance}.
In particular, suppose that $N$, the length of the update stream, is a power of 2 and is known in advance.
(See Remark~\ref{rmk:doubling:N} about removing this assumption.)
The corresponding update to $\wh Q$ is
\begin{align}
    \delta_\tau\wh \calD \gets \canpart_N(\delta_\tau \ov \calD)
    \label{eq:canonical_partition_tuple_update}
\end{align}

 We say that a stream $\delta_1\wh\calD, \delta_2\wh\calD, \ldots$
 is a restricted canonical update stream for $\wh{Q}$ if there is 
 a restricted update stream $\delta_1\ov\calD, \delta_2\ov\calD, \ldots$ for $\ov{Q}$
 such that each $\delta_\tau \wh \calD$ is the canonical partition of $\delta_\tau \ov \calD$ with respect to $N$.
 The {\em restricted result} of $\wh{Q}$ after update $\delta_\tau\wh\calD$
 is defined as the set of all output tuples of $\wh{Q}$ where
 the bitstring $Z_1 \circ\cdots\circ Z_k$ (when interpreted as an interval in the segment tree $\calT_N$)
 contains $\tau$. Formally,
\begin{align}
\label{eq:restricted_canonical_result}
     \wh{Q}_{\tau}(\wh\calD^{(\tau)}) \defeq \bigcup_{\substack{z_1, \ldots , z_{k+1} \in \{0,1\}^*\\ z_1 \circ \cdots \circ z_{k+1} = \canpart([\tau,\tau])}} \sigma_{Z_1 = z_1, \ldots , Z_k = z_k}\wh{Q}(\wh\calD^{(\tau)})
\end{align}
We are ready to introduce the maintenance problem for $\wh{Q}$:

\noindent
\vspace*{.5em}
\fbox{%
    \parbox{0.96\linewidth}{%
    \begin{tabular}{ll}
   Problem: & $\ivmwh[\wh{Q}]$ \\
   Parameter: & A \multivariate extension query $\wh{Q}$ \\ 
        Given: & Restricted canonical update stream 
        $\delta_1\wh\calD, \delta_2\wh\calD, \ldots$\\
        Task: & Ensure constant-delay enumeration of
        $\wh{Q}_\tau(\wh\calD^{(\tau)})$ after each update $\delta_\tau\wh\calD$ \\
    \end{tabular}
    }}
\vspace*{.5em}

We formalize the reduction from $\ivmov[\ov{Q}]$
to $\ivmwh[\wh{Q}]$:

\begin{lemma}
\label{lem:reduction_ivmtilde_ivmstar}
If $\ivmwh[\wh{Q}]$
can be solved with (amortized) $f(N)$ update time for any update stream of length $N$
and some monotonic function $f$, then $\ivmov[\ov{Q}]$ can be solved with
$\tildeO{f(N)}$ amortized update time.
\end{lemma}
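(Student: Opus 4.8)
The plan is a black-box simulation: run the hypothesized $\ivmwh[\wh Q]$ procedure as an oracle, feed it the canonical-partition translation of the incoming restricted update stream for $\ov Q$, and answer enumeration requests through the correspondence between $\ov Q_\tau(\ov\calD^{(\tau)})$ and $\wh Q_\tau(\wh\calD^{(\tau)})$ supplied by Theorem~\ref{thm:IJ:forward:reduction}. \emph{Translating updates.} Assume first a fixed power-of-two guess $N$ for the stream length and build $\calT_N$. When a restricted update $\delta_\tau\ov\calD$ arrives — an insert $+\ov R_j([\tau,\infty],\bm t)$ or a truncation $(-\ov R_j([\tau',\infty],\bm t),+\ov R_j([\tau',\tau],\bm t))$ — we pass $\delta_\tau\wh\calD:=\canpart_N(\delta_\tau\ov\calD)$ (Definition~\ref{defn:canpart:instance}) to the oracle. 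Since every interval has a canonical partition of size $\bigO{\log N}$ and $\wh Q$ has a constant number of atoms, this amounts to $\bigO{\log N}$ single-tuple inserts and deletes to $\wh\calD$, the resulting stream is by construction a restricted canonical update stream for $\wh Q$, and the invariant $\wh\calD^{(\tau)}=\canpart_N(\ov\calD^{(\tau)})$ is maintained.

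\emph{Unknown $N$ and runtime accounting (Remark~\ref{rmk:doubling:N}).} We initialize $N=1$ and, each time the stream length reaches the current $N$, we double $N$, rebuild $\calT_{2N}$, discard the oracle state, and re-initialize it by replaying the restricted prefix seen so far, now translated with respect to $2N$; the replay reissues the original inserts and truncations, so it is again a restricted canonical stream. A rebuild at stream length $n\le N$ costs the oracle a length-$\tildeO{n}$ stream, i.e. $\tildeO{n\cdot f(n)}$ total (using monotonicity of $f$, and that the relevant $f$ is polynomially bounded so $f(\tildeO{n})=\tildeO{f(n)}$). Summing over the $\bigO{\log N}$ doublings, whose phase lengths form a geometric series, the total rebuild cost is $\tildeO{N\cdot f(N)}$; together with the direct simulation cost (a length-$\tildeO{N}$ stream handled in $\tildeO{N\cdot f(N)}$), this is $\tildeO{f(N)}$ amortized per update of $\ov Q$.

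\emph{Enumeration.} After $\delta_\tau\ov\calD$ we must enumerate $\ov Q_\tau(\ov\calD^{(\tau)})=\sigma_{\tau\in[Z]}\ov Q(\ov\calD^{(\tau)})$ with constant delay. Since $\canpart_N([\tau,\tau])$ is a single leaf bitstring $b^{\ast}$ and $\canpart_N([z])$ partitions $[z]$, a node $w\in\canpart_N([z])$ contains $\tau$ iff its bitstring is a prefix of $b^{\ast}$ — exactly the selection defining $\wh Q_\tau$ in Eq.~\eqref{eq:restricted_canonical_result}; combined with $\canpart_N^{(1)}(\ov Q(\ov\calD^{(\tau)}))=G_k(\wh Q(\wh\calD^{(\tau)}))$ from Theorem~\ref{thm:IJ:forward:reduction}, every tuple $([z],\bm x)\in\ov Q_\tau(\ov\calD^{(\tau)})$ has at least one witness $(z_1,\dots,z_k,\bm x)\in\wh Q_\tau(\wh\calD^{(\tau)})$ (with $z_1\circ\cdots\circ z_k$ the unique partition node of $[z]$ containing $\tau$), and every witness arises this way. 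We thus enumerate $\wh Q_\tau(\wh\calD^{(\tau)})$ via the oracle, drop the $Z$-coordinates to obtain $\bm x$, and recover $[z]$ in $\bigO{1}$ by one index probe per atom of $Q$ (the unique base tuple with matching point-part whose interval contains $\tau$) followed by intersecting the constantly many resulting intervals. By Eq.~\eqref{eq:from_ovQ_to_Q_sizes}, for restricted streams $([z],\bm x)\mapsto\bm x$ is injective on $\ov Q_\tau$, so duplicate suppression reduces to fixing one canonical witness per point-tuple $\bm x$ (say, the lexicographically least $(\bm\sigma,z_1,\dots,z_k)$) and emitting output only for canonical witnesses; the canonicity test and the number of prefix-splits $z_1\circ\cdots\circ z_k$ of $b^{\ast}$ are $\polylog N$, absorbed by $\tildeO{}$.

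\emph{Main obstacle.} The delicate point is the enumeration: the canonical-partition map is genuinely many-to-one (Lemma~\ref{lem:interv_dec} yields several witnesses of the same output tuple, from different permutations $\bm\sigma$ and different splits of one segment-tree node), so the reduction must simultaneously reconstruct the interval coordinate that $\wh Q$ has discarded and suppress these duplicates while keeping the delay independent of $|\ov\calD|$. The two facts that make it go through — injectivity of $([z],\bm x)\mapsto\bm x$ on $\ov Q_\tau$ for restricted streams and the $\bigO{\log N}$ bound on canonical partitions — are the crux; the doubling argument is then routine geometric amortization.
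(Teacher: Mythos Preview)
Your reduction matches the paper's: translate restricted updates via $\canpart_N$, maintain the invariant $\wh\calD^{(\tau)}=\canpart_N(\ov\calD^{(\tau)})$, and handle unknown $N$ by doubling. The update-time accounting is essentially the same.

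The gap is in the enumeration step. You correctly identify that each $([z],\bm x)\in\ov Q_\tau$ has a witness in $\wh Q_\tau$ whose concatenation $z_1\circ\cdots\circ z_k$ is the unique node of $\canpart_N([z])$ containing $\tau$, but then you worry that different permutations $\bm\sigma$ or different splits of that node could produce multiple witnesses for the same $\bm x$, and you propose suppressing duplicates via a canonicity test whose cost you declare to be $\polylog N$ and ``absorbed by $\tildeO{}$.'' That is a problem: the $\tildeO{}$ in the lemma's conclusion pertains to \emph{update time}, whereas solving $\ivmov[\ov Q]$ requires \emph{constant} enumeration delay; a $\polylog N$ delay does not meet the specification, so even if your scheme were otherwise sound it would not establish the lemma as stated.

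In fact there are no duplicates, and this is the observation you are missing (it is precisely what the paper uses, as Eq.~\eqref{eq:from_whQ_to_ovQ_sizes}). For each relation $R_j$, at time $\tau$ a restricted stream has at most one live tuple with a given point-part, so the input intervals $[z^1],\ldots,[z^k]$ supporting $\bm x$ are determined. Any witness in $\wh Q_\tau$ must have each prefix $z_1\circ\cdots\circ z_j$ in $\canpart_N([z^{\sigma_j}])$ and simultaneously a prefix of the leaf bitstring $b^*$; hence it equals the unique $\tau$-containing node $p_{\sigma_j}$ of that partition. All the $p_j$ are prefixes of $b^*$ and therefore totally ordered by length; this order forces $\bm\sigma$ up to ties, and ties yield $z_j=\varepsilon$ so they give the \emph{same} tuple $(z_1,\ldots,z_k,\bm x)$ regardless of how they are broken. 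Thus $(z_1,\ldots,z_k)$ is uniquely determined by $\bm x$, giving $|\wh Q_\tau(\wh\calD^{(\tau)})|=|\ov Q_\tau(\ov\calD^{(\tau)})|$, and constant-delay enumeration of $\wh Q_\tau$ yields constant-delay enumeration of $\ov Q_\tau$ directly---no canonicity test, no $\polylog$ overhead.
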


\begin{proof}
The reduction expressed in Lemma~\ref{lem:reduction_ivmtilde_ivmstar}
works as follows.
Consider an update stream $\delta_1\ov\calD,\delta_2\ov\calD \ldots$
for $\ov{Q}$. We transform each update $\delta_\tau\ov\calD$ into its canonical partition 
$\delta_\tau\wh\calD$ with respect to $N$, and apply 
$\delta_\tau\wh\calD$ on the database $\wh \calD$
for $\wh{Q}$.
As a result, the algorithm maintains the invariant that the database $\wh \calD$ is always
the canonical partition of the database $\ov \calD$ with respect to $N$. Formally,
\begin{align*}
    \wh \calD^{(\tau)} = \canpart_N(\ov \calD^{(\tau)}),\quad\quad\text{for all $\tau \in [N]$}.
\end{align*}
Moreover, by Definition~\ref{defn:canpart:tuple}, the number of single-tuple updates in $\delta_\tau\wh\calD$ is at most polylogarithmic in $N$ (which is $\tildeO{1}$).

Theorem~\ref{thm:IJ:forward:reduction} connects the full outputs of $\ov Q(\ov \calD^{(\tau)})$ and $\wh Q(\wh\calD^{(\tau)})$ using the following relation:
\begin{align}
    \canpart_{N}^{(1)}(\ov Q(\ov \calD^{(\tau)})) = G_k(\wh Q(\wh \calD^{(\tau)}))
\end{align}
From the above equation along with~\eqref{eq:restricted_canonical_result} and~\eqref{eq:univariate_restricted_output}, we conclude that:
\begin{align}
    \pi_{\vars(Q)}\wh{Q}_{\tau}(\wh\calD^{(\tau)}) &=
    \pi_{\vars(Q)}\ov{Q}_\tau(\ov\calD^{(\tau)})\label{eq:from_whQ_to_ovQ}\\
    |\wh{Q}_{\tau}(\wh\calD^{(\tau)})| &=
    |\ov{Q}_\tau(\ov\calD^{(\tau)})|\label{eq:from_whQ_to_ovQ_sizes}
\end{align}
Equation~\eqref{eq:from_whQ_to_ovQ_sizes} follows from the fact that for any segment
$[Z]$ where $\tau \in [Z]$, there is exactly one segment in the canonical partition of $[Z]$ that contains $\tau$.
Hence, we can use constant-delay enumeration for
$\wh{Q}_{\tau}(\wh\calD^{(\tau)})$
in order to enumerate the output of
$\ov{Q}_\tau(\ov\calD^{(\tau)})$.

Finally, if the length of the update stream $N$ is not known in advance, then we use the doubling trick described in Remark~\ref{rmk:doubling:N}.
In particular, we initialize $N$ to 1.
Before we process $\tau$-th update for $\tau = N+1$,
we replace the segment tree $\calT_N$ by $\calT_{2N}$ and replace $N$ by $2N$.
Replacing the segment tree corresponds to creating a copy of the old tree and adding a new root
node whose two children are the old segment tree and the copy.
After updating the segment tree, we can redo the previous inserts into the new tree.
The total update time would be
\begin{align*}
    \tildeO{1}\sum_{i \in [\lceil\log_2 N\rceil]} 2^i \cdot f(2^i) = \tildeO{N\cdot f(N)}
\end{align*}
(Recall that $f$ is monotonic.)
Hence, the amortized update time is still $\tildeO{f(N)}$.
\end{proof}

\subsubsection{Update Time for $\wh{Q}$}
\label{sec:upper_bound_maintenance_multivariate}
Note that~\eqref{eq:from_ovQ_to_Q},
~\eqref{eq:from_ovQ_to_Q_sizes},~\eqref{eq:from_whQ_to_ovQ}, and~\eqref{eq:from_whQ_to_ovQ_sizes} imply:
\begin{align*}
    Q(\calD^{(\tau)}) &= \pi_{\vars(Q)} \wh{Q}_\tau(\wh\calD^{(\tau)})\\
    |Q(\calD^{(\tau)})| &= |\wh{Q}_\tau(\wh\calD^{(\tau)})|
\end{align*}
This basically means that we can do constant-delay enumeration for $Q(\calD^{(\tau)})$
using constant-delay enumeration for $\wh{Q}_\tau(\wh\calD^{(\tau)})$.
Now, we prove the following statement:

\begin{lemma}
\label{lem:upper_bound_maintenance_multivariate}
For any restricted canonical update stream of length $N$, the problem 
$\ivmwh[\wh{Q}]$ can be solved with $\tildeO{N^{{\fw(\wh Q)}-1}}$
amortized update time, where ${\fw(\wh Q)}$ is the fractional hypertree width of $\wh{Q}$. 
\end{lemma}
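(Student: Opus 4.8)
The plan is to handle the $\bigO{1}$ components $\wh Q_{\bm\sigma}$ of $\wh Q$ one at a time, maintaining for each the fully calibrated tree-decomposition data structure from the proof of Theorem~\ref{thm:main_inserts} (the $\ivmpd$ version), and to show that a restricted canonical update stream of length $N$ can be fed into that structure within the same total budget $\tildeO{N^{\fw(\wh Q)}}$ that an insert-only stream would need. Since $\fw(\wh Q) = \max_{\bm\sigma}\fw(\wh Q_{\bm\sigma})$ and, by~\eqref{eq:restricted_canonical_result}, the restricted result $\wh Q_\tau(\wh\calD^{(\tau)})$ is the union over $\bm\sigma$ of the corresponding restricted results of the components, it is enough to deal with a single component $\wh Q_{\bm\sigma}$ and afterwards combine the per-component enumerations (handling the possible overlap of their point-variable projections as discussed in Example~\ref{ex:main_fully_dynamic} and the proof of Lemma~\ref{lem:reduction_ivmtilde_ivmstar}).

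First I would fix an optimal tree decomposition of $\wh Q_{\bm\sigma}$, of width $\fw(\wh Q_{\bm\sigma}) \le \fw(\wh Q)$, and build the calibrated bag relations $Q'_t$ and $Q''_t$ exactly as in the proof of Theorem~\ref{thm:main_inserts} for $\ivmpd$. Because the atom $\wh R_{\sigma_k}$ of $\wh Q_{\bm\sigma}$ carries all of $Z_1,\dots,Z_k$, some bag $W$ contains $\{Z_1,\dots,Z_k\}$; I would designate $W$ as the root, so that the selection $\sigma_{Z_1=z_1,\dots,Z_k=z_k}$ in~\eqref{eq:restricted_canonical_result} becomes a selection on the root bag and the calibrated structure supports constant-delay enumeration of $\wh Q_\tau(\wh\calD^{(\tau)})$ after each update, over the $\tildeO{1}$ relevant tuples $(z_1,\dots,z_k)$.

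Next, the inserts are routine. A restricted canonical insert is $\canpart_N(+\ov R_j([\tau,\infty],\bm t))$, which by Definition~\ref{defn:canpart:instance} amounts to $\tildeO{1}$ single-tuple inserts into the input relations $\wh R_{\sigma_i}$; over the whole stream this is $\tildeO{N}$ tuple-inserts, and every input relation ends up of size $\tildeO{N}$, so every bag relation has size $\tildeO{N^{\fw(\wh Q_{\bm\sigma})}}$ by the AGM bound. Invoking Lemma~\ref{lem:insert_only_in_AGM} together with the bottom-up/top-down calibration argument from the proof of Theorem~\ref{thm:main_inserts}, the total cost of these inserts (propagation through the bag relations included) is $\tildeO{N^{\fw(\wh Q_{\bm\sigma})}} \le \tildeO{N^{\fw(\wh Q)}}$.

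The hard part will be the truncations. A restricted canonical truncation is $\canpart_N(-\ov R_j([\tau',\infty],\bm t),\,+\ov R_j([\tau',\tau],\bm t))$: it deletes the $\tildeO{1}$ input tuples whose $Z_1\circ\cdots\circ Z_i$ lies in $\canpart_N([\tau',\infty])\setminus\canpart_N([\tau',\tau])$ and inserts the $\tildeO{1}$ input tuples whose $Z_1\circ\cdots\circ Z_i$ lies in $\canpart_N([\tau',\tau])\setminus\canpart_N([\tau',\infty])$, with each newly inserted segment a strict descendant in $\calT_N$ of a deleted one. The inserted tuples are absorbed into the insert budget above, since the total number of input-tuple inserts over the stream (from genuine inserts and from truncations together) is still $\tildeO{N}$. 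For the deleted tuples I would use Proposition~\ref{prop:trunc}: at the current time $\tau$ every tuple still in $\ov\calD^{(\tau-1)}$ that the truncated tuple joins with has right endpoint $\infty$, so replacing the straddling segment by its relevant sub-segments changes no surviving tuple's set of join partners; translating this through Theorem~\ref{thm:IJ:forward:reduction}, the bag-relation tuples and output tuples of $\wh Q_{\bm\sigma}$ that disappear are exactly the $Z$-relabelings of the ones that reappear, and nothing else is affected. Hence the delete half of a truncation can be executed by deleting the $\tildeO{1}$ old $Z$-labelled copies and inserting the $\tildeO{1}$ new ones in each affected bag relation, at a cost bounded by that of the matching inserts, and amortizing it against those inserts keeps the total truncation cost $\tildeO{N^{\fw(\wh Q)}}$. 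Summing over the $\bigO{1}$ components gives $\tildeO{N^{\fw(\wh Q)}}$ total, i.e.\ $\tildeO{N^{\fw(\wh Q)-1}}$ amortized per update. The main obstacle is making this last step rigorous: an ordinary delete in the $\ivmpd$ structure costs as much as recomputation, so one must argue carefully, via Proposition~\ref{prop:trunc} and the correspondence of Theorem~\ref{thm:IJ:forward:reduction}, that a truncation's delete merely relabels already-materialized bag tuples rather than forcing genuine recomputation, and can therefore be charged to the corresponding inserts without breaking the bound of Theorem~\ref{thm:main_inserts}.
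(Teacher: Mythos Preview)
Your overall plan matches the paper's: work per component $\wh Q_{\bm\sigma}$, fix an optimal tree decomposition, root it at a bag containing all of $Z_1,\dots,Z_k$, and handle inserts via the machinery of Theorem~\ref{thm:main_inserts}. The gap is exactly where you flag it—the delete half of a truncation—and the fix you propose does not quite go through.

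Your per-component relabeling claim (``the bag-relation tuples of $\wh Q_{\bm\sigma}$ that disappear are exactly the $Z$-relabelings of the ones that reappear'') is not correct. Take $N=8$, a component with $j=\sigma_1$ and $j'=\sigma_2$, a partner tuple $([3,8],\bm t_{j'})\in\ov R_{j'}$, and the tuple $([5,8],\bm t_j)\in\ov R_j$ being truncated at $\tau=6$ to $([5,6],\bm t_j)$. Before truncation, $\wh R_{\sigma_1}$ contains $(1,\bm t_j)$ and $\wh R_{\sigma_2}$ contains $(1,\varepsilon,\bm t_{j'})$, which join on $Z_1=1$. After truncation, $\wh R_{\sigma_1}$ contains $(10,\bm t_j)$, and there is no tuple in $\wh R_{\sigma_2}$ with $Z_1=10$: the join (and the corresponding bag tuple) disappears from this component with no relabeled replacement—it resurfaces in a \emph{different} component $\bm\sigma'$ where $j'$ precedes $j$. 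So charging a component's deletes to the same truncation's inserts in that component does not balance.

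The paper handles truncation with two ingredients you are missing. First, it restricts to a tree decomposition in which every bag is \emph{$Z$-prefix-closed} (i.e.\ $Z_i$ in a bag implies $Z_1,\dots,Z_{i-1}$ are too); it proves such an optimal decomposition always exists (Proposition~\ref{prop:prefix:td}). Second, it does not charge deletes to the truncation's replacement inserts at all. Instead, at the \emph{original} insert time $\tau'$ it inserts, alongside the provisional tuple $([\tau',\infty],\bm t_j)$, a \emph{finalized} tuple $([\tau',\tau'],\bm t_j)$ that will never be deleted. Using Claim~\ref{clm:insert_indistinguishable} (every live partner interval strictly contains $[\tau',\infty]$, hence also $[\tau',\tau']$) together with $Z$-prefix-closedness, the paper argues that the cost of the provisional insert is within a $\tildeO{1}$ factor of the cost of the finalized one (Claim~\ref{clm:provisional_finalized}). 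It then deposits one credit with every bag tuple created by a provisional insert; these credits later pay for the deletes. The total work is thus dominated by the finalized inserts, which are genuinely insert-only and fall under the $\tildeO{N^{\fw(\wh Q)}}$ bound of Theorem~\ref{thm:main_inserts}. In short: the amortization charges the delete to the \emph{original} insert via credits, not to the replacement inserts of the same truncation.
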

Before we prove the above lemma, we need some preliminaries:
\begin{definition}
    A subset $\bm Y \subseteq \vars(\wh Q)$ is called {\em $Z$-prefix-closed} if it satisfies the following
    property: For every $Z_i \in \bm Y$, we must have $\{Z_1, \ldots, Z_{i-1}\} \subseteq \bm Y$.
    An atom $R(\bm Y)$ is {\em $Z$-prefix-closed} if its schema $\bm Y$ is $Z$-prefix-closed.
    \label{defn:prefix-closed}
\end{definition}
Recall from Appendix~\ref{app:prelims} that a tree decomposition $(\calT, \chi)$ of a query $Q$ is called {\em optimal} if it minimizes the fractional
hypertree width of $Q$.
\begin{proposition}
    Let $\wh Q_{\bm\sigma}$ be a component of $\wh Q$.
    There must exist an optimal tree decomposition $(\calT, \chi)$ for $\wh Q_{\bm\sigma}$
    where every bag of $(\calT, \chi)$ is $Z$-prefix-closed.
    \label{prop:prefix:td}
\end{proposition}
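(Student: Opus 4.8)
The plan is to start from an arbitrary optimal tree decomposition $(\calT,\chi)$ of $\wh Q_{\bm\sigma}$ and ``pad'' it so that every bag becomes $Z$-prefix-closed, keeping the underlying tree $\calT$ fixed. For a node $t$ of $\calT$ let $m(t) \defeq \max\{\, i \mid Z_i \in \chi(t)\,\}$, with $m(t)=0$ if no $Z_i$ belongs to $\chi(t)$, and set $\chi'(t) \defeq \chi(t) \cup \{Z_1,\dots,Z_{m(t)}\}$. Then $\chi'(t) \cap \{Z_1,\dots,Z_k\} = \{Z_1,\dots,Z_{m(t)}\}$, so $(\calT,\chi')$ is $Z$-prefix-closed by construction (Definition~\ref{defn:prefix-closed}); it remains to show that $(\calT,\chi')$ is a valid tree decomposition of $\wh Q_{\bm\sigma}$ of fractional hypertree width at most $\fw(\calT,\chi)$, which then makes it optimal.

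\textbf{Validity.} Coverage is immediate since $\chi'(t) \supseteq \chi(t)$ for every $t$. For connectivity, the subtree of a non-$Z$ variable is unchanged. For $Z_j$, writing $S_i \defeq \{\, t \mid Z_i \in \chi(t)\,\}$, we have $\{\, t \mid Z_j \in \chi'(t)\,\} = \{\, t \mid m(t) \ge j\,\} = \bigcup_{i \ge j} S_i$. Each $S_i$ is a nonempty connected subtree of $\calT$ by the original decomposition. By Definition~\ref{defn:multivariate}, the atom $\wh R_{\sigma_i}$ has schema $\{Z_1,\dots,Z_i\}\cup\bm X_{\sigma_i}$, so by coverage there is a node $t_i$ with $\{Z_1,\dots,Z_i\}\subseteq\chi(t_i)$; hence for $j \le a < b$ we get $t_b \in S_a \cap S_b$, so the subtrees $S_j,S_{j+1},\dots,S_k$ pairwise intersect, and a union of pairwise-intersecting connected subtrees of a tree is connected (add them one at a time, each new one meeting the partial union). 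Thus $(\calT,\chi')$ is a tree decomposition of $\wh Q_{\bm\sigma}$.

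\textbf{Width.} Fix a node $t$ and an optimal fractional edge cover $(\lambda^*_e)_{e \in \atoms(\wh Q_{\bm\sigma})}$ of the restriction $(\wh Q_{\bm\sigma})_{\chi(t)}$ (Definition~\ref{defn:restriction}). I claim $\lambda^*$ is feasible for $(\wh Q_{\bm\sigma})_{\chi'(t)}$ with the same objective value, which gives $\rho^*((\wh Q_{\bm\sigma})_{\chi'(t)}) \le \rho^*((\wh Q_{\bm\sigma})_{\chi(t)})$ and hence $\fw(\calT,\chi') \le \fw(\calT,\chi) = \fw(\wh Q_{\bm\sigma})$. Since restricting an atom to the larger set $\chi'(t)$ only keeps more variables, every variable of $\chi(t)$ remains covered; the only new variables are $Z_1,\dots,Z_{m(t)-1}$. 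The atoms of $\wh Q_{\bm\sigma}$ whose schema contains $Z_{m(t)}$ are exactly $\wh R_{\sigma_i}$ for $i \ge m(t)$, each of which also contains all of $Z_1,\dots,Z_{m(t)}$ in its schema; since $Z_{m(t)} \in \chi(t)$, the covering constraint for $Z_{m(t)}$ in $(\wh Q_{\bm\sigma})_{\chi(t)}$ is $\sum_{i \ge m(t)} \lambda^*_{\wh R_{\sigma_i}} \ge 1$. Now for any $j \le m(t)$, each atom $\wh R_{\sigma_i}$ with $i \ge m(t)$, restricted to $\chi'(t) \supseteq \{Z_1,\dots,Z_{m(t)}\}$, still contains $Z_j$, so $\sum_{e \ni Z_j}\lambda^*_e \ge \sum_{i \ge m(t)}\lambda^*_{\wh R_{\sigma_i}} \ge 1$. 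Thus $\lambda^*$ is feasible for $(\wh Q_{\bm\sigma})_{\chi'(t)}$, proving the claim and the proposition.

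\textbf{Main obstacle.} The one genuinely delicate point is the connectivity check: after padding, the ``subtree'' of $Z_j$ becomes the union $\bigcup_{i \ge j} S_i$, and one must exploit the chain structure of the $Z$-atoms — every $\wh R_{\sigma_i}$ with $i \ge j$ forces a bag containing $Z_j$ — to conclude the union stays connected. The width bound is then a short LP-feasibility argument resting on $\atoms(Z_{m(t)}) \subseteq \atoms(Z_j)$ whenever $j \le m(t)$, so that after padding the covering weight already present for $Z_{m(t)}$ is available for $Z_j$ as well.
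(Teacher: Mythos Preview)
Your proof is correct and takes a genuinely different route from the paper. The paper argues via variable elimination orders: starting from an optimal order $\bm\mu$, it maintains inductively that every atom present during elimination is $Z$-prefix-closed, and whenever eliminating some $Z_i$ would leave a non-prefix-closed residual atom (because $\bm U_{Z_i}^{\bm\mu}$ contains some $Z_j$ with $j>i$), it observes that $Z_{i+1},\dots,Z_j$ then occur only in that residual atom, so one can reorder and eliminate $Z_j,Z_{j-1},\dots,Z_i$ first without increasing the width. You instead work directly on tree decompositions: pad each bag to its $Z$-prefix closure, check connectivity using the nested-atom structure of $\wh Q_{\bm\sigma}$ (the atom $\wh R_{\sigma_i}$ forces a bag containing all of $Z_1,\dots,Z_i$, so the subtrees $S_j,\dots,S_k$ all meet $S_j$), and check width by the LP-feasibility observation that $\atoms(Z_{m(t)})\subseteq\atoms(Z_j)$ for $j\le m(t)$, so the weight already covering $Z_{m(t)}$ covers every padded $Z_j$ for free. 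Your argument is more self-contained (no appeal to the elimination-order equivalence) and the width step is especially clean; the paper's approach, on the other hand, makes explicit \emph{which} elimination order realizes prefix-closure, which can be convenient when one later wants to reason operationally about the plan.
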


\begin{proof}[Proof of Proposition~\ref{prop:prefix:td}]
    The proof is based on the equivalence of tree decompositions and {\em variable elimination orders}.  See Appendix~\ref{app:prelims} for some background.
    In particular, the proof works as follows.
    Pick some {\em optimal} variable elimination order $\bm\mu$. We inductively prove the following:
    \begin{claim}
        Throughout variable elimination, every atom is $Z$-prefix-closed.
    \end{claim}
    \begin{claim}
        For every variable $X\in\bm\mu$, the set $\bm U_X^{\bm\mu}$ is prefix-closed.
    \end{claim}
    Initially, both claims above hold. Now, suppose we pick a variable $X$ to eliminate.
    By induction, all atoms that contain $X$ are $Z$-prefix-closed, hence $\bm U_X^{\bm\mu}$ is $Z$-prefix-closed.
    Now in order to show that the newly added atom $R_X^{\bm\mu}(\bm U_X^{\bm\mu}-\{X\})$ can be assumed to be $Z$-prefix closed, we recognize two cases.
    If $X$ is not in $\{Z_1, \ldots, Z_k\}$, then $\bm U_X^{\bm\mu}-\{X\}$ is obviously $Z$-prefix-closed. On the other hand, suppose $X = Z_i$ for some $i \in [k]$.
    Let $j$ be the largest index where $Z_j \in \bm U_X^{\bm\mu}$.
    If $Z_j = Z_i$, then $\bm U_X^{\bm\mu}-\{X\}$ is $Z$-prefix closed.
    Otherwise, $\bm U_X^{\bm\mu}$ must contain $Z_i, Z_{i+1}, \ldots, Z_{j}$. Moreover, none of these
    variables can occur in any atom other than $R_X^{\bm\mu}(\bm U_X^{\bm\mu}-\{X\})$. Hence, there is an optimal variable
    elimination order $\bm\mu'$ that eliminates them in any order. We choose the order $Z_j, Z_{j-1}, \ldots, Z_i$.
\end{proof}

\begin{proof}[Proof of Lemma~\ref{lem:upper_bound_maintenance_multivariate}]
The proof of Lemma~\ref{lem:upper_bound_maintenance_multivariate}
relies on the proof of Theorem~\ref{thm:main_inserts} that is given in Section~\ref{sec:insert_only:ub:full}. 
Consider an optimal tree decomposition $(\calT, \chi)$ for $\wh{Q}$, i.e.~where
$\fw(\calT,\chi) = \fw(\wh Q)$.
Let $\delta_1 \wh\calD, \delta_2 \wh\calD, \ldots$ 
be a restricted canonical update stream for $\wh{Q}$.
Recall that each $\delta_\tau \wh\calD$ is the canonical partition of a corresponding update $\delta_\tau \ov\calD$ to $\ov \calD$.
Consider an update
$\delta_\tau\wh\calD = \canpart_N(\delta_\tau\ov\calD)$ in this stream.
First, assume that 
$\delta_\tau\ov\calD$ is an insert, i.e.~$\delta_\tau\ov\calD=\{+\ov R_j([\tau, \infty], \bm t_j)\}$ for some $j \in [k]$ and tuple $\bm t_j$ with schema $\bm X_j$.
In this case, the corresponding 
$\delta_\tau\wh\calD = \canpart_N(\delta_\tau\ov\calD)$ is a set of single-tuple inserts,
and we can apply these inserts to the materialized bags of the tree decomposition $(\calT, \chi)$,  just  
as explained in the proof of Theorem~\ref{thm:main_inserts}.
Truncation is more tricky.
In particular, suppose that $\delta_\tau\ov\calD$ is a truncation,
i.e.~$\delta_\tau\ov\calD = \{- \ov R_j([\tau', \infty], \bm t_j), + \ov R_j([\tau', \tau], \bm t_j)\}$ for some $\tau' < \tau$, $j \in [m]$ and tuple $\bm t_j$ over schema $\bm X_j$.
In this case, we update the materialized bags of $(\calT, \chi)$
using the updates given by $\canpart_N(\delta_\tau\ov\calD)$.
Some of these updates are inserts, namely $\canpart_N(\{+\ov R_j([\tau', \tau], \bm t_j)\})$,
and these are accounted for by the analysis
from the proof of Theorem~\ref{thm:main_inserts}.
In particular, they are covered by the $\tildeO{N^{{\fw(\wh Q)}}}$ total update time over all inserts
combined.
However, some of the updates in $\canpart_N(\delta_\tau\ov\calD)$ are deletes,
namely $\canpart_N(\{-\ov R_j([\tau', \infty], \bm t_j)\})$,
and in order to account for those, we will use amortized analysis.
In particular, we will deposit extra credits with every insert so that we can use them later to pay for the deletes during truncation.
We explain this argument in detail below.

We start with the following claim, which is very similar in spirit to Proposition~\ref{prop:trunc}:
\begin{claim}
    \label{clm:insert_indistinguishable}
Let $\delta_\tau \ov \calD = \{+\ov R_j([\tau, \infty], \bm t_j)\}$ be an insert into $\ov\calD$, for some $j \in [m]$ and tuple $\bm t_j$ over the schema $\bm X_j$.
For some $i \in [k] -\{j\}$, let $([\alpha, \beta], \bm t_i)$ be a tuple of another relation $\ov R_i$
that joins with the inserted tuple $([\tau, \infty], \bm t_j)$ of $\ov R_j$.
Then, we must have $\alpha < \tau$ and $\beta = \infty$.
Namely, the interval $[\alpha, \beta]$ strictly contains $[\tau, \infty]$.
\end{claim}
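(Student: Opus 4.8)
The plan is to mirror the timestamp-bookkeeping argument used in the proof of Proposition~\ref{prop:trunc}, exploiting the fact that the update stream for $\ov{\calD}$ is \emph{restricted}. Recall what this means: every tuple currently present in a relation $\ov{R}_i$ was created by an insert $+\ov{R}_i([\alpha,\infty], \bm{t}_i)$ at time $\alpha$, and may subsequently have been truncated \emph{at most once}, to $([\alpha,\tau''], \bm{t}_i)$, where $\tau''$ is the time at which that truncation occurred. So any tuple $([\alpha,\beta], \bm{t}_i)$ in $\ov{R}_i$ has its left endpoint $\alpha$ equal to the timestamp of the insert that created it, and its right endpoint $\beta$ is either $\infty$ (never truncated) or the timestamp $\tau''$ of the truncation that last modified it.

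First I would fix the setting: the update at time $\tau$ is the single insert $\delta_\tau\ov{\calD}=\{+\ov{R}_j([\tau,\infty],\bm{t}_j)\}$, so the joining tuple $([\alpha,\beta], \bm{t}_i)$ is already present in $\ov{\calD}^{(\tau-1)}$. Hence it was inserted at a time $\alpha\le\tau$; and since $i\ne j$ while the timestamp $\tau$ is consumed by an insert into $\ov{R}_j$ (not $\ov{R}_i$), we cannot have $\alpha=\tau$, so $\alpha<\tau$. Next, by Definition~\ref{defn:intersection-join}, "joins" means the intervals intersect, and $[\tau,\infty]\cap[\alpha,\beta]=[\max(\tau,\alpha),\beta]$, which is non-empty iff $\beta\ge\max(\tau,\alpha)\ge\tau$; so the join hypothesis already forces $\beta\ge\tau$. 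Finally, suppose for contradiction $\beta\ne\infty$. Then $([\alpha,\beta], \bm{t}_i)$ was produced by a truncation, so $\beta=\tau''$ is the time of that truncation; but a truncation fires only when the corresponding deletion is encountered in the stream, and at time $\tau$ the stream has not advanced past $\tau$, so $\tau''<\tau$, i.e. $\beta<\tau$ — contradicting $\beta\ge\tau$. Therefore $\beta=\infty$, and together with $\alpha<\tau$ this gives $[\alpha,\beta]=[\alpha,\infty]\supsetneq[\tau,\infty]$, as claimed.

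I do not expect a genuine obstacle here; the argument is pure timestamp bookkeeping. The only points that need care are (i) invoking the restricted-stream structure so that every finite interval endpoint is the timestamp of an \emph{already-processed} update — this is exactly what rules out $\beta=\tau''\ge\tau$ and what forces $\alpha\ne\tau$ — and (ii) stating explicitly that we reason about the database state right after the insert at time $\tau$, which is harmless since $\delta_\tau\ov{\calD}$ touches only $\ov{R}_j$ and hence leaves the joining tuple of $\ov{R}_i$ unchanged.
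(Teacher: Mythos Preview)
Your proposal is correct and takes essentially the same approach as the paper's proof: both are timestamp-bookkeeping arguments exploiting that intervals starting at or after $\tau$ cannot yet exist, and that a finite right endpoint would have to be a past truncation timestamp, contradicting the overlap condition $\beta\ge\tau$. Your version is more explicit about the restricted-stream invariants and about why $\alpha\ne\tau$ (using $i\ne j$), whereas the paper is terser, but the substance is identical.
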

\begin{proof}[Proof of Claim~\ref{clm:insert_indistinguishable}]
The above claim follows from the fact that $\tau$ is the current time, and any interval $[\alpha, \beta]$ where $\alpha \geq \tau$ would start in the future, hence it cannot exist yet in $\ov\calD$.
Moreover, in order for $[\alpha, \beta]$ to overlap with $[\tau, \infty]$, it must be the case
that $\beta \geq \tau$, which implies that $\beta = \infty$ because it is in the future.
This proves the claim.
\end{proof}

For the purpose of analyzing the algorithm, we distinguish between two types of tuples in $\ov \calD$: {\em finalized} tuples and {\em provisional} tuples. {\em Finalized} tuples cannot be deleted once they are inserted, while {\em provisional} tuples can be deleted later.
Also for the purpose of analyzing the algorithm, we slightly modify the way the algorithm
handles inserts. In particular, whenever we have an insert 
$+\ov R_j([\tau, \infty], \bm t_j)$ into $\ov\calD$,
we insert this tuple as {\em provisional} (since it might be deleted later
due to truncation) and we insert an additional tuple $([\tau, \tau], \bm t_j)$ into $\ov R_j$ that is {\em finalized}.
In particular, we know for sure that the tuple $([\tau, \tau], \bm t_j)$ will never be deleted later because the lifespan of the tuple $\bm t_j$ extends at least
until the current time $\tau$.
We also update $\wh\calD$ by applying the two corresponding inserts, namely $\canpart_N(\{+\ov R_j([\tau, \infty], \bm t_j)\})$
and $\canpart_N(\{+\ov R_j([\tau, \tau], \bm t_j)\})$.
Tuples in $\wh\calD$ inherit the provisional/finalized property from the underlying tuples in $\ov\calD$.
We make the following claim:

\begin{claim}
    \label{clm:provisional_finalized}
    The total time needed to apply the insert $\canpart_N(\{+\ov R_j([\tau, \infty], \bm t_j)\})$
    into $\wh \calD$ is within a $\tildeO{1}$ factor of the total time needed to apply the insert
    $\canpart_N(\{+\ov R_j([\tau, \tau], \bm t_j)\})$.
\end{claim}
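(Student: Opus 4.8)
\textbf{Proof plan for Claim~\ref{clm:provisional_finalized}.}
The plan is to reduce the statement to a per-insert comparison of AGM bounds, and then to exploit the structure of the segment tree together with Claim~\ref{clm:insert_indistinguishable}. First I would observe that both $\canpart_N(\{+\ov R_j([\tau, \infty], \bm t_j)\})$ and $\canpart_N(\{+\ov R_j([\tau, \tau], \bm t_j)\})$ unfold into only $\polylog(N)$ single-tuple inserts: within each of the $k!$ components $\wh Q_{\bm\sigma}$ the relation $\wh R_{\sigma_i}$ receives, for every bitstring $z$ of a node in $\canpart_N([\tau,N])$ (respectively $\canpart_N([\tau,\tau])$, which is a single leaf $b$), one tuple for each of the $\le |z|^{i-1}=\polylog(N)$ ways of writing $z$ as $z_1\circ\cdots\circ z_i$, and $|\canpart_N([\tau,N])|=\bigO{\log N}$. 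By the proof of Lemma~\ref{lem:insert_only_in_AGM}, applying one such single-tuple insert $t$ into the (calibrated) bags of the tree decomposition of $\wh Q_{\bm\sigma}$ costs, up to a constant factor, the AGM bound $\agm\!\big(\bigwedge_{R(\bm Y)\in\atoms(\wh Q_{\bm\sigma})}(R\ltimes t),\ \wh\calD\big)$ of the residual query. So it suffices to bound, for every component and every insert $t^\infty$ coming from $\canpart_N([\tau,N])$, the AGM bound of its residual query by $\tildeO{1}$ times the sum of the AGM bounds of the residual queries of the inserts coming from the leaf $b$.

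For the structural core, fix a component $\wh Q_{\bm\sigma}$ with $R_j=R_{\sigma_i}$, so $t^\infty=(z_1,\dots,z_i,\bm t_j)$ with $z:=z_1\circ\cdots\circ z_i\in\canpart_N([\tau,N])$. Writing $b$ for the bitstring of the leaf $[\tau,\tau]$ in $\calT_N$, one checks (as in the computation of $\canpart_8([2,8])=\{001,01,1\}$) that $\canpart_N([\tau,N])=\{b\}\cup\{p\circ 1 : p \text{ a proper prefix of } b \text{ and } p\circ 0 \text{ a prefix of } b\}$. If $z=b$ then $t^\infty$ already occurs among the leaf inserts and there is nothing to prove; otherwise $z=p\circ 1$ with $p\prec b$. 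Now I would use Claim~\ref{clm:insert_indistinguishable}: after calibrating the bags as in the algorithm of Theorem~\ref{thm:main_inserts}, every $\wh\calD$-tuple counted in a semijoin $\wh R_{\sigma_{i'}}\ltimes t^\infty$ extends to a genuine output tuple of $\wh Q_{\bm\sigma}$, hence by Theorem~\ref{thm:IJ:forward:reduction} descends from a tuple $([\gamma,\delta],\bm t_{i'})\in\ov R_{i'}$ that joins in $\ov Q$ with the freshly inserted $([\tau,N],\bm t_j)$, so by Claim~\ref{clm:insert_indistinguishable} $\delta=N$ and $\gamma<\tau$. Since $[\gamma,N]$ contains the intervals of both $z$ and $b$, the node of such a $\wh\calD$-tuple is, by Lemma~\ref{lem:interv_dec}, the unique node of $\canpart_N([\gamma,N])$ in prefix relation with $z$; and because $p\prec b$, that same tuple $([\gamma,N],\bm t_{i'})$ also contributes (through the unique node of $\canpart_N([\gamma,N])$ that is a prefix of $b$) a matching $\wh\calD$-tuple for some leaf insert $t^0$ built from a decomposition of $b$. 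Summing over the $\polylog(N)$ decompositions of $b$ then yields $\agm(\delta Q_{t^\infty},\wh\calD)\le\tildeO{1}\cdot\sum_{t^0}\agm(\delta Q_{t^0},\wh\calD)$, and summing over the $\polylog(N)$ tuples $t^\infty$ and over components proves the claim.

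The main obstacle is precisely this last matching step: a semijoin cardinality $|\wh R_{\sigma_{i'}}\ltimes t^\infty|$ counts \emph{all} tuples agreeing on the shared $Z$-prefix, including tuples that descend from finalized leaves or from earlier truncations (intervals that are not right-infinite), for which Claim~\ref{clm:insert_indistinguishable} says nothing, and moreover the decomposition of $b$ needed to "see" a given $([\gamma,N],\bm t_{i'})$ depends on $\gamma$, so no single leaf insert $t^0$ dominates $t^\infty$ atom-by-atom. I expect to resolve the first difficulty by working over the calibrated (semijoin-reduced) bags, so that every counted tuple does extend to an output and hence falls under Claim~\ref{clm:insert_indistinguishable}; and the second by passing to the sum over all $\polylog(N)$ decompositions of $b$ rather than a single leaf insert, absorbing the loss into the $\tildeO{1}$ factor together with the $\bigO{\log N}$ multiplicity of canonical partitions. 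If calibration is not by itself enough to kill the spurious semijoin tuples, a fallback is to bound the number of $\wh\calD$-tuples sharing any fixed $Z$-prefix by $\tildeO{1}$ times the number sharing the corresponding prefix of $b$, using that each node of $\calT_N$ has $\bigO{\log N}$ ancestors and that the canonical partitions of distinct intervals all containing $\tau$ pairwise meet in a single node.
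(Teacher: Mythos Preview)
Your plan is broadly on the right track, but it is missing the one structural ingredient the paper actually relies on: Proposition~\ref{prop:prefix:td}, which guarantees that the tree decomposition of each component $\wh Q_{\bm\sigma}$ can be chosen with every bag \emph{$Z$-prefix-closed} (if $Z_j$ lies in a bag then so do $Z_1,\ldots,Z_{j-1}$). The cost you are trying to bound is not the AGM bound of the residual of the full query $\wh Q_{\bm\sigma}$ but, per the algorithm of Theorem~\ref{thm:main_inserts}, the AGM bound of the residual of each \emph{bag query} $Q'_t$, whose atoms are restrictions of the $\wh R_{\sigma_{i'}}$ to the bag schema together with the child projections $P'_c$. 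Without $Z$-prefix-closedness, such a restriction or projection could keep $Z_2$ while dropping $Z_1$, and then the $Z$-part of a tuple no longer corresponds to any segment-tree node at all; the ancestor/containment reasoning you want to run simply does not apply. With $Z$-prefix-closedness, every restriction and every projection in the plan keeps a prefix $Z_1,\ldots,Z_m$, which corresponds to passing to an ancestor node, i.e.\ to a \emph{containing} interval. Claim~\ref{clm:insert_indistinguishable} then propagates through the entire query plan: any interval that meets $[\tau,N]$ at time $\tau$ contains $\tau$ and therefore also meets $[\tau,\tau]$, so the two inserts are indistinguishable at every bag, and the $\tildeO{1}$ slack only has to absorb the $\bigO{\log N}$ nodes of $\canpart_N([\tau,N])$ versus the single leaf, together with the $\polylog(N)$ decompositions of each.

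On your two flagged obstacles: the first (finalized leaves $[\tau',\tau']$ and earlier truncations $[\alpha,\tau'']$ with $\tau''<\tau$ polluting the semijoins) is not actually a problem even without calibration. Every such interval lies entirely in $[1,\tau-1]$, so no node of its canonical partition can be in prefix relation with any node whose interval meets $[\tau,N]$; these tuples contribute to \emph{neither} side of the comparison, and you do not need to appeal to calibrated bags to exclude them. The second obstacle (no single decomposition of $b$ dominates a given $t^\infty$ atom-by-atom) is the one that genuinely needs the $Z$-prefix-closed structure; your instinct to absorb it by summing over the $\polylog(N)$ decompositions is the right accounting, but it only goes through once you know that every relevant semijoin in every bag is on a $Z$-prefix. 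So the concrete gap is the omission of Proposition~\ref{prop:prefix:td}; once you invoke it and work at the level of bag queries rather than the full $\wh Q_{\bm\sigma}$, your matching argument becomes straightforward.
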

The above claim follows from Claim~\ref{clm:insert_indistinguishable}
along with the fact that every bag in the tree decomposition $(\calT, \chi)$
is $Z$-prefix closed (Proposition~\ref{prop:prefix:td}).
In particular, Claim~\ref{clm:insert_indistinguishable} basically says that the two intervals
$[\tau, \infty]$ and $[\tau, \tau]$ are ``indistinguishable'' as far as joins are concerned.
Proposition~\ref{prop:prefix:td} ensures that whenever we take a projection in the query plan corresponding to the tree decomposition
$(\calT, \chi)$, we are basically replacing these two intervals
with two new intervals that {\em contain} them.
Hence, the new intervals remain indistinguishable.

Now, we can analyze the total update time
as follows. Claim~\ref{clm:provisional_finalized} implies that the total
time needed to apply provisional inserts is upper bounded by the total time needed
to apply finalized inserts. Hence, the overall runtime is dominated by finalized inserts.
Moreover, every time we apply a provisional insert, we can deposit one credit
with every tuple that we insert into any bag of $(\calT, \chi)$, in order to pay for the potential delete
of this tuple in the future.
These credits cover the cost of all deletes of provisional tuples.
(Recall that finalized tuples cannot be deleted.)
Finally, the total time needed to apply all finalized inserts is bounded by
$\tildeO{N^{\fw(\wh Q)}}$, by Theorem~\ref{thm:main_inserts}.

We are left with explaining how to support constant-delay enumeration of $\wh{Q}_{\tau}(\wh\calD^{(\tau)})$.
To that end, we enumerate selections $\sigma_{Z_1 = z_1, \ldots , Z_k = z_k}\wh{Q}(\wh\calD^{(\tau)})$ for various assignments $Z_1 = z_1,$ $\ldots ,$ $Z_k = z_k$, as defined by~\eqref{eq:restricted_canonical_result}.
Luckily, for every component $\wh Q_{\bm\sigma}$ of $\wh Q$, there is one input relation
$\wh R_{\sigma_k}$ that contains the variables $\{Z_1, \ldots, Z_k\}$; see~\eqref{eq:multivariate_component}. Hence, every tree decomposition of $\wh Q_{\bm\sigma}$
must contain a bag $B$ that contains all the variables $Z_1, \ldots , Z_k$.
In order to enumerate a selection $\sigma_{Z_1 = z_1, \ldots , Z_k = z_k}\wh{Q}(\wh\calD^{(\tau)})$ with constant delay, we can use the bag $B$ as the root of the tree decomposition. 
\end{proof}

\subsection{Upper Bound for $\ivmpmd[Q]$}
\label{sec:forward_reduction:delta_version}
\change{
In the previous section, we proved Lemma~\ref{lmm:main_fully_dynamic} for the 
$\ivmpm[Q]$ problem for a given query $Q$. In this section, we extend the proof and the corresponding algorithm from the previous section in order to prove Lemma~\ref{lmm:main_fully_dynamic} for
the $\ivmpmd[Q]$ problem.}

Here are the main changes to Algorithm~\ref{alg:ivmpm} in order to support $\ivmpmd[Q]$, i.e.~support constant-delay enumeration of $\delta_\tau Q(\calD)$ after every update $\delta_\tau \calD$:
\begin{itemize}
    \item \textbf{Insertion} $\delta_\tau\calD = \{+R_j(\bm t)\}$:
    In addition to applying the insert $\{+\ov R_j([\tau, \infty], \bm t)\}$ into $\ov\calD$,
    we now also insert $+\ov R_j([\tau, \tau], \bm t)$.
    \item \textbf{Deletion} $\delta_\tau\calD=\{-R_j(\bm t)\}$:
    In addition to applying the update $\{- \ov R_j([\tau', \infty], \bm t), + \ov R_j([\tau', \tau], \bm t)\}$ to $\ov\calD$, we now also apply the insert
    $+\ov R_j([\tau, \tau], \bm t)$.
    \item \textbf{Enumeration} Suppose we want to enumerate $\delta_\tau Q(\calD)$
    after an update $\delta_\tau \calD = s R_j(\bm t)$ where $s \in \{+, -\}$. Then, we enumerate
    the following:
    \begin{align}
            \delta_\tau Q(\calD) =
            s
            \left(\pi_{\vars(Q)}\bigcup_{\substack{z_1, \ldots , z_{k} \in \{0,1\}^*\\ z_1 \circ \cdots \circ z_{k} = \canpart([\tau,\tau])}} \sigma_{Z_1 = z_1, \ldots , Z_k = z_k}\wh{Q}(\wh\calD)\right)
            \label{eq:delta_version:enumeration}
    \end{align}
\end{itemize}
We call the above modified version of Algorithm~\ref{alg:ivmpm} the {\em delta version} of Algorithm~\ref{alg:ivmpm}.
Before we prove that the delta version of Algorithm~\ref{alg:ivmpm} proves Lemma~\ref{lmm:main_fully_dynamic} for $\ivmpmd[Q]$, we need some preliminaries.
\begin{proposition}
    Given a discrete interval $[\alpha, \beta]$ for some $\alpha, \beta \in \mathbb{N}$,
    let $N$ be a power of 2 such that $N \geq \beta$.
    If $\canpart_N([\alpha, \beta])$ contains a singleton interval $[x, x]$ for some $x \in \mathbb{N}$, then $x$ is an endpoint of $[\alpha, \beta]$. In particular, either $x = \alpha$ or $x = \beta$.
    \label{prop:canpart:singleton}
\end{proposition}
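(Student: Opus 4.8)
The plan is to unfold the definition of the canonical partition and exploit the rigid structure of the leaves of a segment tree. Recall that $\canpart_N([\alpha,\beta])$ consists exactly of the \emph{maximal} nodes $y$ of the segment tree $\calT_N$ whose associated interval satisfies $y \subseteq [\alpha,\beta]$; since maximality of $y$ (when $y$ is not the root) is equivalent to $\parent(y) \not\subseteq [\alpha,\beta]$, membership of a leaf $[x,x]$ in $\canpart_N([\alpha,\beta])$ amounts to the two conditions $[x,x]\subseteq[\alpha,\beta]$ (i.e.\ $\alpha \le x \le \beta$) and $\parent([x,x]) \not\subseteq [\alpha,\beta]$. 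The hypothesis $N \ge \beta$ (together with $\alpha \ge 1$) ensures $[\alpha,\beta] \subseteq [1,N]$, so that $\calT_N$ indeed contains the leaf $[x,x]$ and this unfolding is legitimate.

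First I would identify the parent of an arbitrary leaf. Because $N$ is a power of two, the leaves of $\calT_N$ from left to right are $[1,1],\dots,[N,N]$, grouped into consecutive sibling pairs $\{[2i-1,2i-1],[2i,2i]\}$ under the node $[2i-1,2i]$. Hence $\parent([x,x]) = [x,x+1]$ when $x$ is odd, and $\parent([x,x]) = [x-1,x]$ when $x$ is even. (If $N = 1$, the single leaf $[1,1]$ is the root; then membership forces $\alpha = \beta = 1 = x$ and the claim is immediate, so I may assume $N \ge 2$ from here on.)

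Next I would run the two-case analysis on the parity of $x$. If $x$ is odd, then $\parent([x,x]) = [x,x+1]$; since we already know $\alpha \le x$, the condition $[x,x+1] \not\subseteq [\alpha,\beta]$ can only fail on the right side, i.e.\ $x+1 > \beta$, which combined with $x \le \beta$ gives $\beta = x$, so $x$ is the right endpoint of $[\alpha,\beta]$. If $x$ is even, then $\parent([x,x]) = [x-1,x]$; since $x \le \beta$, the condition $[x-1,x] \not\subseteq [\alpha,\beta]$ can only fail on the left side, i.e.\ $x-1 < \alpha$, which combined with $\alpha \le x$ gives $\alpha = x$, so $x$ is the left endpoint. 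In both cases $x \in \{\alpha,\beta\}$, which is exactly the assertion of the proposition.

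There is no real obstacle here; the statement is essentially a direct consequence of the definition once the parent of a leaf is pinned down. The only points requiring a little care are the bookkeeping of which neighbouring leaf a given leaf is paired with (handled by the parity split above), the degenerate case $N = 1$, and confirming via the hypothesis $N \ge \beta$ that the interval $[\alpha,\beta]$ lies inside the range $[1,N]$ covered by $\calT_N$, so that all nodes referred to actually exist.
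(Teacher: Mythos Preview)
Your proof is correct and takes essentially the same approach as the paper, which merely asserts that the proposition ``follows from the definition of a segment tree $\calT_N$ and canonical partition $\canpart_N$'' and illustrates with an example. You have made explicit the parity case-split on the leaf's parent that the paper leaves implicit.
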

Proposition~\ref{prop:canpart:singleton} follows from the definition of a segment tree $\calT_N$ and canonical partition $\canpart_N$. For example, consider the segment tree $\calT_8$ depicted in Figure~\ref{fig:segment-tree}. Let's take the segment $[2, 5]$. Its canonical partition
consists of $\{[2, 2], [3, 4], [5, 5]\}$. Note that only $[2, 2]$ and $[5, 5]$ are singleton
intervals. But $2$ and $5$ are the endpoints of the interval $[2, 5]$. The above proposition
says that we cannot find in the canonical partition of any segment $[\alpha, \beta]$ any singleton intervals that are not end points of $[\alpha, \beta]$.

The following theorem implies Lemma~\ref{lmm:main_fully_dynamic} for $\ivmpmd[Q]$:
\begin{theorem}
    The delta version of Algorithm~\ref{alg:ivmpm}
    solves the $\ivmpmd[Q]$ with amortized update time $\tildeO{N^{{\fw(\wh Q)} - 1}}$
    where ${\fw(\wh Q)}$ is the fractional hypertree width of $\wh{Q}$.
\end{theorem}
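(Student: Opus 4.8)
The plan is to follow the same three-step reduction chain that establishes Theorem~\ref{thm:main_fully_dynamic} for $\ivmpm[Q]$, replacing each step by its ``delta'' counterpart. First I would reduce $\ivmpmd[Q]$ to a delta-maintenance problem for the \univariate extension $\ov Q$, whose task is to enumerate, with constant delay after update $\delta_\tau\ov\calD$, exactly those output tuples $([z],\bm x)$ of $\ov Q$ for which $\tau$ is an endpoint of the interval $[z]$; by monotonicity of conjunctive queries all changes triggered by a single-tuple update $sR_j(\bm t)$ carry the same sign $s$, so the $\vars(Q)$-projections of these tuples, tagged with $s$, are precisely $\delta_\tau Q(\calD)$ as in~\eqref{eq:def:deltaQ}. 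Second I would reduce this to a delta-maintenance problem for $\wh Q$ whose task is to enumerate the tuples of $\wh Q(\wh\calD)$ whose $Z$-bitstring $z_1\circ\cdots\circ z_k$ equals the leaf $\canpart_N([\tau,\tau])$ of the segment tree; this is exactly the set in~\eqref{eq:delta_version:enumeration}. Third I would invoke Theorem~\ref{thm:main_inserts}, as in Lemma~\ref{lem:upper_bound_maintenance_multivariate}, to bound the total maintenance time for $\wh Q$.

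\textbf{The time bound.} The delta version of Algorithm~\ref{alg:ivmpm} differs from the original only in that every update additionally inserts a singleton-lifespan tuple $+\ov R_j([\tau,\tau],\bm t)$. Since $\canpart_N([\tau,\tau])$ is a single leaf, this insert translates into only $\tildeO{1}$ extra single-tuple inserts into $\wh\calD$ per update, so the final instance $\wh\calD$ still has size $\tildeO{N}$. These extra tuples are inserted as \emph{finalized} (they are never truncated, since a tuple's lifespan always extends at least through the current time), so the credit/amortization argument in the proof of Lemma~\ref{lem:upper_bound_maintenance_multivariate} --- deposit one credit per materialized tuple on each provisional insert, pay for later truncation-deletes with these credits, and observe via Claim~\ref{clm:insert_indistinguishable} and Proposition~\ref{prop:prefix:td} that the work for provisional inserts is dominated by that for finalized inserts --- goes through unchanged. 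Hence the total time to process all $N$ updates into the calibrated tree decomposition of $\wh Q$ is $\tildeO{N^{\fw(\wh Q)}}$ by Theorem~\ref{thm:main_inserts}, i.e. amortized $\tildeO{N^{\fw(\wh Q)-1}}$ per update.

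\textbf{Correctness of the delta enumeration.} For soundness, if a tuple of $\wh Q(\wh\calD)$ has $z_1\circ\cdots\circ z_k$ equal to the leaf $\canpart_N([\tau,\tau])$, then by Theorem~\ref{thm:IJ:forward:reduction} it witnesses an output tuple $([z],\bm x)$ of $\ov Q(\ov\calD)$ with $[\tau,\tau]\in\canpart_N([z])$, and Proposition~\ref{prop:canpart:singleton} then forces $\tau$ to be an endpoint of $[z]$, so $\bm x$ is a genuinely changed output tuple. For completeness, let $\bm x$ be a tuple whose membership in $Q(\calD)$ changes at time $\tau$, witnessed by an $\ov Q$ output tuple $([z],\bm x)$ with endpoint $\tau$, obtained by joining the current update tuple of $\ov R_j$ (lifespan $[\tau,\infty]$ on an insert, the truncated $[\tau',\tau]$ on a delete) with tuples of the other relations. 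By Proposition~\ref{prop:trunc} (resp.\ Claim~\ref{clm:insert_indistinguishable}) the inserted singleton tuple $\ov R_j([\tau,\tau],\bm t)$ joins with exactly the same tuples at the current time, hence produces an $\ov Q$ output tuple with $\vars(Q)$-projection $\bm x$ and lifespan $[\tau,\tau]$; its canonical partition is the single leaf $\canpart_N([\tau,\tau])$, which by Theorem~\ref{thm:IJ:forward:reduction} yields a $\wh Q$ output tuple whose $Z$-bitstring is that leaf and which projects to $\bm x$. This is precisely why the singleton inserts are needed: without them $\canpart_N([z])$ need not contain $[\tau,\tau]$ (e.g. $\canpart_N([1,N])=\{\varepsilon\}$) and the change would be missed.

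\textbf{Main obstacle.} The delicate point is the enumeration guarantee itself: the set in~\eqref{eq:delta_version:enumeration} is a union over the $k!$ components $\wh Q_{\bm\sigma}$ and, within each, over the $\tildeO{1}$ ways of splitting the $O(\log N)$-bit leaf string into $z_1,\dots,z_k$, while the required delay is \emph{constant and not amortized}. Each individual selection $\sigma_{Z_1=z_1,\dots,Z_k=z_k}\wh Q(\wh\calD)$ is enumerable with constant delay by rooting the tree decomposition of $\wh Q_{\bm\sigma}$ at a bag containing $\{Z_1,\dots,Z_k\}$ --- such a bag exists since these variables all occur in the atom $\wh R_{\sigma_k}$ --- exactly as in the proof of Lemma~\ref{lem:upper_bound_maintenance_multivariate}; the real work is to merge these $\tildeO{1}$ constant-delay streams into a single repetition-free stream, which I expect to handle by the same disjointness/lookup argument already used for $\ivmpm[Q]$ in the proof of Lemma~\ref{lem:reduction_ivmtilde_ivmstar}. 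A secondary, genuinely fiddly step is the endpoint bookkeeping: matching ``$\tau$ is the left endpoint of $[z]$'' with ``$\bm x$ enters $Q(\calD)$'' and ``$\tau$ is the right endpoint'' with ``$\bm x$ leaves $Q(\calD)$'', including the off-by-one in the interval convention, so that the overall sign $s$ in~\eqref{eq:delta_version:enumeration} is correct; this is routine because all changes within one update share the sign $s$.
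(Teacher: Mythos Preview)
Your proposal is correct and follows essentially the same approach as the paper. The paper resolves your ``main obstacle'' by proving the size equality $|\wh{Q}_{\delta_\tau}(\wh\calD^{(\tau)})| = |\delta_\tau Q(\calD)|$: at time $\tau$ the only tuples in $\ov\calD$ whose interval has $\tau$ as an endpoint are the freshly inserted singleton $([\tau,\tau],\bm t)$ and at most one other $([\alpha,\beta],\bm t)$ with the \emph{same} $\bm t$ and $\tau\in\{\alpha,\beta\}$, and both contribute the same leaf tuple to $\wh Q_{\delta_\tau}$, so the projections onto $\vars(Q)$ are repetition-free and the $\tildeO{1}$ selection streams can be concatenated (after a $\tildeO{1}$ setup absorbed into the update cost) to yield constant-delay enumeration.
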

\begin{proof}
    The proof is very similar to the proof of Lemma~\ref{lmm:main_fully_dynamic}
    from Section~\ref{sec:forward_reduction} for $\ivmpm[Q]$.
    For every insert or delete $\delta_\tau\calD = \{s R_j(\bm t)\}$ where $s \in \{+, -\}$, we now have to insert
    an extra tuple $+\ov R_j([\tau, \tau], \bm t)$ into $\ov\calD$. This only increases the size of 
    $\ov\calD$ by a constant factor. Hence, it doesn't affect the overall runtime.

    Next, we prove correctness. In particular, our target is to prove that~\eqref{eq:delta_version:enumeration} does indeed compute $\delta_\tau Q(\calD)$, as defined
    by~\eqref{eq:def:deltaQ}. To that end, we define the following:
    \begin{align}
        \label{eq:univariate_restricted_output:delta}
\ov{Q}_{\delta_\tau}(\ov\calD^{(\tau)}) &\defeq \sigma_{[Z] = [\tau, \tau]}\ov{Q}(\ov\calD^{(\tau)})\\
        \label{eq:restricted_canonical_result:delta}
                \wh{Q}_{\delta_\tau}(\wh\calD^{(\tau)}) &\defeq \bigcup_{\substack{z_1, \ldots , z_{k} \in \{0,1\}^*\\ z_1 \circ \cdots \circ z_{k} = \canpart([\tau,\tau])}} \sigma_{Z_1 = z_1, \ldots , Z_k = z_k}\wh{Q}(\wh\calD^{(\tau)})
    \end{align}
    Compare the above to~\eqref{eq:univariate_restricted_output} and~\eqref{eq:restricted_canonical_result} respectively.
    The correctness proof goes through two steps. First, we prove that
    \begin{align*}
        \delta_\tau Q(\calD) = s\left(\pi_{\vars(Q)}\ov Q_{\delta_\tau}(\ov \calD^{(\tau)})\right)
    \end{align*}
    where recall that $s \in\{+, -\}$ depending on the update $\delta_\tau \calD = \{s R_j(\bm t)\}$. This follows from the fact that $\tau$ is the unique timestamp for the update 
    $\delta_\tau \calD$. In particular, all corresponding updates $\ov s \ov R_j([\alpha, \beta], \bm t)$ in $\delta_\tau\ov\calD$ (where $\ov s\in\{+, -\}$) must have $\tau \in [\alpha, \beta]$. Moreover, no other tuple
    $([\alpha', \beta'], \bm t')$ in $\ov R_j$ can have $\tau \in \{\alpha, \beta\}$ (because the timestamp $\tau$ is unique to this update).
    
    The second step of the correctness proof is to prove that
    \begin{align}
        \pi_{\vars(Q)} \ov{Q}_{\delta_\tau}(\ov\calD^{(\tau)}) = 
        \pi_{\vars(Q)} \wh{Q}_{\delta_\tau}(\wh\calD^{(\tau)})
        \label{eq:from_ovQ_to_whQ:delta}
    \end{align}
    The containment $\subseteq$ in~\eqref{eq:from_ovQ_to_whQ:delta} holds
    thanks to the extra intervals $+\ov R_j([\tau, \tau], \bm t)$ that we are inserting
    to $\ov\calD$ in the delta version of Algorithm~\ref{alg:ivmpm}.
    The other containment $\supseteq$ in~\eqref{eq:from_ovQ_to_whQ:delta}
    follows from Proposition~\ref{prop:canpart:singleton}.
    In particular,~\eqref{eq:restricted_canonical_result:delta} only selects the leaf corresponding to $[\tau, \tau]$ in
    the segment tree, but Proposition~\ref{prop:canpart:singleton} tells us that this
    leaf could not belong to the canonical partition of any interval $[\alpha, \beta]$ unless
    $\tau \in \{\alpha, \beta\}$. Such intervals could only have been added to $\ov\calD$
    during the update $\delta_\tau\calD$ because the timestamp $\tau$ is unique to this update.

    Finally, we show that constant-delay enumeration for $\wh{Q}_{\delta_\tau}(\wh\calD^{(\tau)})$
    implies constant-delay enumeration for $\delta_\tau Q(\calD)$. To that end, we prove:
    \begin{align*}
        |\wh{Q}_{\delta_\tau}(\wh\calD^{(\tau)})| = |\delta_\tau Q(\calD)|
    \end{align*}
    We also prove that in two steps, namely
    \begin{align}
        |\ov{Q}_{\delta_\tau}(\ov\calD^{(\tau)})| &= |\delta_\tau Q(\calD)|\label{eq:delta_enumeration_1}\\
        |\wh{Q}_{\delta_\tau}(\wh\calD^{(\tau)})| &= |\ov Q_{\delta_\tau }(\ov\calD^{(\tau)})|\label{eq:delta_enumeration_2}
    \end{align}
    Equation~\eqref{eq:delta_enumeration_1} follows from~\eqref{eq:univariate_restricted_output:delta} along with the definition of $\delta_\tau Q(\calD)$.
    Equation~\eqref{eq:delta_enumeration_2} follows from the fact that the delta version
    of Algorithm~\ref{alg:ivmpm} ensures the following:
    for every timestamp $\tau$, there can be at most two tuples in $\ov\calD$ where $\tau$
    appears as an end point of an interval. The two tuples must have the form
    $([\tau, \tau], \bm t)$ and $([\alpha, \beta], \bm t)$ for the {\em same} $\bm t$
    where either $\alpha = \tau$ or $\beta = \tau$.
    When we take the canonical partition, both tuples are going to produce the {\em same}
    tuple in $\wh{Q}_{\delta_\tau}(\wh\calD^{(\tau)})$.
\end{proof}

\subsection{Upper bound in terms of $|\calD|$ instead of $N$}
\label{sec:forward_reduction:calD}
\change{
The upper bound in Lemma~\ref{lmm:main_fully_dynamic} is written in terms of the length $N$ of the update stream. In this section, we show how to use Lemma~\ref{lmm:main_fully_dynamic} as a black box in order to prove Theorem~\ref{thm:main_fully_dynamic}, which gives a stronger bound
where $N$ is replaced by the current database size $|\calD^{(\tau)}|$ at the $\tau$-th update.
}
Note that $|\calD^{(\tau)}| \leq N$ and $|\calD^{(\tau)}|$ could be unboundedly smaller than $N$ (specifically
when the stream contains many deletes before the timestamp $\tau$).
\begin{citedthm}[\ref{thm:main_fully_dynamic}]
    For any query $Q$,
    both $\ivmpm[Q]$ and $\ivmpmd[Q]$ can be solved with $\tildeO{|\calD^{(\tau)}|^{\fw(\wh Q)-1}}$ amortized update time \change{and non-amortized constant enumeration delay}, where $\wh Q$ is the \multivariate extension of $Q$, and $|\calD^{(\tau)}|$ is the current database size at update time $\tau$.
\end{citedthm}
Having an amortized update time of $\tildeO{|\calD^{(\tau)}|^{\fw(\wh Q)-1}}$ means that all $N$ updates
combined take time \linebreak $\tildeO{\sum_{\tau \in [N]}|\calD^{(\tau)}|^{\fw(\wh Q)-1}}$.
\begin{proof}
The proof uses Lemma~\ref{lmm:main_fully_dynamic} as a black box. In particular, we will use the 
same maintenance algorithm from the proof of Lemma~\ref{lmm:main_fully_dynamic}.
However, every now and then, we will ``reset'' the algorithm by taking the current database $\calD^{(\tau)}$,
creating a new empty database from scratch, and inserting every tuple in $\calD^{(\tau)}$ into the new database as a single-tuple update.
After that, we process the next $\frac{1}{2}|\calD^{(\tau)}|$ updates\footnote{For simplicity, we assume that
$|\calD^{(\tau)}|$ is a positive even integer.} from the original update sequence,
and then we apply another reset.
The total number of updates between the last two resets is thus $\frac{3}{2}|\calD^{(\tau)}|$.
By Lemma~\ref{lmm:main_fully_dynamic}, these $\frac{3}{2}|\calD^{(\tau)}|$ updates together
take time $\tildeO{|\calD^{(\tau)}|^{\fw(\wh Q)}}$.
Moreover, note that between the two resets, the database size cannot change by more than a factor of 2.
This is because we only have $\frac{1}{2}|\calD^{(\tau)}|$ updates from the original stream that
are applied to $\calD^{(\tau)}$.
In the two extreme cases, all of these updates
are inserts or all of them are deletes.

Consider the original update stream of length $N$. Initially, $|\calD^{(0)}| = 0$.
While using the above scheme to process the stream, suppose that we end up performing $k$
resets in total at time stamps $\tau_1 \defeq 0 < \tau_2 < \cdots < \tau_k \leq N$.
WLOG assume that $\tau_k = N$.
From the above construction, for any $i \in [k-1]$, we have the following:
\begin{align}
    \tau_{i+1}-\tau_i &= \frac{1}{2}|\calD^{(\tau_i)}|\label{eq:thm:main_fully_dynamic:1}\\
    \frac{1}{2}|\calD^{(\tau_i)}|&\leq |\calD^{(\tau)}| \leq\frac{3}{2}|\calD^{(\tau_i)}|, \quad \forall \tau \in [\tau_i, \tau_{i+1})
    \label{eq:thm:main_fully_dynamic:2}
\end{align}
Eq.~\eqref{eq:thm:main_fully_dynamic:2} holds because between $\tau_i$ and $\tau_{i+1}$,
we are making $\frac{1}{2}|\calD^{(\tau_i)}|$ updates to $\calD^{(\tau_i)}$, which in the two extreme cases
are either all inserts or all deletes.
As noted before, from the beginning of the $\tau_i$-reset until the beginning of the $\tau_{i+1}$-reset,
we process $\frac{3}{2}|\calD^{(\tau_i)}|$ updates, which together take time $\tildeO{|\calD^{(\tau_i)}|^{\fw(\wh Q)}}$, thanks to Lemma~\ref{lmm:main_fully_dynamic}.
To bound the total runtime, we sum up over $i \in [k-1]$:
(We don't actually need to perform the last reset at time $\tau_k = N$ since there are no subsequent updates anyhow.)
\begin{align}
    \sum_{i \in [k-1]}|\calD^{(\tau_i)}|^{\fw(\wh Q)}
    =&\sum_{i \in [k-1]}\sum_{\tau \in [\tau_i, \tau_{i+1})}
    \frac{1}{\tau_{i+1}-\tau_i}\cdot
    |\calD^{(\tau_i)}|^{\fw(\wh Q)}\\
    =&2\cdot\sum_{i \in [k-1]}\sum_{\tau \in [\tau_i, \tau_{i+1})}
    |\calD^{(\tau_i)}|^{\fw(\wh Q)-1}\label{eq:thm:main_fully_dynamic:3}\\
    \leq&2^{\fw(\wh Q)}\cdot\sum_{i \in [k-1]}\sum_{\tau \in [\tau_i, \tau_{i+1})}
    |\calD^{(\tau)}|^{\fw(\wh Q)-1}\label{eq:thm:main_fully_dynamic:4}\\
    =&\bigO{1}\cdot \sum_{\tau \in [N]}|\calD^{(\tau)}|^{\fw(\wh Q)-1}
\end{align}
Equality~\eqref{eq:thm:main_fully_dynamic:3} follows from Eq.~\eqref{eq:thm:main_fully_dynamic:1},
while inequality~\eqref{eq:thm:main_fully_dynamic:4} follows from Eq.~\eqref{eq:thm:main_fully_dynamic:2}.
In particular, the total time needed to process all $N$ updates in the stream is
$\tildeO{\sum_{\tau \in [N]}|\calD^{(\tau)}|^{\fw(\wh Q)-1}}$.
This implies that the amortized time per update is $\tildeO{|\calD^{(\tau)}|^{\fw(\wh Q)-1}}$, as desired.
\end{proof}

\subsection{Lower Bound for $\ivmpmd[Q]$}
\label{sec:backward_reduction}

The goal of this section is to prove Theorem~\ref{thm:fully_dynamic_lower_bound},
which gives a lower bound on the $\ivmpmd$ update time of a query $Q$ in terms of a lower bound on the
static evaluation time for any component $\wh Q_{\bm\sigma}$ in its \multivariate extension.
\change{
Specifically, we will prove the following stronger lower bound:
\begin{lemma}
    \label{lmm:fully_dynamic_lower_bound}
    Let $Q$ be a query and $\wh Q_{\bm \sigma}$ any component of its \multivariate extension. For any constant $\gamma > 0$, $\ivmpmd[Q]$ cannot be solved with amortized update time
    \change{$\tildeO{N^{\lb(\wh Q_{\bm\sigma})-1-\gamma}}$}.
\end{lemma}
The above lemma immediately implies Theorem~\ref{thm:fully_dynamic_lower_bound}
since the number of single-tuple updates $N$ is larger than the size of the database $|\calD|$, and could be unboundedly larger.
}
In order to prove Lemma~\ref{lmm:fully_dynamic_lower_bound}, we give a reduction from the static evaluation of $\wh Q_{\bm\sigma}$ to the $\ivmpmd$ problem for
$Q$. Example~\ref{ex:backward:reduction} introduces the main idea behind this reduction.

For the rest of this section,
we fix a query $Q$, its 
\univariate extension $\ov{Q}$ (Definition~\ref{defn:univariate}), and
one component $\wh Q_{\bm\sigma}$ in its \multivariate extension $\wh{Q}$ (Definition~\ref{defn:multivariate}).
Moreover, we fix an arbitrary database instance $\wh \calD_{\bm\sigma}$ for $\wh Q_{\bm\sigma}$.
Algorithm~\ref{alg:backward_reduction} describes our reduction from $\eval[\wh Q_{\bm \sigma}]$ to $\ivmpmd[Q]$. Theorem~\ref{thm:IJ:backward:reduction} already gives a reduction
from $\eval[\wh Q_{\bm \sigma}]$ to the (static) evaluation of $\ov Q$ over some  database instance
$\ov \calD\defeq \iv(\wh\calD_{\bm\sigma})$ that is given by Definition~\ref{defn:interval-version} and satisfies the following properties:
\begin{align*}
    |\ov \calD| &= |\wh \calD_{\bm\sigma}|\\
    H_k(\ov Q(\ov \calD)) &= \wh Q_{\bm\sigma}(\wh \calD_{\bm\sigma})\\
    |\ov Q(\ov \calD)| &= |\wh Q_{\bm\sigma}(\wh \calD_{\bm\sigma})|
\end{align*}
(Recall the definition of $H_k$ from Theorem~\ref{thm:IJ:backward:reduction}.)
The above properties follow from Proposition~\ref{prop:iv_same_input_size}
and Theorem~\ref{thm:IJ:backward:reduction}, which also imply that
$\ov \calD$ can be constructed in linear time in $|\wh \calD_{\bm\sigma}|$,
and $H_k(\ov Q(\ov \calD))$ can be computed in linear time in $|\ov Q(\ov \calD)| = |\wh Q_{\bm\sigma}(\wh \calD_{\bm\sigma})|$.
It remains to show how to reduce the (static) evaluation of $\ov Q$ over $\ov\calD$ into $\ivmpmd[Q]$. This is shown by the following Lemma.

\begin{algorithm}[th!]
    \caption{Reduction from $\eval[\wh Q_{\bm \sigma}]$ to $\ivmpmd[Q]$}
    \label{alg:backward_reduction}
    \begin{algorithmic}
        \State {\textbf{Inputs}}
        \begin{itemize}
            \item A query $\wh Q_{\bm \sigma}$ \Comment{Equation~\eqref{eq:multivariate_component}}
            \item An {\em arbitrary} database instance $\wh \calD_{\sigma}$ for $\wh Q_{\bm \sigma}$
        \end{itemize}
        \\\hrulefill
        \State {\textbf{Algorithm}}
        \begin{itemize}
            \item Construct $\ov Q$ \Comment{Equation~\eqref{eq:univariate}}
            \item $\ov \calD \gets \iv(\wh \calD_{\bm \sigma})$
            \Comment{Definition~\ref{defn:interval-version}}
            \item Initialize $\calD \defeq \left(R_j\right)_{j\in[k]}$ to be empty
            \Comment{Database for $Q$}
            \item \Comment{Sort tuples in $\ov \calD$ by their end points}\begin{align*}L \gets &\{(\alpha, ([\alpha, \beta], \bm t_j)) \mid
                ([\alpha, \beta], \bm t_j) \in \ov R_j\text{ for some $j \in [k]$}\}\\
                \cup
                &\{(\beta, ([\alpha, \beta], \bm t_j)) \mid
                ([\alpha, \beta], \bm t_j) \in \ov R_j\text{ for some $j \in [k]$}\}
            \end{align*}
            \item $\Delta \gets \emptyset$
            \item For each $(\tau, ([\alpha, \beta], \bm t_j)) \in L$ ordered by $\tau$:
            \begin{itemize}
                \item if $\tau = \alpha$,
                \begin{itemize}[label={}]
                    \item insert $\bm t_j$ into $R_j$
                \end{itemize}
                \item Otherwise,
                \begin{itemize}[label={}]
                    \item delete $\bm t_j$ from $R_j$
                \end{itemize}
                \item $\Delta \gets \Delta \cup (\{\tau\}\times\delta_\tau Q(\calD))$\Comment{Invoke constant-delay enumeration from $\ivmpmd[Q]$ to get $\delta_\tau Q(\calD)$}
            \end{itemize}
            \item $\ov Q(\ov \calD)\gets\{([\tau, \tau'], \bm t) \mid (\tau, +Q(\bm t)) \in \Delta
            \wedge (\tau', -Q(\bm t)) \in \Delta\}$
            \item \Return $\wh Q_{\bm\sigma}(\wh \calD_{\bm\sigma}) = H_k(\ov Q(\ov \calD))$
            \Comment{Theorem~\ref{thm:IJ:backward:reduction}}
        \end{itemize}
    \end{algorithmic}
\end{algorithm}

\begin{lemma}
    \label{lem:eval_univariate_to_maintain_join}
    If $\ivmpmd[Q]$ can be solved with $f(N)$ amortized update time 
    for any update stream of length $N$ and some function $f$, 
    then $\eval[\ov{Q}]$ can be solved in 
    $\tildeO{|\ov\calD|\cdot f(2|\ov\calD|) + |\ov Q(\ov \calD)|}$ time for any database $\ov\calD$.
\end{lemma}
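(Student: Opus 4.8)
The plan is to turn a static database instance $\ov\calD$ for the \univariate extension $\ov Q$ into a stream of $2|\ov\calD|$ single-tuple updates for $Q$, feed it to the assumed $\ivmpmd[Q]$ oracle, collect the reported delta tuples, and reassemble $\ov Q(\ov\calD)$ from them. This is exactly what Algorithm~\ref{alg:backward_reduction} does in its inner loop (after the call to $\iv$), so the lemma amounts to proving that inner loop correct and bounding its running time. First I would interpret the interval $[Z]$-value of each tuple as a lifespan in the sense of Section~\ref{sec:forward_reduction}: a tuple $([\alpha,\beta],\bm t_j)\in\ov R_j$ corresponds to the tuple $\bm t_j$ being present in relation $R_j$ of $\calD$ exactly during the time window $[\alpha,\beta]$. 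I build the sorted list $L$ of events, where each tuple $([\alpha,\beta],\bm t_j)$ contributes two events: an \emph{insert} event keyed by $\alpha$ and a \emph{delete} event keyed by $\beta$. Sorting $L$ costs $\tildeO{|\ov\calD|}$. Then I walk through $L$ in increasing order of the key $\tau$: on an insert event I issue $+R_j(\bm t_j)$ to the oracle, on a delete event I issue $-R_j(\bm t_j)$; after each such update I invoke the oracle's constant-delay enumeration of $\delta_\tau Q(\calD)$ and append the reported signed tuples (tagged with the current time $\tau$) to a set $\Delta$.

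The correctness argument is the heart of the proof. I would argue that at the moment just after processing all events with key $\le\tau$, the database $\calD$ equals $\{\bm t_j \in R_j \mid ([\alpha,\beta],\bm t_j)\in\ov R_j \text{ with } \alpha\le\tau<\beta\}$ for each $j$; more precisely, after the last event with key exactly $\tau$ is processed, a tuple $\bm t$ is in $Q(\calD)$ iff there is a choice of witnessing tuples in $\ov\calD$ whose intervals all contain $\tau$, i.e.\ $\bm t$ is an $\vars(Q)$-projection of an output tuple $([z],\bm t)\in\ov Q(\ov\calD)$ with $\tau\in[z]$. Consequently, an output tuple $([z],\bm t)$ of $\ov Q(\ov\calD)$ (with $[z]$ an inclusion-maximal such interval) appears in $Q(\calD)$ precisely over a contiguous run of timestamps, entering at the left endpoint of $[z]$ and leaving just after the right endpoint. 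Here one must be slightly careful about how endpoints of distinct intervals coincide and about the convention that an update $\delta_\tau$ applies deletes before inserts; since we process events in sorted order and the oracle reports the net change $\delta_\tau Q(\calD)$ after each single-tuple update, each run of $\bm t$ in the output is bracketed by exactly one $+Q(\bm t)$ report and one $-Q(\bm t)$ report (a tuple that is present at the final timestamp will never be deleted in $\ov\calD$ because every interval endpoint is bounded, so in fact $\iv(\wh\calD_{\bm\sigma})$ has only finite intervals; alternatively, add a sentinel final timestamp that closes all remaining intervals). Matching the $+$ and $-$ reports in $\Delta$ then recovers each output tuple $([\tau,\tau'],\bm t)$ of $\ov Q(\ov\calD)$, which is the line ``$\ov Q(\ov\calD)\gets\{([\tau,\tau'],\bm t)\mid\ldots\}$'' in Algorithm~\ref{alg:backward_reduction}.

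For the running time: the number of single-tuple updates issued is $2|\ov\calD|$, so by the assumed amortized bound the oracle spends $\bigO{2|\ov\calD|\cdot f(2|\ov\calD|)}$ total time on updates. The constant-delay enumeration after update $\tau$ costs $\bigO{1+|\delta_\tau Q(\calD)|}$, and since each output tuple of $\ov Q(\ov\calD)$ accounts for exactly one $+$ report and one $-$ report, $\sum_\tau|\delta_\tau Q(\calD)| = 2|\ov Q(\ov\calD)|$; building $\Delta$ and then matching $+/-$ pairs (e.g.\ via hashing on $\bm t$) is $\tildeO{|\ov Q(\ov\calD)|}$. Adding the $\tildeO{|\ov\calD|}$ sorting cost gives total time $\tildeO{|\ov\calD|\cdot f(2|\ov\calD|) + |\ov Q(\ov\calD)|}$, as claimed. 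The main obstacle I anticipate is the bookkeeping in the correctness step: ensuring that coincident interval endpoints across different tuples do not cause a $+Q(\bm t)$ and a $-Q(\bm t)$ to be issued ``simultaneously'' in a way that the oracle could coalesce and hide, and handling the delete-before-insert convention within a single timestamp. This is resolved by noting that in our reduction every event has a \emph{distinct} timestamp is \emph{not} needed — rather, we only need that the net change reported at each $\tau$ is consistent, and that a tuple present at a timestamp $\tau$ but absent immediately before (or after) yields a detectable $+$ (or $-$); a clean way to guarantee clean bracketing is to first re-scale all endpoints so they are distinct (multiply by a large constant and perturb the two events of each tuple), which preserves all intersection patterns and at most doubles the number of distinct timestamps.
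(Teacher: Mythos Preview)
Your proposal is correct and follows essentially the same approach as the paper's proof: build the event list $L$, sweep in time order issuing $2|\ov\calD|$ single-tuple inserts/deletes to the $\ivmpmd[Q]$ oracle, harvest the reported deltas, and pair up $+Q(\bm t)$ with the subsequent $-Q(\bm t)$ to recover each output tuple of $\ov Q(\ov\calD)$; the running-time accounting is identical. Your discussion of coincident endpoints and the perturbation fix is a useful elaboration that the paper's (very terse) proof does not spell out.
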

Assuming Lemma~\ref{lem:eval_univariate_to_maintain_join} holds, Theorem~\ref{thm:fully_dynamic_lower_bound} follows by taking $f(N)=N^{\omega(\wh Q_{\bm\sigma})-1-\gamma}$ for some constant $\gamma > 0$.
(Recall Definition~\ref{defn:fw_lb}.)
By Lemma~\ref{lem:eval_univariate_to_maintain_join}, this would lead into an algorithm
for $\eval[\ov Q]$ with runtime $\tildeO{|\ov \calD|^{\omega(\wh Q_{\bm\sigma})-\gamma} + |\ov Q(\ov \calD)|}$, and by extension, an algorithm for $\eval[\wh Q_{\bm\sigma}]$ with runtime
$\tildeO{|\wh \calD_{\bm\sigma}|^{\omega(\wh Q_{\bm\sigma})-\gamma} + |\wh Q_{\bm\sigma}(\wh \calD_{\bm\sigma})|}$. This would contradict the definition of $\omega(\wh Q_{\bm\sigma})$, thus proving that our choice of $f(N)$ is not possible.
Next, we prove Lemma~\ref{lem:eval_univariate_to_maintain_join}.
\begin{proof}[Proof of Lemma~\ref{lem:eval_univariate_to_maintain_join}]
    Following Algorithm~\ref{alg:backward_reduction}, the reduction works as follows.
    We collect a list $L$ of all pairs $(\alpha, ([\alpha, \beta], \bm t_j))$
    and $(\beta, ([\alpha, \beta], \bm t_j))$ for every tuple $([\alpha, \beta], \bm t_j)$
    that occurs in any relation $\ov R_j$ in $\ov \calD$.
    We go through pairs $(\tau, ([\alpha, \beta], \bm t_j))$ in the list $L$ in order of $\tau$.
    If $\tau = \alpha$, then we insert $\bm t_j$ into $R_j$.
    Otherwise, we delete $\bm t_j$ from $R_j$.
    In total, we have $2|\ov\calD|$ updates and the amortized time per update is $f(2|\ov\calD|)$.
    
    After every insert or delete, we invoke the constant-delay enumeration procedure
    for $\ivmpmd[Q]$ in order to retrieve $\delta_\tau Q(\calD)$, i.e. the output tuples of $Q$
    that are inserted or deleted due to the insert or delete in $R_j$.
    Whenever constant-delay enumeration reports a delete $-Q(\bm t)$ at time $\tau'$
    where the the same tuple $\bm t$ was reportedly inserted $+Q(\bm t)$ at a previous time $\tau$,
    we report the tuple $([\tau, \tau'], \bm t)$ as an output tuple of $\ov Q$.
    The overall time we spend on enumeration is $\tildeO{|\ov Q(\ov\calD)|}$.
\end{proof}

We conclude this section with the following remark about the prospects of
extending Theorem~\ref{thm:fully_dynamic_lower_bound}
from $\ivmpmd[Q]$ to $\ivmpm[Q]$.
\begin{remark}
    \label{rmk:no_full_lower_bound}
    It is not clear how to extend the proof of Theorem~\ref{thm:fully_dynamic_lower_bound},
    which holds for the $\ivmpmd[Q]$ problem, so that it
    also works  for the $\ivmpm[Q]$ problem.
    The main issue lies in the enumeration step.
    In particular, suppose that the IVM oracle that we have in the proof is for $\ivmpm[Q]$.
    Per the above reduction, we invoke constant-delay enumeration after every update
    in $Q$ in order to extract the output of $\ov Q$. However, now the oracle enumerates the {\em full} output of $Q$ after every update.
    As a result, the same output tuple of $Q$ might be reported many times throughout its lifetime.
    This in turn would prevent us from bounding the total enumeration time with $\tildeO{|\ov Q(\ov\calD)|}$, as we did in the original proof.
    Proving an analog of Theorem~\ref{thm:fully_dynamic_lower_bound} for $\ivmpm[Q]$
    remains an open problem.
\end{remark}

\section{Missing Details in Section~\ref{sec:results}}
\label{app:results}
In this section, we first prove the following statements:

\begin{proposition}
\label{prop:q-hierarchical_multivariate}
A query $Q$ is hierarchical if and only if its \multivariate extension $\wh Q$ is acyclic.  
\end{proposition}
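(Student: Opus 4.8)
The plan is to use the classical Beeri--Fagin--Maier--Yannakakis characterization of $\alpha$-acyclicity: a hypergraph is acyclic (has a join tree) if and only if it is \emph{conformal} (every clique of its primal graph is contained in some hyperedge) and its primal graph is \emph{chordal}. Since a union of queries is acyclic exactly when each of its components is, it suffices to show that $Q$ is hierarchical iff every component $\wh Q_{\bm\sigma}$ of $\wh Q$ is $\alpha$-acyclic. Throughout, for a fixed $\bm\sigma$ I write the atoms of $\wh Q_{\bm\sigma}$ as $A_1,\dots,A_k$, where $A_i$ has schema $\{Z_1,\dots,Z_i\}\cup \bm X_{\sigma_i}$; for a point variable $X$ I let $\mu(X)$ be the largest index $i$ with $X\in\bm X_{\sigma_i}$. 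The first step is to record the shape of the primal graph of $\wh Q_{\bm\sigma}$: the variables $Z_1,\dots,Z_k$ form a clique; $Z_i$ is adjacent to a point variable $X$ iff $i\le\mu(X)$; and two point variables $X,Y$ are adjacent iff $\atoms(X)$ and $\atoms(Y)$ are comparable under inclusion. The only place hierarchicity enters the forward direction is that for hierarchical $Q$ the family $\{\atoms(X)\mid X\in\vars(Q)\}$ is laminar, so ``intersecting'' and ``nested'' coincide; consequently the restriction of the primal graph to point variables is the comparability graph of a forest (with variables having equal atom sets blown up to cliques), hence $P_4$-free and chordal, and moreover $\atoms(X)\subseteq\atoms(Y)$ implies $\mu(X)\le\mu(Y)$.

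For the forward direction I verify conformality and chordality of $\wh Q_{\bm\sigma}$. Conformality: a clique $C$ of the primal graph splits as $C=C_Z\cup C_X$ with $C_Z\subseteq\{Z_1,\dots,Z_k\}$ and $C_X$ a point-variable clique whose atom sets form a chain; letting $X^\ast$ minimize $\atoms(X)$ over $X\in C_X$, the clique condition forces $\max\{i:Z_i\in C_Z\}\le\mu(X^\ast)$, and one checks that $C\subseteq A_{\mu(X^\ast)}$ (taking $A_k$ when $C_X=\emptyset$). Chordality: in a shortest induced cycle of length $\ge 4$ at most two vertices can be $Z$-variables, and if two then they are consecutive on the cycle; using that the point-variable subgraph is $P_4$-free forces the cycle to have length exactly $4$, and then the three cases (zero, one, or two $Z$'s on the cycle) each collapse to a contradictory inequality among the relevant $Z$-indices and $\mu$-values, obtained from the monotonicity of $\mu$ and the laminarity of $\{\atoms(X)\}$. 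Hence every $\wh Q_{\bm\sigma}$ is $\alpha$-acyclic, so $\wh Q$ is acyclic.

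For the converse I argue contrapositively. If $Q$ is not hierarchical there are point variables $X,Y$ and three pairwise distinct atoms $R_a$ (containing $X$ but not $Y$), $R_b$ (containing $Y$ but not $X$), and $R_c$ (containing both). Choose $\bm\sigma$ that lists all $m$ atoms containing both $X$ and $Y$ first, then $R_a$ and $R_b$, then the remaining atoms in any order. In $\wh Q_{\bm\sigma}$ the set $\{X,Y,Z_{m+1}\}$ is a clique of the primal graph ($X$ and $Z_{m+1}$ co-occur in $\wh R_a$, $Y$ and $Z_{m+1}$ in $\wh R_b$, and $X$ and $Y$ in $\wh R_c$), but every atom containing both $X$ and $Y$ occurs at a position $\le m$ and hence does not contain $Z_{m+1}$; so $\wh Q_{\bm\sigma}$ is not conformal, hence not $\alpha$-acyclic, and therefore $\wh Q$ is not acyclic.

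I expect the chordality case analysis to be the main obstacle: it requires translating the adjacency/non-adjacency pattern of a putative induced cycle into inequalities between the $Z$-indices and the $\mu$-values, and checking that every configuration is infeasible. The conformality step, the description of the primal graph, and the laminarity bookkeeping are routine by comparison, and the converse direction is short once the crossing atoms $R_a,R_b,R_c$ are extracted from the failure of hierarchicity.
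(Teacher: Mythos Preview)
Your argument is correct and takes a genuinely different route from the paper. The paper proves the forward direction by an inductive GYO-style reduction: in each component $\wh Q_{\bm\sigma}$ the variable $Z_1$ is a root variable; after removing it and the leaf variables of the first atom, that atom becomes subsumed (this is exactly where hierarchicity is used), and what remains is the multivariate extension of a smaller hierarchical query. For the converse, the paper does not argue combinatorially at all but invokes the separate width bound $\fw(\wh Q)\ge 3/2$ for non-hierarchical $Q$, together with $\fw=1$ for acyclic queries. Your approach instead verifies conformality and chordality directly via the BFMY characterization, and for the converse exhibits an explicit non-conformal clique $\{X,Y,Z_{m+1}\}$ in a carefully chosen component. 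Your route is more self-contained (it does not rely on the width proposition) and gives finer structural information about the primal graph; the paper's route is shorter and reuses machinery already needed elsewhere.

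Two small points on your forward direction. First, the statement that point variables $X,Y$ are adjacent in the primal graph of $\wh Q_{\bm\sigma}$ iff $\atoms(X),\atoms(Y)$ are comparable is only true once hierarchicity is assumed; in general, adjacency is equivalent to $\atoms(X)\cap\atoms(Y)\neq\emptyset$, and hierarchicity is what collapses ``intersecting'' to ``nested''. You say this in the next sentence, but the initial description of the primal graph should be stated in the general form. Second, the claim that $P_4$-freeness of the point-variable subgraph forces any induced cycle with at most two $Z$-vertices to have length exactly $4$ is slightly off in the two-$Z$ case: $P_4$-freeness only rules out length $\ge 6$ there, leaving the length-$5$ case $Z_a-Z_b-X_1-X_2-X_3-Z_a$. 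That case is, however, killed immediately by your inequality method (from $Z_a\not\sim X_1$, $Z_b\sim X_1$ one gets $b\le\mu(X_1)<a$, and from $Z_b\not\sim X_3$, $Z_a\sim X_3$ one gets $a\le\mu(X_3)<b$), so it should simply be listed as an additional subcase rather than claimed to be excluded by $P_4$-freeness.
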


\begin{proposition}
\label{prop:nonq-hierarchical_multivariate_lower_bound}
For any non-hierarchical query $Q$, $\fw(\wh Q) \geq \frac{3}{2}$.
\end{proposition}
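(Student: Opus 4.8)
The plan is to reduce the claim to a statement about a single component of $\wh Q$: since the fractional hypertree width of a union of queries is the maximum over its components (Definition~\ref{defn:fhtw}), it suffices to exhibit one permutation $\bm\sigma$ with $\fw(\wh Q_{\bm\sigma}) \ge \frac{3}{2}$. First I would unpack non-hierarchicality: there exist variables $A,B \in \vars(Q)$ with $\atoms(A)\setminus\atoms(B)\neq\emptyset$, $\atoms(B)\setminus\atoms(A)\neq\emptyset$, and $\mathcal C := \atoms(A)\cap\atoms(B)\neq\emptyset$. Fix an atom $R_A$ containing $A$ but not $B$, an atom $R_B$ containing $B$ but not $A$, and note that $R_A,R_B \notin \mathcal C$, so $m := |\mathcal C| \le k-1$ and the fresh variable $Z_{m+1}$ exists.

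Next I would choose $\bm\sigma$ to be any permutation of $[k]$ that places the $m$ atoms of $\mathcal C$ in the first $m$ positions and all remaining atoms (among them $R_A$ and $R_B$) in positions $m+1,\dots,k$ in any order. By Definition~\ref{defn:multivariate}, the atom at position $i$ in $\wh Q_{\bm\sigma}$ carries exactly $Z_1,\dots,Z_i$, so $Z_{m+1}$ occurs in an atom of $\wh Q_{\bm\sigma}$ if and only if the underlying atom of $Q$ sits at position $\ge m+1$, i.e.\ if and only if it is not in $\mathcal C$. I would then consider the restriction of $\wh Q_{\bm\sigma}$ to the three-variable set $\bm W := \{A,B,Z_{m+1}\}$ (Definition~\ref{defn:restriction}) and check that it contains the triangle: the image of $R_A$ is $\{A,Z_{m+1}\}$ (it has $A$ and, since $R_A\notin\mathcal C$, also $Z_{m+1}$, but not $B$, as $B$ is neither an original variable of $R_A$ nor a fresh variable); symmetrically the image of $R_B$ is $\{B,Z_{m+1}\}$; the image of any atom of $\mathcal C$ is $\{A,B\}$ (it has $A,B$ and, being in $\mathcal C$, not $Z_{m+1}$); and crucially \emph{no} atom of $\wh Q_{\bm\sigma}$ contains all of $\bm W$, since any atom containing both $A$ and $B$ lies in $\mathcal C$, hence at a position $\le m$, hence misses $Z_{m+1}$. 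Thus the restriction is the triangle on $\bm W$ together with possibly some further atoms that are proper subsets of $\bm W$.

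To finish I would use two routine facts. First, $\fw(Q'_{\bm W}) \le \fw(Q')$ for any query $Q'$ and $\bm W\subseteq\vars(Q')$: from an optimal tree decomposition $(\calT,\chi)$ of $Q'$, the map $t\mapsto \chi(t)\cap\bm W$ is a tree decomposition of $Q'_{\bm W}$, and restricting any fractional edge cover of a bag to $\bm W$ only drops covered variables, so $\rho^*$ does not increase. Second, the triangle on $\bm W$ augmented with proper sub-atoms still has $\fw \ge \frac{3}{2}$: in any tree decomposition there is a node lying on all three pairwise paths between a bag containing $\{A,B\}$, a bag containing $\{A,Z_{m+1}\}$ and a bag containing $\{B,Z_{m+1}\}$, so that node's bag contains all of $\bm W$, and $\rho^*$ of the restricted query on $\bm W$ equals $\frac{3}{2}$ (the triangle LP; singleton sub-atoms cannot beat it). Combining, $\fw(\wh Q) \ge \fw(\wh Q_{\bm\sigma}) \ge \fw((\wh Q_{\bm\sigma})_{\bm W}) \ge \frac{3}{2}$.

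The main obstacle is the choice of permutation: one must guarantee that after restricting to three variables no atom of $\wh Q_{\bm\sigma}$ contains all three, and the only possible offenders are atoms holding both $A$ and $B$; pushing \emph{all} of $\atoms(A)\cap\atoms(B)$ to the front (not just one) is exactly what forces them to miss $Z_{m+1}$ — using a single early $Z$-variable is insufficient when $|\atoms(A)\cap\atoms(B)|>1$. The secondary delicate point is the triangle lower bound with extra sub-atoms, which relies on the median-node argument for tree decompositions rather than a direct calculation.
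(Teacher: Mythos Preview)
Your proof is correct and follows essentially the same approach as the paper: pick a component $\wh Q_{\bm\sigma}$ and a triple of variables whose restriction is a triangle (plus possibly proper sub-atoms), then lower-bound $\fw(\wh Q_{\bm\sigma})$ by the $\rho^*$ of that triangle. The paper makes a slightly different permutation choice (placing the three witness atoms at positions $k{-}2,k{-}1,k$ and using $Z_{k-1}$, rather than pushing all of $\atoms(A)\cap\atoms(B)$ to the front and using $Z_{m+1}$) and phrases the tree-decomposition lower bound via variable elimination orders instead of your median-node argument, but the underlying ideas are the same.
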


\begin{proposition}
\label{prop:fhtw_multi_var_LW}
For any Loomis-Whitney query $Q$, $\fw(\wh Q) = \frac{3}{2}$.
\end{proposition}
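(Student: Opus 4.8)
The plan is to prove the two inequalities $\fw(\wh Q)\ge\tfrac32$ and $\fw(\wh Q)\le\tfrac32$ separately. The lower bound is immediate: a Loomis--Whitney query $Q$ of degree $k\ge 3$ is non-hierarchical, since for any two point variables $X_i,X_j$ the sets $\atoms(X_i)$ and $\atoms(X_j)$ differ in exactly the two atoms missing $X_i$ and $X_j$, yet their intersection is the set of the remaining $k-2\ge 1$ atoms, hence nonempty. So Proposition~\ref{prop:nonq-hierarchical_multivariate_lower_bound} gives $\fw(\wh Q)\ge\tfrac32$. It therefore remains to show $\fw(\wh Q_{\bm\sigma})\le\tfrac32$ for every component $\wh Q_{\bm\sigma}$ of $\wh Q$.

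Fix $\bm\sigma$. By relabeling the atoms of $Q$ according to the order induced by $\bm\sigma$ and relabeling the point variable missed by the $i$-th atom as $X_i$ (a bijection on variables that carries atoms to atoms, hence preserves $\fw$), we may assume that the $i$-th atom $\wh R_i$ of $\wh Q_{\bm\sigma}$ has schema $S_i=\{Z_1,\dots,Z_i\}\cup(\{X_1,\dots,X_k\}\setminus\{X_i\})$. I will use the variable-elimination-order characterization of fractional hypertree width, equation~\eqref{eq:fhtw:vo}, and exhibit the order
\[
\bm\mu \;=\; \bigl(Z_k,\,Z_{k-1},\,\ldots,\,Z_2,\;X_1,\,X_2,\,\ldots,\,X_k,\;Z_1\bigr).
\]
The core step is to trace the elimination and identify the bags $\bm U_X^{\bm\mu}$. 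Eliminating $Z_k$ touches only $\wh R_k$, giving bag $S_k$ (with $\rho^*=1$) and a residual atom on $\{Z_1,\dots,Z_{k-1}\}\cup(\{X_1,\dots,X_k\}\setminus\{X_k\})$. Then, by downward induction on $j$ from $k-1$ to $2$: eliminating $Z_j$ merges $\wh R_j$ with the current residual atom; one checks that the bag is $\bm U_{Z_j}^{\bm\mu}=B_j:=\{Z_1,\dots,Z_j\}\cup\{X_1,\dots,X_k\}$ and the new residual atom has schema $\{Z_1,\dots,Z_{j-1}\}\cup\{X_1,\dots,X_k\}$ (the union of the two complements $\{X\}\setminus\{X_{k-1}\}$ and $\{X\}\setminus\{X_k\}$ already fills in all $X$'s at the first merge, and thereafter the residual keeps all $X$'s). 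Once all $Z_j$ with $j\ge 2$ are eliminated, the only remaining atoms are $\wh R_1$ (schema $\{Z_1\}\cup\{X_2,\dots,X_k\}=S_1$) and one atom with schema $\{Z_1,X_1,\dots,X_k\}$; eliminating $X_1$ yields $\bm U_{X_1}^{\bm\mu}=B_1:=\{Z_1\}\cup\{X_1,\dots,X_k\}$, and every subsequent bag $\bm U_{X_i}^{\bm\mu}$ ($i\ge 2$) equals $\{Z_1,X_i,X_{i+1},\dots,X_k\}\subseteq S_1$ and $\bm U_{Z_1}^{\bm\mu}=\{Z_1\}$, all with $\rho^*=1$ because $\wh R_1$ (restricted) covers them.

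Thus the only bags that can exceed width $1$ are $B_1,\dots,B_{k-1}$, and it remains to bound $\rho^*$ of $\wh Q_{\bm\sigma}$ restricted to $B_j$. In that restriction, atom $i$ covers all of $\{X_1,\dots,X_k\}$ except $X_i$, together with $Z_1,\dots,Z_{\min(i,j)}$; hence a fractional edge cover $(\lambda_i)_{i\in[k]}$ must satisfy $\sum_{m\ne i}\lambda_m\ge 1$ for each $X_i$ (equivalently $\sum_i\lambda_i-\max_i\lambda_i\ge 1$) and, for the $Z_\ell$'s, the single binding constraint $\sum_{i\ge j}\lambda_i\ge 1$. For $1\le j\le k-2$ (so $k-j\ge 2$), setting $\lambda_i=\tfrac1{k-j}$ for $i\ge j$ and $\lambda_i=0$ for $i<j$ is feasible with total weight $\tfrac{k-j+1}{k-j}\le\tfrac32$; for $j=k-1$, setting $\lambda_{k-1}=\lambda_k=\tfrac12$ and $\lambda_1=\dots=\lambda_{k-2}=\tfrac1{2(k-2)}$ is feasible with total weight exactly $\tfrac32$ (and this is best possible here, since adding the inequalities $\sum_i\lambda_i-\lambda_{k-1}\ge1$, $\sum_i\lambda_i-\lambda_k\ge1$, $\lambda_{k-1}+\lambda_k\ge1$ forces $\sum_i\lambda_i\ge\tfrac32$). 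Hence $\fw(\wh Q_{\bm\sigma})\le\tfrac32$ for every $\bm\sigma$, so $\fw(\wh Q)\le\tfrac32$, and with the lower bound we conclude $\fw(\wh Q)=\tfrac32$.

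The main obstacle I expect is the bookkeeping in the third paragraph: correctly tracking how the residual atoms accumulate point variables through the cascade of $Z$-eliminations, so that each bag $\bm U_{Z_j}^{\bm\mu}$ is pinned down as exactly $B_j$ (and that the later $X$-bags are absorbed by $S_1$). By contrast, the linear-programming bound in the last paragraph is routine, and it is worth noting it also certifies that the value $\tfrac32$ is attained at the bag $B_{k-1}$, matching the lower bound for every degree $k$.
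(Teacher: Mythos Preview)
Your proof is correct, but it takes a substantially more laborious route than the paper's. The paper's argument for the upper bound is a one-liner: after eliminating $Z_k$ (which lies in a single atom and therefore does not affect $\fw$), it simply observes that the three atoms $\wh R_{k-2},\wh R_{k-1},\wh R_k$ (the last with $Z_k$ removed) together cover every variable of $\wh Q_{\bm\sigma}'$ at least twice---each $Z_\ell$ with $\ell\le k-1$ is in at least two of them, and each point variable is in at least $k-1$ of the $k$ original atoms, hence in at least two of any three. So $\lambda_{\wh R_{k-2}}=\lambda_{\wh R_{k-1}}=\lambda_{\wh R_k}=\tfrac12$ is a fractional edge cover of the \emph{whole} query $\wh Q_{\bm\sigma}'$, giving $\rho^*(\wh Q_{\bm\sigma}')\le\tfrac32$ and hence $\fw(\wh Q_{\bm\sigma})\le\tfrac32$ via the trivial one-bag decomposition.

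Your approach instead constructs a nontrivial tree decomposition through an explicit variable elimination order, identifies all the bags $B_1,\dots,B_{k-1}$, and bounds $\rho^*$ of each separately with bag-specific covers. This is correct---your bookkeeping of the residual atoms is accurate, and your covers are feasible---but note that your hardest bag $B_{k-1}$ is exactly the full variable set of $\wh Q_{\bm\sigma}'$, so you are in effect recovering the paper's single-bag bound plus additional unnecessary bags. The paper's cover for that bag (three atoms at weight $\tfrac12$) is also simpler than your spread-out cover with $\lambda_i=\tfrac1{2(k-2)}$ on the low-index atoms. What your approach does buy is an explicit multi-bag tree decomposition, which could be useful if one wanted finer structural information, but for the purpose of bounding $\fw$ it is overkill.
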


\begin{proposition}
\label{prop:upper_bound_fhtw_multi}
For any query $Q$, $\fw(Q) \leq  \fw(\wh Q) \leq \fw(Q)+ 1.$
\end{proposition}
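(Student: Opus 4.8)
The plan is to reduce to the individual components. Since $\fw(\wh Q) = \max_{\bm\sigma \in \Sigma_k} \fw(\wh Q_{\bm\sigma})$ by definition, it suffices to show $\fw(Q) \le \fw(\wh Q_{\bm\sigma}) \le \fw(Q)+1$ for every permutation $\bm\sigma$. Both directions are proved by transforming tree decompositions, and throughout I would use the following elementary monotonicity of bag queries: if $\bm Y' \subseteq \bm Y \subseteq \vars(Q)$ then $\rho^*(Q_{\bm Y'}) \le \rho^*(Q_{\bm Y})$, because any fractional edge cover of $Q_{\bm Y}$, transported to the atoms of $Q_{\bm Y'}$, is still feasible for $Q_{\bm Y'}$ with no larger total weight (the covering constraint of a variable $X \in \bm Y'$ only involves atoms containing $X$, whose weights are unchanged).

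For the lower bound $\fw(Q)\le\fw(\wh Q_{\bm\sigma})$, I would start from an optimal tree decomposition $(\calT, \wh\chi)$ of $\wh Q_{\bm\sigma}$ and set $\chi(t) \defeq \wh\chi(t)\cap\vars(Q) = \wh\chi(t)\setminus\{Z_1,\dots,Z_k\}$, using that the $Z_i$ are fresh. First I would check that $(\calT,\chi)$ is a tree decomposition of $Q$: coverage holds because $\{Z_1,\dots,Z_i\}\cup\bm X_{\sigma_i}\subseteq\wh\chi(t)$ implies $\bm X_{\sigma_i}\subseteq\chi(t)$; connectivity holds because for $X\in\vars(Q)$ the set $\{t \mid X\in\chi(t)\}$ equals $\{t \mid X\in\wh\chi(t)\}$, which is a nonempty subtree. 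Then, for each node $t$, an optimal fractional edge cover $(\wh\lambda_{\wh R_{\sigma_i}})_i$ of $(\wh Q_{\bm\sigma})_{\wh\chi(t)}$ gives the cover $\lambda_{R_{\sigma_i}}\defeq\wh\lambda_{\wh R_{\sigma_i}}$ of $Q_{\chi(t)}$ with the same total weight; it is feasible because for a point variable $X\in\chi(t)$ the atoms of $\wh Q_{\bm\sigma}$ whose schema contains $X$ are exactly the $\wh R_{\sigma_i}$ with $X\in\bm X_{\sigma_i}$. Hence $\rho^*(Q_{\chi(t)})\le\rho^*((\wh Q_{\bm\sigma})_{\wh\chi(t)})\le\fw(\wh Q_{\bm\sigma})$ for every $t$, and taking the maximum over $t$ yields $\fw(Q)\le\fw(\calT,\chi)\le\fw(\wh Q_{\bm\sigma})$.

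For the upper bound $\fw(\wh Q_{\bm\sigma})\le\fw(Q)+1$, I would go the other way: take an optimal tree decomposition $(\calT,\chi)$ of $Q$ and put $\chi'(t)\defeq\chi(t)\cup\{Z_1,\dots,Z_k\}$. This is a tree decomposition of $\wh Q_{\bm\sigma}$, since every $Z_j$ now lies in all bags (connectivity is trivial for the new variables and unchanged for the old ones) and the atom $\wh R_{\sigma_i}$, with schema $\{Z_1,\dots,Z_i\}\cup\bm X_{\sigma_i}$, is contained in any bag that already contained $\bm X_{\sigma_i}$. For the width, fix a node $t$, take an optimal cover $(\mu_R)_R$ of $Q_{\chi(t)}$, keep the weights $\mu_{R_{\sigma_i}}$ for $i\ne k$ on the atoms $\wh R_{\sigma_i}$, and set the weight of $\wh R_{\sigma_k}$ to $1$ (which is $\ge\mu_{R_{\sigma_k}}$, as optimal weights lie in $[0,1]$). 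A single unit of weight on $\wh R_{\sigma_k}$ covers $Z_1,\dots,Z_k$ simultaneously because they all occur in its schema — this is exactly why the width grows by $1$ rather than by $k$ — and raising only this one weight cannot hurt the coverage of point variables. The total weight is $\rho^*(Q_{\chi(t)})-\mu_{R_{\sigma_k}}+1\le\fw(Q)+1$, so $\rho^*((\wh Q_{\bm\sigma})_{\chi'(t)})\le\fw(Q)+1$ for all $t$ and hence $\fw(\wh Q_{\bm\sigma})\le\fw(\calT,\chi')\le\fw(Q)+1$.

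The coverage and connectivity checks and the bag-monotonicity of $\rho^*$ are routine; the one step that requires the right idea is the upper-bound construction, namely recognizing that, because $\wh R_{\sigma_k}$ carries the whole prefix $Z_1,\dots,Z_k$, one extra unit of fractional weight on it covers all $k$ fresh variables in every bag at once. (A minor point to dispatch in passing: if one insists on the constraint $\lambda\le 1$, capping any weight at $1$ never destroys feasibility, so the covers constructed above are admissible.)
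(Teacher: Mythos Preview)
Your proposal is correct and follows essentially the same approach as the paper: both directions transform tree decompositions between $Q$ and $\wh Q_{\bm\sigma}$ by removing/adding the $Z$-variables from/to every bag, and the upper bound is obtained by setting the weight of $\wh R_{\sigma_k}$ to $1$ to cover all $Z_j$ at once. Your write-up is slightly more explicit (e.g., tracking the exact total weight $\rho^*(Q_{\chi(t)})-\mu_{R_{\sigma_k}}+1$ and noting the $[0,1]$ constraint), but there is no substantive difference in the argument.
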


\begin{proposition}
\label{thm:fully_dynamic_OMV_lower_bound}
    For any non-hierarchical query $Q$ without self-joins, update stream of length $N$, and $\gamma >0$, there is no algorithm that solves
    $\ivmpm[Q]$ or $\ivmpmd[Q]$ with $\tildeO{N^{1/2-\gamma}}$
    amortized update time, unless the OMv-Conjecture fails.
\end{proposition}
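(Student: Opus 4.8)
The plan is to adapt the \textsf{OMv}-based lower bound of~\cite{BerkholzKS17} to our setting, where the parameter is the stream length $N$ (not the database size), the update time is amortized, and delta enumeration must also be covered. I would reduce from the \textsf{OuMv} problem (Conjecture~\ref{conj:oumv}); since the \textsf{OMv} conjecture implies the \textsf{OuMv} conjecture~\cite{HenzingerKNS15}, refuting \textsf{OuMv} refutes \textsf{OMv}. The starting point is the standard structural witness of non-hierarchicity: there are variables $X,Y$ with $\atoms(X)\setminus\atoms(Y)$, $\atoms(Y)\setminus\atoms(X)$ and $\atoms(X)\cap\atoms(Y)$ all non-empty, so I may fix atoms $R_X\in\atoms(X)\setminus\atoms(Y)$, $R_Y\in\atoms(Y)\setminus\atoms(X)$, $R_{XY}\in\atoms(X)\cap\atoms(Y)$; because $Q$ is without self-joins these carry three distinct relation symbols and can be updated independently.

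Given an \textsf{OuMv} instance --- an $n\times n$ Boolean matrix $\bm M$ and a stream $(\bm u_1,\bm v_1),\dots,(\bm u_n,\bm v_n)$ --- I would build a database for $Q$ over the domain $\{0,\dots,n\}$ with every variable other than $X,Y$ fixed to $0$ in all tuples. Once and for all, I would populate $R_{XY}$ with $\{(i,j,\vec 0)\mid \bm M_{ij}=1\}$, every \emph{other} atom of $\atoms(X)\cap\atoms(Y)$ with all pairs $\{(i,j,\vec 0)\mid i,j\in[n]\}$, every other atom of $\atoms(X)$ (resp.\ $\atoms(Y)$) with all values of $X$ (resp.\ $Y$) over $[n]$, and every atom disjoint from $\{X,Y\}$ with the single tuple $\vec 0$; this costs $\bigO{n^2}$ inserts performed at the beginning. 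The only dynamic atoms are $R_X$ and $R_Y$. For $r=1,\dots,n$ I would insert into $R_X$ the tuples with $X=i$ for $\bm u_r[i]=1$, then into $R_Y$ the tuples with $Y=j$ for $\bm v_r[j]=1$; then $Q(\calD)\neq\emptyset$ iff $\exists i,j:\bm u_r[i]=\bm M_{ij}=\bm v_r[j]=1$, i.e.\ iff $\bm u_r^{\top}\bm M\bm v_r=1$. Finally I would delete all tuples just inserted into $R_X$ and $R_Y$, restoring $Q(\calD)=\emptyset$ before round $r{+}1$.

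To read off the answer for $\ivmpm[Q]$, after the round-$r$ insertions I would invoke the (constant-delay, non-amortized) enumeration procedure and output $1$ iff it yields at least one tuple. For $\ivmpmd[Q]$ I would insert the $R_X$-tuples first (all deltas are empty while $R_Y$ is empty), then the $R_Y$-tuples one at a time, outputting $1$ iff any single insert yields a non-empty delta: inserting $Y=j$ yields a non-empty delta exactly when $\exists i:\bm u_r[i]=\bm M_{ij}=1$, so at least one insert does iff $\bm u_r^{\top}\bm M\bm v_r=1$, while the deltas of the reset deletions are discarded. The stream has length $N=\bigO{n^2}$ (matrix setup plus $n$ rounds of $\bigO{n}$ updates each), the database size stays $\bigO{n^2}$, and the total enumeration work is $\bigO{n^2}$. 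If $\ivmpm[Q]$ or $\ivmpmd[Q]$ had an algorithm with amortized update time $\tildeO{N^{1/2-\gamma}}$, the total update time would be $N\cdot\tildeO{N^{1/2-\gamma}}=\tildeO{N^{3/2-\gamma}}=\tildeO{n^{3-2\gamma}}$, solving \textsf{OuMv} in time $\bigO{n^{3-\gamma}}$ (absorbing the polylogarithmic factor into $n^{\gamma}$) and thus refuting the \textsf{OuMv}, hence the \textsf{OMv}, conjecture.

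I expect the main obstacle to be the delta-enumeration case: one must argue that watching the $\bigO{n}$ single-tuple deltas produced within a round faithfully detects $\bm u_r^{\top}\bm M\bm v_r$, that the discarded deltas of the per-round reset do not corrupt later rounds (which holds because $Q(\calD)=\emptyset$ after each reset and because the stream length stays $\bigO{n^2}$ despite the resets), and --- relatedly --- that the reset between rounds is genuinely needed so the same output tuple is not re-reported across rounds in a way that would inflate the measurement. A secondary care is to verify that the no-self-join hypothesis is essential (to make $R_X,R_Y,R_{XY}$ independent relations) and that filling the remaining $X$- and $Y$-atoms with \emph{all} relevant values, rather than a single constant, leaves exactly the intended tuples surviving the natural join.
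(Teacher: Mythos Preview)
Your proposal is correct and follows essentially the same approach as the paper: reduce from \textsf{OuMv}, encode the matrix in a fixed relation containing both witness variables, encode the vectors in the two dynamic relations, and for delta enumeration insert the second vector one tuple at a time while watching the deltas. The only cosmetic difference is that the paper first proves the statement for the minimal query $Q_{\text{nh}}=R(A)\wedge S(A,B)\wedge T(B)$ and then reduces the general case to it via the standard dummy-value trick, whereas you go directly to the general non-hierarchical $Q$; your more explicit filling of the auxiliary atoms (all pairs for atoms in $\atoms(X)\cap\atoms(Y)$, all singletons for atoms in exactly one of $\atoms(X),\atoms(Y)$) is exactly what the paper's terse ``dummy value'' description unpacks to.
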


\nop{At the end of this section, we prove Proposition~\ref{cor:hierarchical_fully_dynamic}.}

\subsection{Proof of Proposition~\ref{prop:q-hierarchical_multivariate}}
\begin{citedprop}[\ref{prop:q-hierarchical_multivariate}]
A query is hierarchical if and only if its \multivariate extension is acyclic.
\end{citedprop}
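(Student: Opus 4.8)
The plan is to prove both implications through the combinatorial characterizations of $\alpha$-acyclicity (GYO ear elimination for one direction, non-existence of a join tree for the other), exploiting one structural feature of a component $\wh Q_{\bm\sigma}$: the fresh variables $Z_1,\dots,Z_k$ sit in the atoms $\wh R_{\sigma_1},\dots,\wh R_{\sigma_k}$ in a strictly nested way (atom $\wh R_{\sigma_i}$ carries exactly $Z_1,\dots,Z_i$), while each original (point) variable $X$ keeps its $Q$-atom set $\atoms(X)$ unchanged. Recall also that a union of queries is acyclic iff every component is, so it suffices to reason component-wise.

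For the forward direction ($Q$ hierarchical $\Rightarrow$ every $\wh Q_{\bm\sigma}$ acyclic, hence $\wh Q$ acyclic) I would run GYO ear elimination on $\wh Q_{\bm\sigma}$ in the order $\wh R_{\sigma_1},\wh R_{\sigma_2},\dots,\wh R_{\sigma_k}$ and show that $\wh R_{\sigma_i}$ is an ear of the subquery on the atoms $\{\wh R_{\sigma_i},\dots,\wh R_{\sigma_k}\}$. The variables of $\wh R_{\sigma_i}$ that still occur elsewhere in this subquery are exactly $\{Z_1,\dots,Z_i\}$ together with $\bm Y_i:=\bm X_{\sigma_i}\cap\bigcup_{i'>i}\bm X_{\sigma_{i'}}$; the other variables of $\wh R_{\sigma_i}$ occur in no later atom (the earlier ones are already removed), so they are isolated. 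A witness atom is any $\wh R_{\sigma_{i'}}$ with $i'>i$: it automatically contains $Z_1,\dots,Z_i$, and since all variables of $\bm Y_i$ co-occur in $R_{\sigma_i}$, hierarchicality forces $\{\atoms(X):X\in\bm Y_i\}$ to be a chain (any two of these atom sets intersect, in $R_{\sigma_i}$), so the variable $X^\ast\in\bm Y_i$ with $\subseteq$-minimal atom set picks out an atom $R_{\sigma_{i'}}$ with $i'>i$ and $\bm Y_i\subseteq\bm X_{\sigma_{i'}}$. (If $\bm Y_i=\emptyset$, use $\wh R_{\sigma_{i+1}}$; the case $i=k$ is the trivial single-atom query.) Iterating removes all atoms, so $\wh Q_{\bm\sigma}$ is $\alpha$-acyclic.

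For the converse I would argue the contrapositive: if $Q$ is not hierarchical, exhibit one permutation $\bm\sigma$ with $\wh Q_{\bm\sigma}$ cyclic. Take variables $X,Y$ witnessing non-hierarchicality and atoms $R_a\in\mathcal W:=\atoms(X)\cap\atoms(Y)$, $R_b\in\atoms(X)\setminus\atoms(Y)$, $R_c\in\atoms(Y)\setminus\atoms(X)$ (all exist; $R_b,R_c\notin\mathcal W$; the three are distinct). Choose $\bm\sigma$ so that the atoms of $\mathcal W$ occupy the first $|\mathcal W|$ positions while $R_b,R_c$ come afterwards, and set $j:=|\mathcal W|+1$. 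Then in $\wh Q_{\bm\sigma}$ one checks $X\in\wh R_a\cap\wh R_b\setminus\wh R_c$, $Y\in\wh R_a\cap\wh R_c\setminus\wh R_b$, $Z_j\in\wh R_b\cap\wh R_c\setminus\wh R_a$, and — the key point — no atom contains $\{X,Y,Z_j\}$: an atom contains both $X$ and $Y$ iff its underlying $Q$-atom lies in $\mathcal W$, i.e.\ iff its position is $<j$, i.e.\ iff it does not contain $Z_j$. A median/Steiner-point argument in a hypothetical join tree $\calT$ finishes it: the unique vertex of $\calT$ lying on all three pairwise $\wh R_a,\wh R_b,\wh R_c$ paths would have to contain $X$ (it is on the $X$-connected $\wh R_a$–$\wh R_b$ path), $Y$, and $Z_j$, contradicting the previous sentence; so $\wh Q_{\bm\sigma}$ has no join tree and $\wh Q$ is not acyclic.

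I expect the main obstacle to be the bookkeeping in the converse: ensuring that atoms of $Q$ other than $R_a,R_b,R_c$ do not accidentally contain all three ``triangle'' variables. Putting the \emph{entire} set $\mathcal W=\atoms(X)\cap\atoms(Y)$ (not just $R_a$) into the initial block of positions and using $Z_{|\mathcal W|+1}$ as the third triangle variable is precisely what rules this out, and pinning down these positions/indices correctly is the delicate step. A secondary subtlety in the forward direction is verifying that variables discarded as isolated at an earlier step never reappear in later atoms, so that the fixed elimination order $\wh R_{\sigma_1},\dots,\wh R_{\sigma_k}$ stays valid throughout.
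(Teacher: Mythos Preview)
Your proof is correct and takes a genuinely different, more elementary route than the paper.

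\textbf{Forward direction.} The paper argues by induction on the number of atoms: eliminate the root variable $Z_1$, strip leaf variables from the first atom, observe (via hierarchicality) that what remains is subsumed by some other atom, delete it, and recurse. Your proof instead runs a single GYO sweep, removing $\wh R_{\sigma_1},\dots,\wh R_{\sigma_k}$ in order; the chain structure of $\{\atoms(X):X\in\bm Y_i\}$ (forced because all these variables share the atom $R_{\sigma_i}$) yields the witness directly. Both are clean; yours avoids the auxiliary Claim about preserving acyclicity under root-variable elimination. Your stated ``secondary subtlety'' is not really a subtlety: the isolated variables of $\wh R_{\sigma_i}$ are by definition those in $\bm X_{\sigma_i}\setminus\bm Y_i$, which do not occur in any $\bm X_{\sigma_{i'}}$ with $i'>i$, so they cannot reappear.

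\textbf{Converse.} The paper does not argue cyclicity directly; it invokes Proposition~\ref{prop:nonq-hierarchical_multivariate_lower_bound} ($\fw(\wh Q)\geq 3/2$ for non-hierarchical $Q$), whose proof builds the same triangle on $\{X,Y,Z_{k-1}\}$ but then routes through variable elimination orders and monotonicity of $\rho^*$. Your median/Steiner argument on a hypothetical join tree is more direct and requires no width machinery. Your device of putting \emph{all} of $\mathcal W=\atoms(X)\cap\atoms(Y)$ in the first block and using $Z_{|\mathcal W|+1}$ exactly matches the paper's implicit placement of $R_1,R_2,R_3$ at positions $k-1,k-2,k$: both ensure that the only atoms containing the fresh $Z$-variable are ones missing either $X$ or $Y$. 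The paper's detour is economical in context (Proposition~\ref{prop:nonq-hierarchical_multivariate_lower_bound} is needed anyway), but your self-contained argument would stand alone.
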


It follows from Proposition~\ref{prop:nonq-hierarchical_multivariate_lower_bound}
that the \multivariate extension $\wh Q$ of any non-hierarchical query $Q$ is 
{\em not} acyclic.
This is because acyclic queries have a fractional hypertree width of $1$.
It remains to prove:

\begin{lemma}
\label{lem:hierarchical_to_acyclic_multivariate}
The \multivariate extension of any hierarchical query is 
acyclic.
\end{lemma}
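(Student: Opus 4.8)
The plan is to reduce to a single component of $\wh Q$ and then build an explicit join tree for it. Since a union of queries is acyclic whenever each of its components is acyclic, it suffices to fix a permutation $\bm\sigma\in\Sigma_k$ and show that $\wh Q_{\bm\sigma}$ is acyclic; after renaming the atoms of $Q$ we may assume $\bm\sigma=(1,\dots,k)$, so that $\wh Q_{\bm\sigma}$ has atoms $\wh R_1,\dots,\wh R_k$ with $\wh R_i$ on schema $\{Z_1,\dots,Z_i\}\cup\bm X_i$. For $X\in\vars(Q)$ I identify $\atoms(X)$ with the index set $\{i:X\in\bm X_i\}\subseteq[k]$ and put $m_X:=\max\{i:X\in\bm X_i\}$; for an atom index $i$ I let $\bm Y_i:=\{X\in\bm X_i:m_X>i\}$ be the variables of $R_i$ that reappear in some later atom of $\bm\sigma$. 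Note that in $\wh Q_{\bm\sigma}$ the atoms containing $Z_\ell$ are exactly $\wh R_\ell,\dots,\wh R_k$, and the atoms containing $X\in\vars(Q)$ are exactly $\{\wh R_i:i\in\atoms(X)\}$.

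The join tree $T$ I propose is rooted at $\wh R_k$: for each $i<k$ the parent of $\wh R_i$ is $\wh R_{p(i)}$, where $p(i)$ is the least index $j>i$ with $R_j\in\bigcap_{X\in\bm Y_i}\atoms(X)$, and $p(i):=i+1$ if $\bm Y_i=\emptyset$. The one place hierarchy is used is in checking that $p(i)$ is well defined: all the sets $\atoms(X)$ with $X\in\bm Y_i$ contain the index $i$, hence are pairwise non-disjoint, hence --- $Q$ being hierarchical --- pairwise comparable under inclusion; so they form a chain, their intersection equals its least element $\atoms(X^\star)$ for some $X^\star\in\bm Y_i$, and this set contains $m_{X^\star}>i$. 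Since the parent index always strictly exceeds the child index, $T$ is a genuine rooted tree whose nodes are precisely the atoms of $\wh Q_{\bm\sigma}$.

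What remains is the running-intersection check: for every variable $v$ of $\wh Q_{\bm\sigma}$, the atoms containing $v$ should induce a connected subtree of $T$. Writing $A_v$ for the index set of those atoms and $m:=\max A_v$, I would observe that because $p$ strictly increases the index and $m\in A_v$, it is enough to verify $p(i)\in A_v$ for every $i\in A_v\setminus\{m\}$ (then iterating $p$ from any node of $A_v$ stays inside $A_v$ and climbs monotonically to $\wh R_m$, so $A_v$ is connected). For $v=Z_\ell$ one has $A_v=\{\ell,\dots,k\}$, and $\ell\le i<p(i)\le k$ gives $p(i)\in A_v$ for all $\ell\le i<k$. For $v=X\in\vars(Q)$ one has $A_v=\atoms(X)$ and $m=m_X$, and any $i\in\atoms(X)$ with $i<m_X$ satisfies $X\in\bm Y_i$, so $p(i)\in\atoms(X)$ by the very choice of $p(i)$. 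Hence $T$ is a join tree and $\wh Q_{\bm\sigma}$ is acyclic; since $\bm\sigma$ was arbitrary, $\wh Q$ is acyclic.

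The step I expect to be the real obstacle is designing the parent function so that two competing connectivity demands hold in one tree at once: the fresh variables $Z_1,\dots,Z_k$ force every $\bm\sigma$-suffix of atoms to be connected, while the original variables force each laminar set $\atoms(X)$ to be connected, and $\bm\sigma$ is arbitrary, so no fixed path or star works in general. The construction above threads this needle by always routing an atom's parent to a \emph{later} position in $\bm\sigma$ (which makes every suffix connected for free) and, among those later atoms, choosing the parent inside $\bigcap_{X\in\bm Y_i}\atoms(X)$ --- a set that is nonempty and reaches past position $i$ precisely because hierarchy forces these atom-sets to chain up. If one prefers to skip the renaming, the same construction works verbatim with $\atoms(X)$ replaced throughout by $\{i\in[k]:R_{\sigma_i}\in\atoms_Q(X)\}$.
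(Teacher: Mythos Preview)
Your proof is correct and takes a genuinely different route from the paper's. The paper argues by induction on the number $k$ of atoms via GYO-style elimination: it removes the root variable $Z_1$, then uses hierarchy to argue that after stripping its leaf variables the first atom becomes subsumed by some later atom, eliminates it, and applies the induction hypothesis to the remaining $k-1$ atoms (which form the multivariate extension of a smaller hierarchical query). You instead build an explicit join tree in one shot, using hierarchy exactly once to show that the parent map $p$ is well-defined (the sets $\atoms(X)$ for $X\in\bm Y_i$ form a chain, so their intersection reaches past $i$). The paper's argument is shorter and leans on the standard GYO characterisation of acyclicity; your construction is more informative in that it hands back a concrete join tree for every component $\wh Q_{\bm\sigma}$, which could be reused directly wherever the paper later needs a tree decomposition of $\wh Q_{\bm\sigma}$ (e.g.\ in the proof of Theorem~\ref{thm:main_fully_dynamic}). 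Both proofs localise the use of hierarchy to the same phenomenon --- the atom-sets through a fixed atom are nested --- but yours makes the resulting tree structure explicit rather than implicit in a sequence of eliminations.
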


%\begin{proof}
First, we introduce some simple notions.
Consider a query 
$$Q = R_1(\bm X_1) \wedge \cdots \wedge R_k(\bm X_k).$$ 
A variable $X$ is called {\em root variable}
if it is contained in every atom, i.e., 
$X \in \bigcap_{i \in [k]} \bm X_i$.
It is a {\em leaf variable} if it is contained in exactly one atom, i.e, $X \in \bm X_i$ for some $i \in [k]$ and $X \notin \bm X_j$ for all 
$j \in [k]\setminus\{i\}$. 
We say that an atom $R_i(\bm X_i)$ is {\em subsumed} by another atom $R_j(\bm X_j)$ if $\bm X_i \subseteq \bm X_j$. 
A query $Q'$ results from $Q$ by the {\em elimination of variable
$X$} if it is of the form
$$Q' = R_1(\bm X_1 \setminus \{X\}) \wedge \cdots \wedge 
R_k(\bm X_k \setminus \{X\}).$$
It results by the {\em elimination of an atom} 
$R_i(\bm X_i)$ if it has the form
$$Q' = R_1(\bm X_1) \wedge \cdots \wedge R_{i-1}(\bm X_{i-1}) \wedge R_{i+1}(\bm X_{i+1}) \wedge \cdots \wedge R_k(\bm X_k).$$
Observe that the elimination of a leaf variable and the elimination of a 
subsumed atom are exactly the two steps of the GYO-reduction~\cite{YuO79}. The following claim follows directly from the definition of acyclic queries and  the correctness of the GYO-reduction:

\begin{claim}
\label{claim:GYO_reduction_steps}
Let $Q$ and $Q'$ be join queries such that $Q'$ results from $Q$ by the elimination of a  root variable, a leaf variable, or a subsumed atom. The query $Q$ is acyclic if and only if the query $Q'$ is acyclic. 
\end{claim}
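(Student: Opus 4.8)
The plan is to reason via the join-tree characterization of $\alpha$-acyclicity (Appendix~\ref{app:prelims}), treating the three operations separately. For the elimination of a root variable and the elimination of a leaf variable I would give one uniform argument. Say $X$ is a variable of $Q$ that is either a root variable (so $X\in\bm X_i$ for all $i\in[k]$) or a leaf variable (so $X\in\bm X_i$ for exactly one $i$), and let $Q'$ be obtained by deleting $X$ from every atom; there is an obvious bijection between the atoms of $Q$ and those of $Q'$. Given any join tree $\calT$ of $Q$, relabelling each node $R_i(\bm X_i)$ to $R_i(\bm X_i\setminus\{X\})$ yields a join tree of $Q'$: coverage is immediate, and the connectivity condition for every variable $Y\neq X$ is untouched since deleting $X$ does not change $Y$-membership, while $X\notin\vars(Q')$ imposes no constraint. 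Conversely, relabelling a join tree $\calT'$ of $Q'$ back to the atoms of $Q$ gives a join tree of $Q$: the only new requirement is connectivity for $X$, which is vacuous in the leaf case (no two atoms contain $X$) and trivial in the root case ($X$ lies in every node and the tree is connected). Hence $Q$ is acyclic iff $Q'$ is.

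For the elimination of a subsumed atom I would invoke the GYO reduction. Deleting a vertex that occurs in a single hyperedge and deleting a hyperedge contained in another are exactly the two rewrite rules of GYO~\cite{YuO79}, so $Q\to Q'$ is a single GYO step. I would then use the two standard facts that (i) a query is $\alpha$-acyclic iff GYO reduces it to the empty query, and (ii) GYO is confluent, i.e.\ every maximal GYO sequence starting from a fixed query reaches the same normal form. If $Q$ is acyclic it admits a GYO sequence to the empty query; extending the step $Q\to Q'$ to a maximal GYO sequence and applying confluence shows this sequence also ends in the empty query, so $Q'$ is acyclic. Conversely, prepending $Q\to Q'$ to a GYO sequence of $Q'$ ending in the empty query exhibits $Q$ as acyclic. (The same GYO argument also covers leaf-variable elimination, giving an alternative to the join-tree argument above.)

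The only real content is the subsumed-atom case. If we are willing to cite the confluence of GYO, the argument just given is immediate, and the potential obstacle is merely bookkeeping in the present multiset-of-atoms notation: a duplicated atom is removable as a subsumed atom, and a nullary atom produced by root-variable elimination is an empty hyperedge, both harmless for GYO and for join trees. If instead a fully self-contained proof is preferred, the obstacle becomes the ``only if'' direction of the subsumed-atom case, namely showing that a join tree of $Q$ can be restructured so that the subsumed atom $e\subseteq f$ is a leaf. I would do this by rooting the join tree at $f$, noting that every node on the path from $e$ to $f$ contains the schema of $e$ (by connectivity, since every variable of $e$ lies in both $e$ and $f$), reattaching the remaining neighbours of $e$ to $e$'s neighbour on that path, and then deleting the now-leaf $e$ to obtain a join tree of $Q'$; the ``if'' direction is the easy attachment of $e$ as a fresh leaf below $f$ in a join tree of $Q'$.
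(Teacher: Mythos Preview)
Your proposal is correct and essentially mirrors the paper's proof: both use the join-tree characterization for root-variable elimination and appeal to GYO for the subsumed-atom case. Two minor differences are worth noting. First, the paper handles leaf-variable elimination via GYO (lumping it with the subsumed-atom case), whereas you give a direct join-tree argument unified with the root-variable case; your version is slightly more self-contained and makes the ``connectivity for $X$ is vacuous'' observation explicit. Second, for the subsumed-atom case the paper simply asserts that ``$Q$ is GYO-reducible if and only if $Q'$ is GYO-reducible,'' leaving implicit the confluence of GYO that you spell out; your alternative self-contained join-tree argument (restructure so the subsumed atom becomes a leaf) goes beyond what the paper provides and would make the proof independent of that citation.
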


\begin{proof}[Proof of Claim~\ref{claim:GYO_reduction_steps}]
We distinguish on how $Q'$ is obtained from $Q$. 

\paragraph{$Q'$ is obtained by the elimination of a leaf variable or a subsumed atom} 
In this case, the claim follows directly from the correctness of the GYO-reduction~\cite{YuO79}, which says that a query is 
acyclic if and only if we obtain the trivial query consisting of  a single atom with empty schema by repeatedly applying leaf variable and subsumed atom elimination. Indeed, $Q$ is GYO-reducible if and only if $Q'$ is GYO-reducible.

\paragraph{$Q'$ is obtained by root variable elimination} 
Assume that $Q'$ is obtained from $Q$ by removing a root variable $X$.
Consider a join tree $\calT$ for $Q$, and let $\calT'$ be the tree obtained by removing the variable $X$ from every node in $\calT$. 
There is a one-to-one mapping 
between the nodes in $\calT'$ and the atoms in $Q'$. 
Moreover, removing $X$ from all nodes in $\calT$ does not affect the connectivity property in the join tree. Hence, $\calT'$ must be a join tree for $Q'$.

Analogously, any join tree for $Q'$ turns into a join tree
for $Q$ by adding the same variable $X$ to all nodes. 
We conclude that $Q$ is acyclic if and only if $Q'$ is acyclic. 
\end{proof}

We are ready to prove Lemma~\ref{lem:hierarchical_to_acyclic_multivariate}.
Consider a hierarchical query
$Q = R_1(\bm X_1) \wedge \cdots \wedge R_k(\bm X_k)$
and its \multivariate extension 
    $\wh{Q} = \bigvee_{\bm\sigma \in \Sigma_k}\wh Q_{\bm\sigma}$ (Eq \eqref{eq:multivariate}).
We show that each component $\wh Q_{\bm\sigma}$ of $\wh{Q}$
is acyclic. WLOG we only consider the component 
for $\bm \sigma = (1, \ldots , k)$:
$$\wh{Q}_{\bm \sigma} = \wh{R}_1(Z_1 \bm X_1) \wedge \wh{R}_2(Z_1, Z_2, \bm X_2) \wedge \cdots \wedge 
\wh{R}_k(Z_1, \ldots, Z_k, \bm X_k).$$
The other components are symmetric. 

The proof is by induction on the number $k$ of atoms in $\wh Q_{\bm \sigma}$.

\paragraph{Base Case}
Assume that $k=1$, that is,  $\wh Q_{\bm \sigma}$ consists of a single atom. In this case, the query is obviously acyclic.

\paragraph{Induction Step}
Assume now that $\wh Q_{\bm \sigma}$ has $k >1$ atoms. 
The variable $Z_1$ is a root variable. We eliminate it
and obtain 
$$\wh Q_{\bm \sigma}' = \wh{R}_1'(\bm X_1) \wedge \wh{R}_2'(Z_2, \bm X_2) \wedge \cdots \wedge \wh{R}_k'(Z_2, \ldots, Z_k, \bm X_k).$$ 
The atom $\wh{R}_1'(\bm X_1)$ has now the same schema as 
the atom ${R}_1(\bm X_1)$ in the original hierarchical query $Q$.
We remove from $\wh{R}_1'(\bm X_1)$ all variables that are leaves in 
$\wh Q_{\bm \sigma}'$ and obtain the atom $\wh{R}_1''(\bm X_1')$.
It follows from the definition of hierarchical queries that
the schema $\bm X_1'$ must be subsumed by the schema of one of the atoms 
$\wh{R}_2'(Z_2, \bm X_2),$ $\ldots,$ $\wh{R}_k'(Z_2, \ldots, Z_k, \bm X_k)$. We remove the atom $\wh{R}_1''(\bm X_1')$ from $\wh Q_{\bm\sigma}'$ 
and obtain the query 
$$\wh Q_{\bm\sigma}'' = \wh{R}_2'(Z_2, \bm X_2) \wedge \cdots \wedge 
\wh{R}_k'(Z_2, \ldots, Z_k,\bm X_k).$$ 
This query has $k-1$ atoms and contains the variables 
$Z_2, \ldots, Z_k$ but not $Z_1$. 
Observe that $\wh Q_{\bm\sigma}''$ is the \multivariate 
extension of a query $Q'$ that results from 
$Q$ by skipping the atom $R_1(\bm X_1)$.
Hierarchical queries remain hierarchical under atom elimination. Hence, $Q'$ is hierarchical.
By induction hypothesis, $\wh Q_{\bm\sigma}''$ is acyclic. 
\nop{
Observe that 
$Z_{i+1}$ in $Q_{\bm \sigma}''$ takes the role of 
$Z_{i}$ in $Q_{\bm\sigma}$ for $i \in [k]\setminus \{k\}$. 
By induction hypothesis, $Q_{\bm \sigma}''$ is acyclic.
}
Since we obtained $\wh Q_{\bm \sigma}''$ from $\wh Q_{\bm\sigma}$ by eliminating 
root variables, leaf variables, or subsumed atoms, 
it follows from 
Claim~\ref{claim:GYO_reduction_steps} that $\wh Q_{\bm\sigma}$ must be acyclic. This completes the induction step.

\subsection{Proof of Proposition~\ref{prop:nonq-hierarchical_multivariate_lower_bound}}
\begin{citedprop}[\ref{prop:nonq-hierarchical_multivariate_lower_bound}]
    For any non-hierarchical query $Q$, $\fw(\wh Q) \geq \frac{3}{2}$.
\end{citedprop}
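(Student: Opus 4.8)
The plan is to exhibit a single component $\wh Q_{\bm\sigma}$ of the \multivariate extension with $\fw(\wh Q_{\bm\sigma}) \ge \tfrac32$; since $\fw(\wh Q) = \max_{\bm\sigma}\fw(\wh Q_{\bm\sigma})$, this suffices. First I would unpack non-hierarchicality: there are distinct variables $A,B$ with $\atoms(A)\not\subseteq\atoms(B)$, $\atoms(B)\not\subseteq\atoms(A)$ and $\atoms(A)\cap\atoms(B)\ne\emptyset$, so we may pick atoms $R\in\atoms(A)\setminus\atoms(B)$, $T\in\atoms(B)\setminus\atoms(A)$ and $S\in\atoms(A)\cap\atoms(B)$. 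Put $\mathcal M\defeq\atoms(A)\cap\atoms(B)$; then $S\in\mathcal M$ while $R,T\notin\mathcal M$ and $R\neq T$, so $1\le|\mathcal M|\le k-2$ where $k=|\atoms(Q)|$. Write $\bm X_R$ for the schema of $R$, and similarly for the others. Now choose the permutation $\bm\sigma$ so that the atoms of $\mathcal M$ occupy the first $|\mathcal M|$ positions (in any order), $R$ occupies position $m\defeq|\mathcal M|+1$, $T$ occupies position $m+1$, and the rest fill the remaining positions arbitrarily.

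Next I would locate a ``triangle'' inside $\wh Q_{\bm\sigma}$ on the three variables $A$, $B$, $Z_m$. By Definition~\ref{defn:multivariate} the atom at position $p$ of $\wh Q_{\bm\sigma}$ carries exactly $Z_1,\ldots,Z_p$ on top of its original schema, so: every extension of an atom of $\mathcal M$ (positions $\le m-1$) contains $A$ and $B$ but not $Z_m$; the extension of $R$ (position $m$) contains $A$ and $Z_m$ but not $B$; and the extension of $T$ (position $m+1$) contains $B$ and $Z_m$ but not $A$. Hence each of the pairs $\{A,B\}$, $\{A,Z_m\}$, $\{B,Z_m\}$ lies in some atom of $\wh Q_{\bm\sigma}$, whereas \emph{no} atom contains all of $\{A,B,Z_m\}$: an atom containing both $A$ and $B$ is necessarily an extension of an atom of $\mathcal M$ (extending an atom only adds $Z$-variables), hence sits at a position $<m$ and omits $Z_m$.

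Then I would invoke the classical tree-decomposition argument. Fix any tree decomposition $(\calT,\chi)$ of $\wh Q_{\bm\sigma}$. The node sets $\{t:A\in\chi(t)\}$, $\{t:B\in\chi(t)\}$, $\{t:Z_m\in\chi(t)\}$ are subtrees of $\calT$ and pairwise intersect (each of the three pairs above lies in some atom, hence in some bag); by the Helly property for subtrees of a tree they share a common node $t^*$, so $\{A,B,Z_m\}\subseteq\chi(t^*)$. Let $(\lambda_e)$ be any feasible fractional edge cover of the bag query $(\wh Q_{\bm\sigma})_{\chi(t^*)}$. Summing the covering inequalities for $A$, $B$ and $Z_m$ gives $\sum_{e}\lambda_e\,|e\cap\{A,B,Z_m\}|\ge 3$; since no atom of $\wh Q_{\bm\sigma}$ — hence no atom $e$ of the bag query — meets $\{A,B,Z_m\}$ in more than two variables, this yields $2\sum_e\lambda_e\ge 3$, i.e.\ $\rho^{*}((\wh Q_{\bm\sigma})_{\chi(t^*)})\ge\tfrac32$. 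Thus $\fw(\calT,\chi)\ge\tfrac32$ for every tree decomposition, so $\fw(\wh Q_{\bm\sigma})\ge\tfrac32$ and therefore $\fw(\wh Q)\ge\tfrac32$.

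The only delicate point — and the reason for the particular shape of $\bm\sigma$ — is ensuring that the triple $\{A,B,Z_m\}$ is not already covered by a single atom, which would collapse its fractional edge cover number to $1$; scheduling exactly the atoms of $\mathcal M$ before the ``birth'' position $m$ of $Z_m$ is what guarantees this. Everything else is bookkeeping about which $Z_i$ occurs in which atom, together with the standard fact that pairwise-intersecting subtrees of a tree have a common node.
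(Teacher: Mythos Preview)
Your proof is correct and follows essentially the same strategy as the paper: choose a permutation $\bm\sigma$ so that the two witnessing variables together with one fresh $Z$-variable form a triangle in $\wh Q_{\bm\sigma}$ (no atom contains all three), and deduce that every tree decomposition has a bag whose fractional edge cover number is at least $\tfrac32$. The only differences are cosmetic: the paper places the three distinguished atoms at the \emph{last} three positions and uses $Z_{k-1}$ as the third triangle vertex, and it argues via variable elimination orders and monotonicity of $\rho^*$ rather than the Helly property and direct LP counting --- your route is slightly more self-contained, but the underlying idea is identical.
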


\begin{proof}
Consider a non-hierarchical query $Q$. By definition, $Q$
must contain two variables $X$ and $Y$ and three atoms 
$R_1(\bm X_1)$, $R_2(\bm X_2)$, and $R_3(\bm X_3)$ such that
\begin{itemize}
\item $X \in \bm X_1$ but $X \notin \bm X_3$,  
\item $Y \in \bm X_3$ but $Y \notin \bm X_1$, and
\item $X,Y \in \bm X_2$.
\end{itemize}
The \multivariate extension $\wh{Q}$ of $Q$ must have a component 
$\wh Q_{\bm \sigma}$ that contains three atoms 
of the form 
\begin{align*}
    \wh{R}_1(Z_1, \ldots, Z_{k-1},\bm X_1), \wh{R}_2(Z_1, \ldots, Z_{k-2},\bm X_2), \wh{R}_3(Z_1, \ldots, Z_{k}, \bm X_3)
\end{align*}
where $\{Z_1, \ldots, Z_k\}$ are fresh variables.
Moreover, no other atoms in $\wh Q_{\bm\sigma}$ contains any of the the three variables \linebreak
$\{Z_{k-2}, Z_{k-1}, Z_k\}$.
Let $\bm Y = \{X, Y, Z_{k-1}\}$. When restricted to the variables in $\bm Y$,
the resulting query, which we call $\wh Q_{\bm\sigma}'$, is a triangle query.
In particular, let $\wh Q_{\bm\sigma}' \defeq (\wh Q_{\bm\sigma})_{\bm Y}$, where the notation $(\wh Q_{\bm\sigma})_{\bm Y}$ was defined in~\eqref{eq:bag_query}. Then,
\begin{align*}
    \wh Q_{\bm\sigma}' = \wh{R}_1(Z_{k-1}, X), \wh{R}_2(X, Y), \wh{R}_3(Z_{k-1}, Y)
\end{align*}
Therefore, $\rho^*(\wh Q_{\bm\sigma}') = 3/2$.
We use this fact to to show that $\fw(\wh Q_{\bm\sigma})$ must be at least $3/2$.
To that end, we use {\em variable elimination orders} (see Appendix~\ref{app:prelims}).
In particular, we use the definition of fractional hypertree width in terms of variable elimination orders, as given by Eq.~\eqref{eq:fhtw:vo}.
Let $\bm\mu$ be an {\em optimal} variable elimination order for $\wh Q_{\bm\sigma}$, i.e., $\fw(\bm\mu) = \fw(\wh Q_{\bm\sigma})$.
We make the following claim:
\begin{claim}
    There must be a variable $W \in \bm\mu$ where $\bm U_{W}^{\bm\mu} \supseteq \bm Y$.
\end{claim}
We prove the above claim as follows. Initially, the restricted query $(\wh Q_{\bm\sigma})_{\bm Y}$
is a triangle query.
Every time we eliminate a variable $W$ that does {\em not} satisfy $\bm U_{W}^{\bm\mu} \supseteq \bm Y$, the restricted query $(\wh Q_{\bm\sigma})_{\bm Y}$ remains 
a triangle query. Hence, elimination could not have terminated because we haven't yet eliminated
any of the variables in $\bm Y$. Therefore, there must be a variable $W \in \bm\mu$ where $\bm U_{W}^{\bm\mu} \supseteq \bm Y$. Now, we have:
\begin{align*}
    \fw(\wh Q_{\bm \sigma}) =\fw(\bm\mu) \geq \rho^*(\bm U_{W}^{\bm\mu}) \geq \rho^*(\bm Y) =3/2.
\end{align*}
The first equality above follows from our assumption that $\bm\mu$ is an optimal variable elimination order for $\wh Q_{\bm\sigma}$.
The second inequality above follows from the monotonicity of the fractional edge cover number $\rho^*$ (see Appendix~\ref{app:prelims}).
\end{proof}

\subsection{Proof of Proposition~\ref{prop:fhtw_multi_var_LW}}
\begin{citedprop}[\ref{prop:fhtw_multi_var_LW}]
    For any Loomis-Whitney query $Q$, $\fw(\wh Q) = \frac{3}{2}$.
\end{citedprop}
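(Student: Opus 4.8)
The plan is to prove the two inequalities $\fw(\wh Q)\ge \frac{3}{2}$ and $\fw(\wh Q)\le \frac{3}{2}$ separately. For the lower bound I would first note that every Loomis-Whitney query of degree $k\ge 3$ is non-hierarchical: taking any three of its variables $X_1,X_2,X_3$, the three atoms that omit $X_1$, $X_2$, and $X_3$ respectively witness that $\atoms(X_1)$, $\atoms(X_2)$, $\atoms(X_3)$ are pairwise incomparable and have pairwise non-empty intersections. Hence Proposition~\ref{prop:nonq-hierarchical_multivariate_lower_bound} applies and gives $\fw(\wh Q)\ge\frac{3}{2}$ immediately.

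The real content is the upper bound $\fw(\wh Q)\le\frac{3}{2}$, i.e. $\fw(\wh Q_{\bm\sigma})\le\frac{3}{2}$ for every component. Since a Loomis-Whitney query is invariant up to isomorphism under relabeling its variables, and its atoms are in bijection with the variables they omit, it suffices to treat the identity permutation, where $\wh Q_{\bm\sigma}$ has atoms $\wh S_i(Z_1,\dots,Z_i,X_1,\dots,X_{i-1},X_{i+1},\dots,X_k)$ for $i\in[k]$. For this query I would exhibit a tree decomposition whose tree is the path $B_1 - B_2 - \cdots - B_k$, where $B_1$ and $B_k$ are exactly the schemas of $\wh S_1$ and $\wh S_k$, and $B_i=\{Z_1,\dots,Z_i\}\cup\{X_1,\dots,X_k\}$ for $1<i<k$ (that is, the schema of $\wh S_i$ with its single omitted variable $X_i$ added back). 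Verifying this is a valid tree decomposition is routine bookkeeping: each atom's schema lies inside the corresponding $B_i$, and for each variable the set of bags containing it is a contiguous sub-path --- $Z_m$ occupies the suffix $\{B_m,\dots,B_k\}$, $X_1$ the suffix $\{B_2,\dots,B_k\}$, $X_k$ the prefix $\{B_1,\dots,B_{k-1}\}$, and every remaining $X_j$ all of $B_1,\dots,B_k$.

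The crux is bounding $\rho^*$ of each bag-restricted query by $\frac{3}{2}$. The extreme bags $B_1$ and $B_k$ are trivial since each equals an atom's schema, so that single atom covers it with weight $1$. For a middle bag $B_i$ ($2\le i\le k-1$) the restricted atoms are $\wh S_l\cap B_i=\{Z_1,\dots,Z_{\min(l,i)}\}\cup(\{X_1,\dots,X_k\}\setminus\{X_l\})$, and I would use the fractional edge cover that assigns weight $\frac{1}{k-i+1}$ to each of $\wh S_i,\dots,\wh S_k$ and weight $\frac{1}{(i-1)(k-i+1)}$ to each of $\wh S_1,\dots,\wh S_{i-1}$. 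A short check shows it covers $Z_i$ (hence every $Z_m$ with $m\le i$) with total weight exactly $1$ on the top atoms, and covers every $X_j$ because the total weight minus $\lambda_j$ is at least $1$ in all cases; its total weight equals $1+\frac{1}{k-i+1}\le\frac{3}{2}$, using $i\le k-1$. Thus $\fw(\wh Q_{\bm\sigma})\le\max(1,\frac{3}{2})=\frac{3}{2}$, and taking the maximum over $\bm\sigma$ gives $\fw(\wh Q)\le\frac{3}{2}$; together with the lower bound, $\fw(\wh Q)=\frac{3}{2}$.

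The main obstacle is choosing the right tree decomposition: the key insight is that one must add back to each middle atom precisely the one variable it omits --- no more --- so that the $X_j$'s regain connectivity while the $Z$-variables still form suffixes along a single path, and then one must check that the resulting bag (a Loomis-Whitney core together with a short prefix $Z_1,\dots,Z_i$ of the time variables) still admits a fractional edge cover of weight at most $\frac{3}{2}$. Once the bags are chosen correctly, the symmetry reduction to the identity permutation and all the connectivity checks are straightforward.
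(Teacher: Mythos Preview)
Your proof is correct, but it takes a more elaborate route than the paper's. Both proofs handle the lower bound identically (Loomis-Whitney is non-hierarchical, so Proposition~\ref{prop:nonq-hierarchical_multivariate_lower_bound} applies). For the upper bound, however, the paper avoids your path decomposition entirely: it first observes that $Z_k$ appears in only one atom of $\wh Q_{\bm\sigma}$, so eliminating it leaves a query $\wh Q_{\bm\sigma}'$ with the same fractional hypertree width. It then shows directly that $\rho^*(\wh Q_{\bm\sigma}')\le\frac{3}{2}$ using the \emph{trivial} one-bag tree decomposition. The cover assigns weight $\frac{1}{2}$ to each of the three atoms at positions $k-2,\,k-1,\,k$ in the permutation (after removing $Z_k$ these have $Z$-schemas $\{Z_1,\dots,Z_{k-2}\}$, $\{Z_1,\dots,Z_{k-1}\}$, $\{Z_1,\dots,Z_{k-1}\}$). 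Every $Z_j$ with $j\le k-1$ lies in the last two of these, and every $X_j$ lies in at least two of the three since each Loomis-Whitney atom omits a distinct $X$-variable; hence the cover is valid with total weight $\frac{3}{2}$.

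Your approach gives a finer decomposition (most bags have $\rho^*$ strictly below $\frac{3}{2}$), which could be useful if one wanted tighter per-bag control, but for the stated result the paper's three-atom cover is considerably shorter and sidesteps the connectivity bookkeeping and the parametrized fractional covers you had to verify.
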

    
Consider an arbitrary Loomis-Whitney query $Q$ of some degree 
$k \geq 3$ and its \multivariate extension $\wh Q$
with fresh variables $Z_1, \ldots , Z_k$.
Proposition~\ref{prop:nonq-hierarchical_multivariate_lower_bound} implies that
the fractional hypertree width of
$\wh Q$ is at least $\frac{3}{2}$.
Let $\wh Q_{\bm \sigma}$ be an arbitrary  component of $\wh Q$.
Next, we show that the fractional hypertree width of $\wh Q_{\bm \sigma}$
is at most
$\frac{3}{2}$, which concludes the proof. 

Let $\wh Q_{\bm \sigma}'$ be the query that we obtain from
$\wh Q_{\bm \sigma}$ by eliminating the variable $Z_k$.
Since this variable occurs in only a single atom in 
$\wh Q_{\bm \sigma}$, the fractional hypertree width of 
$\wh Q_{\bm \sigma}$ and $\wh Q_{\bm \sigma}'$ must be the same. This follows from the equivalence between variable elimination orders and tree decompositions, as explained in Section~\ref{app:prelims}.
It remains to show that $\wh Q_{\bm \sigma}'$ has 
a fractional hypertree width of $\frac{3}{2}$.
By definition of a multivariate extension, $\wh Q_{\bm\sigma}'$ must contain three atoms
of the form:
\begin{align*}
    \wh{R}_1(Z_1, \ldots, Z_{k-2},\bm X_1), \wh{R}_2(Z_1, \ldots, Z_{k-1},\bm X_2), \wh{R}_3(Z_1, \ldots, Z_{k-1}, \bm X_3)
\end{align*}
Note that each of the variables $\{Z_1, \ldots, Z_{k-1}\}$ is contained in at least 
two of the atoms $\wh R_1, \wh R_2, \wh R_3$ above.
Moreover, because $\bm X_1, \bm X_2, \bm X_3$ are the schemas of three atoms in a Loomis-Whitney query $Q$, each variable $X \in \vars(Q)$ must occur in at least two of the three atoms $\wh R_1, \wh R_2, \wh R_3$ above.
Consider now a fractional edge cover $\lambda_{\wh R} = \lambda_{\wh S} = \lambda_{\wh T} = \frac{1}{2}$.
This is indeed a valid fractional edge cover for $\wh Q_{\bm \sigma}'$, and it proves that
$\rho^*(\wh Q_{\bm \sigma}') \leq \frac{3}{2}$.
This implies that $\fw(\wh Q_{\bm \sigma})\leq \frac{3}{2}$.

\subsection{Proof of Proposition~\ref{prop:upper_bound_fhtw_multi}}
\begin{citedprop}[\ref{prop:upper_bound_fhtw_multi}]
    For any query $Q$, $\fw(Q) \leq  \fw(\wh Q) \leq \fw(Q)+ 1$.
\end{citedprop}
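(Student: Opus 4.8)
The plan is to establish the two inequalities separately. Since the fractional hypertree width of a union is the maximum over its components (Definition~\ref{defn:fhtw}), it suffices to compare $\fw(Q)$ with $\fw(\wh Q_{\bm\sigma})$ for each permutation $\bm\sigma\in\Sigma_k$; by symmetry I may take $\bm\sigma$ to be the identity, so that $\wh Q_{\bm\sigma}=\wh R_1(Z_1,\bm X_1)\wedge \wh R_2(Z_1,Z_2,\bm X_2)\wedge\cdots\wedge \wh R_k(Z_1,\ldots,Z_k,\bm X_k)$, where $Z_1,\ldots,Z_k$ are the fresh variables.

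\emph{Lower bound $\fw(Q)\le\fw(\wh Q)$.} Here I would invoke the folklore monotonicity of $\fw$ under restriction of the variable set: for any query $P$ and any $\bm W\subseteq\vars(P)$ one has $\fw(P_{\bm W})\le\fw(P)$. This follows by taking an optimal tree decomposition $(\calT,\chi)$ of $P$ and replacing every bag $\chi(t)$ by $\chi(t)\cap\bm W$; coverage and connectivity survive intersection with a fixed set, and $\rho^*(P_{\chi(t)\cap\bm W})\le\rho^*(P_{\chi(t)})$ since any fractional edge cover of a query remains one after deleting variables from all atom schemas (covering constraints only get dropped). Now observe that, viewed as hypergraphs, $(\wh Q_{\bm\sigma})_{\vars(Q)}$ is precisely $Q$: deleting the fresh variables $Z_1,\ldots,Z_k$ from every atom of $\wh Q_{\bm\sigma}$ returns the atoms $\bm X_1,\ldots,\bm X_k$ of $Q$, and relation symbols play no role in $\fw$. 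Hence $\fw(Q)=\fw\bigl((\wh Q_{\bm\sigma})_{\vars(Q)}\bigr)\le\fw(\wh Q_{\bm\sigma})\le\fw(\wh Q)$.

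\emph{Upper bound $\fw(\wh Q_{\bm\sigma})\le\fw(Q)+1$.} I would argue with variable elimination orders (the characterization in~\eqref{eq:fhtw:vo}). Fix an \emph{optimal} elimination order $\bm\nu=(Y_1,\ldots,Y_n)$ for $Q$, so $\fw(\bm\nu)=\fw(Q)$, and consider the order $\bm\mu$ for $\wh Q_{\bm\sigma}$ that first eliminates $Y_1,\ldots,Y_n$ in the order of $\bm\nu$ and then eliminates $Z_k,Z_{k-1},\ldots,Z_1$. The heart of the argument is that running $\bm\mu$ on $\wh Q_{\bm\sigma}$ mirrors running $\bm\nu$ on $Q$: I would prove by induction on the elimination stage that, as long as only $X$-variables have been eliminated, the current atoms of $\wh Q_{\bm\sigma}$ are in bijection with those of $Q$, each atom carrying exactly the same $X$-variables together with some subset of $\{Z_1,\ldots,Z_k\}$ — this holds because $Z$-variables are never $X$-variables, so eliminating $Y_m$ merges exactly the same family of atoms in both runs, and $X$-elimination never deletes a $Z$-variable. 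Consequently $\bm U_{Y_m}^{\bm\mu}\subseteq\bm U_{Y_m}^{\bm\nu}\cup\{Z_1,\ldots,Z_k\}$ for every $m$. Since the descendant of $\wh R_k$ still contains all of $Z_1,\ldots,Z_k$ at that stage, a fractional edge cover of $(\wh Q_{\bm\sigma})_{\bm U_{Y_m}^{\bm\mu}}$ is obtained by putting weight $1$ on $\wh R_k$ (covering the $Z$-part) and an optimal cover of weight $\rho^*(Q_{\bm U_{Y_m}^{\bm\nu}})\le\fw(\bm\nu)=\fw(Q)$ on the remaining $X$-part (the atoms $R_i$ of $Q$ correspond to the atoms $\wh R_i$ of $\wh Q_{\bm\sigma}$); monotonicity of $\rho^*$ then gives $\rho^*\bigl((\wh Q_{\bm\sigma})_{\bm U_{Y_m}^{\bm\mu}}\bigr)\le\fw(Q)+1$. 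Once every $X$-variable has been eliminated, every remaining atom has schema inside $\{Z_1,\ldots,Z_k\}$, and when $Z_j$ is eliminated the descendant of $\wh R_k$ still has schema $\{Z_1,\ldots,Z_j\}$, so $\rho^*\bigl((\wh Q_{\bm\sigma})_{\bm U_{Z_j}^{\bm\mu}}\bigr)\le 1$. Taking the maximum over all eliminated variables yields $\fw(\wh Q_{\bm\sigma})\le\fw(\bm\mu)\le\fw(Q)+1$, and maximizing over $\bm\sigma$ finishes the proof.

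\emph{Main obstacle.} The routine part is the monotonicity of $\rho^*$ and of $\fw$ under variable restriction; the delicate part is the inductive invariant in the upper bound — I must verify that eliminating an $X$-variable in $\wh Q_{\bm\sigma}$ merges precisely the same atoms as in $Q$, that the $Z$-variables accreted onto merged atoms never enlarge a bag past $\bm U^{\bm\nu}\cup\{Z_1,\ldots,Z_k\}$, and that throughout the process the single atom $\wh R_k$ (or its current descendant) always witnesses a weight-$1$ cover of the $Z$-part of the bag being formed. Once that invariant is pinned down, the bound $\fw(Q)+1$ drops out, the ``$+1$'' being exactly the cost of covering the chain $Z_1,\ldots,Z_k$.
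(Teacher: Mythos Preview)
Your proposal is correct. The lower bound is essentially the paper's argument (strip the $Z$-variables from every bag of an optimal tree decomposition of $\wh Q_{\bm\sigma}$, then transfer fractional covers). For the upper bound, the paper takes the direct route: start from an optimal tree decomposition of $Q$ and simply add $Z_1,\ldots,Z_k$ to \emph{every} bag; since the single atom $\wh R_k$ already contains all the $Z$'s, placing weight $1$ on it covers the new variables and each bag's $\rho^*$ grows by at most $1$. You reach the same conclusion through the dual characterization via variable elimination orders~\eqref{eq:fhtw:vo}, proving an invariant that $X$-eliminations in $\wh Q_{\bm\sigma}$ mirror those in $Q$. This is valid but more elaborate: the invariant you carefully establish is precisely what the paper's ``add all $Z$'s to every bag'' construction gives for free, since tree decompositions and elimination orders are equivalent (Appendix~\ref{app:prelims}). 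One small terminological slip: when you bound $\rho^*$ of a bag you should invoke the \emph{original} atom $\wh R_k$ rather than its ``descendant'' under elimination, because $\rho^*$ in Definition~\ref{defn:fhtw} is computed over the restriction of the original query; this is harmless here since $\wh R_k$ itself already contains every $Z_i$.
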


We consider  a query $Q = R_1(\bm X_1)\wedge \ldots \wedge R_k(\bm X_k)$ and its \multivariate extension 
$\wh{Q} = \bigvee_{\bm\sigma \in \Sigma_k}
\wh Q_{\bm\sigma}$ 
with fresh variables $Z_1, \ldots , Z_k$ (Eq.~\eqref{eq:multivariate}).
 
\paragraph{Proof of $\fw(Q)\leq  \fw(\wh Q)$}
 For the sake of contradiction, assume that   
 $\fw(Q)>  \fw(\wh Q)$.
 Consider an arbitrary component $\wh Q_{\bm\sigma}$
 of $\wh{Q}$. 
 It follows from the definition of the fractional hypertree width that 
 $\fw(\wh Q_{\bm\sigma}) \leq \fw(\wh Q)$.
 Hence,  
 $\fw(Q)>  \fw(\wh Q_{\bm\sigma})$.
 Let $(\calT_{\bm \sigma}, \chi_{\bm \sigma})$ 
 be a tree decomposition of 
 $\wh Q_{\bm\sigma}$ with 
 $\fw(\calT_{\bm \sigma}, \chi_{\bm \sigma})$ $=$ $\fw(\wh Q_{\bm\sigma})$. 
 We will construct from $(\calT_{\bm \sigma}, \chi_{\bm \sigma})$
 a tree decomposition $(\calT, \chi)$ for $Q$ with 
  $\fw(\calT, \chi) \leq \fw(\calT_{\bm \sigma}, \chi_{\bm \sigma})$.
  This implies that $\fw(\calT, \chi) \leq \fw(\wh Q)$, hence 
  $\fw(Q) \leq \fw(\wh Q)$, which is a contradiction.

Without loss of generality, assume that 
$\wh Q_{\bm\sigma}$ is of the form 
$\wh Q_{\bm\sigma} = \wh{R}_1(Z_1, \bm X_1), \ldots,  
\wh{R}_k(Z_1, \ldots, Z_k, \bm X_k)$.
For all other components of $\wh Q$, the argument 
is completely analogous.
 Let $\wh B_1, \ldots, \wh B_n$ be the bags of 
 $(\calT_{\bm \sigma}, \chi_{\bm \sigma})$.
 The pair $(\calT, \chi)$ results from 
 $(\calT_{\bm \sigma}, \chi_{\bm \sigma})$ by
 eliminating the variables $Z_1, \ldots , Z_k$
 from all bags.
 Let $B_1, \ldots, B_n$ denote the bags of 
 $(\calT, \chi)$.
 The pair $(\calT, \chi)$ constitutes 
 a tree decomposition for $Q$, since 
 (a) for each atom $R_i(\bm X_i)$ of $Q$ there must still be one bag 
 that contains $\bm X_i$ and (b) for each variable 
 $X \in \bigcup_{i \in [k]} \bm X_i$ the bags in $(\calT, \chi)$
 containing $X$ build a connected subtree. 
 Given $i \in [n]$, let 
 $\wh Q_{\bm \sigma}^i$ be the query that we obtain 
 from $\wh Q_{\bm \sigma}$ by restricting the schema 
 of each atom to the variables in $\wh B_i$.
 Similarly, let $Q^i$ be the query obtained 
 from $Q$ by restricting the schema 
 of each atom to the variables in the bag $B_i$.
Consider now a fractional edge cover 
 $\big(\lambda_{\wh{R}_i(\bm X_i)}\big)_{i \in [k]}$
 for $\wh Q_{\bm \sigma}^i$.
This tuple is also a fractional edge cover for 
$Q^i$,
 since $B_i \subseteq \wh B_i$.
 This means that $\rho^*(Q^i) \leq 
 \rho^*(\wh Q_{\bm \sigma}^i)$.
 Hence, $\fw(\calT, \chi)$ $\leq$
$\fw(\calT_{\bm \sigma}, \chi_{\bm \sigma})$.  

\paragraph{Proof of $\wh \fw(Q)\leq \fw(Q)+ 1$}
Let $\wh Q_{\bm\sigma}$ be an arbitrary
component of $\wh Q$. 
Consider a tree decomposition 
$(\calT, \chi)$ of $Q$ with fractional hypertree width $\fw$.
We will transform $(\calT, \chi)$ into a tree decomposition 
$(\calT_{\bm \sigma}, \chi_{\bm \sigma})$ for $\wh Q_{\bm\sigma}$
with fractional hypertree width of at most $\fw(Q)+ 1$. This implies that $\fw(\wh{Q})$ is at most $\fw(Q)+ 1$. 

Without loss of generality, assume that 
$\wh Q_{\bm\sigma}$ is of the form 
$\wh Q_{\bm\sigma} = \wh{R}_1(Z_1, \bm X_1), \ldots,  
\wh{R}_k(Z_1, \ldots, Z_k, \bm X_k)$.
For other components of $\wh Q$, the argument 
is analogous.
 Let $B_1, \ldots, B_n$ be the bags of $(\calT, \chi)$.
The pair $(\calT_{\bm\sigma}, \chi_{\bm\sigma})$ 
results from $(\calT, \chi)$ by adding to each 
bag the variables $Z_1, \ldots , Z_k$. 
 Let $\wh B_1 = B_1 \cup \{Z_1, \ldots, Z_k \}, \ldots, 
 \wh B_n \cup \{Z_1, \ldots, Z_k\}$ be the bags of 
 $(\calT_{\bm\sigma}, \chi_{\bm\sigma})$.
By construction, $(\calT_{\bm\sigma}, \chi_{\bm\sigma})$
constitutes a tree decomposition for $\wh Q_{\bm\sigma}$, since
(a) for each $i \in [k]$, the variables 
$\{Z_1, \ldots, Z_i\} \cup \bm X_i$
must be included in one bag  
and (b) for each variable $X \in \{Z_1, \ldots , Z_k\}
\cup \bigcup_{i \in [k]} \bm X_i$, the bags that include $X$ still build a connected subtree. 
It remains to show 
that $\fw(\calT_{\bm\sigma}, \chi_{\bm\sigma}) \leq \fw(Q)+1$.
 For some $i \in [n]$, let $Q^i$ be the query obtained 
 from $Q$ by restricting each atom to the variables in $B_i$
 and let 
 $\wh Q_{\bm \sigma}^i$ be the query obtained 
 from $\wh Q_{\bm \sigma}$ by restricting 
 each atom to the variables in $\wh B_i$.
Consider a fractional edge cover 
 $\big(\lambda_{R_i(\bm X_i)}\big)_{i \in [k]}$
 for $Q^i$.
We can turn it into a fractional edge cover for 
$\wh Q^i$ by setting $\lambda_{R_k(\bm X_k)}$ to $1$.
 This means that $\rho^*(\wh Q^i_{\bm \sigma}) \leq 
 \rho^*(Q^i) + 1$. This implies 
$\fw(\calT_{\bm \sigma}, \chi_{\bm \sigma})$ $\leq$
$\fw(\calT, \chi) + 1$.  

\subsection{Proof of Proposition~\ref{thm:fully_dynamic_OMV_lower_bound}}
\begin{citedprop}[\ref{thm:fully_dynamic_OMV_lower_bound}]
    For any non-hierarchical query $Q$ without self-joins, update stream of length $N$, and $\gamma >0$, there is no algorithm that solves 
    $\ivmpm[Q]$ or $\ivmpmd[Q]$ with $\tildeO{N^{1/2-\gamma}}$
    amortized update time, unless the OMv-Conjecture fails. 
\end{citedprop}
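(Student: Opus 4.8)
The plan is to reduce the OuMv problem to $\ivmpm[Q]$ and to $\ivmpmd[Q]$, in the spirit of the lower bound in~\cite{BerkholzKS17}, and then invoke Conjecture~\ref{conj:oumv} together with the fact that Conjecture~\ref{conj:omv} implies Conjecture~\ref{conj:oumv}~\cite{HenzingerKNS15}. Since $Q$ is non-hierarchical, there are variables $X,Y\in\vars(Q)$ such that $\atoms(X)\setminus\atoms(Y)$, $\atoms(Y)\setminus\atoms(X)$, and $\atoms(X)\cap\atoms(Y)$ are all nonempty. Fix a dummy domain value $\star$. Given an OuMv instance consisting of an $n\times n$ Boolean matrix $\bm M$ and pairs $(\bm u_1,\bm v_1),\ldots,(\bm u_n,\bm v_n)$, I maintain a database $\calD$ for $Q$ as follows: every relation whose atom lies outside $\atoms(X)\cup\atoms(Y)$ holds the single all-$\star$ tuple; every relation whose atom lies in $\atoms(X)\cap\atoms(Y)$ holds a copy of $\bm M$ encoded through its $X$- and $Y$-positions (all other positions $\star$); every relation whose atom lies in $\atoms(X)\setminus\atoms(Y)$ holds a copy of the current vector $\bm u_r$ through its $X$-position; and every relation whose atom lies in $\atoms(Y)\setminus\atoms(X)$ holds a copy of $\bm v_r$ through its $Y$-position. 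Because $Q$ has no self-joins, these relations are populated independently, and the tuples of $Q(\calD)$ are exactly the $\star$-padded tuples with $X=i$, $Y=j$ ranging over pairs with $u_r(i)=M_{ij}=v_r(j)=1$; hence $Q(\calD)\neq\emptyset$ iff $\bm u_r^{\top}\bm M\bm v_r=1$.

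The update stream has two parts. First, a one-time prefix of $\bigO{n^2}$ inserts loads the $\atoms(X)\cap\atoms(Y)$ relations with $\bm M$ and the dummy relations with their single tuple. Then, for each round $r\in[n]$: delete $\bm u_{r-1}$ from the $\atoms(X)\setminus\atoms(Y)$ relations (making them empty), replace $\bm v_{r-1}$ by $\bm v_r$ in the $\atoms(Y)\setminus\atoms(X)$ relations, and finally insert the entries of $\bm u_r$ into the $\atoms(X)\setminus\atoms(Y)$ relations one entry at a time. After each of these last inserts I invoke constant-delay enumeration: for $\ivmpm[Q]$ I test whether the full output is nonempty, and for $\ivmpmd[Q]$ I test whether the reported delta is nonempty; in either case this costs $\bigO{1}$ by the constant-delay guarantee, and I never call the enumeration procedure after a delete. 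If some test in round $r$ is positive I record $\bm u_r^{\top}\bm M\bm v_r=1$ and skip the remaining inserts of that round; otherwise I record $0$. For $\ivmpmd[Q]$ this is correct because the $\atoms(X)\setminus\atoms(Y)$ relations are empty at the start of round $r$ and only grow during it, so the union of the per-insert deltas equals the current $Q(\calD)$.

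For the accounting, the stream has length $N=\bigO{n^2}$ (the $\bm M$ prefix plus $\bigO{n}$ updates per round), and the number of enumeration calls is $\bigO{n^2}$. If $\ivmpm[Q]$ (resp.\ $\ivmpmd[Q]$) admitted amortized update time $\tildeO{N^{1/2-\gamma}}$, then all updates together would take $\tildeO{N\cdot N^{1/2-\gamma}}=\tildeO{N^{3/2-\gamma}}=\tildeO{n^{3-2\gamma}}$ time, to which the $\bigO{n^2}$ constant-time tests add nothing; absorbing the $\polylog(N)=\polylog(n)$ factor into $n^{\gamma}$ yields an $\bigO{n^{3-\gamma}}$ algorithm for OuMv, contradicting Conjecture~\ref{conj:oumv}. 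Since the constant enumeration delay is never amortized, the $\bigO{1}$ cost per enumeration test is unconditional, as required.

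The step I expect to be the main obstacle is the $\ivmpmd[Q]$ case: a single-tuple update can change $Q(\calD)$ by $\Theta(n)$ tuples, so enumerating every delta in full would cost $\Theta(n^3)$ and destroy the bound. The resolution---call enumeration only after inserts, use it only to test non-emptiness, and arrange each round to start from an empty $X$-side so that a nonempty delta certifies the OuMv answer---is precisely what keeps the per-call cost $\bigO{1}$. A secondary, more routine point is verifying that the encoding is consistent across the possibly several atoms in each of the three atom groups and meshes with the $\star$-padded atoms; this is where the hypothesis that $Q$ has no self-joins is used.
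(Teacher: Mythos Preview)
Your proof is correct and follows essentially the same OuMv reduction as the paper, including the key idea for $\ivmpmd$ of emptying one side at the start of a round and refilling it one tuple at a time so that a nonempty delta certifies the answer. The only cosmetic differences are that the paper first establishes the bound for the minimal witness $Q_{\text{nh}} = R(A)\wedge S(A,B)\wedge T(B)$ and then reduces the general non-hierarchical case to it, and that it refills the $Y$-side rather than the $X$-side; neither affects the argument.
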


The proof is based on the Online Vector-Matrix-Vector Multiplication (OuMv)
Conjecture, which is implied by the Online Matrix-Vector Multiplication (OMv)
Conjecture~\cite{HenzingerKNS15}. See Appendix~\ref{app:prelims}.

Consider the simplest non-hierarchical query $Q_{nh} = R(A) \wedge S(A,B) \wedge T(B)$.
The proof of Proposition~\ref{thm:fully_dynamic_OMV_lower_bound} consists of three parts. 
First, we show hardness for the problem  $\ivmpm[Q_{nh}]$ 
(Section~\ref{sec:hardness_ivmpm_RST}). 
Afterwards, we show how the hardness proof for $\ivmpm[Q_{nh}]$ extends to a hardness proof 
for $\ivmpmd[Q_{nh}]$
(Section~\ref{sec:hardness_ivmpm_RST_delta}).
Finally, we explain how these two results generalize to arbitrary non-hierarchical queries without self-joins 
(Section~\ref{sec:hardness_ivmpm_general}). The technique used in the latter part is standard in the literature
(see, e.g.~\cite{BerkholzKS17}), so we only give the high-level idea.

\subsubsection{Hardness of $\ivmpm[Q_{\text{nh}}]$}
\label{sec:hardness_ivmpm_RST}
We reduce the OuMv problem to $\ivmpm[Q_{\text{nh}}]$. 
Assume that there is some $\gamma > 0$ and an algorithm $\calA$ that solves $\ivmpm[Q_{\text{nh}}]$
with $\tildeO{N^{1/2-\gamma}}$ amortized update time for update streams of length $N$. We show how to use $\calA$  
to solve the OuMv problem in sub-cubic time, which is a contradiction to the OuMv Conjecture.
Let the $n \times n$ matrix $\bm M$ and the $n\times 1$ column vectors $(\bm u_1, \bm v_1), \ldots , (\bm u_n, \bm v_n)$
be the input to the OuMv problem. For $i,j \in [n]$, we denote by $\bm M(i,j)$ the Boolean value in $\bm M$
at position $(i,j)$ and by $\bm u(i)$ (and $\bm v(i)$) the value in $\bm u(i)$ (and $\bm v(i)$) at position $i$. The idea is to use relation $S$ to encode the matrix $\bm M$ and the relations $R$ and $T$ to encode 
the vectors $\bm u_r$ and $\bm v_r$ for $r \in [n]$. 
The encoding closely follows prior work \cite{BerkholzKS17}.
Starting from the empty database, we execute at most $n^2$ inserts to $S$ such that 
$(i,j) \in S \Leftrightarrow \bm M(i,j) = 1$. In each round $r \in [n]$, we do the following. 
We use at most $4n$ updates to delete the tuples in $R$ and $T$ and to insert new tuples such that 
$i \in \bm R \Leftrightarrow \bm u_r(i) =1$ and $j \in \bm T \Leftrightarrow \bm v_r(j) =1$.
At the end of round $r$ we request the enumeration of the result of $Q_{\text{nh}}$.
It holds $\bm u_r^{\top}\bm M \bm v_r = 1$ if and only if the result of $Q_{\text{nh}}$ contains at least one tuple. 
So, we can stop the enumeration process after constant time. Summed up over $n$ rounds,   
we execute $n^2 + 4n^2 = \bigO{n^2}$ updates, some of which are followed by enumeration requests that do not require more than constant time. 
Hence, using the update mechanism of algorithm $\calA$, the overall time to solve the OuMv problem takes 
$\bigO{\polylog(n^2) \cdot n^2 \cdot (n^2)^{1/2-\gamma}}$ $=$
$\bigO{\polylog(n^2) \cdot n^{3-2\gamma}}$. As the polylogarithmic factor 
$\polylog(n^2)$ can be asymptotically upper-bounded by $n^{\gamma'}$
for arbitrarily small $\gamma'$, we conclude that
the OuMv problem can be solved in time 
$\bigO{n^{3-\gamma''}}$ for some $\gamma'' >0$.
This contradicts the OuMv Conjecture. 

\subsubsection{Hardness of $\ivmpmd[Q_{\text{nh}}]$}
\label{sec:hardness_ivmpm_RST_delta}
We reduce the OuMv problem to $\ivmpmd[Q_{\text{nh}}]$.
The encoding of the matrix and the vectors is exactly as in 
Section~\ref{sec:hardness_ivmpm_RST}. The only difference is how we obtain the result of 
$\bm u_r^{\top} \bm M \bm v_r$ in each round $r \in [n]$. We first execute the necessary updates to 
encode $\bm u_r$ in $R$. Then, we delete the contents of $T$. Afterwards, we insert, one by one,
each value $j$ with $\bm v_r(j) = 1$ to $T$. After each insert, we request the enumeration of the delta in the 
result of $Q_{\text{nh}}$ and check whether the delta contains 
at lease one  tuple. 
The main observation is that 
$\bm u^{\top}\bm M \bm v = 1$ if and only if one of the  enumeration requests in round $r$ reports at least one tuple. The overall number of updates is as in Section~\ref{sec:hardness_ivmpm_RST}. The additional cost caused 
by each enumeration request is constant. 

\subsubsection{Hardness for Arbitrary Non-Hierarchical Queries}
\label{sec:hardness_ivmpm_general}
Given any non-hierarchical query $Q$ without self-joins, we can reduce the problems $\ivmpm[Q_{\text{nh}}]$ and 
$\ivmpmd[Q_{\text{nh}}]$to
$\ivmpm[Q]$ and respectively $\ivmpmd[Q]$. We give the high-level idea of the reduction. Since $Q$ is non-hierarchical, 
it must contain two variables $A'$ and $B'$ and three atoms $R'(\bm X_R)$, $S'(\bm X_S)$, and 
$T'(\bm X_T)$ such that $A' \in \bm X_R$, $B' \notin \bm X_R$, $A', B' \in \bm X_S$, $B' \in \bm X_T$, 
and $A' \notin \bm X_T$. We use $R'(\bm X_R)$, $S'(\bm X_S)$, and $T'(\bm X_T)$ to encode
$R(A)$, $S(A,B)$, and respectively $T(B)$. While the variables $A'$ and $B'$ take on the $A$- and respectively 
$B$-values from the database for $Q_{\text{nh}}$, all other variables in $Q$ take on a fixed dummy value. Hence, we can easily translate updates
and enumeration requests to $Q_{\text{nh}}$ to updates and enumeration requests to $Q$.

\end{document}